\documentclass[11pt,a4paper]{article}
\usepackage[T1]{fontenc}
\usepackage[english]{babel}

\usepackage{ulem}
\usepackage{amsmath, nccmath}
\usepackage{amsfonts}
\usepackage{amsthm}
\usepackage{amssymb}
\usepackage{bbm}

\usepackage[shortlabels]{enumitem}
\usepackage{tikz}
\usepackage{tikz-cd}
\usepackage[colorlinks,linkcolor=blue]{hyperref}
\usepackage{graphicx}

\usepackage{float}
\usepackage{geometry}
\usepackage{subfigure}
\usepackage{multicol}
\usepackage{mathrsfs}
\usepackage{xcolor}
\usepackage{tcolorbox}
\usepackage{framed}
\usepackage{mathtools}
\usepackage{extarrows}

\setlist{noitemsep}
\allowdisplaybreaks

\tcbuselibrary{breakable}
\newcommand{\me}{\textrm{e}}
\newcommand{\ddp}{\displaystyle}
\newcommand{\mm}[1]{\mathrm{#1}}
\newcommand{\mb}[1]{\mathbb{#1}}

\newcommand{\mn}[1]{\mathbf{#1}}

\newcommand{\dd}{\mathrm{d}}

\newcommand{\ol}[1]{\overline{#1}}


\newtcolorbox{defbox}{colback=black!5!white, colframe=black!75!white, boxrule=0mm, leftrule=1mm, sharp corners, breakable}

\theoremstyle{plain}
\theoremstyle{definition}

\newtheorem{deef}{Definition}[section]

\theoremstyle{plain}
\newtheorem{lemm}{Lemma}[section]
\newtheorem{thrm}{Theorem}
\newtheorem{prop}[lemm]{Proposition}
\newtheorem{corr}[lemm]{Corollary}

\theoremstyle{definition}
\newtheorem{exxx}{Example}[section]
\newtheorem{def7}{Remark}[section]
\theoremstyle{remark}

\newtheorem*{notation}{Notation}

\newtheoremstyle{exercise}
{}{}{}{}
{\sffamily\bfseries}
{.}
{ }{\thmname{#1}\thmnumber{\phantom{a}#2}\thmnote{\textnormal{\phantom{i}(#3)}}}

\theoremstyle{exercise}

\numberwithin{exe}{section}

\theoremstyle{definition}

\usepackage{titlesec}
\titleformat*{\subsection}{\Large\sffamily\bfseries}
\titleformat*{\section}{\LARGE\sffamily\bfseries}
\titleformat*{\subsubsection}{\large\sffamily\bfseries}


\usepackage{textcomp}

\newcommand{\fk}[1]{\mathfrak{#1}}

\geometry{left=1cm,right=1cm,top=2.5cm,bottom=2.5cm}

\newcommand{\tts}{\textstyle}

\newcommand{\ii}{\mathrm{i}}

\newcommand{\bnom}[2]{\genfrac{[}{]}{0pt}{}{#1}{#2}}

\newcommand{\ank}[1]{\left\langle #1 \right\rangle}
\newcommand{\bank}[1]{\big\langle #1 \big\rangle}
\newcommand{\Bank}[1]{\Big\langle #1 \Big\rangle}
\newcommand{\sank}[1]{\langle #1 \rangle}

\newcommand{\wh}[1]{\widehat{#1}}

\DeclareMathOperator{\spn}{Span}

\DeclareMathOperator{\ttr}{tr}
\DeclareMathOperator{\diag}{diag}

\DeclareMathOperator{\supp}{supp}

\DeclareMathOperator{\Mor}{Mor}

\DeclareMathOperator{\vol}{vol}
\DeclareMathOperator{\ddiv}{div}

\DeclareMathOperator{\PI}{PI}
\DeclareMathOperator{\DN}{DN}

\DeclareMathOperator{\detz}{det_{\zeta}}
\DeclareMathOperator{\detf}{det_{\mathrm{Fr}}}

\newcommand{\nrm}[1]{\left\lVert#1\right\rVert}
\newcommand{\smx}[1]{\begin{smallmatrix}#1\end{smallmatrix}}
\newcommand{\mss}[1]{\mathscr{#1}}

\newcommand{\lto}{\longrightarrow}

\newcommand{\one}{\mathbbm{1}}

\newcommand{\bnrm}[1]{\big\lVert#1\big\rVert}
\newcommand{\snrm}[1]{\lVert#1\rVert}

\renewcommand{\tilde}{\widetilde}
\renewcommand{\ge}{\geqslant}
\renewcommand{\le}{\leqslant}
\newcommand{\defeq}{\overset{\mathrm{def}}{=}}
\newcommand{\heueq}{\overset{\mathrm{heu}}{=}}

\usepackage{pdfpages}
\usepackage{blkarray}

\usepackage{fancyhdr}
\pagestyle{fancy}

\fancyhf{}
\lhead{\rightmark}
\rhead{\thepage}

\numberwithin{equation}{section}

\newenvironment{thrmbis}[1]
  {%
   \addtocounter{thrm}{-1}%
   \begin{thrm}}
  {\end{thrm}}

  \newenvironment{thrmbisbis}[1]
  {%
   \addtocounter{thrm}{-1}%
   \begin{thrm}}
  {\end{thrm}}

\title{The Bayes Principle and Segal Axioms for $P(\phi)_2$,\\
with application to Periodic Covers}
\author{Jiasheng Lin}
\date{}

\begin{document}

\maketitle

\abstract{
We construct a $P(\phi)_2$ Gibbs state on infinite volume periodic surfaces (namely, with discrete ``time translations'') by analogy with 1-dimensional spin chains and establish the mass gap for our Gibbs state, there are no phase transitions. We also derive asymptotic properties of the $P(\phi)_2$ partition function on certain towers of cyclic covers of large degrees that converge to the periodic surface in some appropriate sense. This gives the first construction of an interacting Quantum Field Theory on surfaces of infinite genus with a mass gap. 
The main ingredient in our approach is to reconcile the so-called $P(\phi)_2$ model from classical constructive quantum field theory (CQFT) with Riemannian version of the axioms proposed by G. Segal \cite{Segal} in the 90's. We show the $P(\phi)_2$ model satisfies these axioms, appropriately adjusted. One key ingredient in our proof is to use what we call ``the Bayes principle'' of conditional probabilities in the infinite dimensional setting. We also give a precise statement and full proof of the locality of the $P(\phi)_2$ interaction. 
}

\tableofcontents

\newpage

     \section{Introduction}

\noindent

\noindent In the classical approach to quantum field theory (QFT),
a central role is given to the representation of the Lorentz group as one can easily recognize if one looks at the Wightman axioms which were among the 
first attempt to axiomatize QFT.
It was later realized by Feynman, Symanzik \cite{Symanzik}, Nelson \cite{NelsonEuclidean} in the 60's that one could describe most objects of QFT using the functional integral, in imaginary time, which started the Euclidean approach to QFT. This allows to recover many computations and predictions of interest in quantum field theory.  

In the 80's, motivated by works of Atiyah and Witten on the relation of QFT with geometry and topology, there was an attempt to give an axiomatic definition of QFT based on geometry, that would capture the main properties of the
functional integral representation.
The two important notions at the foundation of QFT are the concept of \textbf{locality} and \textbf{unitarity}. The first notion of locality is intuitively captured by the functional integral representation whereas the second notion will be encoded in certain symmetries of our manifolds under reflection called reflection positivity. The notion of reflection posivity plays a central role in constructive quantum field theory since the works of Osterwalder and Schrader, Glimm and Jaffe~\cite{GJ} since it allows to recover relativistic Quantum Field Theory on Lorentzian manifolds from functional integrals and also plays an important role in statistical physics, for instance in the study of phase transitions.

The main idea behind the geometric axioms for QFT due to Atiyah, Segal \cite{Segal} and more recently Kontsevich and Segal \cite{KS21} can be summarized as follows: instead of viewing QFT as representation of some space-time symmetry group, we rather view them loosely as linear representations of some geometric bordism category.
In the following subsection we give 
the example of the 1 dimensional spin chain which contains the key phenomenology of the QFT model we will describe in the present work.  It can be understood as linear representations of some discrete geometric bordism category and we will also explain why our spin chain admits a unique Gibbs measure (in the sense of Ruelle \cite{Ruelle}) and the shift map acting on the chain is exponentially mixing for the Gibbs measure.

\subsection{Example of 1D Spin Chain and its Transfer Operator}
\label{sec-intro-spin-chain}

\noindent Our example illustrates the gluing properties by the transfer operator for discrete spin systems.
Consider $\mb{Z}_N:=\mathbb{Z}/N\mathbb{Z}$ as a $1$-dimensional chain of size $N$, consider the space $\mathbb{R}^N$ now as the space of maps $\mb{Z}_N \lto \mathbb{R}$, namely a discrete \textsf{path space}. The lattice site is denoted by $i\in \{1,\dots,N\}$ and call $\sigma\in\mb{R}^N$ a \textsf{spin configuration}, whose value at the site $i$ reads $\sigma(i)\in \mathbb{R}$. We impose \textsf{periodic boundary condition}, which means~$\sigma(N+1)\equiv \sigma(1)$. Thus the chain could be considered as circular.
We are given an action functional on the configuration space $\mb{R}^N$ that reads
\begin{eqnarray}
S_N(\sigma)\defeq\sum_{i=1}^{N} \vert \sigma(i+1)-\sigma(i)\vert^2 + \sum_{i=1}^N P(\sigma(i))
\label{eqn-spin-chain-inter}
\end{eqnarray}
where $P$ is a polynomial bounded from below, the interaction is nearest neighbour.
Given a configuration~$\sigma\in \mb{R}^N$,~$S_N(\sigma)$ may be thought of as its ``energy'' and the statistical behaviour of the system is described by the probability measure called \textsf{Gibbs measure},
\begin{eqnarray}
 \dd\mu_{P(\sigma)}(\sigma)\defeq \frac{1}{\mathcal{Z}(N)}\me^{-S_N(\sigma)} \dd^N\sigma, \quad\textrm{with}\quad
\mathcal{Z}(N)=\int_{\mathbb{R}^N} \me^{-S_N(\sigma)} \dd^N\sigma 
\label{eqn-spin-chain-gibbs-part-func}
\end{eqnarray}
called the \textsf{partition function} of the system.

Now we would like to express this partition function $\mathcal{Z}(N)$ in terms of elementary building blocks.
The main idea is to slice the action functional as
\begin{eqnarray*}
S_N(\sigma)=\sum_{i=1}^{N} \Big[ \vert \sigma(i+1)-\sigma(i)\vert^2 + \frac{1}{2}(P(\sigma(i+1))+P(\sigma(i))) \Big]
\end{eqnarray*} 
so exponentiating gives
\begin{eqnarray}
\exp\left(-S_N(\sigma) \right)=  \prod_{i=1}^{N} K(\sigma(i+1),\sigma(i)) 
\label{eqn-spin-chain-inter-expression}
\end{eqnarray}
where 
\begin{equation}
    K(x,y)= \me^{-\vert x-y\vert^2-\frac{1}{2}(P(x)+P(y))} 
    \label{eqn-spin-chain-trans-kernel}
\end{equation}
which is the Schwartz kernel of an operator on $L^2(\mathbb{R})$ that is smoothing.

\begin{deef}
  Define the \textsf{transfer operator} $T$ to be exactly the operator with kernel~$K(x,y)$, that is,
  \begin{equation}
    (TF)(x)=\int K(x,y)F(y) \dd y,
    \label{}
  \end{equation}
  for any function(al)~$F\in L^2(\mb{R})$.
\end{deef}

Then we see immediately from (\ref{eqn-spin-chain-inter-expression}) that
\begin{eqnarray}
  \mathcal{Z}(N)=\int_{}^{}\prod_{i=1}^{N} \big[K(\sigma(i+1),\sigma(i)) \dd\sigma(i) \big]= \ttr_{L^2(\mathbb{R})}( T^{N} ),
  \label{eqn-spin-chain-part-func-trace}
\end{eqnarray}
remembering that~$\sigma(N+1)\equiv \sigma(1)$, since for a smoothing operator~$A: L^2(\mb{R}) \lto \mathcal{S}(\mathbb{R})$ we have~$\ttr_{L^2(\mb{R})}(A)=\int_{}^{}K_A(x,x)\dd x$ with~$K_A$ being the integral kernel.

More generally we would like to express the kernel of~$T^N$ in terms of~$\exp(-S(\sigma))$ using the relation (\ref{eqn-spin-chain-inter-expression}). This means instead of letting the boundary condition be periodic we let $\sigma(1)=\sigma_{\mm{in}}$, $\sigma(N+1)=\sigma_{\mm{out}}$ given two boundary conditions $(\sigma_{\mm{in}},\sigma_{\mm{out}})\in \mathbb{R}^2 $. Then the kernel~$K_N$ of~$T^N$ is\footnote{Formula (\ref{eqn-intro-kernel-compo-transfer}) below is an analogue of what is called in the quantum information context the ``\textsf{replica trick}''.}
\begin{align}
  K_N(\sigma_{\mm{out}},\sigma_{\mm{in}})&=\int_{}^{}K(\sigma_{\mm{out}},\sigma(N))\cdots K(\sigma(2),\sigma_{\mm{in}}) \prod_{i=2}^N \dd\sigma(i) \label{eqn-intro-kernel-compo-transfer} \\
  &=\int_{\mb{R}^{N-1}}^{} \me^{-S_N(\sigma|\sigma_{\mm{in}},\sigma_{\mm{out}})} \prod_{i=2}^N \dd\sigma(i), \label{eqn-intro-kernel-path-integral}
\end{align}
with the \textsf{conditioned} interaction~$S_N(\sigma|\sigma_{\mm{in}},\sigma_{\mm{out}})$ defined as
\begin{equation}
  S_N(\sigma|\sigma_{\mm{in}},\sigma_{\mm{out}})\defeq \sum_{i=1}^{N} \vert \sigma(i+1)-\sigma(i)\vert^2 + \sum_{i=2}^N P(\sigma(i)) +\frac{1}{2}(P(\sigma_{\mm{in}})+P(\sigma_{\mm{out}})),
  \label{eqn-intro-interact-condition}
\end{equation}
with~$\sigma(1)\equiv \sigma_{\mm{in}}$ and~$\sigma(N+1)\equiv\sigma_{\mm{out}}$. Now if~$N_1$,~$N_2$ are two integers and we define the kernels~$K_{N_1}$,~$K_{N_2}$ using (\ref{eqn-intro-kernel-path-integral}) and (\ref{eqn-intro-interact-condition}) with~$N$ replaced respectively by~$N_1$,~$N_2$, then it follows ``trivially'' from the composition property~$T^{N_2}\circ T^{N_1}=T^{N_2+N_1}$ that
\begin{equation}
  K_{N_2+N_1}(\sigma_{\mm{out}},\sigma_{\mm{in}})=\int_{}^{}K_{N_2}(\sigma_{\mm{out}},\sigma)K_{N_1}(\sigma,\sigma_{\mm{in}}) \dd\sigma.
  \label{eqn-intro-composition}
\end{equation}

However, such a relation becomes remarkable (rather than trivial) if we do not have the ``unit'' transfer operator~$T$ to start with; that is, if we do not have (\ref{eqn-intro-kernel-compo-transfer}) but define~$K_{N_1}$ and~$K_{N_2}$ \textit{directly} with an expression of the form (\ref{eqn-intro-kernel-path-integral}) and (\ref{eqn-intro-interact-condition}). 
This corresponds to the idea of a \textsf{path integral} in quantum mechanics and quantum field theory. Alternatively one could consider the Gibbs measure (\ref{eqn-spin-chain-gibbs-part-func}) and take~$K_N$ as the \textit{transition probability} of a certain stochastic process. Then (\ref{eqn-intro-composition}) is the \textsf{Chapman-Kolmogorov} equation which relies heavily on the fact that the underlying process is \textsf{Markovian}. In a sense, for both interpretations a crucial condition is that the interaction~$S(\sigma)$ be \textsf{local}; that is, very roughly speaking, if one chops the sites~$[1,N_1+N_2]:=\{1,2,\dots,N_1+N_2\}$ into~$[1,N_1]\sqcup [N_1+1,N_1+N_2]$ then~$S_{[1,N_1+N_2]}(\sigma)\approx S_{[1,N_1]}(\sigma|_{[1,N_1]})+S_{[N_1+1,N_1+N_2]}(\sigma|_{[N_1+1,N_1+N_2]})$.

The main result of this article concerns a 2-dimensional ``continuum'' version of this story where lattice sites are replaced by the continuum of points on a 2D surface (considered as space-time) and a configuration is replaced by a distribution. See the section below for a more precise description. In the final section, we also show that when the space-time admits a periodic translation symmetry then a more precise analogy with the spin chain described above can be restored, in particular, there exists a \textsf{Gibbs state} in the \textsf{thermodynamic limit}. Further discussion of the above example continues in section \ref{sec-spin-chain-cont}.

\subsection{Main Results}

This article will prove three main results whose preliminary versions are stated as theorems \ref{thrm-intro-main-2}, \ref{thrm-intro-main-3} and \ref{thrm-intro-main-1} below, among which the first two are a consequence of the third, whose proof constitutes the most part of this paper.
The first theorem below generalises to the interacting cases, certain results of Naud~\cite{Nauddet} on the asymptotics of the free energy of free bosons on certain large degree covers.

\begin{thrm}[asymptotics of the partition function on large degree periodic covers]\label{thrm-intro-main-2}

 Let $(M,g)$ denotes a compact Riemannian surface, $M_\infty\lto M$ a Riemannian $\mathbb{Z}$-cover of $M$ and $\gamma$ the generator of the deck group.
We can think of $M_\infty$ as the infinite composition of some given cobordism $\Omega$, which serves as the ``fundamental domain'' of the deck transformations on $M_\infty$. 
For every $N\in \mb{N}$, denote by $M_N:=M_\infty/\gamma^N $ the cyclic cover of degree $N$ of $M$ and $\Delta_N$ is the Laplace-Beltrami operator on $M_N$.
We define the partition function of the $P(\phi)_2$ theory on $M_N$ heuristically as
\begin{equation}
  Z_N\heueq\int_{\mathcal{D}'(M_N)}^{} \me^{-\int_{M_N}^{}P(\phi(x))\dd V(x)}\me^{-\frac{1}{2}\int_{M_N}^{}(|\nabla \phi|_g^2 +m^2\phi^2 )\dd V}[\mathcal{L}\phi]
  \label{eqn-heu-part-func-cover}
\end{equation}
where $\dd V$ is the Riemannian area form, $\mathcal{D}'(M_N)$ is the space of real distributions on $M_N$ and $[\mathcal{L}\phi]$ is the non-existent Lebesgue measure on $\mathcal{D}'(M_N)$.

Then the renormalized sequence of free energies
$
\frac{1}{N}\log\left(Z_N \right)    
$ has a limit $\lambda_0$ when $N\rightarrow +\infty$, moreover this limit $\lambda_0$ can be interpreted as the leading eigenvalue of some transfer operator $U_\Omega$ which is the quantization of the cobordism $\Omega$ mentioned below.
\end{thrm}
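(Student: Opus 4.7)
The plan is to reduce the asymptotics of $Z_N$ to a spectral statement about a transfer operator $U_\Omega$, in direct analogy with the formula $\mathcal{Z}(N)=\ttr_{L^2(\mathbb{R})}(T^N)$ derived for the 1D spin chain in \S\ref{sec-intro-spin-chain}. The crucial input is the Segal axioms for $P(\phi)_2$ (Theorem \ref{thrm-intro-main-1}), which once established provide, for the ``fundamental domain'' cobordism $\Omega$ with two diffeomorphic boundary components, a bosonic Hilbert space $\mathcal{H}_{\partial\Omega}$ of boundary fields and a quantisation $U_\Omega\in\End(\mathcal{H}_{\partial\Omega})$ such that gluing $\Omega$ to itself $N$ times along its boundary recovers $M_N$ and yields the identity
\[
Z_N \;=\; \ttr_{\mathcal{H}_{\partial\Omega}}(U_\Omega^N).
\]
Here the locality of the $P(\phi)_2$ interaction, also proved in the paper, is essential in order to factorise $\me^{-\int_{M_N}:P(\phi):\dd V}=\prod_{k=1}^{N}\me^{-\int_{\gamma^k\Omega}:P(\phi):\dd V}$ across the gluing seams, while the Bayes principle supplies the infinite-dimensional disintegration of the free-field Gaussian measure against boundary values that is needed to turn this factorisation into the operator composition $U_\Omega^N$; this is the honest $P(\phi)_2$ analogue of the finite-dimensional integration identity used to derive~(\ref{eqn-spin-chain-part-func-trace}).

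Given the trace formula, the asymptotics reduce to standard spectral theory. The operator $U_\Omega$ is self-adjoint by reflection positivity of the $P(\phi)_2$ measure (part of the Segal package), positive as it has a strictly positive integral kernel, and trace-class: the free-field transfer operator attached to $\Omega$ is smoothing as in~\cite{Nauddet}, and multiplication by the Wick exponential $\me^{-\int_\Omega :P(\phi):\dd V}\in\bigcap_{p\ge 1}L^p$ of the free field preserves trace-class behaviour of a suitable iterate via H\"older. Diagonalising gives $Z_N=\sum_{i\ge 0}\lambda_i^N$ with $\lambda_0\ge\lambda_1\ge\cdots\ge 0$ summable, whence
\[
\tfrac{1}{N}\log Z_N \;\longrightarrow\; \log\lambda_0
\]
as soon as the top eigenvalue $\lambda_0$ is strictly simple and strictly positive (the statement's identification of the limit with ``the leading eigenvalue'' being read up to this $\log$ convention on $U_\Omega$).

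The hard step is precisely the simplicity of $\lambda_0$, which is the mass gap alluded to in the abstract. My plan is a Perron--Frobenius style argument carried out on $\mathcal{H}_{\partial\Omega}$: once the Bayes principle has disintegrated the free-field measure along $\partial\Omega$, the Schwartz kernel of $U_\Omega$ is the product of a strictly positive Gaussian factor and a strictly positive Wick exponential, and is therefore almost-everywhere strictly positive as a function of the pair of boundary configurations. Any ground state must then be a.e.\ positive, and two linearly independent a.e.\ positive eigenfunctions cannot coexist, yielding simplicity of $\lambda_0$; strict positivity $\lambda_0>0$ is automatic from $Z_1>0$. The subtle points I expect to have to navigate carefully are the honest definition of the boundary Hilbert space $\mathcal{H}_{\partial\Omega}$ as the $L^2$-space of the Gaussian measure of boundary data, and the fact that Wick renormalisation of $:P(\phi):$ does not destroy the strict positivity of the kernel -- both reducing to the careful application of the Bayes principle advertised in the title.
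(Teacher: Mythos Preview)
Your strategy is the paper's: Segal axioms give $Z_{M_N}=\ttr(U_\Omega^N)$, the strictly positive kernel of $U_\Omega$ triggers Perron--Frobenius for a simple top eigenvalue, and Corollary~\ref{cor-log-trace-asymp} then yields $\frac{1}{N}\log Z_N\to\log\lambda_0$. Two points where your proposal is off, however. First, the paper does not claim $U_\Omega$ is trace-class --- only Hilbert--Schmidt, since $\mathcal{A}_\Omega\in L^2(\wh{\mu}_{2\mn{D}}^{\Sigma\sqcup\Sigma})$ by construction --- so your smoothing-plus-H\"older detour is unnecessary; one simply works with $U_\Omega^N$ for $N\ge 2$. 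Second, and more substantively, self-adjointness of $U_\Omega$ does \emph{not} follow from reflection positivity of the $P(\phi)_2$ measure: by Lemma~\ref{lemm-segal-transfer}(iii) one has $U_\Omega^\dagger=U_{\Omega^*}$, so self-adjointness requires the \emph{geometric} condition $\Omega\cong\Omega^*$, i.e.\ that the fundamental domain itself admit a reflection. The paper imposes this as an explicit extra hypothesis in \S7.4 (``not much of a restriction'', since one may always double the fundamental domain), and does so precisely to invoke \cite{Sim3} theorem~3.8.5 identifying the flat trace $\ttr_\rho$ delivered by the Segal axiom with the genuine operator trace --- a subtlety (Remark~\ref{rem-trace-and-diagonal}) your proposal skips entirely.
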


Rigorous sense of the expression (\ref{eqn-heu-part-func-cover}) will be made in sections \ref{sec-gaus-fields} and \ref{sec-nelson-main}.

\begin{def7}
The above theorem can be thought of as an interacting QFT version of a result of Naud on the asymptotics of zeta determinants on large degree random covers of compact surfaces~\cite{Nauddet}, the main difference is that he treats random covers whereas our sequence of covers is deterministic and he only deals with the partition function of free fields whereas we treat the interacting case.
\end{def7}

In subsection~\ref{sec-perron-frob-gibbs}, we use the above $P(\phi)_2$ measures defined on the towers of cyclic covers $M_N$ to produce a $P(\phi)_2$-Gibbs state on the periodic surface $M_\infty$ of infinite volume. 
\begin{thrm}[Mass gap for $P(\phi)_2$ on $M_\infty$]\label{thrm-intro-main-3}
In the notations from Theorem~\ref{thrm-intro-main-2},
for observables $F$ with compact support on $M_\infty$, one can define expectations with respect to the $P(\phi)_2$ Gibbs state informally as:
\begin{eqnarray*}
\mathbb{E}\left[F \right]  \heueq \lim_{N\rightarrow +\infty} \frac{1}{Z_N} \int_{\mathcal{D}'(M_N)}^{} F(\phi)\me^{-\int_{M_N}^{}P(\phi(x))\dd V(x)}\me^{-\frac{1}{2}\int_{M_N}^{}(|\nabla \phi|_g^2 +m^2\phi^2 )\dd V}[\mathcal{L}\phi].  
\end{eqnarray*}
The generator of the deck group $\gamma$ acts on $M_\infty$ by diffeomorphism which induces a shift map $\tau$ on observables which moves the support by $\gamma$.
The Gibbs state defined above is exponentially mixing under the shift map, for compactly supported $L^2$ observables~:
\begin{eqnarray*}
\mathbb{E}[ \tau^kF G]=\mathbb{E}[ F ]\mathbb{E}[ G]+ \mathcal{O}(\alpha^k)    
\end{eqnarray*}
for some $\alpha<1$ (see corollary~\ref{cor-segal-trans-exp-decay}).
\end{thrm}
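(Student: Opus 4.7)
The plan is to promote the one-dimensional spin-chain story of Section~\ref{sec-intro-spin-chain} to the setting of Theorem~\ref{thrm-intro-main-2}, replacing the discrete transfer operator $T$ by the quantization $U_\Omega$ of the fundamental cobordism. The first step is to combine the Segal-type gluing axiom with the locality of the $P(\phi)_2$ interaction (the two core technical results established in the body of the paper) to express, for $F$ compactly supported inside $\ell_F$ consecutive copies of $\Omega$ in $M_N$ and $G$ supported $k$ copies to the right of $F$,
\[
\int F(\phi)\,(\tau^k G)(\phi)\,\me^{-S_N(\phi)}\,[\mathcal{L}\phi]\;=\;\ttr_{\mathcal{H}}\!\bigl(U_\Omega^{\,N-\ell_F-\ell_G-k}\,\hat F\,U_\Omega^{\,k}\,\hat G\bigr),
\]
where $\mathcal{H}$ is the Cauchy-data Hilbert space on a slice and $\hat F,\hat G$ are the Segal quantizations of $F$ and $G$ thickened to $\ell_F,\ell_G$ copies of $\Omega$. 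Specializing $G\equiv 1$ recovers the numerator of $\mathbb{E}[F]$, and $F\equiv G\equiv 1$ recovers $Z_N=\ttr(U_\Omega^N)$.

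The second and key step is a Perron-Frobenius / spectral-gap statement for $U_\Omega$: namely that it is a self-adjoint, trace-class, positivity-improving operator on $\mathcal{H}$, so that its leading eigenvalue $\lambda_0>0$ is simple, with a strictly positive eigenvector $\Omega_0$, and the rest of the spectrum is contained in a disc of radius $\alpha\lambda_0$ for some $\alpha<1$. Self-adjointness comes from reflection positivity of the mirror gluing $\Omega\sqcup\bar\Omega$; trace-class comes from the smoothing nature of the free part of the transfer kernel together with $L^p$ bounds on $\me^{-\int P}$ that are standard for $P(\phi)_2$; positivity-improvement is handled by a Feynman-Kac representation that reduces to the free case, where the second quantization of a positivity-preserving finite-dimensional heat semigroup yields strict positivity on cylinder functions and then on their $L^2$ closure.

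Granting these two ingredients, write $U_\Omega=\lambda_0\,|\Omega_0\rangle\langle\Omega_0|+R$ with $\|R\|\le\alpha\lambda_0$. Then $Z_N=\lambda_0^N(1+O(\alpha^N))$ and
\[
\mathbb{E}[F]=\lim_{N\to\infty}\frac{\ttr(U_\Omega^{N-\ell_F}\hat F)}{\ttr(U_\Omega^N)}=\frac{\langle\Omega_0,\hat F\,\Omega_0\rangle}{\langle\Omega_0,\Omega_0\rangle},
\]
which proves both the existence of the thermodynamic limit and its translation invariance. For the mixing estimate, plug $U_\Omega^k=\lambda_0^k|\Omega_0\rangle\langle\Omega_0|+R^k$ into the formula for $\mathbb{E}[F\cdot\tau^k G]$: the rank-one term gives exactly $\mathbb{E}[F]\,\mathbb{E}[G]$, while the remainder is bounded by $\lambda_0^{-k}\,\|R^k\|\cdot\|\hat F\,\Omega_0\|\cdot\|\hat G\,\Omega_0\|=O(\alpha^k)$, which is the content of Corollary~\ref{cor-segal-trans-exp-decay}.

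The main obstacle is the positivity-improving property of $U_\Omega$ on the infinite-dimensional Cauchy-data space $\mathcal{H}$. Because the cylinder $\Omega$ has positive width one expects positivity improvement by hypoellipticity of the associated parabolic problem, but the presence of the Wick-ordered polynomial interaction forces one to work with the conditional representation of the Gaussian free field provided by the Bayes principle that is the other main conceptual tool of this paper, combined with uniform integrability bounds on $\me^{-\int P(\phi)}$. Everything else in the argument is soft operator-theoretic bookkeeping.
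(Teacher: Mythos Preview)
Your overall architecture matches the paper exactly: rewrite finite-volume correlators as traces of powers of the Segal transfer operator $U_\Omega$, invoke a Perron--Frobenius/spectral-gap statement for $U_\Omega$, and read off existence of the thermodynamic limit and exponential mixing from the spectral decomposition $U_\Omega=\lambda_0|\Omega_0\rangle\langle\Omega_0|+R$ with $\|R\|\le\alpha\lambda_0$. The trace-class input is also obtained the same way (Hilbert--Schmidt amplitudes, so $U_\Omega^2$ is trace class).

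Where you diverge is in what you call ``the main obstacle''. You propose to establish positivity-improvement via a Feynman--Kac representation, hypoellipticity of a parabolic problem, and uniform integrability bounds on $\me^{-\int P(\phi)}$. In the paper this step is essentially free: the amplitude $\mathcal{A}_\Omega^P(\varphi,\varphi')$ is \emph{defined} (equation~(\ref{eqn-def-amp-omega}) and lemma~\ref{lemm-amplitude-express}) as the positive square root of a Radon--Nikodym density between mutually absolutely continuous finite positive measures, multiplied by strictly positive zeta-determinant constants and by $\int \me^{-S_\Omega(\phi^D_\Omega\mid\varphi,\varphi')}\,\dd\mu_{\mm{GFF}}^{\Omega,D}$, which is a.s.\ strictly positive since $S_\Omega$ is real-valued. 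Hence the integral kernel of $U_\Omega$ is a.s.\ strictly positive with no further analysis, and the Perron--Frobenius theorem (proposition~\ref{prop-perron-frob}) applies directly. Your Feynman--Kac/hypoellipticity route would presumably also work, but it is substantially heavier machinery than what is needed here; the paper's point is precisely that the half-density definition of the amplitude makes strict positivity of the kernel a tautology.

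A secondary remark: you derive self-adjointness of $U_\Omega$ from reflection positivity of $\Omega\sqcup\bar\Omega$. The paper instead simply assumes (subsection~\ref{sec-perron-frob-gibbs} and the opening of the last subsection) that $\Omega$ itself is reflection symmetric, $\Omega=\tilde\Omega^*\cup_{\Sigma'}\tilde\Omega$, so that $U_\Omega=U_{\tilde\Omega}^\dagger U_{\tilde\Omega}$ is manifestly positive self-adjoint; this is used both to guarantee that $\|U_\Omega\|$ is an eigenvalue and to identify the flat trace $\ttr_\rho$ with the genuine operator trace.
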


This means in physical terminology that our 
$P(\phi)_2$-Gibbs state has a \textbf{mass gap}.
These are immediate consequences of the constructions of theorem \ref{thrm-intro-main-1}
that we state below~:

\begin{thrm}
  [Segal Axioms for~$P(\phi)_2$, partial formal statement] \label{thrm-intro-main-1} For each finite disjoint union~$\Sigma$ of Riemannian circles (each component circle characterised by its radius) there exists a finite measure~$\mu_{\Sigma}$ on the space~$\mathcal{D}'(\Sigma)$ of real distributions over~$\Sigma$, and hence the Hilbert space~$\mathcal{H}_{\Sigma}:= L^2(\mathcal{D}'(\Sigma),\mu_{\Sigma})$, such that the following holds: if~$\Omega$ is a Riemannian surface whose boundary has two components~$\partial\Omega=\Sigma_{\mm{in}}\sqcup \Sigma_{\mm{out}}$, denote by~$\mu_{\mm{in}}$,~$\mu_{\mm{out}}$ respectively the measures on~$\mathcal{D}'(\Sigma_{\mm{in}})$,~$\mathcal{D}'(\Sigma_{\mm{out}})$ and put~$\mathcal{H}_{\mm{in}}:=L^2(\mathcal{D}'(\Sigma_{\mm{in}}),\mu_{\mm{in}})$,~$\mathcal{H}_{\mm{out}}:=L^2(\mathcal{D}'(\Sigma_{\mm{out}}),\mu_{\mm{out}})$, then there exists an operator~$U_{\Omega}:\mathcal{H}_{\mm{in}}\lto \mathcal{H}_{\mm{out}}$ given by
  \begin{equation}
    (U_{\Omega}F)(\varphi_{\mm{out}})=\int_{}^{}\mathcal{A}_{\Omega}(\varphi_{\mm{in}},\varphi_{\mm{out}}) F(\varphi_{\mm{in}})\,\dd\mu_{\mm{in}}(\varphi_{\mm{in}}),\quad F\in \mathcal{H}_{\mm{in}},
    \label{}
  \end{equation}
  where formally
\begin{equation}
  \mathcal{A}_{\Omega}(\varphi_{\mm{in}},\varphi_{\mm{out}})\heueq\int_{\{\phi|\partial\Omega =(\varphi_{\mm{in}},\varphi_{\mm{out}})\}}^{}\me^{-\int_{\Omega}^{}\frac{1}{2}(|\nabla \phi|_g^2 +m^2\phi^2 )+P(\phi(x))\dd V_{\Omega}(x)}[\mathcal{L}\phi],
  \label{eqn-heu-segal-amplitude}
\end{equation}
and the formal integration is over a ``subspace'' of~$\mathcal{D}'(\Omega)$, subject to the boundary condition~$\phi|_{\Sigma_{\mm{in}}}=\varphi_{\mm{in}}$ and~$\phi|_{\Sigma_{\mm{out}}}=\varphi_{\mm{out}}$, with respect to the (non-existent) Lebesgue measure~$[\mathcal{L}\phi]$ on the space of such distributions, with~$P$ being a polynomial bounded from below.

Moreover, the correspondence~$\Omega\longmapsto U_{\Omega}$ between surfaces and operators satisfy the following \textsf{composition/functorial property}. Let~$\Omega_1$,~$\Omega_2$ be two Riemannian surfaces of the kind as above where the ``out'' boundary of~$\Omega_1$ is isometric to the ``in'' boundary of~$\Omega_2$ via an isometry~$\rho$, we glue them along~$\rho$ and obtain the surface~$\Omega_2\cup_{\rho}\Omega_1$. Then
\begin{equation}
  U_{\Omega_2\cup_{\rho}\Omega_1}=U_{\Omega_2}\circ U_{\Omega_1},
  \label{eqn-intro-state-compo}
\end{equation}
for the operators~$U_{\Omega_2\cup_{\rho}\Omega_1}$,~$U_{\Omega_2}$ and~$U_{\Omega_1}$ defined as above.
\end{thrm}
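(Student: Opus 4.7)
My plan is to build $U_\Omega$ as the integral operator with kernel $\mathcal{A}_\Omega$, first constructing the free ($P\equiv 0$) kernel by an infinite-dimensional Bayes/Markov argument and then dressing it by the Wick-renormalized interaction. For the boundary law, fix once and for all a canonical centered Gaussian measure $\mu_\Sigma$ on $\mathcal{D}'(\Sigma)$ with covariance essentially $\tfrac{1}{2}(-\Delta_\Sigma+m^2)^{-1/2}$ (the natural boundary trace of the massive GFF on a reference half-cylinder), extended to disjoint unions by product. On $\Omega$, take the massive GFF $\Phi_\Omega$ (in a BC such as Neumann / free where the trace exists) and decompose $\Phi_\Omega=\Phi_\Omega^{\mm{Dir}}+\mathcal{P}\varphi$, where $\mathcal{P}$ is the harmonic extension for $-\Delta+m^2$ and $\varphi=\Phi_\Omega|_{\partial\Omega}$; the two summands are independent Gaussians. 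The joint law of $(\varphi|_{\Sigma_{\mm{in}}},\varphi|_{\Sigma_{\mm{out}}})$ is a Gaussian with covariance (morally) the inverse Dirichlet-to-Neumann map of $\Omega$; a Feldman--H\'ajek comparison with $\mu_{\mm{in}}\otimes\mu_{\mm{out}}$ controlled by trace-class differences of DtN operators yields a Radon--Nikodym density $\mathcal{A}_\Omega^{\mm{free}}$, which is the free Segal amplitude.

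To dress by the interaction, define $:P(\Phi_\Omega):$ by Wick renormalization with respect to $C_\Omega=(-\Delta_\Omega+m^2)^{-1}$ and set
\begin{equation*}
  \mathcal{A}_\Omega(\varphi_{\mm{in}},\varphi_{\mm{out}})=\mathcal{A}_\Omega^{\mm{free}}(\varphi_{\mm{in}},\varphi_{\mm{out}})\cdot \mathbb{E}\!\left[\me^{-\int_\Omega :P(\Phi_\Omega):\,\dd V}\,\Big|\,\varphi=(\varphi_{\mm{in}},\varphi_{\mm{out}})\right].
\end{equation*}
Standard Nelson / Glimm--Jaffe hypercontractive bounds show that $\mathcal{A}_\Omega\in L^2(\mu_{\mm{in}}\otimes\mu_{\mm{out}})$, so that $U_\Omega$ is bounded (indeed Hilbert--Schmidt) from $\mathcal{H}_{\mm{in}}$ to $\mathcal{H}_{\mm{out}}$.

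The gluing identity (\ref{eqn-intro-state-compo}) would then follow in three steps. Slice $\Omega=\Omega_2\cup_\rho\Omega_1$ along the common boundary $\Sigma$. First, the Markov property of the free GFF on $\Omega$ asserts that, conditional on $\Phi_\Omega|_\Sigma=\varphi$, the restrictions $\Phi_\Omega|_{\Omega_1}$ and $\Phi_\Omega|_{\Omega_2}$ are independent and distributed as the free fields on $\Omega_i$ with that boundary datum; this is exactly the infinite-dimensional Bayes formula, giving $\mathcal{A}_\Omega^{\mm{free}}(\varphi_{\mm{in}},\varphi_{\mm{out}})=\int \mathcal{A}_{\Omega_2}^{\mm{free}}(\varphi,\varphi_{\mm{out}})\mathcal{A}_{\Omega_1}^{\mm{free}}(\varphi_{\mm{in}},\varphi)\,\dd\mu_\Sigma(\varphi)$. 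Second, one proves the locality of the renormalized interaction, $\int_\Omega:P(\Phi):\,\dd V=\int_{\Omega_1}:P(\Phi|_{\Omega_1}):\,\dd V+\int_{\Omega_2}:P(\Phi|_{\Omega_2}):\,\dd V$, where each Wick ordering is performed with the covariance intrinsic to its piece. Third, combining with the conditional independence above yields the factorization of the dressed expectations and hence (\ref{eqn-intro-state-compo}).

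\textbf{Main obstacle.} The deepest difficulty is twofold. On the probabilistic side, the Bayes principle must be executed in infinite dimensions, where the joint Gaussian has no density and the ``conditional density'' is really a Radon--Nikodym derivative between Gaussians which must be shown equivalent; this requires the Cameron--Martin / Feldman--H\'ajek analysis indicated above, controlled by a careful trace-class comparison of DtN operators of $\Omega,\Omega_1,\Omega_2$. On the analytic side, the locality of $:P(\phi):$ is subtle precisely because the covariances $C_\Omega, C_{\Omega_1}, C_{\Omega_2}$ differ on the diagonal: the difference of Wick subtractions produces finite boundary-layer counterterms that must be tracked and absorbed into the free amplitudes without spoiling the $L^2$ bounds. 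I expect these two steps --- rather than the construction of $\mu_\Sigma$ or the hypercontractive estimate --- to absorb the bulk of the technical work.
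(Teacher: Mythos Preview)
Your strategy matches the paper's: fix the boundary Gaussian with covariance $\tfrac12(\Delta_\Sigma+m^2)^{-1/2}$, build the free amplitude as a Radon--Nikodym density controlled by Dirichlet-to-Neumann data, dress it by the conditioned Wick interaction, and deduce gluing from the Markov/Bayes decomposition plus locality of the interaction. Two execution points differ from the paper and are worth knowing.

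First, rather than work with a Neumann/free GFF on $\Omega$, the paper defines $\mathcal{A}_\Omega$ through the \emph{isometric double} $|\Omega|^2$ (a closed surface): the amplitude is the \emph{square root} of the density of $\tau_{\partial\Omega}\big(e^{-S_{|\Omega|^2}}\mu_{\mathrm{GFF}}^{|\Omega|^2}\big)$ against $\mu_{2\mathbf D}^{\partial\Omega}$, multiplied by explicit $\zeta$-determinant ratios. This packaging is what makes the constants close up exactly; the Burghelea--Friedlander--Kappeler formula for $\det_\zeta$ is used to match the determinants under gluing, so you should expect to invoke it rather than leave the constants implicit.

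Second --- and this directly addresses the obstacle you flag --- the paper sidesteps the ``boundary-layer counterterm'' bookkeeping by \emph{not} Wick ordering with the piece-dependent covariances $C_{\Omega_i}$. Instead it introduces a single \emph{local} Wick ordering ${:}\cdot{:}_0$ based on the kernel $-\tfrac{1}{2\pi}\log(m\,d_g(x,y))$, which depends only on the local metric and is therefore identical on $\Omega$, $\Omega_1$, $\Omega_2$. Combined with a regulator-independence result proved for compactly supported mollifiers (so that the mollified field at $x\in\Omega^\circ$ is literally $\mathcal O_\Omega$-measurable once the mollification scale is small), locality of $\int_\Omega{:}P(\phi){:}$ becomes an exact identity with no residual counterterms to absorb. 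Your proposed route of tracking the finite differences between $C_\Omega$ and $C_{\Omega_i}$ on the diagonal would work in principle, but this local-Wick device is the paper's main simplification of that step.
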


\begin{def7}
By a \textsf{real distribution} $\phi$ we mean that if we pair~$\phi$ with a potentially complex test function~$f$, then
  \begin{equation}
    \ol{\phi(f)}=\phi(\ol{f}),
    \label{}
  \end{equation}
  the bar denoting complex conjugation. 
\end{def7}

A lot of technicalities go into the full precise description of the above statement, see subsection \ref{sec-segal-descript} for details and in particular theorem {\color{blue} 1$''$} for the final precise statement. In order to emphasise the main difficulties dealt with by this article regarding the above theorem, let us restate it as follows.

\begin{thrmbis}{thrm-intro-main-1}
  A rigorous definition can be given to the formal expression (\ref{eqn-heu-segal-amplitude}) along the lines of the classical construction of the~$P(\phi)_2$ measure \`a la Nelson \cite{Nel66}, and out of this definition, the property (\ref{eqn-intro-state-compo}), along with others required by Segal's axioms (see definition \ref{def-segal-2}), can be proved.
\end{thrmbis}

\subsection{Main Novelties, Comments, Organisation}

We would like to point out \textbf{three main novelties} of this article. Firstly, the relationship between theorems \ref{thrm-intro-main-2}, \ref{thrm-intro-main-3} and \ref{thrm-intro-main-1} above offers a new perspective that certain asymptotic results in pure geometry can be viewed as a consequence of an underlying Segal Quantum Field Theory whose ``transfer operators''~$U_{\Omega}$ satisfy certain nice properties. Besides, of course, theorem \ref{thrm-intro-main-2} itself is new in that the numbers~$Z_N$ are now partition functions of \textit{interacting} field theories instead of regularized determinants of geometric operators, which were studied previously and in our setting corresponds to free field theories. Moreover, in the process of proving theorem \ref{thrm-intro-main-2} one also obtains an infinite-volume~$P(\phi)_2$-theory (in one direction) over a curved periodic surface which has never been considered before in the literature. Moreover, the Segal view point gives a simple conceptual proof of the mass gap for the $P(\phi)_2$ state for any polynomial $P$ of degree $\geqslant 4$ bounded from below on a space of infinite volume, infinite genus. Theorem \ref{thrm-intro-main-3} seems to be one of the first mass gap results for interacting QFT on curved spaces.

A key difficulty in proving theorem \ref{thrm-intro-main-1} goes into proving (\ref{eqn-intro-state-compo}) for the free theory, namely the case with~$P=0$, and the proof for~$P\ne 0$ builds on that case. Although the same result as theorem \ref{thrm-intro-main-1} was to an extent obtained by Pickrell in \cite{Pickrell}, this article offers a novel viewpoint based on a simple conceptual idea involving a symmetry in the successive conditioning of random variables (see remark \ref{rem-bayes-descrip} for a heuristic description), whereas \cite{Pickrell} proceeded by an analogy with the finite dimensional case where computations are done using the theory of \textsf{oscillator semigroups}. A parallel treatment of the same difficulty also appears in Guillarmou-Kupiainen-Rhodes-Vargas \cite{GKRV} where the method is based on explicit computations of Dirichlet forms, yet different from both this article and \cite{Pickrell}.

To go from~$P=0$ to~$P\ne 0$, a key ingredient is to show that the~$P(\phi)$-interaction functional of the field ---~$\int_{M}^{}P(\phi(x))\,\dd x$ heuristically --- is \textit{local}, as alluded to heuristically at the end of subsection \ref{sec-intro-spin-chain}, which in fact enables the amplitude (\ref{eqn-heu-segal-amplitude}) to be defined rigorously in the first place. No precise description of this locality existed in the literature and no explicit construction was given to prove it except for some remarks in~\cite{Abdesselam, Duch}. The present article fills this gap in subsection \ref{sec-locality}, based on a strengthening of Nelson's argument presented in subsection \ref{sec-nelson-reg-indep} that allows a freedom in choosing the regulators in the renormalisation process, producing the same measure in the limit. This latter result is also new by itself.

\paragraph{Dirichlet-to-Neumann operators and reflection positivity.} A fundamental fact used repeatedly in this article is that the induced probability law of the Gaussian Free Field (GFF) on a closed manifold upon restricting onto a hypersurface is given by the inverse of the (jumpy) Dirichlet-to-Neumann operator. There exists several ways to see this and we offer a new approach based on an elementary relation between the operators~$\tau_{\Sigma}$,~$\Delta_M+m^2$,~$\PI_M^{\Sigma}$ and~$\DN_M^{\Sigma}$ (see ``notations'' below), namely
\begin{equation}
  j_{\Sigma}\defeq \tau_{\Sigma}^*=(\Delta_M+m^2)\PI_M^{\Sigma}(\DN_M^{\Sigma})^{-1},
  \label{}
\end{equation}
where~``$*$'' denotes the distributional adjoint. Though elementary, such a relation was never written down elsewhere and here we derive it directly from the Green-Stokes formula (see subsection \ref{sec-green-stokes}). Later, we also put forward another elementary relation
 \begin{equation}
    \PI_{\Omega}^{\partial\Omega}(\DN_{\Omega}^{\partial\Omega})^{-1}(\PI_{\Omega}^{\partial\Omega})^*=C_N-C_D,
    \label{}
  \end{equation}
  where~$C_D$,~$C_N$ respectively are the Green operators (of the massive Laplacian) with Dirichlet and Neumann conditions. This gives a \textbf{direct relation} between the positivity of the Dirichlet-to-Neumann map and the reflection positivity (RP) of the GFF on closed Riemannian manifolds with reflection symmetry (see subsection \ref{sec-RP-first}). This intimate relation between Dirichlet-to-Neumann operators and RP was never explicitly noticed before.

\paragraph{Organisation.} In section \ref{sec-gaus-fields} we recall the preliminaries on which our construction will be based; these include the definition and properties of the (massive) Gaussian Free Field (GFF) measure on $\mathcal{D}'(M)$ (subsection \ref{sec-mas-GFF}), the functional determinants (subsection \ref{sec-det}) and a formula computing the Radon-Nikodym densities between mutually absolutely continuous Gaussian measures on spaces of distributions (subsection \ref{sec-quad-pert-rad-niko-main}), of which we include a full proof based on the method of \cite{GJ} section 9.3. 

In section \ref{sec-nelson-main} we retrace the classical argument of Nelson \cite{Nel66} and construct the interacting $P(\phi)$ functional measure on $\mathcal{D}'(M)$ for $M$ a closed Riemannian surface, giving the partition function $Z_M$; the new result which says the interaction functional (and hence the limiting measure) is independent of the regulator for the renormalisation process, when the regulator is picked from a specific class, is proposition \ref{prop-nelson-main}.

Section \ref{sec-trace-poisson} aims to derive the behavior of a Gaussian field under the trace map (restriction to hypersurface); our method is based on elementary relations between several geometric operators as mentioned above (lemma \ref{lemm-DN-trick}), and as above it leads to the somewhat new perspective on reflection positivity (subsection \ref{sec-RP-first}). Section \ref{sec-markov-main} discusses the Markov decomposition of the GFF (subsection \ref{sec-markov-main}) that culminates in the \textit{Bayes principle} (see remark \ref{rem-bayes-descrip} and proposition \ref{prop-bayes-gff}) for the GFF and proving locality of the $P(\phi)$ interaction (subsection \ref{sec-locality}). 

Section \ref{sec-segal-main} formulates precisely and proves theorem \ref{thrm-intro-main-1} while section \ref{sec-period-cover} does it for theorem \ref{thrm-intro-main-2} and \ref{thrm-intro-main-3}.

\begin{def7}[Bayes principle]\label{rem-bayes-descrip}
Here we explain in briefly the new idea that proves Segal's gluing (composition) axiom in the case $P=0$. In the simplest form it amounts to the following observation: suppose two real random variables~$X$ and~$Y$ have joint probability law~$\mb{P}_{(X,Y)}$; then we have two equal expressions for their joint density~$p(x,y)=(\dd\mb{P}_{(X,Y)}/\dd \mathcal{L}_{\mb{R}}^{\otimes 2}) (x,y)$, namely
\begin{equation}
  p(x,y)=\frac{\dd(\pi^x_*\mb{P}_{(X,Y)})}{\dd \mathcal{L}_{\mb{R}}}(x)\frac{\dd \mb{P}_{Y|X=x}}{\dd \mathcal{L}_{\mb{R}}}(y)=\frac{\dd(\pi^y_*\mb{P}_{(X,Y)})}{\dd \mathcal{L}_{\mb{R}}}(y)\frac{\dd \mb{P}_{X|Y=y}}{\dd \mathcal{L}_{\mb{R}}}(x),
  \label{eqn-bayes-ordinary}
\end{equation}
where~$\pi^x$,~$\pi^y$ are respectively the projections onto the~$x$- and~$y$-axes,~$\mb{P}_{Y|X=x}$ denotes the conditional law of~$Y$ knowing ``$X=x$'', and vice versa for~$\mb{P}_{X|Y=y}$ (more rigorously these are expressed using \textit{transition kernels}). In other words, one could evaluate~$p(x,y)$ by \textit{conditioning} in two alternative ways: on~$X$ or on~$Y$. In what follows we put forward a version of (\ref{eqn-bayes-ordinary}) for the Gaussian Free Field (GFF) measure on~$\mathcal{D}'(M)$ for~$M$ a closed Riemannian surface, where the projections~$\pi^{x,y}$ are replaced by trace maps (restrictions) onto embedded Riemannian circles. By the spacial Markov property of the GFF, the expressions for the conditional laws simplify nicely. Since the transition amplitudes (\ref{eqn-heu-segal-amplitude}) are given by the square roots of the densities of the trace-induced measures with respect to fixed background Gaussian functional measures on the circles (corresponding to~$p(x,y)$ above), we obtain a proof of Segal's gluing axioms for the GFF and, by the \textit{locality} of the~$P(\phi)$ interaction, the result extends immediately to the interacting case for this interaction. 
\end{def7}

\begin{def7}[role of symmetries and time translation]\label{rem-sym-time-trans}
  One primary aim of the present work is to reconcile the model (\ref{eqn-def-mes-gibbs-heu}) with the axiomatics proposed by G. Segal \cite{Segal}. However, a significant difference between our model and the theories \cite{Segal} principally tries to target is that we do not have and consider symmetries beyond the simplest one of reflecting across a hypersurface. In particular the model (\ref{eqn-def-mes-gibbs-heu}) will not be conformally invariant, and not even the free case ($P=0$) as~$m>0$ and we do not include any ``conformal anomaly'' accounting effects of a conformal change in the metric. A linear term involving the Gauss curvature could have been included in the exponential of (\ref{eqn-heu-segal-amplitude}) --- our situation corresponds to the case of ``minimal coupling'', see discussion of Jaffe and Ritter \cite{JR} around equation (7).
  
  In fact, we lack a basic symmetry which is present in the Euclidean ($\mb{R}^2$) case of the same model (\ref{eqn-def-mes-gibbs-heu}): since we deal with general compact surfaces, there are no distinguished ``time translations'' (in the form of a generating Killing vector field). In the Euclidean case, the ``canonical'' time direction helps to fix a definitive Hilbert space of quantum states, and eventually allows one to obtain an explicit expression of the Hamiltonian operator generating time evolution of the theory. This would also be the case if $\Omega$ is a cylinder equipped with cylindrical metric, for example, but for general $\Omega$ the notion of a ``Hamiltonian operator'' is not very well-defined. It is nevertheless interesting to think of a definition perhaps in terms of categorical limits as in Kontsevich and Segal \cite{KS21} section 3. Basic constructions of a Euclidean (Riemannian) field theory on curved space-times equipped with a time translation have been considered in Jaffe and Ritter \cite{JR2}.

  Last but not least, in the last section \ref{sec-period-cover} of this article we explore a crucial property of the operator $U_{\Omega}$ representing the ``evolution'' across the cobordism $\Omega$ (more suggestively it could be written as $\me^{-[\Omega]H}$), namely that it is \textsf{Perron-Frobenius}. This would imply that it nevertheless has many properties in common with an actual time evolution (i.e. $\me^{-t H}$), especially it has a \textsf{spectral gap}.
\end{def7}

\begin{def7}
  In hindsight, the author feels that a proof following (\ref{eqn-bayes-ordinary}) is a natural one regarding the intuition behind the path integral as ``summing over histories'', and (\ref{eqn-bayes-ordinary}) simply says that such a sum ``conditioned'' on different intermediate points in history\footnote{In our case, history is circular!} produce equivalent results (the proof of (\ref{eqn-bayes-ordinary}) itself is trivial!). However, a still more ideal version of the proof would perhaps be one that puts the free field and interacting field on equal footing, where Segal's axioms follow from a generalized version of ``Nelson's axioms'' as explained on section IV.1 of Simon \cite{Sim2}, based on the Markov property and uses a ``nonlinear version'' of the transition operator~$\mathcal{M}_{M,2}^1$ (see section \ref{sec-bayes-gff}).
\end{def7}

\subsection{Notations}

\noindent In this paper, unless stated otherwise,

``$\heueq$'' means ``heuristically equal to'';

$(M,g)=$ closed Riemannian surface with metric~$g$;

$(\Omega,g)=$ compact Riemannian surface with metric~$g$, boundary~$\partial\Omega$, seen as isometrically embedded in~$M$, the induced metric on~$\partial\Omega$ is still denoted~$g$;

$\Sigma =$ disjoint union of Riemannian circles (of different radii), usually~$\Sigma =\partial\Omega$; hypersurface in $M$;

$\Omega^{\circ}=$ interior of $\Omega$;

$m=$ mass parameter,~$m>0$, \textit{fixed throughout paper};

$\mathcal{D}'(M)$,~$\mathcal{D}'(\Omega^{\circ})=$ \textsf{real} distributions on~$M$ and~$\Omega$;

$C^{\infty}(M)$,~$C_c^{\infty}(\Omega^{\circ})=$ smooth functions on~$M$, smooth compact support functions on~$\Omega^{\circ}$;

$\Delta=$ Laplacian on~$(M,g)$, $\Delta f:=-\ddiv (\nabla f)$, therefore it is defined to be \textsf{nonnegative}, the same applies below;

$\Delta_{\Sigma}=$ Laplacian on~$\Sigma$;

$\Delta_{\Omega,D}=$ Laplacian on~$\Omega$ with (zero) Dirichlet boundary condition (see also remark \ref{rem-Dir-Neu-cond-mean});

$\mn{D}_{\Sigma}=(\Delta_{\Sigma}+m^2)^{1/2}$;

$\tau_{\Sigma}=$ trace operator (restriction) onto the hypersurface $\Sigma$;

$(Q,\mathcal{O})=$ general probability sample space with~$\sigma$-algebra~$\mathcal{O}$;

$C=$ general positive self-adjoint elliptic pseudodifferential operator on~$M$,~$\Omega$,~$\Sigma$, which is Hilbert-Schmidt on the corresponding~$L^2$ spaces;

$\mu_C^{\mm{cond}}=\mu_{C^{-1}}^{\mm{cond}}=$ Gaussian measure on~$\mathcal{D}'(M)$,~$\mathcal{D}'(\Omega^{\circ})$, or~$\mathcal{D}'(\Sigma)$, equipped with their Fr\'echet Borel~$\sigma$-algebra, with covariance~$\mb{E}_C[\phi(f)\phi(h)]=\ank{f,C h}_{L^2}$, under some conditions;

$\mu_{\mm{GFF}}^M=$ massive GFF measure on $\mathcal{D}'(M)$; $\mu_{\mm{GFF}}^{\Omega,D}=$ massive Dirichlet GFF measure on $\mathcal{D}'(\Omega^{\circ})$;

$\mb{E}_{B}^A=$ expectation under $\mu_{B}^A$;

$\ank{-,-}_{L^2}=L^2$-inner product, pairing between $\mathcal{D}'(M)$ and $C^{\infty}(M)$, or between $W^s(M)$ and $W^{-s}(M)$, or any other pairing which is an extension of the $L^2$-inner product;

$\phi(f)=$ the random variable~$\phi\mapsto \ank{\phi,f}_{L^2}$ indexed by~$f\in C^{\infty}(M)$,~$C_c^{\infty}(\Omega^{\circ})$ or~$C^{\infty}(\Sigma)$;

$W^s(M)=$ the~$L^2$ Sobolev space on~$M$, with inner product~$\ank{-,-}_{W^s(M)}:=\ank{-,(\Delta+m^2)^{s} -}_{L^2}$;

$W^s_{A}(M)$, $W^s_U(M)$, $W^s(U)=$ see appendix \ref{sec-app-sobo};

$p_{M\setminus A}^{\perp}$,~$p_{M\setminus A}$,~$P_{ A}$, and~$P_{ A}^{\perp}=$ Sobolev projections defined in lemma \ref{lemm-sobo-decomp};

$\Psi^r(M)$, $\Psi^r(\Sigma)=$ pseudodifferential operators ($\Psi$DOs) on~$M$ or $\Sigma$ with order~$r$;

$[\mathcal{L}\phi]=$ hypothetical Lebesgue measure on a space of distributions;

$\nrm{\cdot}_{\mm{tr}}$, $\nrm{\cdot}_{\mm{HS}}$, $\nrm{\cdot}_{L^2}=$ trace norm, Hilbert-Schmidt norm, operator norm acting on $L^2$ or $L^2$-norm on function;

$\PI_{\Omega}^{\Sigma,B}=$ Poisson integral operator extending from~$\Sigma$ to~$\Omega$ (definition \ref{def-pi-more-general}), with boundary conditions~$B$ imposed on boundary components \textit{other than}~$\Sigma$ (see also remark \ref{rem-Dir-Neu-cond-mean});

$\DN_{\Omega}^{\Sigma,B}=$ Dirichlet-to-Neumann operator or the jumpy version defined using~$\PI_{\Omega}^{\Sigma,B}$ (jumpy version is understood when~$\Sigma$ is in the interior of~$\Omega$ rather than a boundary component);

$\PI_{M}^{\Sigma}$, $\DN_{M}^{\Sigma}=$ Poisson integral operator extending from~$\Sigma$ to~$M$ (see (\ref{eqn-def-pi-emb-hyp-in-closed-case})), and jumpy Dirichlet-to-Neumann operator defined using $\PI_M^{\Sigma}$ (definition \ref{def-jp-DN-map}).

$\mu_{\DN}^{\Sigma,\Omega,B}=$ Gaussian measure on~$\mathcal{D}'(\Sigma)$ with covariance~$(\DN_{\Omega}^{\Sigma,B})^{-1}$;

$\mu_{2\DN}^{\partial\Omega,\Omega}=$ Gaussian measure on~$\mathcal{D}'(\partial\Omega)$ with covariance~$\frac{1}{2}(\DN_{\Omega}^{\partial\Omega})^{-1}$;

$\mu_{2\mn{D}}^{\Sigma}=$ Gaussian measure on~$\mathcal{D}'(\Sigma)$ with covariance~$\frac{1}{2}(\mn{D}_{\Sigma})^{-1}$;

$\mathcal{M}_{M,2}^1= \tau_{\Sigma_2}\PI_{M}^{\Sigma_1}$ is the \textsf{transition operator} defined in section \ref{sec-bayes-gff}.

  \subsection{Acknowledgement}

 \noindent The author thanks foremostly his thesis advisor Nguyen-Viet Dang for raising attention to the work~\cite{Pickrell} which results in this article and many discussions, especially in suggesting the use of the regulator in Dyatlov and Zworski \cite{DZ} and pointing out the wonderful analogy with spin chain treated in the final section. He also thanks Colin Guillarmou and Tat-Dat Tô for helpful discussions at an early stage of the project, Douglas Pickrell for giving valuable comments on a preliminary version of the work, and Sylvie Paycha and Bin Zhang for providing related reference and interesting discussion following a talk given by the author on this work. Finally, he thanks his laboratory IMJ-PRG for supporting his research.

  \section{Gaussian Fields on a Riemannian Manifold}
  \label{sec-gaus-fields}

\noindent \textbf{In this and the next section} we define rigorously a probability measure on~$\mathcal{D}'(M)$, as well as certain variants on~$\mathcal{D}'(\Omega^{\circ})$, which heuristically bears the form
\begin{equation}
  \dd\mu_P(\phi)\heueq\frac{1}{Z}\overbrace{\me^{-\int_{M}^{}P(\phi(x))\dd V_M (x)}}^{A} \underbrace{\me^{-\frac{1}{2}\int_{M}^{}(|\nabla \phi|_g^2 +m^2\phi^2 )\dd V_M}[\mathcal{L}\phi]}_{B},
  \label{eqn-def-mes-gibbs-heu}
\end{equation}
where~$P$ is a polynomial bounded from below, such as $P(\phi)=\phi^4-\phi^2$,~$[\mathcal{L}\phi]$ denotes the nonexistent Lebesgue measure on~$\mathcal{D}'(M)$ and
\begin{equation}
  Z_M\heueq\int_{\mathcal{D}'(M)}^{} \me^{-\int_{M}^{}P(\phi(x))\dd V_M(x)}\me^{-\frac{1}{2}\int_{M}^{}(|\nabla \phi|_g^2 +m^2\phi^2 )\dd V_M}[\mathcal{L}\phi]
  \label{}
\end{equation}
is the normalization factor, also called the \textsf{partition function}. 

The idea is that while~$[\mathcal{L}\phi]$ is nonexistent, together with the factor~$\exp(-\frac{1}{2}\int_{M}^{}(|\nabla \phi|_g^2 +m^2\phi^2 )\dd V_M)$ the expression~$B$ can be given a rigorous meaning as a Gaussian probability measure scaled by a (finite) volume constant, and the measure (\ref{eqn-def-mes-gibbs-heu}) can be constructed if, after defining part $A$ rigorously, one proves that it is~$L^1$ with respect to the measure~$B$. 

\textbf{In this section} we construct the measure $B$. In analogy with the expression of Gaussian measures on $\mb{R}^n$, one should define heuristically
\begin{equation}
  \me^{-\frac{1}{2}\int_{M}^{}(|\nabla \phi|_g^2 +m^2\phi^2 )\dd V_M}[\mathcal{L}\phi] \heueq \textrm{``}\det\textrm{''}(\Delta+m^2)^{-\frac{1}{2}} \dd\mu_{\mm{GFF}}^M (\phi)
  \label{eqn-heu-gaussian-mea-det-volume}
\end{equation}
where~$\mu_{\mm{GFF}}^M$ is a Gaussian measure on~$\mathcal{D}'(M)$ with \textit{covariance operator}~$(\Delta+m^2)^{-1}$, and ``$\det$'' is an infinite dimensional generalization of the determinant of a matrix. The Gaussian measure and the determinant are two issues to be treated separately (in sections \ref{sec-mas-GFF} and \ref{sec-det}), both being rather classical, the former called \textsf{Gaussian Free Field (measure)} and the latter called the \textsf{$\zeta$-regularized determinant}.

  \label{sec3.1}

 \subsection{The Massive Gaussian Free Field}\label{sec-mas-GFF}

 \subsubsection{Definition and Representations}
In this subsection we explain the point of view adopted in this article of the massive Gaussian Free Field (GFF), and refer to Sheffield \cite{Shef} and Powell and Werner \cite{PW} for more information. 

Let~$(M,g)$ be a closed Riemannian manifold \textsf{of dimension~$d$} with metric~$g$, and~$\Omega \subset M$ an open domain with smooth boundary~$\partial\Omega$, both equipped with the metric induced from~$g$ (same notation). Fix~$m>0$ as the mass parameter. We say that
  the \textsf{massive Gaussian Free Field (GFF)} with mass~$m$ on~$M$ is the \textit{Gaussian random process indexed by}~$C^{\infty}(M)$, consisting of random variables~$\{\phi(f)~|~f\in C^{\infty}(M)\}$ such that
  \begin{equation}
    \mb{E}[\phi(f)\phi(h)]=\ank{f,(\Delta+m^2)^{-1}h}_{L^2(M)},\quad \mb{E}[\phi(f)]\equiv 0,
    \label{eqn-gff-cov-closed}
  \end{equation}
  for any~$f$,~$h\in C^{\infty}(M)$. Similarly, the \textsf{Dirichlet massive Gaussian Free Field} with mass~$m$ on~$\Omega$ is the \textit{Gaussian random process indexed by}~$C_c^{\infty}(\Omega^{\circ})$, consisting of random variables~$\{\phi(f)~|~f\in C_c^{\infty}(\Omega^{\circ})\}$ such that
  \begin{equation}
    \mb{E}[\phi(f)\phi(h)]=\bank{f,(\Delta_{\Omega,D}+m^2)^{-1}h}_{L^2(\Omega)}=\bank{P_{M\setminus \Omega^{\circ}}^{\perp}f, P_{M\setminus \Omega^{\circ}}^{\perp}h}_{W^{-1}},\quad \mb{E}[\phi(f)]\equiv 0,
    \label{}
  \end{equation}
  for any~$f$,~$h\in C_c^{\infty}(\Omega^{\circ})$. See appendix \ref{sec-app-sobo} and in particular lemma \ref{lemm-diri-green-op-quad-form}.

There exist many choices of sample spaces (``$Q$-spaces'' in the terminology of \cite{Sim2}) on which to realize those Gaussian processes. These realizations are all equivalent in the sense of \textsf{isomorphism of measure algebras} (see \cite{Sim2} section I.2). For the sake of concreteness, we point out that one choice for the sample space is~$\mathcal{D}'(M)$ (or~$\mathcal{D}'(\Omega^{\circ})$), as formulated in the following proposition, whose proof parallels the case on Euclidean space with minor modification.

\begin{prop}
  [Bochner-Minlos, \cite{BHL} theorem 5.11, page 266] \label{prop-boch-min} There exists a Borel probability measure~$\mu_{\mm{GFF}}^M$ on the Fr\'echet space~$\mathcal{D}'(M)$ such that~$\phi\mapsto \ank{\phi,f}_{L^2(M)}=:\phi(f)$,~$f\in C^{\infty}(M)$, realizes the random variable~$\phi(f)$ of the massive GFF on~$M$. 
  
  Similarly, there exists a Borel probability measure~$\mu_{\mm{GFF}}^{\Omega,D}$ on the Fr\'echet space~$\mathcal{D}'(\Omega^{\circ})$ such that~$\phi\mapsto \ank{\phi,f}_{L^2(\Omega)}$,~$f\in C_c^{\infty}(\Omega^{\circ})$, realizes the random variable~$\phi(f)$ of the Dirichlet massive GFF on~$\Omega$.\hfill~$\Box$
\end{prop}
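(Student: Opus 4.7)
The plan is to invoke the Bochner-Minlos theorem, which is the standard tool for producing probability measures on the continuous dual of a nuclear Fr\'echet (or nuclear LF) space. The key object is the candidate characteristic functional $\mathcal{C}:C^\infty(M)\to \mb{C}$ defined by
\[
  \mathcal{C}(f)\defeq \exp\!\Big(-\tfrac{1}{2}\ank{f,(\Delta+m^2)^{-1}f}_{L^2(M)}\Big).
\]
If one can show that $\mathcal{C}$ (i) satisfies $\mathcal{C}(0)=1$, (ii) is positive definite, and (iii) is continuous on the Fr\'echet space $C^\infty(M)$, then Bochner-Minlos produces a unique Borel probability measure $\mu_{\mm{GFF}}^M$ on the strong dual $\mathcal{D}'(M)$ whose characteristic functional equals $\mathcal{C}$. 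Covariance identity (\ref{eqn-gff-cov-closed}) is then recovered by differentiating $(t,s)\mapsto \mathcal{C}(tf+sh)$ twice at the origin, i.e.\ from the standard moment formula for centered Gaussians.

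Verifying (ii) reduces to showing that for any $f_1,\dots,f_n\in C^\infty(M)$ the Gram matrix $[\ank{f_i,(\Delta+m^2)^{-1}f_j}_{L^2}]_{i,j}$ is nonnegative, which is immediate from the fact that $(\Delta+m^2)^{-1}$ is a self-adjoint, strictly positive operator on $L^2(M)$ --- its eigenvalues are $(\lambda_k+m^2)^{-1}>0$ where $\{\lambda_k\}\subset [0,\infty)$ is the spectrum of $\Delta$ (strict positivity on the zero-mode is secured by $m>0$). For (iii), I would use that $C^\infty(M)$ is a nuclear Fr\'echet space whose topology is generated by the Sobolev seminorms $\nrm{(\Delta+m^2)^{s/2}\cdot}_{L^2}$; since $(\Delta+m^2)^{-1}:W^{-1}(M)\to W^1(M)$ is bounded, the quadratic form $f\mapsto \ank{f,(\Delta+m^2)^{-1}f}_{L^2}$ is continuous in the $W^{-1}$-seminorm and a fortiori in the $C^\infty$ topology, so $\mathcal{C}$ is continuous.

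The Dirichlet case on $\Omega$ is handled by the same mechanism: replace $C^\infty(M)$ by $C_c^\infty(\Omega^\circ)$ with its standard nuclear LF topology, and $(\Delta+m^2)^{-1}$ by the Dirichlet Green operator $(\Delta_{\Omega,D}+m^2)^{-1}$, whose positivity and continuity properties follow from standard elliptic regularity (the alternative quadratic-form expression $\bank{P_{M\setminus \Omega^\circ}^\perp f, P_{M\setminus \Omega^\circ}^\perp h}_{W^{-1}}$ anticipated in the statement makes positivity manifest). Bochner-Minlos then yields $\mu_{\mm{GFF}}^{\Omega,D}$ on $\mathcal{D}'(\Omega^\circ)$.

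The main obstacle is purely expository: the classical Bochner-Minlos statement is usually formulated for $\mathcal{S}(\mb{R}^n)$, so one has to check that its hypotheses go through verbatim for the nuclear Fr\'echet spaces $C^\infty(M)$ and $C_c^\infty(\Omega^\circ)$ on a manifold. Once the continuity estimate is written in a fixed Sobolev norm using the spectral decomposition of $\Delta$ on $(M,g)$, the argument is a direct transcription of the Euclidean one --- which is precisely the sense in which the proposition states that the proof \emph{parallels the Euclidean case with minor modification}.
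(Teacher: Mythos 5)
Your proposal is the standard Bochner--Minlos argument (characteristic functional, positive-definiteness from the positive quadratic form, Sobolev continuity, nuclearity of $C^\infty(M)$ and $C_c^\infty(\Omega^\circ)$), which is precisely what the paper has in mind when it cites \cite{BHL} and notes the proof ``parallels the case on Euclidean space with minor modification.'' The paper gives no proof of its own, so there is nothing to compare beyond confirming that your route is the intended one and is correct.
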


\begin{notation}
  Denote by~$\mb{E}_{\mm{GFF}}^M$ and~$\mb{E}_{\mm{GFF}}^{\Omega,D}$ the expectations under~$\mu_{\mm{GFF}}^M$ and~$\mu_{\mm{GFF}}^{\Omega,D}$ respectively.
\end{notation}

\begin{def7}\label{rem-mass-gau-field-four}
  Alternatively, based on the spectral theory of~$\Delta$, let~$\left\{ \varphi_j \right\}_{j=0}^{\infty}$ be its complete orthonormal eigenfunctions with (real nonnegative) eigenvalues~$\left\{ \lambda_j \right\}_{j=0}^{\infty}$,~$0=\lambda_0<\lambda_1\le \lambda_2\le \dots\le \lambda_j\le \cdots$, counted with multiplicity. Then the \textsf{GFF with mass~$m$} on~$M$ could also be represented as the random formal series
  \begin{equation}
    \phi=\sum_{j=0}^{\infty}\xi_j \varphi_j
    \label{eqn3.3}
  \end{equation}
  where the sequence~$(\xi_j)_j$ consists of i.i.d.\ real-valued random variables with $\xi_j$ being the standard centered Gaussian on~$\mb{R}$ with variance~$(\lambda_j+m^2)^{-1}$. Similarly, the Dirichlet GFF on~$\Omega$ could also be so represented using eigenfunctions of~$\Delta_{\Omega,D}$, which are complete for~$L^2(\Omega)$. One way to see that this construction is equivalent to the previous one is by appealing to \cite{Shu} page 92 proposition 10.2.
\end{def7}

We also consider a slightly more general situation  where~$(\Delta+m^2)^{-1}$ is replaced by an operator $C$.

\begin{deef}
  We say that a bounded self-adjoint positive elliptic pseudodifferential operator~$C$ of order~$-s$ on~$M$ ($s>0$) is a \textsf{Gaussian covariance operator} of order~$-s$.
\end{deef}

Following similar reasoning as proposition \ref{prop-boch-min} or remark \ref{rem-mass-gau-field-four}, one obtains a measure~$\mu_C$ on~$\mathcal{D}'(M)$.

\begin{deef}
  The \textsf{Gaussian Field} on~$M$ with \textsf{covariance operator}~$C$ is the \textit{Gaussian random process indexed by}~$C^{\infty}(M)$, consisting of random variables~$\{\phi(f)~|~f\in C^{\infty}(M)\}$ such that
  \begin{equation}
    \mb{E}[\phi(f)\phi(h)]=\ank{f,C h}_{L^2(M)},\quad \mb{E}[\phi(f)]\equiv 0,
    \label{}
  \end{equation}
  for any~$f$,~$h\in C^{\infty}(M)$. We denote the corresponding measure on~$\mathcal{D}'(M)$ by~$\mu_C$ and the expectation with respect to this measure by~$\mb{E}_C$.
\end{deef}

\begin{def7}
  For~$C$ satisfying the assumptions, the inner product~$(f,h)\mapsto \ank{f, Ch}_{L^2(M)}$ defines an equivalent norm for~$W^{-s}(M)$.
\end{def7}

\subsubsection{Essential Properties}

Now we collect some properties of the Gaussian random fields constructed above. We start with the following classical fact (the proofs are in parallel with the Euclidean case for which one could refer to \cite{Shef}).
\begin{lemm}\label{lemm-ghs-for-gff}
  The \textsf{Gaussian Hilbert space} of~$\mu_{\mm{GFF}}^M$ is~$W^{-1}(M)$, and that of~$\mu_{\mm{GFF}}^{\Omega,D}$ is~$W^{-1}(\Omega)$. The \textsf{Cameron-Martin space} of~$\mu_{\mm{GFF}}^M$ is~$W^{1}(M)$, and that of~$\mu_{\mm{GFF}}^{\Omega,D}$ is~$W^{1}_{\Omega}(M)$.\hfill~$\Box$
\end{lemm}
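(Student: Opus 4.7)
The plan is to exploit the observation that the GFF covariance $\mb{E}[\phi(f)\phi(h)]=\ank{f,(\Delta+m^2)^{-1}h}_{L^2}$ is \emph{by definition} of the~$W^{-1}$-inner product (as introduced in the Notations subsection), precisely $\ank{f,h}_{W^{-1}(M)}$. Thus the map
\[
  \iota: C^{\infty}(M)\longrightarrow L^2(\mathcal{D}'(M),\mu_{\mm{GFF}}^M),\qquad f\longmapsto \phi(f),
\]
is an isometry when the source carries the~$W^{-1}$-norm. Since~$C^{\infty}(M)$ is dense in~$W^{-1}(M)$, $\iota$ extends uniquely to an isometric embedding~$\tilde\iota: W^{-1}(M)\hookrightarrow L^2$. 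The image is closed and contains every~$\phi(f)$ with~$f\in C^{\infty}(M)$; by definition the Gaussian Hilbert space is the~$L^2$-closure of the linear span of these~$\phi(f)$, which therefore coincides with the range of~$\tilde\iota$, identified with~$W^{-1}(M)$.

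For the Cameron--Martin space, I invoke the general characterisation: given the Gaussian measure~$\mu_C$ on~$\mathcal{D}'(M)$ with covariance operator~$C=(\Delta+m^2)^{-1}$, the Cameron--Martin space is~$C(C^{\infty}(M))$ completed in the inner product~$\ank{Cf,Cg}_{\mm{CM}}\defeq \ank{f,Cg}_{L^2}$. Setting~$u=Cf$,~$v=Cg$, one gets $\ank{u,v}_{\mm{CM}}=\ank{(\Delta+m^2)u,v}_{L^2}=\ank{u,v}_{W^1(M)}$, which is the~$W^1$-inner product as fixed in the Notations. Since~$(\Delta+m^2)^{-1}$ is a topological isomorphism~$W^{-1}(M)\to W^1(M)$ by elliptic regularity and~$C^{\infty}(M)$ is dense in~$W^{-1}(M)$, the closure~$\ol{(\Delta+m^2)^{-1}C^{\infty}(M)}$ in the~$W^1$-norm equals~$W^1(M)$, giving the claim.

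For the Dirichlet variants, the same two arguments apply verbatim upon replacing~$\Delta$ by~$\Delta_{\Omega,D}$ and~$C^{\infty}(M)$ by~$C_c^{\infty}(\Omega^{\circ})$; the Gaussian Hilbert space identification yields~$W^{-1}(\Omega)$ because~$C_c^{\infty}(\Omega^{\circ})$ is dense in~$W^{-1}(\Omega)$ with respect to the norm~$f\mapsto \bank{P_{M\setminus\Omega^{\circ}}^{\perp}f,P_{M\setminus\Omega^{\circ}}^{\perp}f}_{W^{-1}}^{1/2}$ (appendix~\ref{sec-app-sobo}). For the Cameron--Martin identification, the key input is that~$(\Delta_{\Omega,D}+m^2)^{-1}$ is an isomorphism~$W^{-1}(\Omega)\to W^1_{\Omega}(M)$, where~$W^1_{\Omega}(M)$ is the subspace of~$W^1(M)$ of functions vanishing on~$M\setminus \Omega^{\circ}$ (trace-zero functions supported in~$\ol{\Omega}$). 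Under the same computation as above, the induced CM inner product agrees with the~$W^1$-inner product restricted to this subspace.

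The main obstacle is book-keeping about the Dirichlet Sobolev spaces~$W^{\pm 1}(\Omega)$ versus~$W^{\pm 1}_{\Omega}(M)$, i.e.\ showing that the Dirichlet resolvent~$(\Delta_{\Omega,D}+m^2)^{-1}$ really is a topological isomorphism between these two spaces and that it intertwines the two inner products as claimed; this follows from the elliptic regularity up to the boundary and the Lax--Milgram identification of the Dirichlet form with the~$W^1$-inner product on~$W^1_{\Omega}(M)$, but the checking is the longest step because it requires the precise definitions of these spaces recorded in the appendix. Once this isomorphism is in hand, everything else is a direct copy of the closed case.
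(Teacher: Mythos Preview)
Your argument is correct and is precisely the standard verification that the paper has in mind: the paper does not actually write out a proof of this lemma but marks it with a~$\Box$ and the preceding sentence ``the proofs are in parallel with the Euclidean case for which one could refer to~\cite{Shef}''. Your identification of the covariance with the~$W^{-1}$-inner product, the isometric extension of~$f\mapsto\phi(f)$, and the Cameron--Martin computation via~$u=Cf$ are exactly that parallel argument carried out on the closed manifold and on the domain with Dirichlet conditions; the book-keeping you flag about~$(\Delta_{\Omega,D}+m^2)^{-1}:W^{-1}(\Omega)\to W^1_{\Omega}(M)$ is handled in the paper by the definitions and density statements of appendix~\ref{sec-app-sobo} (in particular lemma~\ref{lemm-sobo-test-dense-in-open} and the definition of the Dirichlet Green operator just above lemma~\ref{lemm-diri-green-op-quad-form}), so nothing is missing.
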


\begin{def7}\label{rem-use-cam-mar-pairing-not-l2}
  There is a tacit assumption in the way we defined our fields: we expected the random variable~$\phi(f)$ to come from the distributional pairing ($L^2$-pairing) between~$\mathcal{D}'(M)$ and~$C^{\infty}(M)$ (respectively,~$\Omega$). Other pairings may also be used. For example, let~$\phi(f)$ be~$\ank{-,f}_{W^1(M)}$ instead of~$\ank{-,f}_{L^2(M)}$. One then needs to alter the covariances accordingly. Indeed, they are related by
  \begin{equation}
    \mb{E}\big[\bank{\phi,f}_{W^1}\bank{\phi,h}_{W^1}\big]=\mb{E}\big[
    \bank{\phi,(\Delta+m^2)f}_{L^2}\bank{\phi,(\Delta+m^2)h}_{L^2}\big]=\bank{f,h}_{W^1(M)}.
    \label{}
  \end{equation}
  This way of definition is noticeably used by Sheffield \cite{Shef}. The Gaussian Hilbert space in this case is~$W^1(M)$ (respectively,~$W^1_{\Omega}(M)$), and the Cameron-Martin spaces are the same. See also remark \ref{rem-natural-sobo-cam-mar}.
\end{def7}

Similarly,

\begin{lemm}\label{lemm-GHS-cam-mar-of-cov-C}
  The Gaussian Hilbert space of~$\mu_C$ is~$W^{-s}(M)$, equipped with~$\ank{-,C-}_{L^2}$, and the Cameron-Martin space is~$W^s(M)$, equipped with~$\sank{-,C^{-1}-}_{L^2}$. \hfill~$\Box$
\end{lemm}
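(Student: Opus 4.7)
My plan is to follow the pattern of Lemma \ref{lemm-ghs-for-gff}, since its proof uses only that $(\Delta+m^2)^{-1}$ is a positive, elliptic, self-adjoint $\Psi$DO, and $C$ satisfies the analogous properties. The governing principle is: for a centered Gaussian process indexed by a test-function space, the Gaussian Hilbert space (GHS) is isometrically identified with the completion of the test-function space under the covariance quadratic form, while the Cameron--Martin (CM) space is the image of the GHS under the covariance operator, endowed with the inverse-covariance inner product.

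For the GHS, by definition it is the closure in $L^2(\mathcal{D}'(M),\mu_C)$ of $\spn\{\phi(f):f\in C^\infty(M)\}$. Since $\phi(f)$ is centered Gaussian with variance $\ank{f,Cf}_{L^2}$, the map $\iota:f\mapsto\phi(f)$ is an isometry from $(C^\infty(M),\ank{-,C-}_{L^2})$ into $L^2(\mu_C)$; by the preceding remark, $\ank{-,C-}_{L^2}$ is equivalent on $C^\infty(M)$ to the $W^{-s}(M)$ inner product, so $\iota$ extends by density to an isometric embedding of $(W^{-s}(M),\ank{-,C-}_{L^2})$ onto the GHS. For the CM space, I would invoke the canonical identification of the CM space with the image in $\mathcal{D}'(M)$ of the GHS under the map $J:X\mapsto J(X)$ defined by $\ank{J(X),f}_{L^2}\defeq\mb{E}_C[X\phi(f)]$ for $f\in C^\infty(M)$. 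For $X=\phi(h)$, one computes $J(\phi(h))=Ch$; by ellipticity and positivity, $C:W^{-s}(M)\to W^s(M)$ is a Hilbert space isomorphism, identifying the CM space with $W^s(M)$. Transporting the GHS norm yields $\|Ch\|_{CM}^2=\ank{h,Ch}_{L^2}$; under the substitution $g=Ch$ this reads $\ank{g,C^{-1}g}_{L^2}$, as claimed.

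The main technical input reduces to the preceding remark --- the equivalence $\ank{-,C-}_{L^2}\asymp\|\cdot\|_{W^{-s}}^{2}$ together with the associated isomorphism property $C:W^{-s}(M)\to W^s(M)$. Both are settled by spectral theory: diagonalizing $C$ in an $L^2$-orthonormal eigenbasis with positive eigenvalues $\lambda_j$, one compares $(\lambda_j)$ with the spectrum of $\Delta+m^2$ through the $\Psi$DO calculus and Weyl asymptotics, and reads off the norm equivalence at the level of Fourier coefficients. Granted this, the chain of identifications above is straightforward; the only place where any care is needed is to ensure that $\iota$ is onto (not merely into) the GHS, which follows from the density of $C^\infty(M)$ in $W^{-s}(M)$ under the equivalent covariance norm.
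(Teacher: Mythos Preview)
Your proposal is correct and follows exactly the approach the paper intends: the paper gives no proof here (just ``Similarly'' followed by~$\Box$), deferring to the parallel argument for Lemma~\ref{lemm-ghs-for-gff}, and you have supplied precisely those details. The only minor remark is that the norm equivalence $\ank{-,C-}_{L^2}\asymp\nrm{\cdot}_{W^{-s}}^2$ need not be argued via Weyl asymptotics---it is the content of the remark immediately preceding the lemma (i.e.\ Lemma~\ref{lemm-sobo-inner-prod} applied to the elliptic, positive, self-adjoint~$C$)---but this does not affect correctness.
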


Finally we say a word about the supports of the measures in the closed manifold case. This can be proved either using lemma \ref{lemm-quadratic-pert-trace} and evaluating the expectation of $\snrm{\phi}_{W^s(M)}^2$ or a spectral representation like in remark \ref{rem-mass-gau-field-four}.

\begin{lemm}\label{lemm-supp-gaus-meas}
  We have~$\mu_C(W^{-\delta}(M))=1$ for any~$\delta>\frac{1}{2}(d-s)$.\hfill~$\Box$
\end{lemm}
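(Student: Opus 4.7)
My plan is to show $\mathbb{E}_C\bigl[\|\phi\|_{W^{-\delta}(M)}^2\bigr] < \infty$ whenever $\delta > (d-s)/2$; an elementary Chebyshev argument then forces $\phi \in W^{-\delta}(M)$ for $\mu_C$-almost every $\phi$, giving $\mu_C(W^{-\delta}(M)) = 1$.

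First I would represent $\phi$ in the spectral expansion of remark \ref{rem-mass-gau-field-four}: letting $\{\varphi_j\}_{j\ge 0}$ be the $L^2(M)$-orthonormal eigenbasis of $\Delta$ with eigenvalues $\lambda_j$, write $\phi = \sum_j \xi_j \varphi_j$ as a random distribution, where $\xi_j := \phi(\varphi_j)$ forms a centered Gaussian family with covariance $\mathbb{E}[\xi_i \xi_j] = \langle \varphi_i, C \varphi_j\rangle_{L^2}$ (the $\xi_j$ are \emph{not} independent in general, since $C$ need not be simultaneously diagonalized with $\Delta$). Using the characterization of the $W^{-\delta}$-norm via $\Delta$-eigenbasis coefficients, I have $\|\phi\|_{W^{-\delta}(M)}^2 = \sum_j (\lambda_j + m^2)^{-\delta} \xi_j^2$, and exchanging the nonnegative sum with the expectation (by Tonelli, or more structurally by lemma \ref{lemm-quadratic-pert-trace}), I obtain
\[
  \mathbb{E}_C\bigl[\|\phi\|_{W^{-\delta}(M)}^2\bigr] = \sum_j (\lambda_j + m^2)^{-\delta} \langle \varphi_j, C \varphi_j\rangle_{L^2} = \mathrm{tr}_{L^2(M)}\!\bigl((\Delta+m^2)^{-\delta} C\bigr).
\]

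Second, I would invoke the pseudodifferential calculus: since $(\Delta+m^2)^{-\delta}$ has order $-2\delta$ and $C$ has order $-s$, the composition $(\Delta+m^2)^{-\delta} C$ is an elliptic pseudodifferential operator of order $-(2\delta + s)$ on the compact $d$-manifold $M$, and by Weyl's law a positive elliptic $\Psi$DO of order $-r$ on such a manifold is trace class on $L^2$ precisely when $r > d$. (Positivity of $(\Delta+m^2)^{-\delta} C$ per se is irrelevant after cyclically symmetrizing as $C^{1/2}(\Delta+m^2)^{-\delta}C^{1/2}$, which has the same trace.) This gives finiteness exactly when $2\delta + s > d$, i.e.\ $\delta > (d-s)/2$, as required.

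The main obstacle is the rigorous interchange of spectral summation with expectation and the identification of the resulting sum with an honest $L^2$-trace; this is essentially the content of the earlier lemma \ref{lemm-quadratic-pert-trace} on quadratic functionals of Gaussian fields, which I would cite. A minor alternative would be to diagonalize $C$ itself (its eigenvalues decay like $j^{-s/d}$ by Weyl's law applied to $C^{-1}$), thereby reducing to a sum over independent Gaussians, at the cost of having to bound the matrix elements of $(\Delta+m^2)^{-\delta}$ in the $C$-eigenbasis; the $\Psi$DO argument above is cleaner.
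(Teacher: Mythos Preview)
Your proposal is correct and follows essentially the approach the paper indicates: the paper's proof sketch says to use lemma \ref{lemm-quadratic-pert-trace} to evaluate the expectation of $\|\phi\|_{W^{-\delta}}^2$ (equivalently, the spectral representation of remark \ref{rem-mass-gau-field-four}), which is precisely what you do in identifying $\mathbb{E}_C[\|\phi\|_{W^{-\delta}}^2] = \mathrm{tr}_{L^2}\bigl((\Delta+m^2)^{-\delta}C\bigr)$ and then checking trace-class via the $\Psi$DO order count. One cosmetic remark: ellipticity of the composite is not needed for the trace-class conclusion (any $\Psi$DO of order $<-d$ on a closed $d$-manifold is trace class), so you can drop that adjective.
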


\begin{def7}
  We point out that~$\mu_C(W^{-\delta}(M))=1$ for any~$\delta>0$ in the following two cases:
  \begin{enumerate}[(i)]
    \item $\dim M=2$ and~$\mu_C=\mu_{\mm{GFF}}^M$;
    \item $\dim M=1$ and~$C$ has order~$-1$.
  \end{enumerate}
\end{def7}

Last but not least, we make the following innocent but useful observation.

\begin{lemm}\label{lemm-gaus-field-disj-indep}
  If~$\Omega=\Omega_1\sqcup \Omega_2$ (possibility of non-empty boundary in either or both components), then GFFs (indeed, Gaussian fields) over~$\Omega_1$ and~$\Omega_2$ are independent and~$\mu_{\mm{GFF}}^{\Omega,B}=\mu_{\mm{GFF}}^{\Omega_1,B}\otimes \mu_{\mm{GFF}}^{\Omega_2,B}$ where~$B=D$ when the corresponding~$\Omega$,~$\Omega_1$ or~$\Omega_2$ has boundary and~$B=\varnothing$ when either is closed. \hfill~$\Box$
\end{lemm}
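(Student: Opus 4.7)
The plan is to reduce the claim to a statement about the covariance kernel and use the fact that jointly Gaussian random variables with vanishing covariance are independent, which in turn lets me identify the joint law with the product law. First I would observe that the disjoint decomposition $\Omega = \Omega_1 \sqcup \Omega_2$ induces an orthogonal decomposition $L^2(\Omega) = L^2(\Omega_1) \oplus L^2(\Omega_2)$, and that the operator $\Delta_{\Omega,B} + m^2$ (with $B = D$ or no condition on a closed manifold) splits accordingly as a direct sum $(\Delta_{\Omega_1,B} + m^2) \oplus (\Delta_{\Omega_2,B} + m^2)$, since a function supported in $\Omega_i^\circ$ satisfies the boundary condition vacuously on $\partial \Omega_{3-i}$ and the Laplacian is local. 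Inverting block-diagonally, the Green operator $(\Delta_{\Omega,B} + m^2)^{-1}$ preserves this splitting, so its Schwartz kernel vanishes off the diagonal blocks $\Omega_i \times \Omega_i$.

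Next, for test functions $f_i \in C_c^\infty(\Omega_i^\circ)$ (for the Dirichlet case) or $f_i \in C^\infty(\Omega_i)$ (for the closed case), we compute
\begin{equation*}
  \mb{E}_{\mm{GFF}}^{\Omega,B}[\phi(f_1)\phi(h_2)] = \bank{f_1,(\Delta_{\Omega,B}+m^2)^{-1} h_2}_{L^2(\Omega)} = 0
\end{equation*}
by the preceding paragraph, while
\begin{equation*}
  \mb{E}_{\mm{GFF}}^{\Omega,B}[\phi(f_i)\phi(h_i)] = \bank{f_i,(\Delta_{\Omega_i,B}+m^2)^{-1} h_i}_{L^2(\Omega_i)},
\end{equation*}
which matches the covariance defining $\mu_{\mm{GFF}}^{\Omega_i,B}$. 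Since the family $\{\phi(f_1)\}_{f_1} \cup \{\phi(h_2)\}_{h_2}$ is jointly Gaussian and the cross-covariances vanish, the two sub-families are independent, and the marginal law of the restriction of $\phi$ to test functions on $\Omega_i$ coincides with $\mu_{\mm{GFF}}^{\Omega_i,B}$.

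Finally, via the natural identification $\mathcal{D}'(\Omega^\circ) \cong \mathcal{D}'(\Omega_1^\circ) \times \mathcal{D}'(\Omega_2^\circ)$ induced by restriction to each component, I would conclude that $\mu_{\mm{GFF}}^{\Omega,B}$ and $\mu_{\mm{GFF}}^{\Omega_1,B}\otimes \mu_{\mm{GFF}}^{\Omega_2,B}$ have identical characteristic functionals: both evaluate on $f = f_1 + f_2$ to $\exp\bigl(-\tfrac{1}{2}\sum_i \ank{f_i,(\Delta_{\Omega_i,B}+m^2)^{-1} f_i}_{L^2(\Omega_i)}\bigr)$. By the uniqueness part of Bochner-Minlos (proposition \ref{prop-boch-min}), the two Borel probability measures on $\mathcal{D}'(\Omega^\circ)$ coincide.

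The main point requiring care, rather than an obstacle per se, is the block-diagonal decomposition of $(\Delta_{\Omega,B}+m^2)^{-1}$: it is immediate on $L^2$ from spectral calculus once one checks that $\Delta_{\Omega,B}$ is self-adjoint with the claimed direct sum structure, which rests on the fact that a Dirichlet boundary condition on $\partial \Omega$ decouples into independent Dirichlet conditions on $\partial \Omega_1$ and $\partial \Omega_2$ (and that for closed components no boundary condition is imposed at all). The same argument applies verbatim to a Gaussian field with covariance $C$ that is itself block-diagonal with respect to $L^2(\Omega_1)\oplus L^2(\Omega_2)$, which explains the parenthetical ``indeed, Gaussian fields'' in the statement.
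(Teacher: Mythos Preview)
Your proof is correct. The paper does not actually supply a proof for this lemma --- it is stated with a terminal $\Box$ and treated as an innocent observation --- so there is nothing to compare against; your argument via block-diagonality of the covariance, vanishing cross-covariances, and uniqueness from Bochner--Minlos is exactly the standard justification the paper implicitly relies on.
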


 \subsection{Determinants}\label{sec-det}

 \noindent In this section we discuss (two) generalizations of the notion of the determinant (of a matrix) to infinite dimensional operators. General references include Kontsevich and Vishik \cite{KV}, Shubin \cite{Shu} sections 9-13, Simon \cite{Sim1} chapter 3 and finally Gohberg, Goldberg and Krupnik \cite{GGK}. See also Dang \cite{Dang} for a quick acquaintance of the physical-geometric context and Quine, Heydari and Song \cite{QHS} for an interesting discussion of zeta-regularization of infinite products.

\subsubsection{Zeta-regularized and Fredholm Determinants}\label{sec-det-def}

\noindent The zeta-regularized determinant was first introduced by Ray and Singer \cite{RS}. The first step is to define the \textsf{zeta function} of a (rather special) pseudodifferential operator~$A$ over a manifold $M$ with or without boundary,
\begin{equation}
  \zeta_A(z)\defeq \ttr_{L^2}(A^{-z})
  \label{eqn-zeta-def}
\end{equation}
as a function of the complex variable~$z$ and study its meromorphic extension over a region that includes~$z=0$. For our purposes ~$A$ is~$\Delta+m^2$,~$(\Delta+m^2)^{1/2}$ or a Dirichlet-to-Neumann operator. These are positive elliptic~$\Psi$DOs of positive order such that the principal symbol~$\sigma_A(x,\xi)$ is \textsf{strictly positive} whenever~$\xi\ne 0$. In particular their spectra are in~$\mb{R}_+$ and does not intersect~$\mb{B}_{\rho}(0)\subset\mb{C}$ for some~$\rho>0$. This enables one to define the complex power~$A^{-z}$ using the Cauchy integral representation
\begin{equation}
  A^{-z}\defeq \frac{\ii}{2\pi}\int_{\gamma}^{}\me^{-z\log \lambda}(A-\lambda)^{-1}\dd\lambda,
  \label{eqn-complex-pow-cauchy}
\end{equation}
where~$\gamma$ is the contour (with parametrization traversing in order)
\begin{equation}
  \gamma=\{r\me^{\ii\pi}~|~r>\rho\}\cup\{\rho\me^{\ii\theta}~|~-\pi<\theta<\pi\}\cup\{r\me^{-\ii \pi}~|~r>\rho\},
  \label{}
\end{equation}
and~$\log\lambda$ taken to be the principal branch defined on~$\mb{C}\setminus (-\infty,0]$ with $\log 1=0$.

\begin{prop}[\cite{Shu} proposition 10.1, theorems 10.1, 13.1, 13.2, also \cite{Seeley}]\label{prop-def-det-zeta} We have
\begin{enumerate}[(i)]
    \item For the operators~$A$ under consideration, we have the bound
  \begin{equation}
    \nrm{(A-\lambda)^{-1}}_{L^2}\le c|\lambda|^{-1}
    \label{eqn-resolve-op-norm-bound}
  \end{equation}
  for~$\lambda\in \gamma$, and the integral (\ref{eqn-complex-pow-cauchy}) defines~$A^{-z}$ as a holomorphic function valued in bounded~$L^2$ operators, for~$\fk{Re}(z)>0$. It continues as such a holomorphic operator function to all~$z\in\mb{C}$ via
  \begin{equation}
    A^{-z}\defeq A^k A^{-z-k},
    \label{eqn-complex-pow-cont}
  \end{equation}
  where~$k$ is any integer with~$\fk{Re}(z)>-k$ so that~$A^{-z-k}$ is defined by (\ref{eqn-complex-pow-cauchy}), and the definition (\ref{eqn-complex-pow-cont}) does not depend on~$k$.
  \item If~$A$ has order~$s$ then~$A^{-z}$ defined as above is a classical~$\Psi$DO of order~$-zs$. In particular it is trace class on~$L^2(M)$ when~$\fk{Re}(z)>d/s$, $d=\dim M$, for which~$\zeta_A(z)$ is well-defined by (\ref{eqn-zeta-def}). Moreover, for these~$z$,
\begin{equation}
  \zeta_A(z)=\sum_{j=0}^{\infty}\lambda_j^{-z},
  \label{eqn-zeta-eigenval}
\end{equation}
where~$\{\lambda_j\}_{j=0}^{\infty}$ are the eigenvalues of~$A$, and the sum converges absolutely, and uniformly in~$z$ over~$\{\fk{Re}(z)>d/s+\varepsilon\}$ for any~$\varepsilon>0$. 
\item Finally,~$\zeta_A(z)$ can be meromorphically continued over~$\mb{C}$ with simple poles possible at~$\{\frac{d}{s},\frac{d-1}{s},\frac{d-2}{s},\cdots\}\setminus \mb{Z}_{\le 0}$, and holomorphic elsewhere. In particular, it is holomorphic at~$z=0$.\hfill~$\Box$
\end{enumerate}
\end{prop}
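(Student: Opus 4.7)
The proposition is classical (Seeley) and I would organize the proof in three stages corresponding to the three items of the statement.

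\textbf{Stage 1 (bounded holomorphic family, part (i)).} First I would establish the resolvent bound (\ref{eqn-resolve-op-norm-bound}). Since $A$ is a positive elliptic $\Psi$DO with strictly positive principal symbol $\sigma_A(x,\xi)$ away from the zero section, the operator $A - \lambda$ is elliptic with parameter $\lambda$ along any ray outside a conical neighborhood of $\mathbb{R}_+$; in particular along $\gamma$. The standard parametrix construction in the calculus of $\Psi$DOs with parameter (Shubin Ch.~II) yields an approximate inverse whose $L^2$-operator norm decays like $|\lambda|^{-1}$, which upgrades to the genuine resolvent by a Neumann-type iteration. Given this bound, the Cauchy integral (\ref{eqn-complex-pow-cauchy}) converges absolutely in $L^2$-operator norm for $\mathfrak{Re}(z) > 0$ because the integrand is majorised by $c|\lambda|^{-1}|\lambda|^{-\mathfrak{Re}(z)}$ on the two rays and is bounded on the circular arc; holomorphy in $z$ follows from differentiation under the integral (the difference quotient converges by dominated convergence). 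The semigroup identity $A^{-z_1} \circ A^{-z_2} = A^{-z_1 - z_2}$ is verified by Fubini on a product of two contours and contour deformation, which makes the extension (\ref{eqn-complex-pow-cont}) independent of $k$.

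\textbf{Stage 2 (classical $\Psi$DO of order $-zs$, part (ii)).} Following Seeley, I would show that $A^{-z}$ is a classical $\Psi$DO of order $-zs$ by computing its symbol from the contour integral. In a local chart, write the parametrix of $A - \lambda$ as a $\Psi$DO with parameter whose symbol $b(x,\xi,\lambda)$ admits the asymptotic expansion $b \sim \sum_{j \ge 0} b_{-s-j}(x,\xi,\lambda)$ in homogeneous components (where homogeneity treats $(\xi, \lambda^{1/s})$ together). Plugging this expansion into (\ref{eqn-complex-pow-cauchy}), the contour integral of each component $b_{-s-j}$ yields a homogeneous symbol of degree $-zs - j$, with leading term $\sigma_A(x,\xi)^{-z}$. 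Once $A^{-z}$ is recognised as a classical $\Psi$DO of order $-zs$, trace-class on $L^2(M)$ for $-\mathfrak{Re}(z)s < -d$ is the standard criterion, and the eigenvalue representation (\ref{eqn-zeta-eigenval}) follows from the spectral decomposition $A = \sum_j \lambda_j \varphi_j \otimes \varphi_j^*$ combined with the functional calculus consistency of the Cauchy formula.

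\textbf{Stage 3 (meromorphic continuation, part (iii)).} For the continuation of $\zeta_A$, I would use the Mellin representation
\begin{equation*}
  \zeta_A(z) \;=\; \frac{1}{\Gamma(z)}\int_{0}^{\infty} t^{z-1}\, \mathrm{tr}_{L^2}(e^{-tA})\,\mathrm{d}t,
\end{equation*}
valid for $\mathfrak{Re}(z) > d/s$. The tail $\int_1^\infty$ is entire by exponential decay of $\mathrm{tr}(e^{-tA})$. For the short-time part $\int_0^1$, invoke the classical heat-trace asymptotic $\mathrm{tr}(e^{-tA}) \sim \sum_{k \ge 0} a_k\, t^{(k-d)/s}$ as $t \to 0^+$, derivable from the symbol expansion obtained in Stage 2 applied to the heat kernel. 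Splitting the integrand according to finitely many terms of this expansion plus a remainder integrable up to the region $\mathfrak{Re}(z) > (d-K)/s$ for arbitrary $K$, one sees that each term produces a simple pole at $z = (d-k)/s$ with explicit residue proportional to $a_k$. The factor $1/\Gamma(z)$, which is entire with simple zeros at $z \in \mathbb{Z}_{\le 0}$, kills exactly those poles located at non-positive integers, leaving the pole set described in the statement and regularity at $z = 0$ in particular.

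\textbf{Main obstacle.} The heart of the argument is Stage 2: constructing the symbol of $A^{-z}$ as a classical $\Psi$DO and controlling the asymptotic expansion uniformly in $z$. Everything else (resolvent estimate, Mellin transform, pole structure) flows from the symbol analysis once it is in place. The parameter-dependent $\Psi$DO calculus has to be handled with care so that the contour integrals of the homogeneous components in $(\xi, \lambda^{1/s})$ really do yield classical symbols of the predicted degree.
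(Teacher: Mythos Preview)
Your sketch is a faithful outline of the classical Seeley argument and is correct in spirit. Note, however, that the paper does not give its own proof of this proposition: it is stated with references to Shubin and Seeley and closed with a $\Box$, so there is nothing to compare against beyond observing that your Stages~1--3 track exactly the material in those references (parameter-elliptic resolvent bounds, Seeley's symbol construction for complex powers, and the Mellin/heat-trace route to meromorphic continuation).
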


\begin{deef}
  For an operator~$A$ under consideration, we define its \textsf{zeta-regularized determinant} as
  \begin{equation}
    \detz A\defeq \exp\left( -\partial_z\zeta_A(0) \right),
    \label{}
  \end{equation}
  where~$\zeta_A(0)$ is the \textsf{zeta function} of~$A$ given by (\ref{eqn-zeta-def}) and (\ref{eqn-complex-pow-cauchy}).
\end{deef}

\begin{def7}
  An alternative way of defining the zeta function and its meromorphic continuation is to use the heat kernel and the Mellin transform. See Gilkey \cite{Gilkey} section 1.12. This way of definition also gives (\ref{eqn-zeta-eigenval}) over the same region, defining therefore the same function as ours.
\end{def7}

\begin{def7}\label{rem-zeta-det-positive}
  From (ii) of proposition \ref{prop-def-det-zeta} we see that if~$A$ is self-adjoint and strictly positive, then~$\zeta_A(z)$ is real-valued for~$z\in (d/s,+\infty)$. But~$\fk{Im}(\zeta_A)$ is real analytic and hence~$\zeta_A$ remains real-valued on~$\mb{R}$ before crossing a pole, and by (iii) it is in particular real-valued on an interval around~$0$. Thus~$\partial_z \zeta_A(0)$ is real and~$\detz A$ is positive.
\end{def7}

Now we move on to the second notion of determinant. Let~$\mathcal{H}$ be a Hilbert space and~$A\in \mathcal{L}(\mathcal{H})$. Denote by~$\Lambda^k \mathcal{H}$ and~$\Lambda^k A$, respectively, the~$k$-th exterior product of~$\mathcal{H}$ and~$A$ (see Simon \cite{Sim1} section 1.5).

\begin{prop}[\cite{Sim1} lemma 3.3]\label{prop-def-det-fred}
  If~$A$ is trace class on~$\mathcal{H}$, then~$\Lambda^k A$ is also trace class on~$\Lambda^k \mathcal{H}$ with bound on trace norm
  \begin{equation}
    \bnrm{\Lambda^k A}_{\mm{tr}}\le \frac{1}{k!}\bnrm{A}_{\mm{tr}}^k.
    \label{}
  \end{equation}
  In particular, putting
  \begin{equation}
    \detf(\one +zA)\defeq \sum_{k=0}^{\infty} z^k\ttr_{\Lambda^k\mathcal{H}}(\Lambda^k A)
    \label{eqn-det-fred-func}
  \end{equation}
  for~$z\in\mb{C}$ defines an entire function, and
  \begin{equation}
    |\detf(\one+z A)|\le \exp(|z|\nrm{A}_{\mm{tr}}).\quad \Box
    \label{}
  \end{equation}
\end{prop}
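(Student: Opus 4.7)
The plan is to exploit the Schmidt (singular value) decomposition of the trace class operator $A$ to identify the singular values of $\Lambda^k A$ explicitly; this yields the desired trace-norm bound by an elementary symmetric-function inequality, and the analyticity and growth estimate on $\detf(\one+zA)$ follow immediately by comparing Taylor coefficients with those of the exponential.

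First I would write $A = \sum_{n} \mu_n \ank{\cdot, \varphi_n} \psi_n$, where $\{\mu_n\}$ are the singular values of $A$ listed in nonincreasing order with multiplicity, and $\{\varphi_n\}$, $\{\psi_n\}$ are orthonormal systems in $\mathcal{H}$, so that $\nrm{A}_{\mm{tr}} = \sum_n \mu_n < \infty$. Using the natural orthonormal basis $\{\varphi_{i_1} \wedge \cdots \wedge \varphi_{i_k} : i_1 < \cdots < i_k\}$ of $\Lambda^k \mathcal{H}$, one verifies directly from the defining formula $\Lambda^k A(x_1 \wedge \cdots \wedge x_k) = (Ax_1) \wedge \cdots \wedge (Ax_k)$ that $\Lambda^k A$ admits a Schmidt decomposition on $\Lambda^k \mathcal{H}$ whose singular values are precisely the ordered products $\mu_{i_1} \mu_{i_2} \cdots \mu_{i_k}$ indexed by strictly increasing $k$-tuples, with the corresponding wedges of the $\varphi_{i_j}$ and $\psi_{i_j}$ as singular vectors.

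Second, this identification gives
\[
\nrm{\Lambda^k A}_{\mm{tr}} = \sum_{i_1 < \cdots < i_k} \mu_{i_1} \cdots \mu_{i_k} = e_k(\mu_1, \mu_2, \ldots),
\]
the $k$-th elementary symmetric function of the sequence $(\mu_n)$. The elementary inequality $e_k(\mu_1, \mu_2, \ldots) \le \frac{1}{k!}\bigl(\sum_n \mu_n\bigr)^k$, proved by expanding $\bigl(\sum_n \mu_n\bigr)^k$ as a sum over arbitrary ordered $k$-tuples and observing that the $k!$ permutations of each strictly increasing $k$-tuple already contribute $k! \cdot e_k$ while the remaining monomials (those with at least one repeated index) are nonnegative, immediately yields $\nrm{\Lambda^k A}_{\mm{tr}} \le \frac{1}{k!}\nrm{A}_{\mm{tr}}^k$. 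In particular $\Lambda^k A$ is trace class, so $\ttr_{\Lambda^k\mathcal{H}}(\Lambda^k A)$ is well defined.

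Third, since $|\ttr_{\Lambda^k \mathcal{H}}(\Lambda^k A)| \le \nrm{\Lambda^k A}_{\mm{tr}} \le \frac{1}{k!}\nrm{A}_{\mm{tr}}^k$, the power series defining $\detf(\one+zA)$ has Taylor coefficients dominated in absolute value, term by term, by those of $\me^{|z|\nrm{A}_{\mm{tr}}}$; it therefore converges absolutely and uniformly on compact subsets of $\mb{C}$, defines an entire function of $z$, and satisfies the pointwise bound $|\detf(\one+zA)| \le \me^{|z|\nrm{A}_{\mm{tr}}}$ by direct summation. The only nontrivial point, modest as it is, is the foundational setup: one must take care to define $\Lambda^k \mathcal{H}$ as the Hilbert-space completion of the algebraic exterior power under the inner product in which wedges of orthonormal families are orthonormal, and check that the algebraic definition of $\Lambda^k A$ extends to a bounded operator on this completion; once that is in place, the result is essentially the combinatorial estimate above.
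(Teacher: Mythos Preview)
Your proof is correct and follows the standard approach via the Schmidt decomposition and the elementary symmetric-function bound; this is essentially the argument in Simon \cite{Sim1}, to which the paper simply defers (the proposition is stated with a citation and no proof is given in the paper itself).
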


\begin{deef}
  Let~$A$ be a trace class operator on the Hilbert space~$\mathcal{H}$. Then the determinant~$\detf(\one+A)$ given by (\ref{eqn-det-fred-func}) for~$z=1$ is called the \textsf{Fredholm determinant} of~$\one+A$.
\end{deef}

\begin{lemm}[\cite{Sim1} theorems 3.4, 3.7, 3.8] \label{lemm-det-fred-cont} ~
\begin{enumerate}[(i)]
    \item The map~$A\longmapsto \detf(\one+A)$ defines a continuous function on the trace ideal~$\mathcal{J}_1$ with~$\nrm{\cdot}_{\mm{tr}}$. More precisely,
  \begin{equation}
    |\detf(\one+A)-\detf(\one+B)|\le \nrm{A-B}_{\mm{tr}}\exp(\nrm{A}_{\mm{tr}}+\nrm{B}_{\mm{tr}}+1).
    \label{}
  \end{equation}
  \item If~$A$,~$B\in \mathcal{L}(\mathcal{H})$ are such that both~$AB$ and~$BA$ are of trace class, then we have
  \begin{equation}
    \detf(\one+AB)=\detf(\one +BA). \quad~\Box
    \label{}
  \end{equation}
\end{enumerate}
\end{lemm}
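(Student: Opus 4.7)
The plan is to treat both parts by associating an entire function of one complex parameter to $\detf$, then applying elementary complex-analytic estimates (for (i)) and the uniqueness of Taylor coefficients (for (ii)).

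For part (i), consider $F(z) \defeq \detf(\one + B + z(A-B))$ for $z\in\mb{C}$. Since $B + z(A-B)$ is trace class with $\nrm{B + z(A-B)}_{\mm{tr}} \le \nrm{B}_{\mm{tr}} + |z|\nrm{A-B}_{\mm{tr}}$, proposition \ref{prop-def-det-fred} shows $F$ is entire and
\[
|F(z)| \le \exp\bigl(\nrm{B}_{\mm{tr}} + |z|\nrm{A-B}_{\mm{tr}}\bigr),
\]
with $F(0) = \detf(\one+B)$ and $F(1) = \detf(\one+A)$. Since $F(z)-F(0)$ vanishes at $z=0$, the function $G(z) \defeq (F(z)-F(0))/z$ extends to an entire function, and by the maximum modulus principle on $|z|=R$ for any $R \ge 1$,
\[
|G(1)| \le \max_{|z|=R}|G(z)| \le \frac{1}{R}\bigl(\max_{|z|=R}|F(z)| + |F(0)|\bigr) \le \frac{2}{R}\exp\bigl(\nrm{B}_{\mm{tr}} + R\nrm{A-B}_{\mm{tr}}\bigr).
\]
Choosing $R = 1/\nrm{A-B}_{\mm{tr}}$ when $\nrm{A-B}_{\mm{tr}} \le 1$, and falling back on the trivial bound $|F(1)-F(0)| \le |\detf(\one+A)| + |\detf(\one+B)|$ together with $\nrm{A-B}_{\mm{tr}} > 1$ in the opposite case, yields the stated estimate up to an inessential absolute constant; a sharper contour optimisation recovers the precise form.

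For part (ii), both sides arise as values at $z=1$ of the entire functions $z \mapsto \detf(\one + zAB)$ and $z\mapsto \detf(\one+zBA)$. By (\ref{eqn-det-fred-func}), their Taylor coefficients at $z=0$ are $\ttr_{\Lambda^k\mathcal{H}}(\Lambda^k(AB))$ and $\ttr_{\Lambda^k\mathcal{H}}(\Lambda^k(BA))$ respectively. Newton's identities (equivalently the Plemelj--Smithies formula) express $\ttr_{\Lambda^k\mathcal{H}}(\Lambda^k T)$ as a universal polynomial in the power sums $\ttr(T^n)$ for $1\le n\le k$, so it suffices to establish $\ttr((AB)^n) = \ttr((BA)^n)$ for every $n\ge 1$. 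The hypothesis that $AB$ and $BA$ are trace class makes $(AB)^n$ and $(BA)^n$ trace class, and since $A,B$ are bounded, the cyclic property of the trace gives
\[
\ttr((AB)^n) = \ttr\bigl(A\cdot B(AB)^{n-1}\bigr) = \ttr\bigl(B(AB)^{n-1}\cdot A\bigr) = \ttr((BA)^n).
\]
The two entire functions therefore coincide termwise, and specialising $z=1$ gives the claim.

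The main obstacle I expect is in (i): matching the exact constant $\exp(\nrm{A}_{\mm{tr}}+\nrm{B}_{\mm{tr}}+1)$ requires care, as the naive maximum modulus argument above loses a factor of $2$ that must be absorbed either via a Schwarz-type refinement exploiting $G(0)=F'(0)$ (itself bounded by Cauchy's estimate applied to $F$) or by a cleverer choice of interpolating family. For (ii) the technical subtleties are comparatively minor: one must verify absolute convergence of the power-sum/Newton expansion (guaranteed by the trace-class hypothesis on $AB$ and $BA$) and invoke $\ttr(XY) = \ttr(YX)$ in the generality of bounded operators with trace-class composition.
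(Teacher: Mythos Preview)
The paper does not supply its own proof of this lemma: the statement is attributed to \cite{Sim1} (theorems 3.4, 3.7, 3.8) and closed with a $\Box$, so there is nothing to compare against. Your argument is essentially the standard one found in Simon's book, and it is correct in outline.

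A few remarks. For (i), your interpolation $F(z)=\detf(\one+B+z(A-B))$ is exactly the device Simon uses; the entireness of $F$ is not literally an instance of proposition~\ref{prop-def-det-fred} (which treats $z\mapsto\detf(\one+zA)$ for fixed $A$), but follows from the same series expansion since each $\ttr_{\Lambda^k\mathcal{H}}(\Lambda^k(B+z(A-B)))$ is a polynomial in $z$ and the series converges locally uniformly. You are right that the naive maximum-modulus bound loses a factor of $2$; Simon's argument handles this by bounding $|F'(z)|$ directly along the segment $[0,1]$ via Cauchy's estimate on discs of radius $R$, then optimising $R$, which absorbs the constant cleanly. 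For (ii), your reduction via Newton's identities to $\ttr((AB)^n)=\ttr((BA)^n)$ is correct; note only that the $n=1$ case $\ttr(AB)=\ttr(BA)$ genuinely requires the version of cyclicity valid for bounded $A,B$ with both $AB$ and $BA$ trace class (this is precisely \cite{Sim1} corollary 3.8), since neither factor need be individually trace class. For $n\ge 2$ the factor $B(AB)^{n-1}$ is already trace class, so ordinary cyclicity suffices.
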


\subsubsection{Factorization Lemma}

\begin{lemm}\label{lemm-det-factor}
  Suppose~$A$ and~$K$ are~$\Psi$DOs such that both~$A$ and~$A(\one+K)$ satisfy the assumptions of proposition \ref{prop-def-det-zeta} and that~$\detz(A)$ and~$\detz(A(\one+K))$ are defined. Suppose moreover~$K$ is trace class and there exists smoothing operators~$\{K_i\}_{i=1}^{\infty}$ such that~$AK_i\to AK$ in~$\nrm{\cdot}_{\mm{tr}}$ (in particular, $AK$ is also trace class). Then
  \begin{equation}
    \detz(A(\one+K))=\detz(A)\detf(\one +K).
    \label{eqn-det-factor}
  \end{equation}
\end{lemm}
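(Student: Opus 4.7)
The plan is to interpolate between $A$ (at $t=0$) and $A(\one+K)$ (at $t=1$) via the holomorphic family $A_t := A(\one+tK) = A + t(AK)$, and show both sides of (\ref{eqn-det-factor}) share the same $t$-derivative. Let $V\subset \mb{C}$ denote the open set of parameters $t$ for which $A_t$ satisfies Proposition \ref{prop-def-det-zeta} (equivalently, $\one+tK$ is invertible and the spectrum of $A_t$ clears a ray); by hypothesis $0,1\in V$, and since $K$ is a trace-class $\Psi$DO of order strictly below that of $A$ on the closed manifold, $A_t$ has the same principal symbol as $A$. Fix a path in $V$ from $0$ to $1$; on $V$, both $\log\detz(A_t)$ and $\log\detf(\one+tK)$ are holomorphic, with values $\log\detz(A)$ and $0$ at $t=0$.

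Differentiating the Cauchy integral (\ref{eqn-complex-pow-cauchy}) for $A_t^{-z}$ in $t$ via $\frac{d}{dt}(A_t-\lambda)^{-1} = -(A_t-\lambda)^{-1}(AK)(A_t-\lambda)^{-1}$ and taking trace yields, for $\re z$ large,
$$\frac{d}{dt}\zeta_{A_t}(z) = -z\,\ttr\bigl(A_t^{-z-1}AK\bigr).$$
Since $A_t^{-z-1}$ is $L^2$-bounded for $\re z > -1$ and $AK$ is trace class, the right side extends to a jointly holomorphic function of $(t,z)$ on $V\times\{\re z > -1\}$, so $-\partial_z|_{z=0}$ commutes with $d/dt$. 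Using $A_t^{-1} = (\one+tK)^{-1}A^{-1}$ and cyclicity of the trace,
$$\frac{d}{dt}\log\detz(A_t) = \ttr\bigl(A_t^{-1}AK\bigr) = \ttr\bigl((\one+tK)^{-1}K\bigr) = \frac{d}{dt}\log\detf(\one+tK),$$
the last equality being the standard logarithmic-derivative formula for the Fredholm determinant, derivable from (\ref{eqn-det-fred-func}). Hence $\log\detz(A_t) - \log\detf(\one+tK)$ is locally constant on $V$, equal to $\log\detz(A)$ at $t=0$, and (\ref{eqn-det-factor}) follows at $t=1$ (any branch ambiguity disappearing upon exponentiation).

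The main technical obstacle is justifying the interchange $\partial_z|_{z=0}\circ \frac{d}{dt} = \frac{d}{dt}\circ \partial_z|_{z=0}$, which rests on the joint holomorphicity of $\ttr(A_t^{-z-1}AK)$ near $z=0$. This is where the smoothing-approximation hypothesis enters cleanly: one first runs the argument for $K$ smoothing --- so that $A_t^{-z-1}AK$ is smoothing, hence trace class for every $z\in \mb{C}$, making joint holomorphicity transparent on all of $V\times \mb{C}$ --- and then passes to the general trace-class case via Lemma \ref{lemm-det-fred-cont}(i) on the Fredholm side (using $\|K_i-K\|_{\mm{tr}} \le \|A^{-1}\|_{L^2}\,\|AK_i-AK\|_{\mm{tr}} \to 0$) together with a parallel trace-norm continuity statement for $\detz$ under trace-class perturbations preserving the principal symbol, itself a consequence of reapplying the interpolation argument with $A(\one+K)$ as the base operator.
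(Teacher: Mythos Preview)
Your interpolation approach is sound and classical—computing $\frac{d}{dt}\log\detz(A_t)$ via the continuation of $-z\,\ttr(A_t^{-z-1}AK)$ is the standard route to such multiplicativity identities—but it differs genuinely from the paper's argument, and your third paragraph is both unnecessary and, as written, circular.

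The paper does not interpolate. It cites Kontsevich--Vishik \cite{KV} Proposition~6.4 for the identity with smoothing $K_i$ in place of $K$, then proves $\detz(A(\one+K_i))\to\detz(A(\one+K))$ directly: the resolvent identity gives
\[
\bnrm{(A(\one+K)-\lambda)^{-1}A(K_i-K)(A(\one+K_i)-\lambda)^{-1}}_{\mm{tr}}\le c|\lambda|^{-2}\bnrm{A(K_i-K)}_{\mm{tr}},
\]
hence $|\zeta_{A(\one+K)}(z)-\zeta_{A(\one+K_i)}(z)|\lesssim\|A(K_i-K)\|_{\mm{tr}}$ uniformly on a small disc about $z=0$, and Cauchy's estimate transfers this to the derivatives at $0$. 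This touches the zeta machinery only at the two endpoints $A$ and $A(\one+K)$, exactly where Proposition~\ref{prop-def-det-zeta} is assumed; it never needs $\detz(A_t)$ for intermediate $t$.

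Your approach, by contrast, requires $A_t$ along a path in $V$ to admit a zeta determinant, hence implicitly a more general framework (operators with an Agmon ray rather than positive self-adjoint) than Proposition~\ref{prop-def-det-zeta} provides—worth flagging but not fatal. More importantly, your second paragraph already contains everything needed: since $A_t^{-z-1}$ is operator-norm holomorphic in $(t,z)$ on $V\times\{\re z>-1\}$ and $AK$ is trace class, $(t,z)\mapsto -z\,\ttr(A_t^{-z-1}AK)$ is jointly holomorphic there; integrating in $t$ yields a holomorphic function of $z$ agreeing with $\zeta_{A_1}(z)-\zeta_{A_0}(z)$ for $\re z$ large, hence also near $z=0$ by continuation. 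The interchange is thus justified \emph{without} any smoothing. Your fallback is therefore redundant, and its closing move—reapplying interpolation with base $B=A(\one+K)$ to get continuity—is circular: writing $A(\one+K_i)=B(\one+L_i)$ with $L_i=(\one+K)^{-1}(K_i-K)$, the perturbation $L_i$ is not smoothing, so you land back in the case you were trying to avoid. If you insist on the approximation route, the honest way to get that continuity is precisely the paper's resolvent estimate.
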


\begin{proof}
  We start from Kontsevich and Vishik \cite{KV} proposition 6.4 and take for granted that (\ref{eqn-det-factor}) holds with~$K_i$ in place of~$K$. Our assumptions are tailor-made so that as~$i\to \infty$,
  \begin{equation}
    \detz(A(\one+K_i))\lto \detz(A(\one+K)).
    \label{}
  \end{equation}
  Indeed, by (\ref{eqn-resolve-op-norm-bound}) we have
\begin{equation}
  \nrm{(A(\one+K)-\lambda)^{-1}A(K_i-K)(A(\one+K_i)-\lambda)^{-1}}_{\mm{tr}}\le c|\lambda|^{-2}\bnrm{A(K_i-K)}_{\mm{tr}},
  \label{eqn-diff-resolv-bound}
\end{equation}
with~$c$ independent of~$i$ since a fortiori~$AK_i\to AK$ under~$\nrm{\cdot}_{L^2}$. This in particular shows when~$\fk{Re}(z)>-1$ the integral expression for~$A(\one+K)^{-z}-A(\one+K_i)^{-z}$ is a converging Bochner integral valued in the trace class ideal~$\mathcal{J}_1$ (note~$|\lambda^{-z}|\asymp |\lambda|^{-\fk{Re}(z)}$ as~$\lambda\to -\infty$) and since~$\ttr_{L^2}$ is a continuous functional on~$\mathcal{J}_1$,
\begin{equation}
  \big|\ttr_{L^2}(A(\one+K)^{-z}-A(\one+K_i)^{-z})\big|\lesssim \int_{\gamma}^{} |\lambda|^{-\fk{Re}(z)-2}\bnrm{A(K_i-K)}_{\mm{tr}}\dd\lambda.
  \label{eqn-diff-zeta-bound}
\end{equation}
Now by (iii) of proposition \ref{prop-def-det-zeta} there is~$\frac{1}{2}>\delta>0$ so that~$\zeta_{A(\one+K_i)}$,~$\zeta_{A(\one+K)}$ are both holomorphic over~$\ol{\mb{B}}_{\delta}(0)$ (for example,~$\delta<1/|s|$ where~$s$ is the order of~$A$). Thus by (\ref{eqn-diff-zeta-bound}) and Cauchy's estimate
\begin{align*}
  \big|\zeta_{A(\one+K)}'(0)-\zeta_{A(\one+K_i)}'(0)\big|&\le \frac{1}{\delta}\sup_{|z|=\delta}\big|\zeta_{A(\one+K)}(z)-\zeta_{A(\one+K_i)}(z)\big| \\
  &\lesssim \frac{1}{\delta}\sup_{|z|=\delta}\int_{\gamma}^{} |\lambda|^{-\fk{Re}(z)-2}\bnrm{A(K_i-K)}_{\mm{tr}}\dd\lambda \\
  &\lesssim \bnrm{A(K_i-K)}_{\mm{tr}}\int_{\gamma}^{}|\lambda|^{-3/2}\dd\lambda \\
  &\lesssim \bnrm{A(K_i-K)}_{\mm{tr}}.
\end{align*}
This shows~$|\zeta_{A(\one+K)}'(0)-\zeta_{A(\one+K_i)}'(0)|\to 0$, as we have desired.
\end{proof}

\subsubsection{The Gluing Formula of Burghelea-Friedlander-Kappeler}\label{sec-BFK}

\noindent Let~$(M,g)$ is a closed Riemannian manifold and~$\Sigma\subset M$ an embedded closed hypersurface with induced metric. Assume proposition \ref{prop-stoc-decomp-closed} and decompose~$\phi=\phi_{\Sigma}+\phi_{M\setminus\Sigma}^D$ corresponding to~$\mu_{\mm{GFF}}^M=\mu_{\mm{GFF}}^{M\setminus\Sigma,D}\otimes \mu_{\DN}^{\Sigma,M}$. In view of equation (\ref{eqn-heu-gaussian-mea-det-volume}), and in parallel
\begin{equation}
  \me^{-\frac{1}{2}\ank{\varphi,\DN \varphi}_{L^2}}[\mathcal{L}\varphi]\heueq\detz(\DN_M^{\Sigma})^{-\frac{1}{2}}\dd\mu_{\DN}^{\Sigma,M},
  \label{}
\end{equation}
if we assume a ``formal Fubini theorem'' with respect to the heuristic expressions involving~$\mathcal{L}$, namely
\begin{equation}
  \int_{}^{}\me^{-\frac{1}{2}\sank{\phi,(\Delta+m^2)\phi}_{L^2}}[\mathcal{L}\phi] \heueq\iint \me^{-\frac{1}{2}\sank{\phi_{\Sigma},(\Delta+m^2)\phi_{\Sigma}}_{L^2}}\me^{-\frac{1}{2}\sank{\phi_{M\setminus\Sigma}^D,(\Delta+m^2)\phi_{M\setminus\Sigma}^D}_{L^2}} [\mathcal{L}\phi_{\Sigma}]\otimes [\mathcal{L}\phi_{M\setminus\Sigma}^D],
  \label{}
\end{equation}
then we are led to the following relation of the corresponding determinants (volumes) which were first rigorously proved by Burghelea, Friedlander and Kappeler \cite{BFK}.

\begin{prop}
  [\cite{BFK} theorem B, \cite{Lee} theorem 1.1] Let~$(M,g)$ is a closed Riemannian surface and~$\Sigma\subset M$ an embedded closed hypersurface with induced metric. Then
  \begin{equation}
    \detz(\Delta_M+m^2)=\detz(\Delta_{M\setminus\Sigma,D}+m^2)\detz(\DN_M^{\Sigma}). \quad\Box
    \label{eqn-bfk-non-disec}
  \end{equation}
\end{prop}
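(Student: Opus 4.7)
The plan is to follow the deformation strategy of Burghelea-Friedlander-Kappeler. I introduce a real parameter $t\ge 0$, set $A_t := \Delta_M+m^2+t$, $A_t^D := \Delta_{M\setminus\Sigma,D}+m^2+t$, and let $\DN_t^\Sigma$ denote the $t$-dependent jumpy Dirichlet-to-Neumann map defined via the Poisson operator $\PI_{M,t}^\Sigma$ associated to $(\Delta+m^2+t)$. The goal is to show
\begin{equation*}
f(t):=\log\detz A_t-\log\detz A_t^D-\log\detz\DN_t^\Sigma
\end{equation*}
satisfies $f'\equiv 0$ and $f(t)\to 0$ as $t\to\infty$; evaluating at $t=0$ then yields (\ref{eqn-bfk-non-disec}).

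The first step is a resolvent comparison. For $h\in C^\infty(M)$ one decomposes $A_t^{-1}h = \tilde A_t^{D,-1}h + \PI_{M,t}^\Sigma\varphi$, where $\tilde A_t^{D,-1}$ is the Dirichlet inverse extended by zero and $\varphi \in C^\infty(\Sigma)$ is uniquely determined by requiring $A_t^{-1}h$ to be smooth across $\Sigma$. A direct integration-by-parts (the same mechanism producing lemma \ref{lemm-DN-trick} applied to the shifted operator) gives
\begin{equation*}
A_t^{-1} - \tilde A_t^{D,-1} = \PI_{M,t}^\Sigma (\DN_t^\Sigma)^{-1} (\PI_{M,t}^\Sigma)^{*}.
\end{equation*}
The left-hand side, unlike either term individually in dimension $d=2$, is trace class because it factors through the compact circle $\Sigma$.

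The second step is a zeta-function identity. Plugging the above into the Cauchy contour representation (\ref{eqn-complex-pow-cauchy}) and using cyclicity of the trace, I would show that for $\re z$ sufficiently large,
\begin{equation*}
\zeta_{A_t}(z)-\zeta_{A_t^D}(z)
\end{equation*}
can be rewritten as the $L^2(\Sigma)$-trace of an explicit one-parameter family of $\Psi$DOs involving $\DN_t^\Sigma$. Differentiating once in $t$, this family reassembles (via another integration-by-parts in the Cauchy parameter) into $\partial_t\zeta_{\DN_t^\Sigma}(z)$. Meromorphic continuation back to $z=0$ and applying $\partial_z$ then gives $f'(t)\equiv 0$.

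The constant $f(0)$ is pinned down by examining $t\to+\infty$: the Seeley heat-kernel expansions give the asymptotics of all three $\log\detz$ in powers of $t$ and $\log t$, and one checks that the interior Weyl contributions from $A_t$ and $A_t^D$ coincide while the boundary Seeley coefficients of $A_t^D$ exactly cancel the $\DN_t^\Sigma$ contribution, yielding $\lim_{t\to\infty}f(t)=0$. The main obstacle is step two: in dimension two each individual zeta function has poles at $z=1$ and $z=\tfrac12$ (the latter reflecting the codimension-$1$ boundary of $M\setminus\Sigma$), and showing that these Seeley residues conspire so that the combination is holomorphic at $z=0$ with derivative zero is the delicate bookkeeping at the heart of the BFK theorem \cite{BFK}, streamlined by Lee \cite{Lee}.
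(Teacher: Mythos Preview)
The paper does not prove this proposition at all: the $\Box$ at the end of the displayed equation signals that the result is taken as a black box, cited directly from Burghelea--Friedlander--Kappeler \cite{BFK} and Lee \cite{Lee}. So there is no ``paper's own proof'' to compare against---the author simply imports the formula and uses it as a separate ingredient in the gluing story (see the discussion at the end of section~\ref{sec-BFK}).

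Your sketch is a faithful outline of the original BFK deformation argument, and the three steps you describe (resolvent comparison via the Poisson operator, differentiation in the shift parameter $t$ to show $f'\equiv 0$, and the large-$t$ Seeley asymptotics to fix the constant) are exactly the architecture of \cite{BFK} as simplified in \cite{Lee}. You are right that the delicate point is the cancellation of Seeley residues in step two; this is not a gap in your plan so much as an honest acknowledgment that the real content lives there. If you intend to actually carry this out rather than cite it, Lee's paper is the cleanest reference for the bookkeeping in the case at hand.
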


The following version where~$\Sigma$ dissects~$M$ such that~$M\setminus \Sigma=M^{\circ}_+\sqcup M^{\circ}_-$ is also useful.

\begin{corr}
  In the situation as above, we have
  \begin{equation}
    \detz(\Delta_M+m^2)=\detz(\Delta_{M_+,D}+m^2)\detz(\Delta_{M_-,D}+m^2)\detz(\DN_M^{\Sigma}).\quad\Box
    \label{eqn-bfk-disec}
  \end{equation}
\end{corr}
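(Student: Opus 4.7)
The plan is to reduce the corollary directly to the already-established BFK formula (\ref{eqn-bfk-non-disec}) by factoring the Dirichlet-determinant factor $\detz(\Delta_{M\setminus\Sigma,D}+m^2)$ as $\detz(\Delta_{M_+,D}+m^2)\cdot\detz(\Delta_{M_-,D}+m^2)$. Once this factorization is in hand, substituting into (\ref{eqn-bfk-non-disec}) yields (\ref{eqn-bfk-disec}) immediately.

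First I would observe the orthogonal direct sum decomposition
$$L^2(M\setminus\Sigma)=L^2(M_+)\oplus L^2(M_-),$$
under which the massive Dirichlet Laplacian splits as a block-diagonal operator
$$\Delta_{M\setminus\Sigma,D}+m^2=(\Delta_{M_+,D}+m^2)\oplus(\Delta_{M_-,D}+m^2).$$
Via the quadratic-form definition of the Dirichlet Laplacian this is essentially tautological: test functions in $C_c^{\infty}(M\setminus\Sigma)$ are precisely pairs of functions in $C_c^{\infty}(M^{\circ}_+)$ and $C_c^{\infty}(M^{\circ}_-)$, and the Dirichlet form is additive over the two components, so the self-adjoint operators inherit the direct-sum structure and their resolvents are block-diagonal.

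Next I would exploit the additivity of the zeta function under direct sums. For the operators $A=\Delta_{M_+,D}+m^2$ and $B=\Delta_{M_-,D}+m^2$, both of which satisfy the assumptions of proposition \ref{prop-def-det-zeta}, one has by the spectral expression (\ref{eqn-zeta-eigenval}) that for $\fk{Re}(z)>d/s$,
$$\zeta_{A\oplus B}(z)=\sum_j \lambda_j(A)^{-z}+\sum_j \lambda_j(B)^{-z}=\zeta_A(z)+\zeta_B(z),$$
because the spectrum of a block-diagonal operator is the union with multiplicity of the spectra of the blocks. By uniqueness of meromorphic continuation this identity persists on a neighbourhood of $z=0$, where both sides are holomorphic by (iii) of proposition \ref{prop-def-det-zeta}; differentiating at zero and exponentiating yields
$$\detz(A\oplus B)=\detz(A)\detz(B).$$
Applying this to our $A$ and $B$ and combining with (\ref{eqn-bfk-non-disec}) finishes the proof. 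The main obstacle is really just the bookkeeping of the operator-theoretic decomposition in the first step; once that is confirmed, the zeta-function factorization is a clean spectral statement requiring no further analytic input beyond what is already recorded in section \ref{sec-det-def}.
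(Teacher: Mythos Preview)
Your proposal is correct and is precisely the argument the paper has in mind: the corollary is marked with a bare $\Box$ because it follows immediately from (\ref{eqn-bfk-non-disec}) once one notes that $\Delta_{M\setminus\Sigma,D}+m^2$ splits block-diagonally over $M_+^{\circ}\sqcup M_-^{\circ}$ and that $\detz$ is multiplicative on such direct sums via additivity of $\zeta_A(z)$. You have simply written out explicitly what the paper leaves implicit.
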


As we base our analysis on background Gaussian probability measures, the formulae (\ref{eqn-bfk-non-disec}) and (\ref{eqn-bfk-disec}) constitute separate ingredients (constants) that needs to be ``tuned'' for the final gluing result to hold exactly. In fact, it is also reasonable to consider ``projective gluing'' which allows the freedom for an arbitrary (nonzero) constant to appear in the equation (see remark \ref{rem-segal-proj} and Segal \cite{Segal} page 460).

\subsection{Quadratic Perturbation = Radon-Nikodym Density}\label{sec-quad-pert-rad-niko-main}

\noindent Let~$C$ be a Gaussian covariance operator of order $-s$ on a closed Riemannian manifold~$\Sigma$, and denote by~$\mu_C$ the Gaussian measure on~$\mathcal{D}'(\Sigma)$ with covariance~$\ank{-, C-}_{L^2(\Sigma)}$. Let~$V$ be another bounded formally self-adjoint operator on~$L^2(\Sigma)$ (it could be given by a real symmetric Schwartz kernel). In this section we look at the Gibbs measure
\begin{equation}
  \dd\mu (\varphi) \defeq \frac{\me^{-\frac{1}{2}\ank{\varphi,V\varphi}_{L^2}}\dd\mu_C(\varphi)}{\int_{}^{}\me^{-\frac{1}{2}\ank{\varphi,V\varphi}_{L^2}}\dd\mu_C},
  \label{eqn-quadratic-gibbs-meas}
\end{equation}
which is a Gaussian measure (see proposition \ref{prop-quadratic-perturbation}).

From another perspective we consider Radon-Nikodym densities between mutually absolutely continuous Gaussian measures on~$\mathcal{D}'(\Sigma)$. See Bogachev \cite{Bogachev} section 6.4 for a general treatment from this perspective. We shall reproduce a proof following Glimm and Jaffe \cite{GJ} section 9.3 for reader's convenience and adaptation to the current situation.

A principal corollary of the results of this section is the following.
\begin{corr}\label{corr-rad-niko-dense}
  Let~$\Sigma$ be the disjoint union of Riemannian circles, embedded in an ambient Riemannian surface~$M$ (with or without boundary). Let~$\mu_{\mm{DN}}^{\Sigma,M}$ and~$\mu_{2\mn{D}}^{\Sigma}$ be the two Gaussian measures constructed on~$\mathcal{D}'(\Sigma)$ with covariance operators~$(\DN_M^{\Sigma})^{-1}$ and~$(2\mn{D}_{\Sigma})^{-1}$ (if~$M$ has boundary, specify the boundary condition to be $B$ as in section \ref{sec-dn-map}). Then~$\mu_{\mm{DN}}^{\Sigma,M}$ and~$\mu_{2\mn{D}}^{\Sigma}$ are mutually absolutely continuous with Radon-Nikodym density given by
  \begin{equation}
    \frac{\dd\mu_{\mm{DN}}^{\Sigma,M}}{\dd\mu_{2\mn{D}}^{\Sigma}}(\varphi)=({\tts \det_{\zeta}} (2\mn{D}_{\Sigma}))^{-\frac{1}{2}}({\tts \det_{\zeta}} \DN_M^{\Sigma})^{\frac{1}{2}} \me^{-\frac{1}{2}\sank{\varphi,(\DN_M^{\Sigma}-2\mn{D}_{\Sigma})\varphi}_{L^2(\Sigma)}}.
    \label{}
  \end{equation}
\end{corr}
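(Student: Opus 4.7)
The plan is to recognize the statement as a direct instance of the general quadratic-perturbation / Radon-Nikodym formula developed throughout Section \ref{sec-quad-pert-rad-niko-main}. In the notation of that section I would take the background Gaussian measure to be $\mu_C := \mu_{2\mn{D}}^{\Sigma}$, so that the covariance operator is $C = (2\mn{D}_\Sigma)^{-1}$, a Gaussian covariance operator of order $-1$ on the one-dimensional manifold $\Sigma$. For the perturbation I choose
$$V := \DN_M^\Sigma - 2\mn{D}_\Sigma,$$
which is formally self-adjoint on $L^2(\Sigma)$ and satisfies tautologically $C^{-1} + V = \DN_M^\Sigma$. Consequently, once one knows the perturbed Gibbs measure (\ref{eqn-quadratic-gibbs-meas}) is well-defined and again Gaussian, its covariance operator must be $(C^{-1}+V)^{-1} = (\DN_M^\Sigma)^{-1}$, hence it is precisely $\mu_{\DN}^{\Sigma,M}$; the mutual absolute continuity and the exponential factor in the density then drop out automatically from the construction.

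The step I expect to be the main obstacle is verifying that $V$ meets the regularity required by the abstract perturbation proposition, so that the normalization integral in (\ref{eqn-quadratic-gibbs-meas}) is finite. This is a symbol-theoretic matter: the jumpy Dirichlet-to-Neumann operator $\DN_M^\Sigma$ is a classical $\Psi$DO of order $1$, given locally as the sum of the two one-sided DN operators across $\Sigma$, each with principal symbol $|\xi|_g$, so its principal symbol is $2|\xi|_g$. This matches exactly the principal symbol of $2\mn{D}_\Sigma = 2(\Delta_\Sigma+m^2)^{1/2}$, so $V$ is a $\Psi$DO of order $\le 0$. Hence $CV$ and $C^{1/2}VC^{1/2}$ are at least of negative order on a one-dimensional manifold, which is enough to be Hilbert-Schmidt (with additional room gained from the cancellation of subleading symbols coming from the mean curvatures of $\Sigma$ in $M_\pm$ being opposite, making $V$ effectively of order $\leq -1$). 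The case of $M$ with boundary, where $\DN_M^{\Sigma,B}$ replaces $\DN_M^\Sigma$, proceeds identically since the symbol computation is local around $\Sigma$.

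Granted this regularity, the remaining task is to identify the normalization constant with the stated determinant ratio. The Gaussian integral with quadratic perturbation reads
$$\int_{\mathcal{D}'(\Sigma)}\exp\!\bigl(-\tfrac{1}{2}\langle \varphi, V\varphi\rangle_{L^2}\bigr)\,\dd\mu_C(\varphi) = \detf(\one + CV)^{-1/2},$$
and I would then invoke the factorization lemma \ref{lemm-det-factor} with $A := 2\mn{D}_\Sigma$ and $K := CV$ (smoothing approximations $K_i$ being provided by standard mollification of $V$ composed with $C$) to convert the Fredholm determinant into a ratio of $\zeta$-regularized determinants:
$$\detf(\one + CV) = \frac{\detz\!\bigl(A(\one+K)\bigr)}{\detz(A)} = \frac{\detz(\DN_M^\Sigma)}{\detz(2\mn{D}_\Sigma)}.$$
Substituting, the normalization becomes $(\detz(2\mn{D}_\Sigma)/\detz(\DN_M^\Sigma))^{1/2}$, whose reciprocal is precisely the prefactor $(\detz(2\mn{D}_\Sigma))^{-1/2}(\detz(\DN_M^\Sigma))^{1/2}$ appearing in the statement, concluding the proof.
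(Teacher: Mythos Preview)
Your approach is essentially identical to the paper's: set $C=(2\mn{D}_\Sigma)^{-1}$, $V=\DN_M^\Sigma-2\mn{D}_\Sigma$, apply Proposition~\ref{prop-quadratic-perturbation}, and convert the resulting Fredholm determinant into a ratio of $\zeta$-determinants via Lemma~\ref{lemm-det-factor}. Two small sharpenings are in order. First, Proposition~\ref{prop-quadratic-perturbation} requires $\widehat V=C^{1/2}VC^{1/2}$ to be \emph{trace class}, not merely Hilbert-Schmidt; the paper obtains this directly from Lemma~\ref{lemm-dn-prop}(iv), which gives $V\in\Psi^{\le -2}(\Sigma)$ (hence $V$ itself already trace class on a circle), stronger than your parenthetical order-$(-1)$ estimate from subleading mean-curvature cancellation. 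Second, the normalization constant coming from Proposition~\ref{prop-quadratic-perturbation}(ii) is $\detf(\one+\widehat V)^{-1/2}$, not $\detf(\one+CV)^{-1/2}$; the paper bridges these via Lemma~\ref{lemm-det-fred-cont}(ii) before invoking the factorization lemma, and verifies the smoothing-approximation hypothesis of Lemma~\ref{lemm-det-factor} by a spectral cutoff plus elliptic bootstrap rather than generic mollification.
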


The proof is at the end of this section. First we come back to the general case.

\begin{prop}\label{prop-quadratic-perturbation}
  Let~$V:L^2(\Sigma)\lto L^2(\Sigma)$ be as above and moreover assume
  \begin{equation}
    C^{-1}+V\textrm{ is positive,}
    \label{eqn-quadratic-perturbation-cov}
  \end{equation}
  and that
  \begin{equation}
    \wh{V}\defeq C^{\frac{1}{2}}V C^{\frac{1}{2}} \textrm{ is trace class.}
    \label{eqn-quadratic-perturbation-trace}
  \end{equation}
  Then
  \begin{enumerate}[(i)]
  \item the random variable~$\ank{\varphi,V\varphi}_{L^2}$ can be defined in $L^1(\mu_C)$ and $\mb{E}_C[\ank{\varphi,V\varphi}_{L^2}]=\ttr(\wh{V})$,
  \item $Z:= \mb{E}_C[\me^{-\frac{1}{2}\ank{\varphi,V\varphi}_{L^2}}] ={\tts \det_{\mm{Fr}}}(\one+ \wh{V})^{-\frac{1}{2}}$, and
  \item the Gibbs measure (\ref{eqn-quadratic-gibbs-meas}) is Gaussian with covariance~$(C^{-1}+V)^{-1}$.
\end{enumerate}
\end{prop}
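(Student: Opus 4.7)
The plan is to diagonalize the trace-class self-adjoint operator $\wh{V} = C^{1/2}VC^{1/2}$ via the spectral theorem and reduce every computation to independent one-dimensional Gaussian integrals. Fix an $L^2(\Sigma)$-orthonormal eigenbasis $\{e_k\}_{k\ge 0}$ with $\wh{V} e_k = \lambda_k e_k$; by (\ref{eqn-quadratic-perturbation-trace}), $\sum_k |\lambda_k| = \nrm{\wh{V}}_{\mm{tr}} < \infty$, and (\ref{eqn-quadratic-perturbation-cov}), after conjugation by $C^{1/2}$, translates to $\one + \wh{V} > 0$, i.e.\ $1+\lambda_k > 0$ for every $k$. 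Setting $f_k := C^{-1/2}e_k$ provides an orthonormal basis of the Cameron--Martin space $(W^s(\Sigma), \ank{-, C^{-1}-}_{L^2})$ of lemma \ref{lemm-GHS-cam-mar-of-cov-C}, so the random variables $\xi_k := \varphi(f_k)$ are i.i.d.\ standard real Gaussians under $\mu_C$, and the quadratic form $\ank{\varphi, V\varphi}_{L^2}$ will formally read $\sum_k \lambda_k \xi_k^2$.

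For (i), I would first give the formal expression a precise meaning by approximating $V$ with finite-rank truncations $V_N$ having Schwartz kernel $\sum_{k\le N}\lambda_k f_k \otimes f_k$, for which $\ank{\varphi, V_N\varphi}_{L^2} = \sum_{k\le N}\lambda_k \xi_k^2$ is unambiguously defined. Kolmogorov's theorem for sums of independent variables combined with $\mb{E}_C[\sum_k |\lambda_k \xi_k^2|] = \sum_k |\lambda_k| < \infty$ yields $L^1(\mu_C)$- and a.s.-convergence of $\sum_k \lambda_k \xi_k^2$; its limit, independent of the approximating sequence, is defined to be $\ank{\varphi, V\varphi}_{L^2}$, with expectation $\sum_k \lambda_k = \ttr(\wh{V})$.

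For (ii), independence gives the finite-$N$ identity $\mb{E}_C[\me^{-\frac{1}{2}\sum_{k\le N}\lambda_k \xi_k^2}] = \prod_{k\le N}(1+\lambda_k)^{-1/2}$ from the one-dimensional Gaussian integral $\mb{E}[\me^{-\lambda \xi^2/2}] = (1+\lambda)^{-1/2}$ ($\lambda>-1$); splitting into positive and negative parts of $\lambda_k$ and applying monotone/dominated convergence (with the uniform bound $\sup_N \mb{E}_C[\me^{-\frac{1}{2}S_N}]< \infty$ coming from convergence of the RHS) passes to the limit, giving
$$
Z = \prod_k (1+\lambda_k)^{-1/2} = \detf(\one+\wh{V})^{-1/2}
$$
by the eigenvalue-product formula for the Fredholm determinant. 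For (iii), the characteristic functional of the Gibbs measure is computed the same way: writing $\varphi(h) = \sum_k \alpha_k \xi_k$ with $\alpha_k := \ank{h, C^{1/2}e_k}_{L^2}$ and using $\mb{E}[\me^{\ii\alpha\xi - \lambda\xi^2/2}] = (1+\lambda)^{-1/2}\exp(-\alpha^2/(2(1+\lambda)))$, one obtains
$$
\int \me^{\ii\varphi(h)}\,\dd\mu(\varphi) = \exp\Bigl(-\tfrac{1}{2}\sum_k \tfrac{\alpha_k^2}{1+\lambda_k}\Bigr) = \exp\bigl(-\tfrac{1}{2}\ank{h,(C^{-1}+V)^{-1}h}_{L^2}\bigr),
$$
the last equality resting on the elementary algebraic identity $(C^{-1}+V)^{-1} = C^{1/2}(\one + \wh{V})^{-1}C^{1/2}$. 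Bochner--Minlos (proposition \ref{prop-boch-min}) then identifies $\mu$ as the centered Gaussian on $\mathcal{D}'(\Sigma)$ with covariance $(C^{-1}+V)^{-1}$.

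The main obstacle is step (i): giving rigorous meaning to $\ank{\varphi, V\varphi}_{L^2}$ when $\varphi$ is merely a distribution and $V$ is only bounded on $L^2$. The trace-class hypothesis (\ref{eqn-quadratic-perturbation-trace}) is exactly what makes the spectral series converge in $L^1(\mu_C)$ and ensures the limit does not depend on the finite-rank approximation; once (i) is in place, (ii) and (iii) reduce to one-dimensional Gaussian bookkeeping mode-by-mode.
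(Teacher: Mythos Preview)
Your approach is essentially the same as the paper's: diagonalize $\wh V$, reduce everything to products of one-dimensional Gaussian integrals indexed by the eigenbasis, and pass to the limit from finite-rank truncations (this is exactly lemma~\ref{lemm-quadratic-pert-trace} for (i) and the spectral cut-off $V_N$ argument for (ii)--(iii) in the paper). One terminological slip worth correcting: with $f_k=C^{-1/2}e_k$ one has $\langle f_j,Cf_k\rangle_{L^2}=\langle e_j,e_k\rangle_{L^2}=\delta_{jk}$, so $\{f_k\}$ is an orthonormal basis of the \emph{Gaussian Hilbert space} $(W^{-s},\langle-,C-\rangle_{L^2})$, not of the Cameron--Martin space $(W^{s},\langle-,C^{-1}-\rangle_{L^2})$ as you write; this is precisely the orthonormality that makes the $\xi_k=\varphi(f_k)$ i.i.d.\ standard Gaussians, and it is how the paper sets up the basis in lemma~\ref{lemm-quadratic-pert-trace}.
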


Note that since~$C^{-1}+V$ is positive and~$C$ is also positive,~$\one + \wh{V}=C^{\frac{1}{2}}(C^{-1}+V)C^{\frac{1}{2}}$ is positive.

\begin{lemm}\label{lemm-quadratic-pert-trace}
  There exist an orthonormal basis~$\{f_j\}_{j=1}^{\infty}$ of the Gaussian Hilbert space~$W^{-s}(\Sigma)$ of~$\mu_C$ equipped with~$\ank{-,C-}_{L^2}$, such that
  \begin{equation}
    \ank{\varphi,V \varphi}_{L^2}=\sum_{j=1}^{\infty} \lambda_j \varphi(f_j)^2,
    \label{eqn-quadratic-expression}
  \end{equation}
  for all~$\varphi$ belonging to the Cameron-Martin space~$W^s(\Sigma)$, where~$\{\lambda_j\}$ are the eigenvalues of~$\wh{V}$ on~$L^2(\Sigma)$, and the series converges absolutely in~$L^1(\mu_C)$. Thus we \textsf{define} the random variable~$\ank{\varphi,V\varphi}_{L^2}$ with this converging series. Consequently, (i) of proposition \ref{prop-quadratic-perturbation} holds.
\end{lemm}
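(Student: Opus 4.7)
The plan is to construct $\{f_j\}$ by spectrally diagonalising $\wh{V}$ on $L^2(\Sigma)$, transport the resulting basis to the Gaussian Hilbert space via $C^{-1/2}$, and then read off $L^1(\mu_C)$-convergence from the independence of the Gaussian coordinates together with the trace-class hypothesis.

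More concretely: since $\wh{V}=C^{1/2}VC^{1/2}$ is bounded, self-adjoint and trace-class on $L^2(\Sigma)$, the spectral theorem yields an orthonormal basis $\{e_j\}_{j\ge 1}$ of $L^2(\Sigma)$ of eigenvectors with real eigenvalues $\lambda_j$ satisfying $\sum_j|\lambda_j|=\snrm{\wh{V}}_{\mm{tr}}<\infty$. I would then define $f_j\defeq C^{-1/2}e_j$, noting that $C^{-1/2}:L^2(\Sigma)\to W^{-s/2}(\Sigma)\hookrightarrow W^{-s}(\Sigma)$ places the $f_j$ inside the Gaussian Hilbert space. The formal self-adjointness of $C^{\pm 1/2}$, interpreted through the $W^{-s/2}$--$W^{s/2}$ Sobolev duality, gives $\ank{f_j,Cf_k}_{L^2}=\ank{e_j,e_k}_{L^2}=\delta_{jk}$; since $C^{-1/2}$ is a Hilbert space isometry from $L^2(\Sigma)$ onto $(W^{-s}(\Sigma),\ank{-,C-}_{L^2})$, the family $\{f_j\}$ is in fact a complete orthonormal basis, as required.

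For $\varphi\in W^s(\Sigma)$ (where the left-hand side $\ank{\varphi,V\varphi}_{L^2}$ makes classical sense), I write $C^{-1/2}\varphi\in W^{s/2}(\Sigma)\subset L^2(\Sigma)$ and expand in $\{e_j\}$: the coefficient is $\ank{C^{-1/2}\varphi,e_j}_{L^2}=\ank{\varphi,f_j}_{L^2}=\varphi(f_j)$. Substituting into $\ank{\varphi,V\varphi}_{L^2}=\ank{C^{-1/2}\varphi,\wh{V}C^{-1/2}\varphi}_{L^2}$ and using $\wh{V}e_k=\lambda_k e_k$ yields the identity (\ref{eqn-quadratic-expression}). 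For the $L^1(\mu_C)$-convergence, the orthonormality of $\{f_j\}$ in the Gaussian Hilbert space makes $\{\varphi(f_j)\}_j$ a family of pairwise uncorrelated, hence independent (since jointly Gaussian), standard Gaussians; consequently $\mb{E}_C[\varphi(f_j)^2]=1$ and $\mb{E}_C[\sum_j|\lambda_j|\varphi(f_j)^2]=\snrm{\wh{V}}_{\mm{tr}}<\infty$. Monotone convergence then shows the series $\sum_j\lambda_j\varphi(f_j)^2$ converges absolutely in $L^1(\mu_C)$, defines $\ank{\varphi,V\varphi}_{L^2}$ unambiguously as a random variable agreeing with the classical expression on the Cameron--Martin space, and by interchange of sum and expectation has expectation $\ttr\wh{V}$, which is part (i) of Proposition \ref{prop-quadratic-perturbation}.

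I do not anticipate any serious obstacle: the only step requiring real attention is the bookkeeping to justify that $C^{-1/2}$ and $C^{1/2}$ realise the claimed isometries between the relevant Sobolev spaces and that the pairing $\ank{f_j,Cf_k}_{L^2}$ should be read as the $W^{-s/2}$--$W^{s/2}$ duality rather than as a bare $L^2$ inner product. Note also that the positivity hypothesis $C^{-1}+V\ge 0$ plays no role here; it will be invoked only for parts (ii) and (iii) of the proposition, where one needs $\one+\wh{V}>0$ to make sense of $\detf(\one+\wh{V})^{-1/2}$ and of the Gaussian with covariance $(C^{-1}+V)^{-1}$.
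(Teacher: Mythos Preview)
Your proof is correct and follows essentially the same approach as the paper: diagonalise $\wh{V}$ on $L^2(\Sigma)$ and transport the eigenbasis to the Gaussian Hilbert space via $C^{-1/2}$ (the paper phrases this as ``seek $\{C^{1/2}f_j\}$ as complete $L^2$-orthonormal eigenfunctions of $\wh{V}$''), then use independence of the $\varphi(f_j)$ and the trace-class hypothesis for $L^1$-convergence. Your additional remarks on the Sobolev duality bookkeeping and the irrelevance of the positivity assumption for part (i) are accurate and welcome.
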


\begin{proof}
  The key is to seek~$\{C^{\frac{1}{2}}f_j\}$ as complete~$L^2$-orthonormal eigenfunctions of~$\wh{V}$ with eigenvalues~$\{\lambda_j\}$, which exist since~$\wh{V}$ is self-adjoint and trace class on~$L^2(\Sigma)$. Now~$\{f_j\}\subset W^{-s}(\Sigma)$ and is complete orthonormal with respect to~$\ank{-,C-}_{L^2}$. This also means the random variables~$\{\varphi(f_j)\}$ are mutually independent. Note for~$\varphi\in W^s(\Sigma)$,~$C^{-\frac{1}{2}}\varphi \in L^2(\Sigma)$, and
  \begin{align*}
    \bank{\varphi,V \varphi}_{L^2}&=\bank{C^{-\frac{1}{2}}\varphi,\wh{V} C^{-\frac{1}{2}}\varphi}_{L^2} \\
    &=\sum_{j=1}^{\infty} \lambda_j \big|\bank{C^{-\frac{1}{2}}\varphi,C^{\frac{1}{2}}f_j}_{L^2}\big|^2 \\
    &=\sum_{j=1}^{\infty} \lambda_j \varphi(f_j)^2.
  \end{align*}
  Since~$\wh{V}$ is trace class,
  \begin{equation}
    \sum_{j=1}^{\infty} \mb{E}_C\big[ |\lambda_j| \varphi(f_j)^2\big]=\sum_{j=1}^{\infty}|\lambda_j|<\infty,
    \label{}
  \end{equation}
  and we obtain the result.
\end{proof}

\begin{def7}
  We thus defined~$\ank{\varphi,V\varphi}_{L^2}$ as a \textsf{Wiener quadratic form} and it lies in the second Wiener chaos of~$\mu_C$, a fortiori in~$L^2(\mu_C)$. See Bogachev \cite{Bogachev} pages 257-261 for more information and in particular proposition 5.10.16 for the same result in the context of Malliavin calculus. In fact,
  \begin{equation}
    VCf_j=C^{-\frac{1}{2}} \wh{V} C^{\frac{1}{2}}f_j=\lambda_j f_j,
    \label{}
  \end{equation}
  and hence~$V$ is trace class on~$W^s(\Sigma)$ with eigenbasis~$\{Cf_j\}$. 
\end{def7}

\begin{def7}
  We have~$\lambda_j>-1$ by the assumption (\ref{eqn-quadratic-perturbation-cov}).
\end{def7}

To treat (ii) and (iii) of proposition \ref{prop-quadratic-perturbation} we adopt some approximations.

\begin{lemm}
  Proposition \ref{prop-quadratic-perturbation} (ii) and (iii) is true in the case~$V$ has finite rank.
\end{lemm}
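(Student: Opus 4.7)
The plan is to reduce everything to a finite-dimensional Gaussian computation via the orthogonal decomposition established in lemma \ref{lemm-quadratic-pert-trace}. Assume $V$ has rank $n$, so at most $n$ of the eigenvalues $\lambda_j$ of $\wh{V}$ are nonzero; after re-indexing, $\lambda_j = 0$ for $j > n$. By lemma \ref{lemm-quadratic-pert-trace}, the identity
\begin{equation*}
  \bank{\varphi,V\varphi}_{L^2}=\sum_{j=1}^n \lambda_j \varphi(f_j)^2
\end{equation*}
holds $\mu_C$-almost surely as a \emph{finite} sum, and the $\{\varphi(f_j)\}_{j\ge 1}$ are i.i.d.\ standard real Gaussians under $\mu_C$ (since $\{f_j\}$ is orthonormal in the Gaussian Hilbert space).

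For part (ii), I would use the independence of the $\varphi(f_j)$ together with the fact that $\lambda_j = 0$ for $j>n$ to factor the integral as
\begin{equation*}
  Z=\mb{E}_C\Bigl[\me^{-\frac{1}{2}\sum_{j=1}^n \lambda_j \varphi(f_j)^2}\Bigr]=\prod_{j=1}^n \frac{1}{\sqrt{2\pi}}\int_{\mb{R}} \me^{-\frac{1}{2}(1+\lambda_j)t^2}\,\dd t=\prod_{j=1}^n (1+\lambda_j)^{-\frac{1}{2}},
\end{equation*}
which uses $1+\lambda_j>0$ (by the positivity assumption (\ref{eqn-quadratic-perturbation-cov})). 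Since $\wh{V}$ is finite rank, $\detf(\one+\wh{V})=\prod_{j=1}^n(1+\lambda_j)$ by the definition (\ref{eqn-det-fred-func}) (only finitely many terms in the exterior-algebra sum are nonzero), giving $Z=\detf(\one+\wh{V})^{-\frac{1}{2}}$.

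For part (iii), I would verify that $\mu$ is Gaussian with the claimed covariance by computing its characteristic functional $h\mapsto \int \me^{\ii \varphi(h)}\,\dd\mu(\varphi)$ for $h\in C^{\infty}(\Sigma)$, and then comparing against the Fourier transform of the Gaussian with covariance $(C^{-1}+V)^{-1}$. Split $h=h_{\parallel}+h_{\perp}$ where $h_{\parallel}\in \spn\{Cf_1,\dots,Cf_n\}$ (which is the image under $C$ of the range of $\wh{V}$) and $h_{\perp}$ lies in the $L^2$-orthogonal complement of $\{f_1,\dots,f_n\}$; the corresponding random variables $\varphi(h_{\perp})$ and $\{\varphi(f_j)\}_{j\le n}$ are independent under $\mu_C$, and $\varphi(h_{\perp})$ is unaffected by the perturbation. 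The $h_{\perp}$ factor contributes the unperturbed Gaussian exponent $\exp(-\frac{1}{2}\bank{h_{\perp},Ch_{\perp}}_{L^2})$, while the $h_{\parallel}$ factor reduces to an explicit $n$-dimensional Gaussian integral that can be completed as a square.

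The main obstacle is the last step: matching the explicit output of that finite Gaussian computation with $\exp(-\frac{1}{2}\sank{h,(C^{-1}+V)^{-1}h}_{L^2})$. I would do this by computing $(C^{-1}+V)^{-1}$ on each piece separately using a Sherman--Morrison--Woodbury-type identity: on the $n$-dimensional subspace where $V$ acts as $\bigoplus_j \lambda_j$ in the basis $\{Cf_j\}$ (since $VCf_j=\lambda_j f_j$, so $V$ preserves $\spn\{Cf_j\}\to \spn\{f_j\}$), one checks the action of $(C^{-1}+V)^{-1}$ explicitly; on the complement $V$ vanishes and $(C^{-1}+V)^{-1}$ reduces to $C$. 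Assembling the two contributions produces precisely the required quadratic form in $h$, establishing (iii).
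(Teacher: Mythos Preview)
Your proof of (ii) is correct and identical to the paper's. For (iii) the overall strategy---split off the finite-dimensional piece where $V$ acts, and reduce to an explicit Gaussian integral on $\mb{R}^n$---is also the paper's approach. However, your stated decomposition has a genuine slip that breaks both claims you make about~$h_\perp$.

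For $\varphi(h_\perp)$ to be independent of $\{\varphi(f_j)\}_{j\le n}$ under~$\mu_C$ you need $\ank{h_\perp, Cf_j}_{L^2}=0$, i.e.\ orthogonality in the Gaussian Hilbert space inner product $\ank{-,C-}_{L^2}$, \emph{not} $\ank{h_\perp,f_j}_{L^2}=0$. The same issue reappears in your last paragraph: $(C^{-1}+V)^{-1}h_\perp=Ch_\perp$ would require $VCh_\perp=0$, which is again $h_\perp\perp_{L^2} Cf_j$, not $h_\perp\perp_{L^2} f_j$. (With your condition you do get $Vh_\perp=0$, since $\ran V=\spn\{f_j:j\le n\}$, but that is the wrong vanishing.) The correct splitting is therefore $h=\Pi_0 h+\Pi_1 h$ with $\Pi_0$ the projection onto $\spn\{f_1,\dots,f_n\}$ in the $W^{-s}$-inner product $\ank{-,C-}_{L^2}$; correspondingly $h_\parallel\in\spn\{f_j\}$, not $\spn\{Cf_j\}$. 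With that correction your characteristic-functional computation goes through. The paper takes the equivalent route of computing the second moment $Z^{-1}\mb{E}_C[\varphi(f)^2\me^{-\frac{1}{2}\ank{\varphi,V\varphi}}]$ directly, after observing that $(\one+\wh{V})^{-1}$ is block-diagonal on $L^2=C^{1/2}\spn\{f_j\}\oplus C^{1/2}\spn\{f_j\}^\perp$.
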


\begin{proof}
  In this case the series in (\ref{eqn-quadratic-expression}) is finite with~$\{f_j\}_{j=1}^N$ for some~$N\in\mb{N}$. Thus
  \begin{align*}
    Z&=\frac{1}{(2\pi)^{N/2}}\int_{\mb{R}^N}^{}\me^{-\frac{1}{2}\sum_{j} \lambda_j x_j^2} \me^{-\frac{1}{2}\sum_j x_j^2}\dd^N x \\
    &=\prod_j (1+\lambda_j)^{-\frac{1}{2}}=\det(\one +\wh{V})^{-\frac{1}{2}},
  \end{align*}
  by projecting onto~$\mb{R}^N$ via~$\varphi\mapsto (\varphi(f_1),\dots,\varphi(f_N))=:(x_1,\dots,x_N)$. 
   For the covariance, we orthogonally decompose~$W^{-s}(\Sigma)$ as
  \begin{equation}
    W^{-s}(\Sigma)=\spn\{f_j~|~ 1\le j\le N\}\oplus \spn\{f_j~|~ 1\le j\le N\}^{\perp},
    \label{eqn-quadratic-orth-gauss}
  \end{equation}
  let~$\Pi_0$ and~$\Pi_1$ be the corresponding orthogonal projections (in order), and for any~$f\in C^{\infty}(\Sigma)$ write
  \begin{equation}
    \varphi(f)=\alpha_1\varphi(f_1)+\cdots +\alpha_N\varphi(f_N)+\varphi(\Pi_1 f),
  \end{equation}
  then~$\varphi(\Pi_1 f)$ is independent of both~$\varphi(f_j)$, $1\le j\le N$, and~$\ank{\varphi,V\varphi}_{L^2}$. It is now clear that (\ref{eqn-quadratic-gibbs-meas}) is Gaussian because we could now express~$\mu(\{\phi(f)\in A\})$ for any Borel set $A\subset \mb{R}$ as a Gaussian integral over~$\mb{R}^{N+1}$. Apply $C^{\frac{1}{2}}$ to (\ref{eqn-quadratic-orth-gauss}) we get the~$L^2$-orthogonal decomposition
  \begin{equation}
    L^2(\Sigma)=\underbrace{C^{\frac{1}{2}}(\spn\{f_j~|~ 1\le j\le N\})}_{=:L^2(\Sigma)_0}\oplus \underbrace{C^{\frac{1}{2}}(\spn\{f_j~|~ 1\le j\le N\}^{\perp})}_{=:L^2(\Sigma)_1}.
    \label{}
  \end{equation}
  Clearly~$\wh{V}$ leaves this decomposition invariant and is zero on~$L^2(\Sigma)_1$. Hence~$(\one+\wh{V})^{-1}$ is block-diagonal,
  \begin{equation}
    (\one+\wh{V})^{-1}=
    \def\arraystretch{1.3}
    \begin{blockarray}{ccl}
      \{C^{\frac{1}{2}}f_j\}_1^N & L^2(\Sigma)_1& \\
      \begin{block}{(cc)c}
	\diag_{j=1}^N\{(\lambda_j+1)^{-1}\}&0& \{C^{\frac{1}{2}}f_j\}_1^N \\
	0&\one &L^2(\Sigma)_1\\
      \end{block}
    \end{blockarray}
    \label{}
  \end{equation}
   Thus
  \begin{align*}
    Z^{-1}\mb{E}_C[\varphi(f)^2 \me^{-\frac{1}{2}\ank{\varphi,V \varphi}}]&=Z^{-1}\mb{E}_C[\varphi(\Pi_0 f)^2 \me^{-\frac{1}{2}\sum_j \lambda_j \varphi(f_j)^2}] +Z^{-1}\mb{E}_C[\varphi(\Pi_1 f)^2]\mb{E}_C[\me^{-\frac{1}{2}\ank{\varphi,V\varphi}}] \\
    &=\sum_j \alpha_j^2(\lambda_j+1)^{-1}+\bank{\Pi_1 f,C \Pi_1 f}_{L^2} \\
    &=\bank{C^{\frac{1}{2}}\Pi_0 f, (\one +\wh{V})^{-1}C^{\frac{1}{2}}\Pi_0 f}_{L^2}+\bank{\Pi_1 f,C^{\frac{1}{2}}(\one +\wh{V})^{-1}C^{\frac{1}{2}} \Pi_1 f}_{L^2} \\
        &=\ank{f,(C^{-1}+V)^{-1}f}_{L^2},
  \end{align*}
  where we perform again a Gaussian integral on~$\mb{R}^N$ in the second line.
\end{proof}

\begin{proof}[Proof of proposition \ref{prop-quadratic-perturbation} (ii) and (iii).]
For general~$V$ we impose spectral cut-off at~$N$,
\begin{equation}
  V_N(\varphi)\defeq \sum_{j=1}^N \lambda_j \ank{\varphi,f_j}_{L^2}f_j.
  \label{eqn-quadratic-V-spectral-cutoff}
\end{equation}
Then $C^{\frac{1}{2}}V_N C^{\frac{1}{2}}=:\wh{V}_N \to \wh{V}$ under the trace norm~$\nrm{\cdot}_{\ttr}$ (acting on~$L^2(\Sigma)$). Hence~$\ank{\varphi,V_N \varphi}_{L^2}\to \ank{\varphi,V\varphi}_{L^2}$ in~$L^1(\mu_C)$ by lemma \ref{lemm-quadratic-pert-trace} and after passing to a subsequence
\begin{equation}
  \me^{-\frac{1}{2}\ank{\varphi,V_N \varphi}_{L^2}}\to \me^{-\frac{1}{2}\ank{\varphi,V\varphi}_{L^2}}
  \label{}
\end{equation}
in~$L^1(\mu_C)$ and (ii) for $V$ follows, since $\detf(\one+\wh{V}_N)\to \detf(\one+\wh{V})$ by lemma \ref{lemm-det-fred-cont}.
  To prove (iii), we note that~$\wh{V_N}\to \wh{V}$ a fortiori under the operator norm, then~$C^{\frac{1}{2}}(\one+\wh{V}_N)^{-1}C^{\frac{1}{2}}\to C^{\frac{1}{2}}(\one +\wh{V})^{-1}C^{\frac{1}{2}}$ in norm and
  \begin{align*}
    \frac{\mb{E}_C[\me^{\ii\varphi(f)}\me^{-\frac{1}{2}\ank{\varphi,V\varphi}_{L^2}}]}{\mb{E}_C[\me^{-\frac{1}{2}\ank{\varphi,V\varphi}_{L^2}}]}&=\lim_{N\to\infty} \frac{\mb{E}_C[\me^{\ii\varphi(f)}\me^{-\frac{1}{2}\ank{\varphi,V_N\varphi}_{L^2}}]}{\mb{E}_C[\me^{-\frac{1}{2}\ank{\varphi,V_N\varphi}_{L^2}}]}\\
    &=\lim_{N\to\infty} \exp\Big( -\frac{1}{2}\ank{f,(C^{-1}+V_N)^{-1}f}_{L^2} \Big) \\
    &=\exp\Big( -\frac{1}{2}\ank{f,(C^{-1}+V)^{-1}f}_{L^2} \Big),
  \end{align*}
  for~$f\in C^{\infty}(\Sigma)$, showing that (\ref{eqn-quadratic-gibbs-meas}) is Gaussian with the right covariance.
\end{proof}

\begin{proof}[Proof of corollary \ref{corr-rad-niko-dense}.]
Remember now that~$\dim \Sigma=1$.
  Write for short~$\mn{D}:=2\mn{D}_{\Sigma}$ and~$\DN:=\DN_M^{\Sigma}$. Setting~$C=\mn{D}^{-1}$ and~$V=\DN-\mn{D}$ in proposition \ref{prop-quadratic-perturbation}, we are left to prove the determinant identity
  \begin{equation}
    \detz(\mn{D})\detf(\one+\mn{D}^{-\frac{1}{2}}(\DN-\mn{D})\mn{D}^{-\frac{1}{2}})=\detz(\DN).
    \label{}
  \end{equation}
  Indeed, this is now immediate as
  \begin{equation}
    \textrm{LHS}=\detz(\mn{D})\detf(\one+\mn{D}^{-1}(\DN-\mn{D}))=\textrm{RHS}
    \label{}
  \end{equation}
  by lemma \ref{lemm-det-fred-cont} (ii) and lemma \ref{lemm-det-factor}. We point out~$\mn{D}\mn{D}^{-1}(\DN-\mn{D})=V$ can be approximated in~$\nrm{\cdot}_{\mm{tr}}$ by smoothing operators (by lemma \ref{lemm-dn-prop} (iv),~$V=\DN-\mn{D}$ is~$L^2$-trace class), so the conditions of lemma \ref{lemm-det-factor} is satisfied. Indeed, as above,~$\{Cf_j\}\subset W^1(\Sigma)$ are $W^1$-complete eigenfunctions of~$V$. If the corresponding eigenvalue~$\lambda_j\ne 0$, then the bootstrap argument shows~$Cf_j\in C^{\infty}(\Sigma)$ since~$V\in \Psi^{<0}(\Sigma)$. Thus~$V_N$, where~$V_N$ is as in (\ref{eqn-quadratic-V-spectral-cutoff}), approximates~$V$ in~$\nrm{\cdot}_{\mm{tr}}$ and is smoothing. This concludes the proof.
\end{proof}

\begin{def7}
  Though it is probably true, we do not claim~$\DN-\mn{D}$ is elliptic.
\end{def7}

  \section{Variants of Nelson's Argument}\label{sec-nelson-main}

 \noindent The goal of this section is to define part~$A$ of (\ref{eqn-def-mes-gibbs-heu}) which culminates in \hyperref[thrm-nelson]{Nelson's theorem}. The principal obstacle in achieving this is the fact that powers of a distribution such as~$\phi^2$,~$\phi^4$ etc., are generally not defined. Even as a random variable under~$\mu_{\mm{GFF}}^M$, we have (formally)~$\mb{E}_{\mm{GFF}}^M[\phi(x)\phi(y)]\heueq G_{(\Delta+m^2)}(x,y)$ from (\ref{eqn-gff-cov-closed}) for~$x\ne y$ but this implies~$\mb{E}_{\mm{GFF}}^M[\phi(x)\phi(x)]= G_{(\Delta+m^2)}(x,x)=\infty$. This necessitates a procedure of \textsf{renormalization} which \textit{subtracts away}~$\infty$ and makes~$\mb{E}_{\mm{GFF}}^M[\phi(x)\phi(x)]<\infty$.

Here the natural renormalization strategy is provided by the Gaussian probability theory. Let~$\{K_{\varepsilon}~|~\varepsilon>0\}$ be a family of \textit{smoothing} operators\footnote{that is, each $K_{\varepsilon}$ maps $\mathcal{D}'(M)\lto C^{\infty}(M)$.} on~$M$, for which~$K_{\varepsilon}\to \one$ as~$\varepsilon\to 0$ (in a sense to be specified later), and consider the mollified random field~$\phi_{\varepsilon}:=K_{\varepsilon}\phi$. Instead of~$\phi_{\varepsilon}(x)^4$, say, we look at
\begin{align*}
  {:}\phi_{\varepsilon}(x)^4{:}&\defeq \phi_{\varepsilon}(x)^4-6\mb{E}[\phi_{\varepsilon}(x)^2] \phi_{\varepsilon}(x)^2+ 3\mb{E}[\phi_{\varepsilon}(x)^2]^2 \\
  &\defeq \phi_{\varepsilon}(x)^4-6C_{\varepsilon}(x) \phi_{\varepsilon}(x)^2+ 3C_{\varepsilon}(x)^2.
\end{align*}
It happens that for any~$\chi\in C^{\infty}(M)$, the integral~$\int_{M}^{}\chi(x){:}\phi_{\varepsilon}(x)^4{:}\dd V_M(x)$ converges as a random variable in~$L^2(\mu_{\mm{GFF}}^M)$ to a definitive limit, and defines~$\int_{M}^{}\chi(x){:}\phi(x)^4{:}\dd V_M(x)$ as a random variable in~$L^2(\mu_{\mm{GFF}}^M)$. Note that~$C_{\varepsilon}(x)\to \infty$ as~$\varepsilon\to 0$, so we have subtracted ``infinities''.

\begin{def7}\label{rem-locality-crucial}
\textbf{The crucial point} in our adaptation of Nelson's argument is to realize the \textit{locality} of the interaction~$\int_{M}^{}\chi(x){:}P(\phi(x)){:}\dd V_M(x)$ (see section \ref{sec-locality}). To this end we must allow a sufficiently large class of \textit{regulators}~$K_{\varepsilon}$ (in particular, \textit{local} ones) and show that they define the same interaction (proposition \ref{prop-nelson-main}). In addition, the \textit{Wick ordering} also needs to be local (see section \ref{sec-change-wick}), so that the interaction on a domain with boundary could be defined without reference to the ambient closed manifold where this domain ``caps''. See Brunetti, Fredenhagen, Verch \cite{BFV} and Guo, Paycha, Zhang \cite{GPZ} for more information and perspective on locality.
\end{def7}

\begin{def7}
The method adopted here is restricted to dimension two. In three dimensions, the target measure bearing the heuristic form (\ref{eqn-def-mes-gibbs-heu}) becomes mutually singular with respect to~$\mu_{\mm{GFF}}^M$ and hence cannot be expressed as an integrable function multiplied by~$\mu_{\mm{GFF}}^M$. A recent phenomenal method to treat this case is developed in the framework of stochastic PDEs, called \textsf{stochastic quantization}, providing an alternative to older results outlined in \cite{GJ} section 23.1. See the introductions in \cite{GH}, \cite{HS}, \cite{AK}, \cite{MWX} and \cite{BDFT2} for reviews of the literature and pedagogical discussions.
\end{def7}

\subsection{Regularizations}

\noindent In this subsection we describe an admissible class of regulators~$K_{\varepsilon}$ which would eventually produce the same random variable~$\int_{M}^{}\chi(x){:}\phi(x)^4{:}\dd V_M(x)$ as will be proved in the next subsection. Basically, they are smoothing operators such that~$K_{\varepsilon}\to\one$ in~$\Psi^{\delta}(M)$ in the \textit{symbol sense} for any~$\delta>0$ (see definition below). A compact notation is to say~$K_{\varepsilon}\to \one $ in~$\Psi^{0+}(M)$.

\begin{deef}
  Let~$r\in \mb{R}$ we say that operators~$K_{\varepsilon}\to K$ in~$\Psi^r(M)$ \textsf{in the symbol sense} if for any coordinate chart~$\kappa: U\lto \mb{R}^d$ and cut-off~$\chi\in C_c^{\infty}(U)$, the full symbol of~$\chi K_{\varepsilon}\chi$ (considered acting on~$C_c^{\infty}(\kappa(U))$), converges to that of~$\chi K\chi$ in the~$\mathcal{S}^r_{1,0}(\kappa(U)\times \mb{R}^d)$ topology.
\end{deef}

Now we describe the first candidate for~$K_{\varepsilon}$ satisfying the above assumption (the proof is in appendix \ref{app-symbol}). This was introduced in Dyatlov and Zworski \cite{DZ} and has the advantage of being \textsf{local}, realizing the locality of the~$P(\phi)$ interaction eventually in section \ref{sec-locality}. Consider~$\psi\in C_c^{\infty}((-1,1))$ with~$0\le \psi\le 1$ and equal to~$1$ near~$0$. For~$\varepsilon>0$ we define the operator
\begin{equation}
  E_{\varepsilon}u(x)\defeq \int_{M}^{} E_{\varepsilon}(x,y)u(y)\dd V_g(y),\quad \textrm{with }E_{\varepsilon}(x,y)=\frac{1}{F_{\varepsilon}(x)}\psi\left( \frac{d_g(x,y)}{\varepsilon} \right).
  \label{}
\end{equation}
Here~$F_{\varepsilon}(x)=\int_{}^{}\psi(d(x,y)/\varepsilon) \dd y$ so that~$\int_{}^{}E_{\varepsilon}(x,y)\dd y=1$, and~$d_g$ denotes the Riemannian distance. One observes that~$E_{\varepsilon}(x,y)$ is smooth for each~$\varepsilon>0$ so~$E_{\varepsilon}:\mathcal{D}'(M)\lto C^{\infty}(M)$. Observe also that~$\varepsilon^d/C\le F_{\varepsilon}(x)\le C\varepsilon^d$ for some~$C>0$, and this~$C$ could be made dependent neither on~$\varepsilon$ nor on~$x$ as~$M$ is compact.

\begin{lemm}\label{lemm-dyat-zwor-symbol-conv}
  For any~$\delta>0$ we have~$E_{\varepsilon}\to \one$ in~$\Psi^{\delta}(M)$ in the symbol sense.
\end{lemm}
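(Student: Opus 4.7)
The plan is to compute the full symbol of $\chi E_\varepsilon \chi$ directly in a chart and show, via the rescaling $y = x - \varepsilon w$ together with the built-in normalization $\int E_\varepsilon(x,y)\dd V_g(y) = 1$, that this symbol converges to $\chi(x)^2$ (the symbol of $\chi\cdot\chi$ as a zero-order operator) in every $\mathcal{S}^\delta_{1,0}$-seminorm, for each $\delta > 0$.

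First I would fix a chart $\kappa \colon U \to \mb{R}^d$ and a cutoff $\chi \in C_c^\infty(U)$, write $\dd V_g = J(y)\,\dd y$ locally, so that the Euclidean kernel of $\chi E_\varepsilon \chi$ is $\wt K_\varepsilon(x,y) = \chi(x)\chi(y) J(y)\psi(d_g(x,y)/\varepsilon)/F_\varepsilon(x)$, with full symbol $\sigma_\varepsilon(x,\xi)= \int e^{iz\cdot\xi}\wt K_\varepsilon(x,x-z)\,\dd z$. Substituting $z = \varepsilon w$ and exploiting that the rescaled squared-distance $r_\varepsilon(x,w)^2 := d_g(x,x-\varepsilon w)^2/\varepsilon^2$ extends smoothly across $\varepsilon = 0$ to $|w|_{g(x)}^2$ (by Taylor expansion of the squared-distance function at the diagonal), I would rewrite
\[
\sigma_\varepsilon(x,\xi) = \Psi_\varepsilon(x,\varepsilon\xi), \qquad \Psi_\varepsilon(x,\eta) = \int e^{iw\cdot\eta}\,a_\varepsilon(x,w)\,\dd w,
\]
where $a_\varepsilon(x,w)$ is smooth, uniformly compactly supported in $w$ (because $\psi$ is), bounded in $C^\infty$ uniformly in $\varepsilon \in (0,\varepsilon_0]$, with a $C^\infty$ limit as $\varepsilon \to 0$.

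Next I would use that the very definition of $F_\varepsilon$ forces the identity $\Psi_\varepsilon(x,0) = \chi(x)(E_\varepsilon\chi)(x)$, which converges to $\chi(x)^2$ in $C^\infty$ uniformly as $\varepsilon \to 0$. Hence
\[
\sigma_\varepsilon(x,\xi) - \chi(x)^2 = \bigl[\Psi_\varepsilon(x,\varepsilon\xi) - \Psi_\varepsilon(x,0)\bigr] + \bigl[\Psi_\varepsilon(x,0) - \chi(x)^2\bigr],
\]
and it remains to bound each bracket in every $\mathcal{S}^\delta_{1,0}$-seminorm. The second is $O_{C^\infty}(\varepsilon)$, independent of $\xi$. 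For the first, $\Psi_\varepsilon(x,\cdot)$ is the Fourier transform of a $C^\infty$-bounded family of uniformly compactly supported amplitudes, hence lies in a bounded subset of $\mathcal{S}(\mb{R}^d_\eta)$ uniformly in $(\varepsilon,x)$; splitting into $\varepsilon|\xi| \le 1$ (where Taylor expansion at $\eta = 0$ supplies an extra factor $\varepsilon|\xi|$) and $\varepsilon|\xi| \ge 1$ (where Schwartz decay beats any polynomial and $(1+|\xi|)^{|\beta|-\delta}\le \varepsilon^{\delta-|\beta|}$) would yield an $O(\varepsilon^{\min(1,\delta)})$ bound on every seminorm $p_{\alpha\beta}$.

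The hard part will be the opening step: establishing the $C^\infty$ boundedness and smooth limit of $a_\varepsilon(x,w)$ uniformly in $\varepsilon$ down to $\varepsilon = 0$. This reduces to checking that $\psi(r_\varepsilon(x,w))$ and the normalizer $\varepsilon^d/F_\varepsilon(x)$ are jointly smooth in $(\varepsilon,x,w)$ up to $\varepsilon = 0$; it follows from the smoothness of $d_g^2$ near the diagonal of $M\times M$, but requires care to control the derivatives in $x$ and $w$ uniformly in $\varepsilon$ — essentially keeping track of the Gauss-lemma-type asymptotics of the distance function together with all their coordinate derivatives. Once this regularity is in hand, the rest is routine Fourier analysis.
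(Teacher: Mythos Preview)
Your approach is correct and follows essentially the same route as the paper's: both compute the symbol in a chart, rescale via $z=\varepsilon w$ to express $\sigma_\varepsilon(x,\xi)$ as a Schwartz function of $\varepsilon\xi$, use the built-in normalisation to identify the value at $\eta=0$, and then split into low/high frequency regions to bound the $\mathcal S^\delta_{1,0}$-seminorms. The one notable difference is a simplification the paper exploits that you do not: by coordinate invariance it fixes a point, chooses the chart and $\chi$ so that $\tilde\chi\equiv 1$ on the support of $\tilde\psi(\cdot/\varepsilon)$ for $\varepsilon$ small, and thereby obtains the exact factorisation $\sigma_{\chi E_\varepsilon\chi}(x,\xi)=\frac{\varepsilon^d}{F_\varepsilon(x)}\cdot\bigl(\text{function of }\varepsilon\xi\text{ only}\bigr)$, so that all $x$-derivatives fall on $1/F_\varepsilon(x)$ alone and the bound $|\partial_x^\alpha(1/F_\varepsilon)|\le C_\alpha\varepsilon^{-d}$ (quoted from Dyatlov--Zworski) finishes the job cleanly. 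You instead keep the full joint dependence in the amplitude $a_\varepsilon(x,w)$ and carry the burden of proving its smoothness in $(\varepsilon,x,w)$ down to $\varepsilon=0$; this is more work---exactly the ``hard part'' you flag---but is also more honest about the $x$-dependence of $d_g(x,x-\varepsilon w)/\varepsilon$ and yields the general target $\chi(x)^2$ rather than the pointwise value $1$. Either way the Fourier-analytic endgame is the same.
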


\begin{proof}
    See appendix \ref{app-symbol}.
\end{proof}

Note that~$K^*_{\varepsilon}(x,y)=K_{\varepsilon}(y,x)$ for real smoothing operators (and their symbols are related in a simple manner),~$K_{\varepsilon}\to\one$ in the symbol sense is equivalent to~$K_{\varepsilon}^*\to \one$ in the symbol sense.

\begin{lemm}
  \label{lemm-conv-symbol-top}
  Let~$\{E'_{\varepsilon'}\}_{\varepsilon'>0}$ be another family of smoothing operators such that~$E'_{\varepsilon'}\to \one$ in~$\Psi^{\delta'}(M)$ in the symbol sense for any~$\delta'>0$. Then the net~$E_{\varepsilon}^*(\Delta+m^2)^{-1}(E'_{\varepsilon'}-E_{\varepsilon})$,~$(\varepsilon',\varepsilon)\in\mb{R}_+\times\mb{R}_+$ (we say~$(\varepsilon',\varepsilon)\prec (\varepsilon'_1,\varepsilon_1)$ iff~$\varepsilon'>\varepsilon'_1$ and~$\varepsilon>\varepsilon_1$), converges to zero in~$\Psi^{-2+\delta}(M)$ in the symbol sense for any~$\delta>0$.
\end{lemm}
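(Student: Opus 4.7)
The plan is to regard this as a purely $\Psi$DO-calculus statement: under the hypotheses both $E_\varepsilon$ and $E'_{\varepsilon'}$ tend to $\one$ in $\Psi^{\delta/3}(M)$ in the symbol sense (for every $\delta>0$, in particular $\delta/3$), hence the difference $E'_{\varepsilon'} - E_\varepsilon \to 0$ in $\Psi^{\delta/3}(M)$ symbol sense jointly in $(\varepsilon',\varepsilon)\to(0,0)$. The middle factor $(\Delta+m^2)^{-1}$ is a fixed element of $\Psi^{-2}(M)$. Therefore, morally, the whole expression sits in $\Psi^{\delta/3}\cdot \Psi^{-2}\cdot \Psi^{\delta/3} \subset \Psi^{-2+2\delta/3} \subset \Psi^{-2+\delta}$, and converges in that class to $\one\cdot(\Delta+m^2)^{-1}\cdot 0 = 0$, which is exactly the statement.

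To turn this into a proof, the first step is to handle the adjoint: one must check that the map $A\mapsto A^*$ is continuous in the symbol topology, so that $E_\varepsilon \to \one$ in $\Psi^{\delta/3}$ symbol sense implies $E_\varepsilon^* \to \one$ in the same topology. This is standard since the full symbol of the formal adjoint is obtained from that of $A$ via the classical asymptotic
\[
\sigma(A^*)(x,\xi) \sim \sum_\alpha \frac{(-i)^{|\alpha|}}{\alpha!}\,\partial_\xi^\alpha\partial_x^\alpha \overline{\sigma(A)}(x,\xi),
\]
together with the Kohn--Nirenberg remainder estimates, which depend continuously on the relevant Fréchet seminorms of $\sigma(A)$.

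The second, main step is to invoke the continuity of composition
\[
\Psi^{r_1}(M)\times \Psi^{r_2}(M) \longrightarrow \Psi^{r_1+r_2}(M)
\]
in the symbol topology on each factor. This is standard (cf.\ Shubin, Chapter I): using the coordinate-chart definition and a partition of unity on $M$, it reduces to showing that if $a_n\to a$ in $\mathcal{S}^{r_1}_{1,0}$ and $b_n\to b$ in $\mathcal{S}^{r_2}_{1,0}$, then the full symbols of the compositions, computed by the Leibniz-type asymptotic, converge in $\mathcal{S}^{r_1+r_2}_{1,0}$; the needed uniform bounds on the remainder follow from the standard oscillatory-integral estimates, since the symbols vary in bounded subsets of the corresponding Fréchet spaces. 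Applied twice (once to $E_\varepsilon^* \cdot (\Delta+m^2)^{-1}$ and once to the resulting family multiplied by $E'_{\varepsilon'}-E_\varepsilon$) this yields the desired convergence in $\Psi^{-2+2\delta/3}(M)\subset \Psi^{-2+\delta}(M)$.

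The only real obstacle is verifying uniformity of the Kohn--Nirenberg remainder estimates as the symbols vary in the relevant bounded sets; this is what makes composition jointly continuous rather than merely separately continuous and is the content that one would otherwise have to quote from a reference. Everything else — the behavior of adjoints, the reduction via partitions of unity, and the choice of $\delta/3 + \delta/3 < \delta$ to absorb the order gain — is bookkeeping. I would therefore state these two continuity facts explicitly at the start of the proof (either quoting a reference or sketching the symbol estimate) and then conclude in one line by composing the three convergences, as outlined above.
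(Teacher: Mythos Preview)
Your proposal is correct and follows essentially the same route as the paper: the paper's proof also reduces the lemma to the continuity of the composition (twisted) product $\mathcal{S}^{r}_{1,0}\times\mathcal{S}^{r'}_{1,0}\to\mathcal{S}^{r+r'}_{1,0}$ in the symbol topology (citing Folland, theorem 2.47), after noting that the $\Psi^{\delta/2}$ seminorms of $E_\varepsilon$ (and hence of $E_\varepsilon^*$, by the remark preceding the lemma) are uniformly bounded. Your version differs only in cosmetic bookkeeping---you use $\delta/3+\delta/3$ instead of $\delta/2$ and phrase the first factor as converging to $\one$ rather than merely staying bounded---but the substance is identical.
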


\begin{proof}
  Note that following essentially the same arguments as above the~$\Psi^{\delta/2}(M)$ seminorms of~$E_{\varepsilon}$ can be bounded uniformly in~$\varepsilon$. This said, the result follows essentially from the continuity of the twisted product (composition product) of symbols as a map~$\mathcal{S}^r_{1,0}\times \mathcal{S}^{r'}_{1,0}\lto \mathcal{S}^{r+r'}_{1,0}$ with respect to the symbol topologies (see Folland \cite{Folland2} page 105 theorem 2.47).
\end{proof}

We shall consider another set of seminorms on~$\Psi^r(M)$ in the case~$-d<r<0$ which suits better our purposes. They are defined as follows. Let~$\mathcal{M}\subset C^{\infty}(M\times M,T(M\times M))$ denote the~$C^{\infty}(M\times M)$-module of smooth vector fields tangent to the diagonal in~$M\times M$. We fix a finite coordinate cover~$\{U_i\}_{i\in 1}^N$ of~$M$, with charts~$\kappa_i:U_i\lto \mb{R}^d$, and a partition of unity~$\{\chi_i\}$ subordinate to this cover.

\begin{deef}\label{def-kernel-diag-top}
  For any~$K\in C^{\infty}(M\times M\setminus \mm{diag})$,~$1\le i\le N$ and~$L_1$, \dots,~$L_p\in \mathcal{M}$, we define the seminorms
  \begin{equation}
    p_{i,L_1,\dots,L_p}(K)\defeq \sup_{(x,y)\in U_i\times U_i} \big|(\kappa_i\times \kappa_i)_*\left( (\chi_i\otimes \chi_i)L_1\cdots L_p K \right)(x,y)\big|\cdot d_g(x,y)^{d+r},
    \label{}
  \end{equation}
  while on~$U_i\times U_j$,~$i\ne j$, which does not touch the diagonal, we use the~$C^{\infty}(U_i\times U_j)$ seminorms. By the \textsf{kernel topology} on~$\Psi^r(M)$,~$-d<r<0$, we mean the topology induced by these seminorms on the Schwartz kernels~$K_A$ of~$A\in \Psi^r(M)$. Here~$d_g$ is the distance function.
\end{deef}

\begin{prop}\label{prop-kernel-top-equiv-symbol}
  In the case~$-d<r<0$, the above kernel topology is equivalent to the topology induced by symbols~$\mathcal{S}^r_{1,0}(T^*M)$ on~$\Psi^r(M)$. In particular, if~$A_{\varepsilon}\to A$ as~$\varepsilon\to 0$ in~$\Psi^r(M)$ in the symbol sense then~$A_{\varepsilon}\to A$ also in the above kernel topology.
\end{prop}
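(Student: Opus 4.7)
The plan is to localize both topologies via the fixed coordinate cover $\{U_i, \kappa_i\}$ and partition of unity $\{\chi_i\}$, reducing to a statement on $\mb{R}^d$. The off-diagonal contributions (on $U_i \times U_j$ with $i \ne j$) of the Schwartz kernel are smooth, and both topologies restrict there to the standard $C^\infty(U_i \times U_j)$ topology; this case is classical via the Fourier representation of the kernel away from the diagonal. Attention is therefore restricted to the parts on $U_i \times U_i$. Switching to coordinates $(x, u) := (x, x-y)$, the kernel becomes $\tilde K(x, u) := K(x, x-u)$ and the diagonal is $\{u = 0\}$, and one checks directly that the $C^\infty(M\times M)$-module $\mathcal{M}$ of tangent-to-diagonal vector fields is generated, in these coordinates, by $\partial_{x_j}$ and $u_j\partial_{u_k}$ for $j, k = 1, \dots, d$.

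For the direction ``symbol $\Rightarrow$ kernel'', I would write $\tilde K(x, u) = \int e^{iu\cdot\xi} a(x,\xi)\, d\xi$ as an oscillatory integral. The field $\partial_{x_j}$ merely differentiates the symbol $a$, preserving its order. For $u_j \partial_{u_k}$, the derivative $\partial_{u_k}$ brings down a factor $i\xi_k$, and multiplication by $u_j$ becomes $-\partial_{\xi_j}$ after integration by parts in $\xi_j$; the resulting symbol $-\partial_{\xi_j}(\xi_k a) \in \mathcal{S}^r_{1,0}$ has seminorms controlled by those of $a$. Iterating, any finite product $L_1\cdots L_p$ of tangent vector fields yields $L_1\cdots L_p \tilde K(x, u) = \int e^{iu\xi} b_p(x,\xi)\, d\xi$ with $b_p\in \mathcal{S}^r_{1,0}$, and the standard estimate for such oscillatory integrals (splitting $|\xi| \le 1/|u|$ and $|\xi| > 1/|u|$, integrating by parts in $\xi$ on the high-frequency side) yields $|L_1\cdots L_p \tilde K(x,u)| \lesssim |u|^{-d-r}$; thus each kernel seminorm is controlled by finitely many symbol seminorms.

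The converse ``kernel $\Rightarrow$ symbol'' is the technical heart of the proof. Fix a cutoff $\chi \in C_c^\infty(\mb{R}^d)$ with $\chi \equiv 1$ near $0$ and recover the symbol modulo smoothing via $a(x,\xi) = \int e^{-iu\xi}\chi(u) \tilde K(x, u)\, du$, giving $\partial_\xi^\alpha \partial_x^\beta a(x,\xi) = \int e^{-iu\xi}(-iu)^\alpha \chi(u)\, \partial_x^\beta \tilde K(x,u)\, du$, noting that $\partial_{x_j}$ in the $(x, u)$ coordinates is tangent to the diagonal, so $\partial_x^\beta \tilde K$ is controlled by the kernel seminorms. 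I would split the $u$-integral into $\{|u| \le 1/(1+|\xi|)\}$ and $\{|u| > 1/(1+|\xi|)\}$. On the former, the direct bound $|\tilde K|\lesssim |u|^{-d-r}$ integrates to $(1+|\xi|)^{r-|\alpha|}$ since $\int_0^{1/(1+|\xi|)} s^{|\alpha|-r-1}\, ds$ is convergent (using $|\alpha| - r > 0$). On the latter, perform a dyadic decomposition $|u| \sim 2^{-j}$ with $2^{-j}|\xi| \gtrsim 1$, and rescale $u = 2^{-j}v$ to put each annulus onto the unit scale $|v| \sim 1$. Crucially the vector fields $u_j \partial_{u_k}$ are invariant under $u \mapsto 2^{-j}u$, and on a fixed annulus $\{|v| \in [1/4, 2]\}$ they, together with $\partial_{x_j}$, span the full tangent space; the kernel seminorm bounds therefore imply that $2^{-j(d+r)}\tilde K(x, 2^{-j}v)$ is bounded uniformly in $j$ in $C^\infty(\{|v| \in [1/4, 2]\})$. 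Non-stationary phase applied to the rescaled oscillatory integral with effective frequency $\eta := 2^{-j}\xi$ satisfying $|\eta| \gtrsim 1$ then gives arbitrary decay $|\eta|^{-N}$; summing the geometric series over the admissible $j$ produces the desired $(1+|\xi|)^{r-|\alpha|}$.

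The hard part will be the second direction, specifically handling the intermediate regime $|u| \gtrsim 1/|\xi|$ where direct estimation fails; the key observation making it work is that the tangent-to-diagonal fields $u_j \partial_{u_k}$ are precisely those preserved by the dyadic rescaling $u\mapsto 2^{-j}u$, so that the tangential smoothness encoded in the kernel seminorms translates into uniform smoothness of the rescaled kernel on a fixed annulus, which then feeds into non-stationary phase. Since in both directions each seminorm of one topology is dominated by finitely many of the other, the two topologies coincide and the ``in particular'' consequence is immediate.
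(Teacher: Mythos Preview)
Your proposal is correct and is precisely the standard argument the paper defers to: the paper's own ``proof'' consists only of citations to Taylor \cite{Taylor2} (propositions 2.2, 2.4, 2.7) and to \cite{BDFT} proposition 6.9, and your sketch reproduces exactly that conormal-distribution argument --- reduction to $(x,u)$-coordinates, identification of $\mathcal{M}$ with the span of $\partial_{x_j}$ and $u_j\partial_{u_k}$, the oscillatory-integral estimate for symbol $\Rightarrow$ kernel, and the dyadic rescaling plus non-stationary phase for kernel $\Rightarrow$ symbol, with the scale-invariance of $u_j\partial_{u_k}$ as the key mechanism.
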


\begin{proof}
    Essentially in Taylor \cite{Taylor2} page 6, proposition 2.2, page 7, proposition 2.4 and page 10, proposition 2.7. See also Bailleul, Dang, Ferdinand and Tô \cite{BDFT} proposition 6.9 for a more detailed treatment.
\end{proof}

Finally, we observe that the heat operator~$K_{\varepsilon}=\me^{-\varepsilon(\Delta+m^2)}$ is also a valid candidate:

\begin{lemm}
  [\cite{Dang} lemma 4.15] We have~$\me^{-\varepsilon(\Delta+m^2)}\to \one$ in~$\Psi^{\delta}(M)$ in the symbol sense for any~$\delta>0$. \hfill~$\Box$
\end{lemm}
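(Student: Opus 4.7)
The plan is to work in a coordinate chart $(U,\kappa)$ with cut-off $\chi\in C_c^\infty(U)$ and show that the full symbol $a_\varepsilon(x,\xi)$ of $\chi\, \me^{-\varepsilon(\Delta+m^2)}\chi$ converges to $\chi(x)^2$ (the symbol of $\chi^2$) in $\mathcal{S}^\delta_{1,0}(\kappa(U)\times\mb{R}^d)$ as $\varepsilon\downarrow 0$, for every $\delta>0$. I would first derive an asymptotic expansion
\begin{equation*}
  a_\varepsilon(x,\xi) \;=\; \chi(x)^2\,\me^{-\varepsilon(|\xi|^2_{g_x}+m^2)}\Bigl(1 + \sum_{k=1}^{N}\varepsilon^k\beta_k(x,\xi)\Bigr) \,+\, r_{\varepsilon,N}(x,\xi),
\end{equation*}
with each $\beta_k$ polynomial in $\xi$ (accumulating derivatives of the $g_{ij}$ and of $\chi$) and remainder $r_{\varepsilon,N}$ smoothing with seminorms $O(\varepsilon^{N})$. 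Either the contour representation $\me^{-\varepsilon(\Delta+m^2)} = \frac{\ii}{2\pi}\int_\gamma \me^{-\varepsilon\lambda}(\Delta+m^2-\lambda)^{-1}\dd\lambda$ combined with Seeley's symbolic construction of the resolvent (cf.\ proposition \ref{prop-def-det-zeta}), or the classical small-time heat parametrix near the diagonal, produces such an expansion.

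The heart of the proof is the Gaussian estimate $|\partial_\xi^\beta\, \me^{-\varepsilon|\xi|^2_{g_x}}| \le C_\beta\,\varepsilon^{|\beta|/2}\, \me^{-\varepsilon|\xi|^2_{g_x}/4}$, combined with its $x$-analogue (each $\partial_x$ landing on the metric contributes an extra factor of $\varepsilon|\xi|^2$ times a bounded derivative of $g$, absorbed again by the Gaussian). From this one derives, for $|\beta|\ge 1$,
\begin{equation*}
  (1+|\xi|)^{|\beta|-\delta}\,\bigl|\partial_\xi^\beta(\me^{-\varepsilon|\xi|^2}-1)\bigr| \;\le\; C\,\varepsilon^{\delta/2}\,(\sqrt{\varepsilon}\,|\xi|)^{|\beta|-\delta}\,\me^{-\varepsilon|\xi|^2/4},
\end{equation*}
which is $O(\varepsilon^{\delta/2})$ uniformly in $\xi$ (taking $0<\delta<1$, which suffices). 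For $\beta=0$ one splits the supremum of $(1+|\xi|)^{-\delta}|\me^{-\varepsilon|\xi|^2}-1|$ at $|\xi|=\varepsilon^{-1/2}$, using $|\me^{-\varepsilon|\xi|^2}-1|\le\varepsilon|\xi|^2$ on the low-frequency piece and $(1+|\xi|)^{-\delta}\le\varepsilon^{\delta/2}$ on the high-frequency piece; both contributions are $O(\varepsilon^{\delta/2})$. Each subleading term $\varepsilon^k\beta_k(x,\xi)\me^{-\varepsilon|\xi|^2_{g_x}}$ is handled identically, the additional $\varepsilon^k$ providing room for the polynomial $\beta_k$ to be flattened by the exponential via the same low-/high-frequency split.

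The expected main obstacle is the combinatorial book-keeping of Leibniz and chain-rule terms when one differentiates $a_\varepsilon$ simultaneously in $x$ and $\xi$: every such derivative can either hit the Gaussian (producing $\varepsilon|\xi|^2$ from $\partial_x$ or $\sqrt{\varepsilon}$ from $\partial_\xi$) or a subleading coefficient, and one must verify carefully that the polynomial-in-$\xi$ weights so produced never defeat the $\varepsilon^{|\beta|/2}$ gain. A systematic induction on $|\alpha|+|\beta|$ handles this; the detailed verification is the content of \cite{Dang} lemma 4.15, which one invokes for the final packaging. Finally, the smoothing remainder $r_{\varepsilon,N}$ is trivially negligible in every symbol seminorm thanks to classical off-diagonal heat kernel estimates.
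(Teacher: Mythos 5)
The paper itself supplies no proof of this lemma — it merely cites \cite{Dang} lemma 4.15 and marks the statement with $\Box$ — so there is no in-text argument to compare against. Nevertheless, your sketch is sound and follows the standard route (heat parametrix expansion plus Gaussian sup estimates), and it is consistent in spirit with the appendix's explicit proof for the Dyatlov–Zworski regulator $E_\varepsilon$ (lemma \ref{lemm-dyat-zwor-symbol-conv}), which also splits the frequency range at $|\xi| = \varepsilon^{-1/2}$ and keeps track of how $x$- and $\xi$-derivatives hit each factor. Your Gaussian bound $(1+|\xi|)^{|\beta|-\delta}\,|\partial_\xi^\beta e^{-\varepsilon|\xi|^2}| \lesssim \varepsilon^{\delta/2}$ is correct, your handling of $x$-derivatives via the factor $\varepsilon|\xi|^2 e^{-\varepsilon|\xi|^2}$ is correct, and restricting to $0<\delta<1$ is harmless: the $\mathcal{S}^{\delta'}_{1,0}$ seminorms are dominated by the $\mathcal{S}^\delta_{1,0}$ seminorms for $\delta<\delta'$, so smallness of $\delta$ gives the result for all $\delta>0$.

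Two places where your sketch leans on unverified input. First, the claim that the subleading term $\varepsilon^k \beta_k(x,\xi)\, e^{-\varepsilon|\xi|^2_{g_x}}$ is ``handled identically'' depends crucially on the degree of $\beta_k$ in $\xi$ not exceeding $2k$: the Gaussian sup estimate gives $\varepsilon^k(1+|\xi|)^{\deg\beta_k-\delta} e^{-\varepsilon|\xi|^2/4} \lesssim \varepsilon^{k - \deg\beta_k/2 + \delta/2}$, which only tends to zero if $\deg\beta_k \le 2k$. This bound does hold — it is exactly the parabolic homogeneity of Seeley's resolvent parametrix (each symbol piece $b_{-2-j}$ is homogeneous of degree $-2-j$ in $(\xi,|\lambda|^{1/2})$, and contour integration against $e^{-\varepsilon\lambda}$ converts this to the required $\varepsilon$–$\xi$ bookkeeping) — but you should state and use this homogeneity explicitly rather than write $\beta_k$ as a generic polynomial, since without the degree bound the argument would fail. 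Second, your closing sentence delegates the inductive verification to \cite{Dang} lemma 4.15, which is circular: that is precisely the statement being proved. It would be cleaner to carry out (or at least state) the induction on $|\alpha|+|\beta|$ yourself, using the parabolic homogeneity as the induction hypothesis.

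Finally, one stylistic note: the proof of lemma \ref{lemm-dyat-zwor-symbol-conv} in appendix \ref{app-symbol} has the closed-form symbol $\sigma_{\chi E_\varepsilon \chi}(x,\xi) = \frac{\varepsilon^d}{F_\varepsilon(x)} F_1(x)\,\sigma_{\chi E_1 \chi}(x,\varepsilon\xi)$, which makes the $\varepsilon$-scaling structure transparent; the heat operator has no such scaling identity (the metric coefficients spoil it), which is exactly why you need the parametrix. Your sketch implicitly recognizes this by resorting to the Seeley expansion, which is the right move.
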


Some properties of the heat operator is summed up in appendix \ref{app-symbol}.

\subsection{Integrability of Interaction and Regularization Independence}\label{sec-nelson-reg-indep}

\begin{prop}
  \label{prop-nelson-main} Let~$(K_{\varepsilon})_{\varepsilon>0}$ be any family of real smoothing operators such that~$K_{\varepsilon}\to \one$ in~$\Psi^{\delta}(M)$ in the symbol sense for any~$\delta>0$. Define~$\phi_{\varepsilon}(x)$ and~${:}P(\phi_{\varepsilon}(x)){:}$ as above and let $\chi\in C_c^\infty(M)$ be a test function. Put
  \begin{equation}
    S_{M,\varepsilon,\chi}(\phi):=\int_{M}^{} \chi(x) {:}P(\phi_{\varepsilon}(x)){:} \dd V_M.
    \label{}
  \end{equation}
  This is a random variable on~$\mathcal{D}'(M)$ equipped with~$\mu_{\mm{GFF}}^M$. Then~$\{S_{M,\chi,\varepsilon}\}$ converges in~$L^2(\mu_{\mm{GFF}}^M)$ as~$\varepsilon\to 0$, and the limit is independent of the specific smoothing~$(K_{\varepsilon})$ chosen, provided they have the convergence property described above.
  
  More precisely, for any other smoothing family~$(\tilde{K}_{\varepsilon'})$ satisfying the same condition and defining the random variable~$\tilde{S}_{M,\varepsilon',\chi}$, we have a quantitative bound of the form
\begin{eqnarray*}
  \mathbb{E}\big[\big| S_{M,\varepsilon,\chi}-\tilde{S}_{M,\varepsilon^\prime,\chi}\big|^2\big]\le \fk{O}(\vert \varepsilon-\varepsilon^\prime\vert) C_n  \Vert \chi\Vert_{L^4}^2,
\end{eqnarray*}  
where~$\fk{O}(|\varepsilon-\varepsilon'|)$ is a function going to zero as~$|\varepsilon-\varepsilon'|\to 0$, depending only on~$M$.
\end{prop}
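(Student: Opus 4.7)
The strategy is to translate the $L^2$-computation into a kernel-integral estimate via Wick calculus, then use the symbol-level convergence of the regularized Green operator to control those integrals by dominated convergence on $M\times M$. The key identity is that for Gaussian fields,
$$\mb{E}_{\mm{GFF}}^M\big[{:}\phi_{\varepsilon_1}(x)^k{:}\,{:}\phi_{\varepsilon_2}(y)^\ell{:}\big]=\delta_{k\ell}\,k!\,C_{\varepsilon_1,\varepsilon_2}(x,y)^k,$$
where $C_{\varepsilon_1,\varepsilon_2}$ denotes the Schwartz kernel of $K_{\varepsilon_1}(\Delta+m^2)^{-1}K_{\varepsilon_2}^*$. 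Writing $P(X)=\sum_{k=0}^{n}a_kX^k$ and expanding $\mb{E}[|S_{M,\varepsilon,\chi}-\tilde S_{M,\varepsilon',\chi}|^2]$, this reduces the problem to showing that
$$\int_{M\times M}\chi(x)\chi(y)\,\big[C_{\varepsilon,\varepsilon}^k-C_{\varepsilon,\tilde\varepsilon'}^k-C_{\tilde\varepsilon',\varepsilon}^k+C_{\tilde\varepsilon',\tilde\varepsilon'}^k\big](x,y)\,dV(x)dV(y)\longrightarrow 0$$
for each $k=1,\dots,n$ as $\varepsilon,\varepsilon'\to 0$, with a quantitative rate of the announced form.

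\textbf{Kernel convergence.} The algebraic identity
$$K_{\varepsilon_1}(\Delta+m^2)^{-1}K_{\varepsilon_2}^*-(\Delta+m^2)^{-1}=(K_{\varepsilon_1}-\one)(\Delta+m^2)^{-1}+K_{\varepsilon_1}(\Delta+m^2)^{-1}(K_{\varepsilon_2}^*-\one),$$
combined with continuity of the twisted product on symbols (essentially Folland, as used in lemma \ref{lemm-conv-symbol-top}), shows that $K_{\varepsilon_1}(\Delta+m^2)^{-1}K_{\varepsilon_2}^*\to(\Delta+m^2)^{-1}$ in $\Psi^{-2+\delta}(M)$ in the symbol sense for any $\delta\in(0,2)$, as $\varepsilon_1,\varepsilon_2\to 0$. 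Proposition \ref{prop-kernel-top-equiv-symbol} transfers this into convergence in the kernel topology of definition \ref{def-kernel-diag-top}, yielding both the uniform bound $|C_{\varepsilon_1,\varepsilon_2}(x,y)|\lesssim d_g(x,y)^{-\delta}$ (valid as well for the Green kernel $G$ of $(\Delta+m^2)^{-1}$, whose true singularity is only logarithmic) and pointwise convergence $C_{\varepsilon_1,\varepsilon_2}(x,y)\to G(x,y)$ for $x\ne y$, with an explicit modulus $\omega(\varepsilon_1,\varepsilon_2)\to 0$ read off from the convergence of the symbols.

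\textbf{Integration and Cauchy estimate.} Using the elementary inequality $|a^k-b^k|\le k\max(|a|,|b|)^{k-1}|a-b|$, the integrand is pointwise dominated by $C\,\omega(\varepsilon,\varepsilon')\,d_g(x,y)^{-k\delta}\cdot|\chi(x)\chi(y)|$. Fixing $\delta<2/n$ makes $d_g(x,y)^{-k\delta}$ integrable on $M\times M$ (since $\dim M=2$) for every $k\le n$. A Cauchy–Schwarz/Hölder splitting separates the $\chi$'s from the kernel, giving a bound of the form $C_n\,\omega(\varepsilon,\varepsilon')\,\|\chi\|_{L^4}^2$; this furnishes the quantitative estimate, proves $(S_{M,\varepsilon,\chi})$ is Cauchy in $L^2(\mu_{\mm{GFF}}^M)$, and (by repeating with two distinct families) shows the limit is independent of the regulator.

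\textbf{Main obstacle.} The delicate point is step two: obtaining the uniform-in-$(\varepsilon,\varepsilon')$ kernel bound together with a \emph{quantitative} modulus $\omega(\varepsilon,\varepsilon')$. The symbol-sense convergence from lemma \ref{lemm-conv-symbol-top} is stated qualitatively, and upgrading it through proposition \ref{prop-kernel-top-equiv-symbol} to a weighted sup bound on kernels is the technical heart of the argument; in particular one must verify that the $\Psi^{-2+\delta}$-seminorms of $K_\varepsilon(\Delta+m^2)^{-1}K_{\varepsilon'}^*-(\Delta+m^2)^{-1}$ vanish as $\varepsilon,\varepsilon'\to 0$ at a rate controlled by $|\varepsilon-\varepsilon'|$ (and not merely by each of them separately), which is where the particular structure of the admissible regulators enters.
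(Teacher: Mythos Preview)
Your proposal is correct and follows essentially the same route as the paper: Wick's theorem reduces the $L^2$-norm to an integral of differences of powers of regularized Green kernels, the algebraic factorization $a^k-b^k=(a-b)\sum a^j b^{k-1-j}$ is applied, the factors are controlled by the uniform weighted bounds $|C_{\varepsilon_1,\varepsilon_2}(x,y)|\lesssim d_g(x,y)^{-\delta}$ coming from proposition~\ref{prop-kernel-top-equiv-symbol}, and a H\"older split with exponents $(4,4,2)$ produces the $\|\chi\|_{L^4}^2$ factor. The only cosmetic difference is that the paper compares two \emph{regularized} kernels directly (bounding $G_{\varepsilon,\varepsilon}-G_{\varepsilon,\varepsilon'}$ via lemma~\ref{lemm-conv-symbol-top}), whereas you route through the unregularized Green kernel $G$; both work. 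Your worry in the last paragraph about the rate being tied to $|\varepsilon-\varepsilon'|$ rather than to the pair $(\varepsilon,\varepsilon')$ is well-founded as a matter of reading the statement, but the paper's lemma~\ref{lemm-conv-symbol-top} is phrased as net convergence in $(\varepsilon,\varepsilon')\to(0,0)$ and that is all that is actually used; the notation $\fk{O}(|\varepsilon-\varepsilon'|)$ in the proposition should be read in that sense.
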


A particularity of the dimension two is seen in the following elementary lemma.

\begin{lemm}\label{lemm-2d-log-blow-up-can}
  If~$\dim M=2$ and~$g$ is a smooth Riemannian metric on~$M$ then
  \begin{equation}
    \iint_{M\times M}|\log(d_g(x,y))|^p \dd V_{M\times M}\le C_p<\infty
    \label{}
  \end{equation}
  for any~$1\le p<\infty$ and
  \begin{equation}
    \iint_{M\times M} d_g(x,y)^{-\delta}\dd V_{M\times M}\le C_{\delta}<\infty
    \label{}
  \end{equation}
  for any~$0<\delta<1$,~$d_g$ denoting the distance function.\hfill~$\Box$
\end{lemm}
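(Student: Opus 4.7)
The plan is to localize the problem using the compactness of $M$, then reduce to a standard computation in polar coordinates on the Euclidean plane, where the dimension-two Jacobian $r\,dr$ exactly compensates the singular behaviour of $|\log r|^p$ and $r^{-\delta}$ for $0<\delta<1$.

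First I would cover $M$ by finitely many coordinate charts $\{U_i,\kappa_i\}_{i=1}^N$ such that each $\kappa_i(U_i)\subset\mb{R}^2$ is an open set on which the pushed-forward metric $(\kappa_i)_*g$ has smooth, bounded components with positive determinant bounded away from zero. On each $U_i\times U_j$ with $\ol{U_i}\cap \ol{U_j}=\varnothing$, the distance $d_g$ is bounded below by a positive constant, so both integrands are bounded and contribute only a finite amount. The only contribution that needs care is from $U_i\times U_i$ (or pairs of charts with overlapping closures), and by choosing the charts small enough we may assume that inside each such chart the Riemannian distance satisfies a bi-Lipschitz comparison $c^{-1}|\kappa_i(x)-\kappa_i(y)|\le d_g(x,y)\le c|\kappa_i(x)-\kappa_i(y)|$ for some $c>0$. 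A subordinate partition of unity $\{\chi_i\}$ with $\sum_i \chi_i\equiv 1$ then splits the double integral into a finite sum of pieces, each of which is majorized (up to constants coming from $c$ and the bounded volume density) by an integral of the form $\iint_{B\times B} \Phi(|x-y|)\,dx\,dy$, where $B\subset\mb{R}^2$ is a bounded open set and $\Phi(r)=|\log r|^p$ or $\Phi(r)=r^{-\delta}$.

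The second step is the Euclidean estimate. Fixing $x\in B$ and writing $y=x+r\omega$ with $r\ge 0$, $\omega\in S^1$, polar coordinates give
\begin{equation}
\int_B \Phi(|x-y|)\,dy \le \int_0^{R_0} \Phi(r)\cdot 2\pi r\,dr,
\end{equation}
where $R_0=\diam(B)$. For $\Phi(r)=|\log r|^p$ the integral $\int_0^{R_0}|\log r|^p r\,dr$ is finite for every $p<\infty$ (substitute $u=-\log r$ to convert it into $\int_{-\log R_0}^{\infty} u^p\, e^{-2u}\,du<\infty$), and for $\Phi(r)=r^{-\delta}$ the integral $\int_0^{R_0} r^{1-\delta}\,dr=R_0^{2-\delta}/(2-\delta)$ is finite provided $\delta<2$, which covers $0<\delta<1$ and in fact more. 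Crucially, the resulting bound is \emph{uniform} in $x\in B$, so integrating once more against $dx$ over the bounded set $B$ keeps everything finite.

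Combining the local bounds weighted by the partition of unity yields the desired global finiteness, with constants $C_p$ depending only on $(M,g)$, $p$ (respectively on $\delta$). The main (very mild) subtlety will just be choosing the charts small enough to guarantee the bi-Lipschitz comparison of $d_g$ with the Euclidean distance; this is standard from the smoothness of $g$ and compactness of $M$, and nothing else in the argument is truly dimension-sensitive apart from the exponent $1$ of $r$ in the Jacobian $r\,dr$, which is exactly what makes the two-dimensional case work.
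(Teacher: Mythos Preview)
Your argument is correct and is the standard one; the paper does not give a proof of this lemma at all (the $\Box$ at the end of the statement signals it is left as an elementary exercise), so there is nothing to compare against. One cosmetic remark: in your substitution $u=-\log r$ the lower limit $-\log R_0$ can be negative if $R_0>1$, so strictly speaking the integrand is $|u|^p e^{-2u}$ rather than $u^p e^{-2u}$, but this changes nothing about finiteness.
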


Consequently, since in two dimensions~$G(x,y)=\mathcal{O}(\log(d(x,y)))$ as~$y\to x$ (see lemma \ref{lemm-tadpole-mass}), a simple argument with partition of unity shows
\begin{equation}
  \iint_{M\times M} \chi(x)\chi(y) |G(x,y)|^p \dd V_{M\times M}\le C_p \Vert \chi\Vert_{L^\infty}^2\vol(\text{supp}(\chi))^2 <\infty
  \label{eqn-upper-bound-green-p}
\end{equation}
for some~$C_p>0$, for all~$1\le p<\infty$. When~$p=2$ this is the familiar fact that in two dimensions the Green operator~$(\Delta+m^2)^{-1}$ is Hilbert-Schmidt, which may also be shown using Weyl's law.

We will denote
\begin{align}
  \mb{E}_{\mm{GFF}}^M\big[\phi_{\varepsilon}(x)\phi_{\varepsilon}(y)\big]&=\bank{\delta_x,K_{\varepsilon}^*(\Delta+m^2)^{-1}K_{\varepsilon}\delta_y}_{L^2}\defeq G_{\varepsilon,\varepsilon}(x,y),\\
  \mb{E}_{\mm{GFF}}^M\big[\phi_{\varepsilon}(x)\phi_{\varepsilon'}(y)\big]&=\bank{\delta_x,K_{\varepsilon}^*(\Delta+m^2)^{-1}\tilde{K}_{\varepsilon'}\delta_y}_{L^2}\defeq G_{\varepsilon,\varepsilon'}(x,y), \\
  \mb{E}_{\mm{GFF}}^M\big[\phi_{\varepsilon'}(x)\phi_{\varepsilon'}(y)\big]&=\bank{\delta_x,\Tilde{K}_{\varepsilon'}^*(\Delta+m^2)^{-1}\tilde{K}_{\varepsilon'}\delta_y}_{L^2}\defeq G_{\varepsilon',\varepsilon'}(x,y).
  \label{}
\end{align}
Note that~$G_{\varepsilon,\varepsilon'}$ has a \textit{different} regulator on the second variable! For fixed~$\varepsilon>0$, we have by lemma \ref{lemm-wick-feyn},
\begin{equation}
    {:}\phi_{\varepsilon}(x)^{2n}{:}=G_{\varepsilon,\varepsilon}( x,x)^n h_{2n}\big(\phi_{\varepsilon}(x)\big/G_{\varepsilon,\varepsilon}(x,x)^{\frac{1}{2}} \big)\ge -b_1 G_{\varepsilon,\varepsilon}(x,x)^n,
    \label{eqn-lower-bound-hermite}
  \end{equation}
  for some constant~$b_1>0$ independent of~$\varepsilon$, since the even-degree Hermite polynomial~$h_{2n}$ is bounded below.

\begin{proof}[Proof of proposition \ref{prop-nelson-main}.]
   We shall prove the proposition for the case~$P(\theta)=\theta^{2n}$,~$n\in\mb{N}$, the general case is similar. For fixed~$\varepsilon$,~$\varepsilon'>0$, we compute
  \begin{align*}
    \nrm{S_{M,\varepsilon}-S_{M,\varepsilon'}}_{L^2(\mu_{\mm{GFF}})}^2&=\mb{E}\bigg[ \left| \int_{M}^{} \chi {:}\phi_{\varepsilon}(x)^{2n}{:}\dd V_M -\int_{M}^{} \chi  {:}\phi_{\varepsilon'}(x)^{2n}{:} \dd V_M \right|^2 \bigg] \\
    &=\mb{E}\left[  \iint_{M\times M}^{}{:}\chi(x) \phi_{\varepsilon}(x)^{2n}{:}~{:}\phi_{\varepsilon}(y)^{2n}{:}\chi(y)\dd V_M\otimes \dd V_M \right] \\
    &\quad \quad-2\mb{E}\left[  \iint_{M\times M}^{}\chi(x)  {:}\phi_{\varepsilon}(x)^{2n}{:}~{:}\phi_{\varepsilon'}(y)^{2n}{:} \chi(y) \dd V_M\otimes \dd V_M \right] \\
    &\quad\quad +\mb{E}\left[  \iint_{M\times M}^{}  \chi(x){:}\phi_{\varepsilon'}(x)^{2n}{:}~{:}\phi_{\varepsilon'}(y)^{2n}{:}\chi(y)  \dd V_M\otimes \dd V_M \right] \\
    &=\iint_{M\times M}^{}\Big( \mb{E}\left[\chi(x) {:}\phi_{\varepsilon}(x)^{2n}{:}~{:}\phi_{\varepsilon}(y)^{2n}{:}\chi(y) \right] -2  \mb{E}\left[\chi(x) {:}\phi_{\varepsilon}(x)^{2n}{:}~{:}\phi_{\varepsilon'}(y)^{2n}{:} \chi(y)\right] \\
    &\quad\quad+\mb{E}\left[\chi(x) {:}\phi_{\varepsilon'}(x)^{2n}{:}~{:}\phi_{\varepsilon'}(y)^{2n}{:} \chi(y)\right] \Big)
    \dd V_{M\times M} \tag{Tonelli}\\
    &=(2n)!\iint_{M\times M}\chi(x)\chi(y) \left( G_{\varepsilon,\varepsilon}(x,y)^{2n}-2G_{\varepsilon,\varepsilon'}(x,y)^{2n}+G_{\varepsilon',\varepsilon'}(x,y)^{2n} \right) \dd V_{M\times M} \tag{lemma \ref{lemm-wick-feyn}}
  \end{align*}
  We will control the integral~$\iint_{M\times M}\chi(x)\chi(y)\left( G_{\varepsilon,\varepsilon}(x,y)^{2n}-G_{\varepsilon,\varepsilon'}(x,y)^{2n} \right)\dd V_{M\times M}$ for $\chi\in C^\infty(M)$ and $\chi\geqslant 0$. 
  Indeed,
\begin{align*}
  |\textrm{this integral}|&\le \iint_{M\times M}\chi(x)\chi(y) \left|G_{\varepsilon,\varepsilon}(x,y)-G_{\varepsilon,\varepsilon'}(x,y)\right| \cdot \big|G_{\varepsilon,\varepsilon}(x,y)^{2n-1} \\
  &\quad\quad +G_{\varepsilon,\varepsilon}(x,y)^{2n-2}G_{\varepsilon,\varepsilon'}(x,y)+\dots+G_{\varepsilon,\varepsilon'}(x,y)^{2n-1} \big| \dd V_{M\times M} \\
  &\le 2n  \Vert\chi\Vert^2_{L^4} \left( C_{8n-4} \right)^{\frac{1}{4}}
  \times \left( \iint_{M\times M} \left|G_{\varepsilon,\varepsilon}(x,y)-G_{\varepsilon,\varepsilon'}(x,y)\right|^2 \right)^{\frac{1}{2}} 
\end{align*}
Remember that~$G_{\varepsilon,\varepsilon'}(x,y)$ is the kernel of~$K^*_{\varepsilon}(\Delta+m^2)^{-1}\tilde{K}_{\varepsilon'}$ and~$G_{\varepsilon,\varepsilon'}(x,y)-G_{\varepsilon,\varepsilon}(x,y)$ is the kernel of~$K^*_{\varepsilon}(\Delta+m^2)^{-1}(\tilde{K}_{\varepsilon'}-K_{\varepsilon})$. By lemma \ref{lemm-conv-symbol-top}, definition \ref{def-kernel-diag-top} and proposition \ref{prop-kernel-top-equiv-symbol},
\begin{equation}
  |G_{\varepsilon,\varepsilon'}(x,y)|\le C_{M,\delta}d_g(x,y)^{-\delta},
  \label{}
\end{equation}
uniformly in~$(\varepsilon,\varepsilon')$ and by lemma \ref{lemm-conv-symbol-top},
\begin{equation}
  |G_{\varepsilon,\varepsilon'}(x,y)-G_{\varepsilon,\varepsilon}(x,y)| \le \fk{O}_{M,\delta}(|\varepsilon-\varepsilon'|)d_g(x,y)^{-\delta}
  \label{}
\end{equation}
for any~$\delta>0$. If we restrict moreover to $\delta<1$, then we prove our result, thanks to lemma \ref{lemm-2d-log-blow-up-can}.
\end{proof}

\subsection{Integrability of the Exponential of Interaction}

\noindent In this subsection we will adopt the heat regulator~$K_{\varepsilon}=\tilde{K}_{\varepsilon}:=\me^{-\varepsilon(\Delta+m^2)}$.

\begin{lemm}\label{lemmB6}
  Let~$\deg P=2n$. Then~$S_{M,\chi,\varepsilon}$,~$\varepsilon>0$, and hence the resulting limit~$S_{M,
  \chi}$, is in the~$(2n)$-th Wiener chaos of the GFF, that is,~$\ol{\mathcal{P}}_{2n}(\mathcal{H})$, where~$\mathcal{H}=W^{-1}(M)$ is the Gaussian Hilbert space of the GFF.
\end{lemm}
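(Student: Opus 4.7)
The plan is to first establish the statement for the regularized $S_{M,\chi,\varepsilon}$ at each fixed $\varepsilon>0$, and then transfer to the limit $S_{M,\chi}$ by invoking proposition \ref{prop-nelson-main} together with the fact that $\ol{\mathcal{P}}_{2n}(\mathcal{H})$ is a closed subspace of $L^2(\mu_{\mm{GFF}}^M)$.

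For fixed $\varepsilon>0$, because $K_\varepsilon$ is smoothing, $K_\varepsilon^*\delta_x\in C^\infty(M)\subset W^{-1}(M)=\mathcal{H}$ for every $x\in M$, and the real random variable
\[
\phi_\varepsilon(x)=(K_\varepsilon\phi)(x)=\ank{\phi,K_\varepsilon^*\delta_x}_{L^2}=\phi(K_\varepsilon^*\delta_x)
\]
is Gaussian, centered, with variance $G_{\varepsilon,\varepsilon}(x,x)=\snrm{K_\varepsilon^*\delta_x}_{\mathcal{H}}^2$. By the very definition of Wick ordering through Hermite polynomials (as already used in (\ref{eqn-lower-bound-hermite})),
\[
{:}\phi_\varepsilon(x)^{2n}{:}=G_{\varepsilon,\varepsilon}(x,x)^n\,h_{2n}\big(\phi_\varepsilon(x)/G_{\varepsilon,\varepsilon}(x,x)^{1/2}\big),
\]
is a Hermite polynomial of degree $2n$ in a single Gaussian variable drawn from $\mathcal{H}$, and therefore lies pointwise in $\mathcal{P}_{2n}(\mathcal{H})$.

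The next step is to realize $S_{M,\chi,\varepsilon}$ as a Bochner integral of the map $F_\varepsilon\colon M\to L^2(\mu_{\mm{GFF}}^M)$, $x\mapsto \chi(x){:}\phi_\varepsilon(x)^{2n}{:}$. By lemma \ref{lemm-wick-feyn} (applied at $y=x$),
\[
\nrm{F_\varepsilon(x)}_{L^2(\mu_{\mm{GFF}}^M)}^2=(2n)!\,\chi(x)^2\,G_{\varepsilon,\varepsilon}(x,x)^{2n},
\]
which is continuous in $x$ and bounded on the compact manifold $M$, and the joint smoothness of $(x,y)\mapsto G_{\varepsilon,\varepsilon}(x,y)$ (a smooth kernel for $\varepsilon>0$) implies that $F_\varepsilon$ is $L^2$-continuous in $x$. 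Hence $S_{M,\chi,\varepsilon}=\int_M F_\varepsilon(x)\,\dd V_M(x)$ is a well-defined Bochner integral in $L^2(\mu_{\mm{GFF}}^M)$; since the integrand takes values in the closed subspace $\ol{\mathcal{P}}_{2n}(\mathcal{H})$ pointwise, so does the integral. Equivalently, one may approximate by Riemann sums $\sum_i\chi(x_i){:}\phi_\varepsilon(x_i)^{2n}{:}\,\vol(A_i)$, which are finite linear combinations in $\mathcal{P}_{2n}(\mathcal{H})$ and converge in $L^2$ to $S_{M,\chi,\varepsilon}$ by the continuity above.

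Finally, since proposition \ref{prop-nelson-main} gives $S_{M,\chi,\varepsilon}\to S_{M,\chi}$ in $L^2(\mu_{\mm{GFF}}^M)$ as $\varepsilon\to 0$, and each $S_{M,\chi,\varepsilon}$ lies in the closed subspace $\ol{\mathcal{P}}_{2n}(\mathcal{H})$, the limit $S_{M,\chi}$ belongs to $\ol{\mathcal{P}}_{2n}(\mathcal{H})$ as well. The only point requiring any care is the Bochner integrability/Riemann-sum convergence in the middle step, but this is routine given the smoothness of $G_{\varepsilon,\varepsilon}$ for $\varepsilon>0$ and the explicit $L^2$ formula for Wick powers; no serious obstacle is anticipated.
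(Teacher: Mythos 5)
Your proof is correct, but it takes a different route from the paper's. The paper proves the lemma by taking the heat regulator $K_\varepsilon=\me^{-\varepsilon(\Delta+m^2)}$, expanding $\phi_\varepsilon(x)^{2n}$ via the spectral (Fourier) representation of remark~\ref{rem-mass-gau-field-four} as an absolutely convergent multi-index series $\sum \me^{-\varepsilon(\dots)}\prod\varphi_{j_i}(x)\,\prod\xi_{j_i}$, integrating against $\chi$ term by term, and observing that each term $\prod\xi_{j_i}$ lies in $\ol{\mathcal{P}}_{2n}(\mathcal{H})$; the closedness of the chaos then finishes both the $\varepsilon>0$ case and the limit. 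You instead argue abstractly: for fixed $\varepsilon$, each $\phi_\varepsilon(x)$ is a genuine element of $\mathcal{H}$, so ${:}\phi_\varepsilon(x)^{2n}{:}\in\mathcal{P}_{2n}(\mathcal{H})$ pointwise in $x$; the map $x\mapsto\chi(x){:}\phi_\varepsilon(x)^{2n}{:}$ is $L^2$-continuous (by lemma~\ref{lemm-wick-feyn}\,(iii) and smoothness of $G_{\varepsilon,\varepsilon}$), so its Bochner integral $S_{M,\chi,\varepsilon}$ lies in the closed subspace $\ol{\mathcal{P}}_{2n}(\mathcal{H})$; then pass to the limit via proposition~\ref{prop-nelson-main}. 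Your Bochner-integral route is cleaner and strictly more general — it works verbatim for any admissible regulator $K_\varepsilon$, and it sidesteps the eigenfunction sup-norm estimate $\sup_M|\varphi_j|\lesssim(1+\lambda_j)^2$ that the paper needs to justify absolute convergence of the multi-series; the paper's spectral approach has the minor advantage of producing an explicit representation of $S_{M,\chi,\varepsilon}$ as a concrete chaos series. One small presentational point: you should note (as the paper does with "the general case is similar") that for a general polynomial $P$ of degree $2n$, each lower Wick power ${:}\phi_\varepsilon(x)^k{:}$, $k\le 2n$, still lands in $\ol{\mathcal{P}}_{2n}(\mathcal{H})$, so the full $S_{M,\chi,\varepsilon}$ does too.
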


\begin{proof}
  We shall show it for~$P(\theta)=\theta^4$ and the general case is similar. Here we use the spectral representation of remark \ref{rem-mass-gau-field-four} (using the notations thereof) and take~$K_{\varepsilon}:=\me^{-\varepsilon(\Delta+m^2)}$. Thus we can write
  \begin{align*}
    \phi_{\varepsilon}(x)^4&=\bigg( \sum_{j=0}^{\infty}\ank{\varphi_j, K_{\varepsilon}\delta_x}_{L^2} \xi_j\bigg)^4 =\bigg( \sum_{j=0}^{\infty} \me^{-\varepsilon(\lambda_j+m^2)}\varphi_j(x) \xi_j\bigg)^4 \\
    &=\sum_{j,k,\ell,p} \me^{-\varepsilon(\lambda_j+\lambda_k+\lambda_{\ell}+\lambda_p+4m^2)} \varphi_j(x)\varphi_k(x)\varphi_{\ell}(x)\varphi_p(x)\xi_j\xi_k\xi_{\ell}\xi_p,
  \end{align*}
  the series converging absolutely in~$L^2(\mu_{\mm{GFF}}^M)$. Now each individual term is clearly in~$\ol{\mathcal{P}}_4(\mathcal{H})$. Since
  \begin{equation}
    \int_{}^{}|\varphi_j(x)\varphi_k(x)\varphi_{\ell}(x)\varphi_p(x)|\dd V_M(x) \lesssim \vol(M)\cdot [\textrm{polynomial in }\lambda_j,\lambda_k,\lambda_{\ell},\lambda_p\textrm{ with fixed degree}]
    \label{}
  \end{equation}
  as one has~$\sup_M |\varphi_j|\lesssim (1+\lambda_j)^2$ in two dimensions which follows essentially from the Sobolev embedding (see Sogge \cite{Sogge2} page 43 equation (3.1.12)), thus
  \begin{equation}
    \int_{}^{}\chi(x)\phi_{\varepsilon}(x)^4 \dd V_M(x)=\sum_{j,k,\ell,p} \me^{-\varepsilon(\lambda_j+\lambda_k+\lambda_{\ell}+\lambda_p+4m^2)} C_{\chi,j,k,\ell,p}\xi_j\xi_k\xi_{\ell}\xi_p
    \label{}
  \end{equation}
  with the series converging absolutely in~$L^2(\mu_{\mm{GFF}}^M)$ and the result is in~$\ol{\mathcal{P}}_4(\mathcal{H})$.
\end{proof}

\begin{prop}
  [hypercontractivity, \cite{Janson} theorem 5.10, \cite{Sim2} theorem I.22] \label{propB7} Let~$\mathcal{H}\subset L^2(Q,\mathcal{O},\mb{P})$ be a Gaussian Hilbert space on some probability space~$(Q,\mathcal{O},\mb{P})$, and let~$n\ge 1$,~$2\le p<\infty$. Then 
  \begin{equation}
    \mb{E}[|X|^p]^{\frac{1}{p}}\le (p-1)^{\frac{n}{2}}\mb{E}[X^2]^{\frac{1}{2}}
    \label{}
  \end{equation}
  for all~$X\in\ol{\mathcal{P}}_n (\mathcal{H})$.  \hfill~$\Box$
\end{prop}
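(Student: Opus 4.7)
The plan is to deduce the inequality from the smoothing properties of the Ornstein--Uhlenbeck semigroup $(T_t)_{t\ge 0}$ canonically attached to the Gaussian Hilbert space $\mathcal{H}$. On the Wiener chaos decomposition $L^2(Q,\mathcal{O},\mb{P})=\bigoplus_{k\ge 0}\ol{\mathcal{P}}_k(\mathcal{H})$, the operator $T_t$ acts on the $k$-th chaos by multiplication by $\me^{-kt}$; in particular, for any $X\in\ol{\mathcal{P}}_n(\mathcal{H})$ one has $T_tX=\me^{-nt}X$ identically, equivalently $X=\me^{nt}\,T_tX$ for every $t\ge 0$.

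The key analytic input is Nelson's hypercontractivity theorem: for $t\ge 0$ and $2\le p<\infty$ with $\me^{-2t}\le 1/(p-1)$, the operator $T_t$ extends to a contraction $L^2(Q,\mb{P})\to L^p(Q,\mb{P})$. Granting this, I choose $t_\ast=\tfrac{1}{2}\log(p-1)$ so that $\me^{-2t_\ast}=1/(p-1)$, and compute
\begin{equation*}
\mb{E}[|X|^p]^{1/p}=\me^{nt_\ast}\,\mb{E}[|T_{t_\ast}X|^p]^{1/p}\le \me^{nt_\ast}\,\mb{E}[X^2]^{1/2}=(p-1)^{n/2}\,\mb{E}[X^2]^{1/2},
\end{equation*}
which is exactly the claim.

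The substance of the proof, and the main obstacle, lies in Nelson's contractivity estimate. My plan is, first, to use density of polynomials in finite subsystems of an orthonormal basis of $\mathcal{H}$ to reduce to the case where $(Q,\mathcal{O},\mb{P})$ is $\mb{R}^N$ with the standard Gaussian measure; since in this case $T_t^{(N)}$ factorizes as the tensor product of $N$ copies of the one-dimensional OU semigroup, Minkowski's integral inequality applied iteratively tensorizes the $L^2\to L^p$ bound. Second, the one-dimensional bound, given by the Mehler kernel
\begin{equation*}
(T_tf)(x)=\int_{\mb{R}}f\bigl(\me^{-t}x+\sqrt{1-\me^{-2t}}\,y\bigr)\,\dd\gamma(y),
\end{equation*}
is the genuinely delicate step. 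A clean route is via Gross's logarithmic Sobolev inequality for the standard Gaussian $\gamma$: along the curve $q(t)=1+\me^{2t}$ (so that $q(0)=2$), a direct differentiation shows that $\tfrac{\dd}{\dd t}\log\|T_tf\|_{L^{q(t)}}\le 0$ is \emph{equivalent} to the Gaussian log-Sobolev inequality, yielding $\|T_tf\|_{L^{q(t)}}\le \|f\|_{L^2}$ for all $t\ge 0$; evaluating at $t=t_\ast$ gives $\|T_{t_\ast}f\|_{L^p}\le\|f\|_{L^2}$, as desired. The final step is to transport the inequality back to the abstract Gaussian Hilbert space setting via the canonical isomorphism of Gaussian measure algebras (see \cite{Sim2} section I.2), already invoked elsewhere in this article.
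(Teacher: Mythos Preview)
The paper does not supply a proof of this proposition; it is quoted from the cited references and closed with a $\Box$. Your argument via the Ornstein--Uhlenbeck semigroup and Nelson's hypercontractivity is indeed the standard route and is essentially correct in outline, but there is a genuine error in the way you apply it.

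You write the Wiener chaos decomposition as $L^2=\bigoplus_{k\ge 0}\ol{\mathcal{P}}_k(\mathcal{H})$ and then assert that $T_tX=\me^{-nt}X$ for every $X\in\ol{\mathcal{P}}_n(\mathcal{H})$. Neither is right: the spaces $\ol{\mathcal{P}}_k(\mathcal{H})$ are \emph{nested}, not orthogonal (the orthogonal summands are the homogeneous chaoses $\mathcal{H}^{{:}k{:}}=\ol{\mathcal{P}}_k(\mathcal{H})\cap\ol{\mathcal{P}}_{k-1}(\mathcal{H})^{\perp}$, on which $T_t$ acts by $\me^{-kt}$). A general $X\in\ol{\mathcal{P}}_n(\mathcal{H})$ decomposes as $X=\sum_{k=0}^n X_k$ with $X_k\in\mathcal{H}^{{:}k{:}}$, so $T_tX=\sum_{k=0}^n\me^{-kt}X_k$, which is \emph{not} $\me^{-nt}X$ unless $X$ happens to lie in the top homogeneous piece.

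The repair is easy and stays within your framework: since $T_t$ is invertible on the invariant finite sum $\ol{\mathcal{P}}_n(\mathcal{H})$, set $Y:=T_{t_\ast}^{-1}X=\sum_{k=0}^n\me^{kt_\ast}X_k\in\ol{\mathcal{P}}_n(\mathcal{H})$. Then $\snrm{Y}_{L^2}^2=\sum_{k=0}^n\me^{2kt_\ast}\snrm{X_k}_{L^2}^2\le\me^{2nt_\ast}\snrm{X}_{L^2}^2$, and Nelson's bound gives
\[
\snrm{X}_{L^p}=\snrm{T_{t_\ast}Y}_{L^p}\le\snrm{Y}_{L^2}\le\me^{nt_\ast}\snrm{X}_{L^2}=(p-1)^{n/2}\snrm{X}_{L^2},
\]
which is the desired estimate. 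The rest of your sketch (tensorization to reduce to one dimension, and the log-Sobolev route to the one-dimensional contractivity) is fine.
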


Combining lemma \ref{lemmB6} and proposition \ref{propB7}, we have

\begin{corr}\label{cor-nelson-all-Lp-cutoff-limit}
  The convergence of~$\{S_{M,\chi,\varepsilon}\}$, as well as the limit~$S_{M,\chi}$, is in~$L^p(\mu_{\mm{GFF}})$ for all~$1\le p<\infty$. Moreover, if $\chi\rightarrow 0$ in $L^4(M)$ then 
$$\lim_{\varepsilon\rightarrow 0}\lim_{\chi\rightarrow 0} S_{M,\varepsilon,\chi}=0  $$
as random variable in $L^p(\mu_{\mm{GFF}}^M)$ for all $1\le p<\infty$.
Moreover $\mathbb{E}[\me^{-S_{M,\varepsilon,\chi}}] $ remains uniformly bounded along the limit. In particular $S_{M,\chi}$ is defined for $\chi\in L^4(M)$. \hfill~$\Box$
\end{corr}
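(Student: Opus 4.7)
The plan is to treat each of the four assertions as a consequence of the machinery already assembled: membership in the $(2n)$-th Wiener chaos $\ol{\mathcal{P}}_{2n}(\mathcal{H})$ (lemma \ref{lemmB6}), hypercontractivity (proposition \ref{propB7}), and the uniform-in-$\varepsilon$ kernel bounds on $G_{\varepsilon,\varepsilon}$ developed in the proof of proposition \ref{prop-nelson-main}. We take $P(\theta)=\theta^{2n}$; the general case follows by Wick-power decomposition.

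\textbf{Upgrading $L^2$-convergence to $L^p$.} Since the $(2n)$-th chaos is a closed subspace of every $L^p(\mu_{\mm{GFF}}^M)$, the differences $S_{M,\chi,\varepsilon}-S_{M,\chi,\varepsilon'}$ lie in it, and proposition \ref{propB7} gives
$$
\nrm{S_{M,\chi,\varepsilon}-S_{M,\chi,\varepsilon'}}_{L^p}\le (p-1)^{n}\nrm{S_{M,\chi,\varepsilon}-S_{M,\chi,\varepsilon'}}_{L^2}.
$$
Combined with the $L^2$-Cauchy property from proposition \ref{prop-nelson-main}, this yields $L^p$-convergence for $p\ge 2$; H\"older on a finite measure handles $1\le p<2$.

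\textbf{Linearity in $\chi$ and extension to $L^4$.} Because the Wick power $:\!\phi_\varepsilon(x)^{2n}\!:$ is defined without reference to $\chi$, the map $\chi\mapsto S_{M,\varepsilon,\chi}$ is linear. Re-running the Wick-pairing computation in the proof of proposition \ref{prop-nelson-main} with $\varepsilon'=\varepsilon$ gives
$$
\nrm{S_{M,\varepsilon,\chi}}_{L^2}^2 = (2n)!\iint_{M\times M}\chi(x)\chi(y)\,G_{\varepsilon,\varepsilon}(x,y)^{2n}\,\dd V_M\otimes \dd V_M.
$$
The uniform bound $|G_{\varepsilon,\varepsilon}(x,y)|\le C_\delta\, d_g(x,y)^{-\delta}$ (from lemma \ref{lemm-conv-symbol-top} and proposition \ref{prop-kernel-top-equiv-symbol}), paired with H\"older and lemma \ref{lemm-2d-log-blow-up-can} for small $\delta>0$, gives $\nrm{S_{M,\varepsilon,\chi}}_{L^2}\le C\nrm{\chi}_{L^4}$ uniformly in $\varepsilon>0$. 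Hypercontractivity then upgrades this to $\nrm{S_{M,\varepsilon,\chi}}_{L^p}\le C_p\nrm{\chi}_{L^4}$, which simultaneously delivers the claimed $\chi\to 0$ (then $\varepsilon\to 0$) double limit, and, applied to a smooth Cauchy approximation $\chi_k\to \chi$ in $L^4$, the definition of $S_{M,\chi}$ as an $L^p$-limit.

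\textbf{The uniform exponential bound $\mb{E}[\me^{-S_{M,\varepsilon,\chi}}]\le C$.} This is the substantive step, essentially Nelson's exponential integrability estimate. For $\chi\ge 0$ the deterministic lower bound (\ref{eqn-lower-bound-hermite}) gives
$$
-S_{M,\varepsilon,\chi}\le b_1\int_{M}\chi(x)\,G_{\varepsilon,\varepsilon}(x,x)^n\,\dd V_M = O\bigl((\log 1/\varepsilon)^n\bigr),
$$
while hypercontractivity together with Chebyshev, optimized in $p$, yields the sub-exponential tail
$$
\mu_{\mm{GFF}}^M\bigl(S_{M,\varepsilon,\chi}<-t\bigr)\le \exp(-c\,t^{1/n})
$$
uniformly in $\varepsilon>0$. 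The naive split of $\mb{E}[\me^{-S}]$ over $\{S\ge -t\}$ and $\{S<-t\}$ does not suffice because the deterministic upper bound on $-S$ grows with $\log 1/\varepsilon$; one must instead run Nelson's finer small-field/large-field decomposition, exploiting that the region where $|\phi_\varepsilon|$ is large has small volume, and balance this volume estimate against the pointwise Hermite lower bound. This balancing --- the technical core of classical $P(\phi)_2$ constructive theory --- is the main obstacle, and for general real-valued $\chi\in L^4$ one applies the bound to $\chi^\pm$ separately.
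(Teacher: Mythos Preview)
Your handling of the $L^p$-upgrade and the $L^4$-extension is correct and matches the paper exactly: the corollary is marked with $\Box$ precisely because these assertions follow at once from chaos membership (lemma \ref{lemmB6}) and hypercontractivity (proposition \ref{propB7}), together with the $\|\chi\|_{L^4}$-dependence already built into the quantitative bound of proposition \ref{prop-nelson-main}.

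Where you go astray is the third part. The sentence about $\mb{E}[\me^{-S_{M,\varepsilon,\chi}}]$ does \emph{not} ask for Nelson's exponential integrability --- that is the content of the \emph{subsequent} theorem \ref{thrm-nelson}, proved separately and using this corollary as input. In the corollary the regulator $\varepsilon>0$ is still present and the relevant limit is the iterated one $\chi\to 0$ then $\varepsilon\to 0$: at fixed $\varepsilon$ and for $\chi\ge 0$ the deterministic Hermite lower bound (\ref{eqn-lower-bound-hermite}) already gives
\[
\me^{-S_{M,\varepsilon,\chi}}\le\exp\bigl(b_1\|\chi\|_{L^1}\,{\textstyle\sup_x}\,G_{\varepsilon,\varepsilon}(x,x)^n\bigr)\lto 1
\]
as $\|\chi\|_{L^4}\to 0$, after which the outer $\varepsilon\to 0$ is vacuous. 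So in the paper's logical order this assertion is elementary, and your attempt to import the full Nelson balancing argument here is both misplaced and --- as you yourself flag --- left incomplete. Separately, your closing suggestion to handle sign-changing $\chi$ by splitting into $\chi^\pm$ does not work for exponential bounds: the pointwise Hermite inequality controls $\me^{-S_{M,\varepsilon,\chi^+}}$ but says nothing about $\me^{+S_{M,\varepsilon,\chi^-}}$.
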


We single out a calculus computation which will be used in the sequel:

\begin{lemm}\label{lemmB9}
  Let~$a$,~$b$ be positive real numbers. Then the real function~$\alpha(x):=x^{(bx)} a^x$,~$x>0$, attains its minimum value~$\me^{-be^{-1}a^{-1/b}}$ at~$x=e^{-1}a^{-1/b}$. \hfill ~$\Box$
\end{lemm}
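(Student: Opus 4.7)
The plan is to reduce to a one-variable calculus exercise by taking logarithms, since $\alpha(x)>0$ throughout, and then locate the critical point of $\log \alpha$, which is a much simpler function than $\alpha$ itself.

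Set $f(x):=\log\alpha(x)=bx\log x+x\log a$ for $x>0$. I would first compute
\[
f'(x)=b\log x+b+\log a,\qquad f''(x)=\frac{b}{x}>0,
\]
so $f$ is strictly convex on $(0,\infty)$ and any critical point of $f$ is automatically the unique global minimizer. Solving $f'(x_{\ast})=0$ gives $\log x_{\ast}=-1-\tfrac{1}{b}\log a$, i.e.\ $x_{\ast}=e^{-1}a^{-1/b}$, which matches the claimed minimizer.

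To evaluate $\alpha(x_{\ast})$ cleanly, I would avoid plugging $x_{\ast}$ into the original product form and instead use the critical-point identity $b\log x_{\ast}+\log a=-b$. This yields
\[
f(x_{\ast})=x_{\ast}\bigl(b\log x_{\ast}+\log a\bigr)=-b\,x_{\ast}=-b\,e^{-1}a^{-1/b},
\]
and exponentiating recovers the minimum value $\alpha(x_{\ast})=e^{-b e^{-1} a^{-1/b}}$ as stated. Finally I would note that $f(x)\to+\infty$ as $x\to\infty$ (since $bx\log x+x\log a\to+\infty$) and $f(x)\to 0$ as $x\to 0^{+}$ (since $bx\log x\to 0$ and $x\log a\to 0$), so the infimum is indeed attained in the interior and not approached at a boundary, confirming that the critical value is the global minimum.

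There is no main obstacle here; the statement is a one-line consequence of convexity of $f$, and the only mild trap is the algebraic simplification at $x_{\ast}$, which the substitution $b\log x_{\ast}+\log a=-b$ handles in one step rather than by direct substitution into $x^{bx}a^{x}$.
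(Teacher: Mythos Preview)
Your proof is correct and is exactly the standard one-variable calculus argument; the paper itself omits the proof entirely (the lemma is stated with an immediate $\Box$), so there is nothing further to compare.
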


\begin{thrm}
  [Nelson] \label{thrm-nelson} We have for $\chi\in L^4(M)$,
  \begin{equation}
    \me^{-S_{M,\chi}}\in L^1(\mu_{\mm{GFF}})
    \label{}
  \end{equation}
  and hence
  \begin{equation}
  Z_M={\tts \det_{\zeta}}(\Delta_g+m^2)^{-\frac{1}{2}}\int_{\mathcal{D}'(M)}^{}\me^{-\int_{M}^{}{:}P(\phi){:} \dd V_M}\dd\mu_{\mm{GFF}}^M(\phi)<\infty.
    \label{eqn-def-part-func-rigor}
  \end{equation}
\end{thrm}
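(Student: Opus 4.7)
The strategy is Nelson's classical tail-estimate argument, whose three key ingredients are already assembled in this section: the pointwise Hermite lower bound (\ref{eqn-lower-bound-hermite}) on ${:}\phi_{\varepsilon}(x)^{2n}{:}$, the logarithmic divergence $G_{\varepsilon,\varepsilon}(x,x) = O(|\log\varepsilon|)$ of the regularized two-point function on a surface (via lemma \ref{lemm-tadpole-mass}), and the hypercontractive bound (proposition \ref{propB7}), applicable because $S_{M,\chi} - S_{M,\varepsilon,\chi}$ lies in a fixed-degree Wiener chaos by lemma \ref{lemmB6}. Without loss of generality I would assume $\chi \ge 0$ and $P(\theta) = \theta^{2n}$; the general case with $P$ bounded below reduces to this modulo lower-order Wick powers controllable by similar or easier estimates, and signed $\chi$ is handled by splitting $\chi = \chi_+ - \chi_-$.

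First I would combine (\ref{eqn-lower-bound-hermite}) with the uniform bound $G_{\varepsilon,\varepsilon}(x,x) \le C(1 + |\log\varepsilon|)$ on the compact $M$ to produce a deterministic a.s.\ lower bound of the form $S_{M,\varepsilon,\chi} \ge -A_\varepsilon$, with $A_\varepsilon = C'(\chi, M) |\log\varepsilon|^n$ for small $\varepsilon$. Then, given a threshold $K > 0$, I would tune $\varepsilon = \varepsilon(K)$ by solving $A_\varepsilon = K/2$, i.e.\ $\varepsilon(K) = \exp(-(K/(2C'))^{1/n})$, so that on the event $\{S_{M,\chi} \le -K\}$ one necessarily has $|S_{M,\chi} - S_{M,\varepsilon,\chi}| \ge K/2$. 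Chebyshev at level $p \ge 2$ combined with proposition \ref{propB7} (writing $\sigma_\varepsilon := \|S_{M,\chi} - S_{M,\varepsilon,\chi}\|_{L^2}$) gives
\begin{equation*}
  \mb{P}(S_{M,\chi} \le -K) \le \left(\frac{2(p-1)^n \sigma_\varepsilon}{K}\right)^p,
\end{equation*}
and lemma \ref{lemmB9} (applied with $a = (2\sigma_\varepsilon/K)^n$ and $b = n$) converts this, after optimizing in $p$, into a bound of the form $\exp(-n\me^{-1}(K/(2\sigma_\varepsilon))^{1/n})$.

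Integrability of $\me^{-S_{M,\chi}}$ would then follow from the layer-cake inequality
\begin{equation*}
  \mb{E}_{\mm{GFF}}^M[\me^{-S_{M,\chi}}] \le 1 + \int_0^\infty \me^K\, \mb{P}(S_{M,\chi} \le -K)\, dK,
\end{equation*}
whose right-hand side converges provided $K \mapsto (K/\sigma_{\varepsilon(K)})^{1/n}$ eventually grows strictly faster than $K$. This is the main (and only serious) obstacle: one needs the rate $\sigma_\varepsilon \to 0$ from proposition \ref{prop-nelson-main} to overcome the slow dependence $\varepsilon(K) = \exp(-(K/(2C'))^{1/n})$. Fortunately the quantitative rate implicit in the proof of proposition \ref{prop-nelson-main} --- coming from the $\mathfrak{O}(|\varepsilon-\varepsilon'|)$ factor on off-diagonal kernel estimates together with the $L^2$-integrability of $d_g(x,y)^{-\delta}$ in two dimensions from lemma \ref{lemm-2d-log-blow-up-can} --- yields polynomial decay of $\sigma_\varepsilon$ in $\varepsilon$, hence exponential decay in $|\log\varepsilon|$ under our choice, which is amply sufficient. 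Finally, the explicit formula (\ref{eqn-def-part-func-rigor}) for $Z_M$ is just the rigorous reading of the heuristic identity (\ref{eqn-heu-gaussian-mea-det-volume}): the Gaussian factor $B$ in (\ref{eqn-def-mes-gibbs-heu}) becomes $\detz(\Delta_g + m^2)^{-1/2}\, d\mu_{\mm{GFF}}^M$, and the integrability just established lets factor $A$ be incorporated as a finite integral against $\mu_{\mm{GFF}}^M$.
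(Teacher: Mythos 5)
Your proposal follows the same Nelson tail-estimate route as the paper, and the core argument is correct: the ingredients (Hermite lower bound, logarithmic bound on $G_{\varepsilon,\varepsilon}(x,x)$, hypercontractivity via lemma \ref{lemmB6} and proposition \ref{propB7}, calculus optimization via lemma \ref{lemmB9}, and the layer-cake integral) are exactly the ones the paper assembles. The one cosmetic difference is parametrization: you index by the threshold $K$ and solve $A_\varepsilon = K/2$ to choose $\varepsilon(K)$, whereas the paper indexes by $\varepsilon$ directly, uses Chebyshev at threshold $1$ (so that $\{S_{M,\chi}\le -b_2|\log 2\varepsilon|^n-1\}\subset\{|S_{M,\chi}-S_{M,\chi,\varepsilon}|\ge 1\}$), and substitutes $t=\me^{b_2|\log(2\varepsilon)|^n+1}$ into the layer-cake integral at the end; the two are equivalent up to change of variables. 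Two remarks. First, your preliminary reduction claim that ``signed $\chi$ is handled by splitting $\chi=\chi_+-\chi_-$'' is not sound: writing $\me^{-S_{M,\chi}}=\me^{-S_{M,\chi_+}}\me^{S_{M,\chi_-}}$, the second factor $\me^{S_{M,\chi_-}}$ has no a priori integrability because $:P(\phi_\varepsilon):$ has a deterministic \emph{lower} bound but no upper bound, so flipping the sign of the weight produces a quantity that is not pointwise controllable; the argument genuinely requires $\chi\ge 0$ (the paper's own lower bound $S_{M,\chi,\varepsilon}\ge -b\,\mm{vol}(\supp\chi)\|\chi\|_{L^\infty}\sup_x|G_{\varepsilon,\varepsilon}(x,x)|^n$ also tacitly uses $\chi\ge 0$, and the intended application is $\chi=1$). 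Second, your observation that one needs $\sigma_\varepsilon=\|S_{M,\chi}-S_{M,\varepsilon,\chi}\|_{L^2}$ to decay polynomially in $\varepsilon$ is exactly the role of proposition \ref{prop-nelson-main}; with $\sigma_\varepsilon\lesssim\varepsilon^{1/2}$ and $\varepsilon(K)=\exp(-(K/2C')^{1/n})$, the optimized tail $\exp(-c(K/\sigma_{\varepsilon(K)})^{1/n})$ decays faster than $\me^{-K}$ for large $K$, as required --- your phrasing ``grows strictly faster than $K$'' is the right idea, though one should check the constant $n\me^{-1}$ actually beats $1$ in the exponent, which it does because $K/\sigma_{\varepsilon(K)}$ is eventually super-polynomial in $K$.
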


\begin{proof}
  For any~$\varepsilon>0$, by (\ref{eqn-upper-bound-green-p}) and (\ref{eqn-lower-bound-hermite}),
  \begin{equation}
    S_{M,\chi,\varepsilon}\ge -b\vol(\text{supp}(\chi)) \Vert \chi\Vert_{L^\infty} \sup_x (|G_{\varepsilon,\varepsilon}(x,x)|^n).
    \label{}
  \end{equation}
From formula (\ref{eqnB9}), for~$\varepsilon$ small,
  \begin{equation}
    G_{\varepsilon,\varepsilon}(x,x)=\int_{2\varepsilon}^{\infty}p_t(x,x)\dd t=\bigg(\underbrace{\int_{2\varepsilon}^{1}}_{A}+\underbrace{\int_{1}^{\infty}}_{B}\bigg) p_t(x,x)\dd t.
    \label{}
  \end{equation}
  Now by (iii) of lemma \ref{lemm-heat} part~$A$ is~$\mathcal{O}(\log(2\varepsilon))$; since our field is \textit{massive} ($m>0$), by (iv) of lemma \ref{lemm-heat} part~$B$ is \textit{bounded}. Therefore one has overall~$G_{\varepsilon,\varepsilon}(x,x)=\mathcal{O}(\log(2\varepsilon))$.
  As a result~$S_{M,\chi,\varepsilon}\ge -b_2 |\log(2\varepsilon)|^n$ for~$\varepsilon$ small.

  Now we compute that
  \begin{align*}
    \mb{P}\big( \me^{-S_{M,\chi}}\ge \me^{b_2|\log(2\varepsilon)|^n +1}\big)&=\mb{P}\left( S_{M,\chi}\le -b_2|\log(2\varepsilon)|^n-1 \right) \\
    &\le \mb{P}\left( |S_{M,\chi}-S_{M,\chi,\varepsilon}|\ge 1 \right) \\
    &\le \nrm{S_{M,\chi}-S_{M,\chi,\varepsilon}}_{L^p(\mu_{\mm{GFF}})}^p \tag{Chebyshev} \\
    &\le (p-1)^{\frac{np}{2}} C_1^p \varepsilon^{\frac{p}{2}}\Vert \chi\Vert_{L^4}^p \tag{proposition \ref{propB7} and \ref{prop-nelson-main}} \\
    &\lesssim \Vert \chi\Vert_{L^4}^pp^{\frac{n}{2}p}(C_1\varepsilon^{\frac{1}{2}})^p,
  \end{align*}
  for all $2\le p <\infty$. The last line as a function of~$p$ has the form dealt with in lemma \ref{lemmB9} and attains a minimum of~$\exp\big( -C_2(\varepsilon^{\frac{1}{2}}\Vert \chi\Vert_{L^4})^{-1/n} \big)$, with some absorbed constant~$C_2>0$ which does not depend on $\chi$. Thus we obtain
  \begin{equation}
    \mb{P}\big( \me^{-S_{M,\chi}}\ge \me^{b_2|\log(2\varepsilon)|^n +1}\big)\lesssim \exp\big( -C_2\big(\varepsilon^{\frac{1}{2}}\Vert \chi\Vert_{L^4}\big)^{-\frac{1}{n}} \big).
    \label{}
  \end{equation}
  Now
we may conclude with the formula
\begin{eqnarray*}
\mathbb{E}\big[e^{-S_{M,\chi}} \big]=\int_0^\infty \mb{P}\left( \me^{-S_{M,\chi}}\ge  t\right)\dd t=\int_0^1 \mb{P}\left( \me^{-S_{M,\chi}}\ge  t\right)\Big| \frac{\dd t}{\dd\varepsilon} \Big| \dd\varepsilon
\end{eqnarray*}  
where the last integral involves a change of variable~$t:=\me^{b_2|\log(2\varepsilon)|^n +1}$. This gives
\begin{eqnarray*}
\mathbb{E}\big[e^{-S_{M,\chi}} \big]\lesssim 1+\int_0^1 \exp\big( -C_2\big(\varepsilon^{\frac{1}{2}}\Vert \chi\Vert_{L^4}\big)^{-\frac{1}{n}} \big) C_3\varepsilon^{-1}n\vert\log(\varepsilon)\vert^{n-1}
e^{c\vert\log(\varepsilon)\vert^n-1}
  \dd\varepsilon
\end{eqnarray*}  
which is finite since integrable near $\varepsilon=0$.   
  Moreover, we see that the bound is uniform when $\Vert \chi\Vert_{L^4}\leqslant C_0$ for some given $C_0>0$.
\end{proof}

\subsection{Change of Wick Ordering}\label{sec-change-wick}

\noindent In order for the proof of proposition \ref{prop-nelson-main} to work as it is written one has to insist on the Wick ordering~${:}\bullet{:}$ provided by~$\mu_{\mm{GFF}}^M$, since we desire convergence in~$L^2(\mu_{\mm{GFF}}^M)$ and with a different Wick ordering the Feynman rules (lemma \ref{lemm-wick-feyn}) are not exact. Nevertheless, in order to define the interaction over a domain~$\Omega$ independently of its embedding in an ambient manifold~$M$, one must employ a Wick ordering independent of~$M$, or in order words, one that is \textit{local}.

Let~$d$ denote the Riemannian distance function of~$M$. This function is local in the sense that~$d(x,y)$ (as $y\to x$) depends only on the restriction of the Riemannian metric on any geodesic convex neighborhood containing~$x$ and~$y$. The local Wick ordering~${:}\bullet{:}_0$ is provided by the \textsf{log-measure}~$\mu_{\log}^M$ which is the Gaussian measure on~$\mathcal{D}'(M)$ with covariance
\begin{equation}
  \mb{E}_{\log}[\phi(f)\phi(h)]=\int_{M}^{}f(x)\Big(-\frac{1}{2\pi}\log(m\cdot d(x,y))\Big) h(y)\dd V_M(x)\dd V_M(y),
  \label{}
\end{equation}
for~$f$,~$h\in C^{\infty}(M)$, thanks to lemma \ref{lemm-2d-log-blow-up-can}. Here~$m$ is the mass used for~$(\Delta+m^2)^{-1}$. We denote~$-\frac{1}{2\pi}\log(md(x,y)):=C_0(x,y)$ and the corresponding operator by~$C_0$. We emphasize here that~$\mu_{\log}^M$ is used only as a tool to produce a linear change of random variables with deterministic coefficients, no random variables will be actually defined on~$\mu_{\log}^M$.

\begin{lemm}
  If~$C_1$,~$C_2$ are two covariance operators on~$M$, then
  \begin{equation}
    {:}\phi(f)^n{:}_{C_1} =\sum_{j=0}^{\lfloor n/2 \rfloor} \frac{n!}{(n-2j)!j!2^j} \ank{f,(C_2-C_1)f}_{L^2(M)}^j {:}\phi(f)^{n-2j}{:}_{C_2},
    \label{}
  \end{equation}
  for~$f\in C^{\infty}(M)$.
\end{lemm}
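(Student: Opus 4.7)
The plan is to prove this identity by exploiting the generating function characterisation of the Wick ordering with respect to a Gaussian covariance. Recall that if $\sigma_i^2 := \langle f, C_i f\rangle_{L^2(M)}$ denotes the variance of $\phi(f)$ under $\mu_{C_i}$, then the Wick powers ${:}\phi(f)^n{:}_{C_i}$ are characterised by the formal identity
\begin{equation*}
\sum_{n=0}^{\infty} \frac{t^n}{n!}{:}\phi(f)^n{:}_{C_i} = \exp\!\left(t\phi(f)-\tfrac{1}{2}t^2\sigma_i^2\right),
\end{equation*}
which is the usual generating function for (rescaled) Hermite polynomials evaluated at $\phi(f)$. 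This identity is purely algebraic in the random variable $\phi(f)$ and does \emph{not} require that $\phi$ be distributed according to $\mu_{C_1}$ or $\mu_{C_2}$: it is a definition of the left-hand side.

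First I would rewrite the $C_1$-generating function so as to factor out a $C_2$-generating function:
\begin{equation*}
\exp\!\left(t\phi(f)-\tfrac{1}{2}t^2\sigma_1^2\right)=\exp\!\left(\tfrac{1}{2}t^2(\sigma_2^2-\sigma_1^2)\right)\cdot\exp\!\left(t\phi(f)-\tfrac{1}{2}t^2\sigma_2^2\right).
\end{equation*}
Next I would expand the first factor as $\sum_{j\ge 0}\frac{t^{2j}}{2^j j!}(\sigma_2^2-\sigma_1^2)^j$ and the second factor as $\sum_{m\ge 0}\frac{t^m}{m!}{:}\phi(f)^m{:}_{C_2}$, and take the Cauchy product of the two series. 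Collecting the coefficient of $t^n$ on each side yields, upon setting $m=n-2j$,
\begin{equation*}
\frac{1}{n!}{:}\phi(f)^n{:}_{C_1}=\sum_{j=0}^{\lfloor n/2\rfloor}\frac{(\sigma_2^2-\sigma_1^2)^j}{2^j j!\,(n-2j)!}\,{:}\phi(f)^{n-2j}{:}_{C_2},
\end{equation*}
and multiplying through by $n!$ and substituting $\sigma_2^2-\sigma_1^2=\langle f,(C_2-C_1)f\rangle_{L^2(M)}$ gives exactly the claimed formula.

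There is no real obstacle here: the identity is purely combinatorial once one adopts the generating-function viewpoint, and the manipulations are formal power-series manipulations in the indeterminate $t$ (no convergence issue, since equating coefficients of $t^n$ only involves finitely many terms for each $n$). The only minor point to check is that the definition of the Wick ordering used earlier in the paper (in particular, for the definition of ${:}P(\phi_\varepsilon(x)){:}$ via subtraction of $\mathbb{E}[\phi_\varepsilon(x)^2]$-corrections) is equivalent to the Hermite-polynomial / generating-function definition stated above; this is standard and amounts to verifying that both satisfy the same recursion $\partial_t\big(\text{gen.\ fn.}\big)=\phi(f)\cdot\big(\text{gen.\ fn.}\big)-t\sigma_i^2\cdot\big(\text{gen.\ fn.}\big)$.
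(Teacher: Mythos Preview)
Your proof is correct and essentially matches the paper's approach: the paper simply says ``Follows readily from Wick's theorem,'' and your generating-function argument is precisely a clean way to unpack that one-liner (the generating function identity being equivalent to the Hermite-polynomial formula in part (ii) of lemma~\ref{lemm-wick-feyn}).
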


\begin{proof}
  Follows readily from Wick's theorem.
\end{proof}

The reason why the new Wick ordering works is the following. Let~$G_{(\Delta+m^2)}(x,y)$ denote the integral kernel of~$(\Delta+m^2)^{-1}$.

\begin{lemm}\label{lemm-tadpole-mass}
  For each~$x\in M$, the limit
  \begin{equation}
    \lim_{y\to x}\left( G_{(\Delta+m^2)}(x,y)-C_0(x,y) \right)\defeq \delta G(x)
    \label{}
  \end{equation}
  exists, and that~$\delta G\in L^p(M)$ for all~$1\le p<\infty$.
\end{lemm}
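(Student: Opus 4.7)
The plan is to construct a global parametrix $\Pi(x,y)$ on $M\times M$ whose singularity on the diagonal matches that of both $G_{(\Delta+m^2)}(x,y)$ and $C_0(x,y)$, and then to show that $G_{(\Delta+m^2)}-\Pi$ extends to a continuous function on $M\times M$. Since $\Pi$ will coincide with $C_0$ on a neighborhood of the diagonal, the desired limit $\delta G(x)$ will be the restriction of this continuous extension to the diagonal.

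First I would fix a cutoff $\chi_0\in C_c^\infty((-\iota/2,\iota/2))$ with $\chi_0\equiv 1$ on $[-\iota/4,\iota/4]$ (where $\iota$ is the injectivity radius of $(M,g)$) and set $\Pi(x,y):=-\tfrac{\chi_0(d(x,y))}{2\pi}\log(m\,d(x,y))$, which is smooth off the diagonal and agrees with $C_0(x,y)$ when $d(x,y)<\iota/4$. Working in normal coordinates $(y^1,y^2)$ centered at $x$, where $g_{ij}(y)=\delta_{ij}+O(|y|^2)$ and $\sqrt{\det g}(y)=1+O(|y|^2)$, I would compute
\begin{equation*}
(\Delta_y+m^2)\Pi(x,y) = \delta_x(y) + R(x,y),
\end{equation*}
and analyze $R$ term by term: the flat piece $\Delta_{\mathrm{Eucl}}(-\tfrac{1}{2\pi}\log|y|)$ produces exactly $\delta_x$; the metric correction $(\Delta_g-\Delta_{\mathrm{Eucl}})(-\tfrac{1}{2\pi}\log|y|)$ is bounded near the diagonal, because the $|y|^{-2}$ coming from two derivatives of $\log|y|$ is killed by the $O(|y|^2)$ factors appearing in $g^{ij}-\delta^{ij}$ and in the first derivatives of $\sqrt{\det g}$; commutators with $\chi_0$ contribute smooth off-diagonal terms; and $m^2\Pi$ is at worst logarithmically singular on the diagonal. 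By Lemma \ref{lemm-2d-log-blow-up-can}, this gives $R(x,\cdot)\in L^p(M)$ uniformly in $x$, with $x\mapsto R(x,\cdot)$ continuous into $L^p(M)$ for every $p<\infty$.

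Finally, applying $(\Delta_y+m^2)^{-1}$ and subtracting, I would obtain
\begin{equation*}
G_{(\Delta+m^2)}(x,y) - \Pi(x,y) = -[(\Delta+m^2)^{-1}R(x,\cdot)](y),
\end{equation*}
and invoke the Sobolev embedding $W^{2,p}(M)\hookrightarrow C^0(M)$ on the compact surface $M$ (valid for any $p>2$) to see that the right-hand side is continuous in $y$ and depends continuously on $x$. Since $\Pi(x,y)=C_0(x,y)$ for $y$ close to $x$, this yields both the existence of the limit $\delta G(x)=-[(\Delta+m^2)^{-1}R(x,\cdot)](x)$ and its continuity on $M$; continuity on the compact $M$ then gives $\delta G\in L^\infty(M)\subset L^p(M)$ for every $1\le p<\infty$, which is stronger than the claim. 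The step I expect to be the main obstacle is the normal-coordinate cancellation showing $R$ is only log-singular on the diagonal rather than singular like $|y|^{-1}$ or $|y|^{-2}$; once that is verified the remainder is standard elliptic regularity.
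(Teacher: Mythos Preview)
Your parametrix argument is correct and is precisely the standard Hadamard construction that underlies the references the paper cites. The paper itself does not give a proof at all: it simply labels the argument ``Remarks for proof,'' states that the diagonal asymptotic of the Green function is classical, records the sharper fact $G_{(\Delta+m^2)}-C_0\in C^1(M\times M)$, and points to Kandel--Mnev--Wernli, Hermann--Humbert, Ludewig, and Schoen--Yau. So there is no approach to compare against --- you have supplied the proof the paper chose to outsource.

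Two minor remarks. First, your Sobolev threshold is generous: in dimension two, $W^{2,p}\hookrightarrow C^0$ already for any $p>1$, not just $p>2$. Second, the normal-coordinate cancellation you flag as the ``main obstacle'' is exactly as you describe and is not delicate: the two derivatives of $\log|y|$ give $O(|y|^{-2})$, and in normal coordinates both $g^{ij}-\delta^{ij}$ and $\partial_i\sqrt{\det g}$ carry an extra $|y|^2$ and $|y|$ respectively, so the worst surviving term is indeed the $m^2\Pi=O(\log d)$ piece. With a bit more bookkeeping (one more term in the Hadamard expansion) the same method gives the $C^1$ regularity the paper asserts.
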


\begin{proof}
  [Remarks for proof.] The function~$\delta G$ is called in our context the (point-splitting) \textsf{tadpole function} (see Kandel, Mnev and Wernli \cite{KMW} section 5.4, in particular lemma 5.20 for a precise expression), which can be seen as a \textsf{renormalized} diagonal value of the Green function~$G_{(\Delta+m^2)}(x,x)$. The asymptotic of the Green function along the diagonal is a classical subject and we have in fact~$G_{(\Delta+m^2)}(x,y)-C_0(x,y)\in C^1(M\times M)$. The function~$\delta G$ is also important in the context of conformal geometry where it is called the \textsf{mass function}, if more precisely we do not include the constant~$-\log m/2\pi$ in~$C_0$ but rather in~$\delta G$. See Hermann and Humbert \cite{HH}, Ludewig \cite{Ludewig} or Schoen and Yau \cite{SY} for more information.
\end{proof}

It follows that
\begin{equation}
  \ank{E_{\varepsilon}\delta_x,\left( (\Delta+m^2)^{-1}-C_0 \right)E_{\varepsilon}\delta_x}_{L^2(M)} \lto \delta G(x)
  \label{}
\end{equation}
as~$\varepsilon\to 0$ and through the limiting process of proposition \ref{prop-nelson-main} we find
\begin{equation}
  \int_{M}^{}\chi(x){:}\phi(x)^{2n}{:}_{0}\dd V_M(x) =\sum_{j=0}^{ n } \frac{(2n)!}{(2n-2j)!j!2^j} \int_{}^{}\chi(x)\delta G(x)^j {:}\phi(x)^{2n-2j}{:}_{\mm{GFF}}\dd V_M(x),
  \label{}
\end{equation}
which exist as a random variable in~$L^p(\mu_{\mm{GFF}}^M)$ for all~$2\le p<\infty$ since each term on the RHS are such by proposition \ref{prop-nelson-main} and corollary \ref{cor-nelson-all-Lp-cutoff-limit}.

  \section{Geometric Operators and Induced Laws}\label{sec-trace-poisson}

  \noindent In this section we obtain a series of rather elementary relations between various geometric-analytic operators on $M$ and on $\Omega$. The moral is that, the so-called ``sharp-time localization'' map $j_{\Sigma}$ (see lemma \ref{lemm-DN-trick}), induced probability laws of Gaussian fields under the trace $\tau_{\Sigma}$ (see section \ref{sec-ind-law}), and finally the Green-Stokes formula, are largely different aspects of the same thing.
  
  \subsection{Summary of Operators Concerned}  \label{sec-dn-map}

   Let~$(M,g)$ be a closed Riemannian manifold, and~$\Sigma\subset M$ a smooth embedded hypersurface (codimension one submanifold). The map
\begin{equation}
  \left.
  \begin{array}{rcl}
    \tau_{\Sigma}:C^{\infty}(M) & \lto & C^{\infty}(\Sigma),\\
    f &\longmapsto & f|_{\Sigma},
  \end{array}
  \right.
  \label{}
\end{equation}
is called the \textsf{trace map} from~$M$ onto~$\Sigma$. If~$(\Omega,g)$ is a compact Riemannian manifold with boundary~$\partial\Omega$, we know that for~$f\in C^{\infty}(\partial\Omega)$, the (Helmholtz) boundary value problem
\begin{equation}
  \left\{
  \begin{array}{ll}
    (\Delta_{\Omega}+m^2)u=0,&\textrm{in }\Omega,\\
    u|_{\partial\Omega}=f,&\textrm{on }\partial\Omega,
  \end{array}
  \right.
  \label{eqn-PI-pde-bdy}
\end{equation}
admits a unique solution~$u\in C^{\infty}(\ol{\Omega})$ which is extendably smooth upto~$\partial\Omega$. The solution operator
  \begin{equation}
    \left.
    \begin{array}{rcl}
      \PI_{\Omega}^{\partial\Omega}:C^{\infty}(\partial\Omega) &\lto &C^{\infty}(\ol{\Omega}),\\
      f&\longmapsto & u\textrm{ solving (\ref{eqn-PI-pde-bdy})},
    \end{array}
    \right.
    \label{}
  \end{equation}
  is called in this paper the \textsf{Poisson integral operator} from~$\partial\Omega$ to~$\Omega$, \textit{with mass}~$m>0$. We also need a variant of this operator that works for embedded hypersurfaces in closed manifolds, as $M$ and $\Sigma$ above. Pick~$f\in C^{\infty}(\Sigma)$. This time we look at the boundary value problem
\begin{equation}
  \left\{
  \begin{array}{ll}
    (\Delta_{M}+m^2)u=0,&\textrm{in }M\setminus\Sigma,\\
    u|_{\Sigma}=f,&\textrm{on }\Sigma.
  \end{array}
  \right.
  \label{eqn-PI-pde-hyp}
\end{equation}
Indeed, one views~$M\setminus \Sigma$ as a manifold with two boundaries~$\Sigma\sqcup \Sigma$, and as a result one obtains a unique solution~$u$ which is smooth on~$M\setminus \Sigma$ and one-sidedly smooth upto~$\Sigma$ respectively on its two sides. In this case we denote the solution operator by
  \begin{equation}
    \left.
    \begin{array}{rcl}
      \PI_{M}^{\Sigma}:C^{\infty}(\Sigma) &\lto &C^{\infty}(\ol{M\setminus\Sigma}),\\
      f&\longmapsto & u\textrm{ solving (\ref{eqn-PI-pde-hyp})},
    \end{array}
    \right.
    \label{eqn-def-pi-emb-hyp-in-closed-case}
  \end{equation}
  also called the \textsf{Poisson integral operator}, from~$\Sigma$ to~$M$, \textit{with mass}~$m>0$.

\begin{lemm}
  [\cite{Taylor1} page 334, 361, \cite{Eskin} page 57, example 13.3] \label{lemm-trace-prop} Let~$(M,g)$,~$\Sigma$, $\Omega$ and $\partial\Omega$ be as above. Then
  \begin{enumerate}[(i)]
    \item the map~$\tau_{\Sigma}$ extends uniquely to a continuous operator~$\tau_{\Sigma}:W^s(M)\lto W^{s-\frac{1}{2}}(\Sigma)$ for each~$s>\frac{1}{2}$;
    \item the map~$\tau_{\Sigma}:W^s(M)\lto W^{s-\frac{1}{2}}(\Sigma)$,~$s>\frac{1}{2}$, is surjective;
    \item the map~$\PI_{\Omega}^{\partial\Omega}$ extends uniquely to a continuous operator $\PI_{\Omega}^{\partial\Omega}:W^s(\partial \Omega)\lto W^{s+\frac{1}{2}}(\Omega)$ for each~$s\ge -\frac{1}{2}$.
  \end{enumerate}
\end{lemm}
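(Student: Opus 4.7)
The plan is to reduce all three statements to standard computations on a half-space via a finite atlas of coordinate charts adapted to $\Sigma$ (respectively to $\partial\Omega$). Choosing a tubular neighborhood and a partition of unity, each chart straightens $\Sigma$ to $\{t=0\}\subset \mathbb{R}^{d-1}_{x'}\times\mathbb{R}_t$; in such coordinates, the norms $W^s(M)$ and $W^{s-\frac{1}{2}}(\Sigma)$ are locally equivalent to the Fourier-defined Sobolev norms on $\mathbb{R}^d$ and $\mathbb{R}^{d-1}$ (and analogously for $W^s(\Omega)$, $W^s(\partial\Omega)$ using boundary charts sending $\Omega$ to $\{t>0\}$). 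Uniqueness of the extensions in (i) and (iii) is then immediate from the density of $C^\infty(M)$ in $W^s(M)$.

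For (i), the localized assertion is the classical trace inequality: for $u\in\mathcal{S}(\mathbb{R}^d)$ and $s>1/2$, write
\[
\widehat{u(\cdot,0)}(\xi') = \frac{1}{2\pi}\int_{\mathbb{R}} \widehat u(\xi',\tau)\,d\tau,
\]
and apply Cauchy--Schwarz against the weight $(1+|\xi'|^2+\tau^2)^{s}$ to obtain
\[
|\widehat{u(\cdot,0)}(\xi')|^2 \le \left(\int_{\mathbb{R}}|\widehat u(\xi',\tau)|^2(1+|\xi'|^2+\tau^2)^s\,d\tau\right)\left(\int_{\mathbb{R}}(1+|\xi'|^2+\tau^2)^{-s}\,d\tau\right).
\]
The substitution $\tau = \langle\xi'\rangle\sigma$ reduces the second integral to $c_s(1+|\xi'|^2)^{\frac{1}{2}-s}$ (with $c_s<\infty$ exactly because $s>1/2$). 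Multiplying by $(1+|\xi'|^2)^{s-\frac{1}{2}}$ and integrating in $\xi'$ yields $\|u(\cdot,0)\|_{W^{s-1/2}}^2 \lesssim \|u\|_{W^s}^2$; patching via the partition of unity gives the global bound on $(M,\Sigma)$.

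For (ii), I build an explicit bounded right inverse. Fix $\chi\in\mathcal{S}(\mathbb{R})$ with $\chi(0)=1$, and for $g\in W^{s-1/2}(\mathbb{R}^{d-1})$ set
\[
\widehat u(\xi',\tau) := \widehat g(\xi')\,\chi(\tau/\langle\xi'\rangle)\,\langle\xi'\rangle^{-1},\qquad \langle\xi'\rangle := (1+|\xi'|^2)^{1/2}.
\]
A direct Fourier inversion gives $u(\cdot,0)=g$ (up to a fixed constant depending on the normalization of $\chi$), and the same change of variables $\tau = \langle\xi'\rangle\sigma$, combined with the rapid decay of $\chi$, produces $\|u\|_{W^s(\mathbb{R}^d)}^2\lesssim \|g\|_{W^{s-1/2}(\mathbb{R}^{d-1})}^2$. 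Transplanting the construction back to $M$ via the partition of unity and correcting the resulting error by finitely many terms yields a bounded section of $\tau_\Sigma$, proving surjectivity.

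For (iii), the essential ingredient is elliptic regularity for the Dirichlet problem $(\Delta_\Omega+m^2)u=0$, $u|_{\partial\Omega}=f$. For $s\ge 1/2$ this is the standard Lions--Magenes estimate $\|u\|_{W^{s+1/2}(\Omega)}\lesssim \|f\|_{W^s(\partial\Omega)}$, obtained by combining (ii) to produce an extension $E f\in W^{s+1/2}(\Omega)$ with $\tau_{\partial\Omega}(Ef)=f$, and then solving $(\Delta_\Omega+m^2)w = -(\Delta_\Omega+m^2)Ef$ with zero Dirichlet data using the interior Dirichlet Green operator $G_{\Omega,D}$; then $u = Ef + w$. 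The main technical obstacle lies in the range $-1/2\le s<1/2$, where the boundary trace is no longer classically defined and one must either interpret $u|_{\partial\Omega}=f$ via the canonical extension of $\tau_{\partial\Omega}$ to the kernel of $\Delta_\Omega+m^2$, or identify $\PI_\Omega^{\partial\Omega}$ as a Poisson operator of order $-1/2$ in the Boutet de Monvel calculus; alternatively, by duality one writes $(\PI_\Omega^{\partial\Omega})^* = c\,\partial_\nu\circ G_{\Omega,D}$ and transfers the bound from the $s\ge 1/2$ case via (i). Either route gives the stated continuity on the full range $s\ge -1/2$.
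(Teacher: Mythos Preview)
The paper does not give its own proof of this lemma; it simply cites Taylor \cite{Taylor1} (pages 334 and 361) and Eskin \cite{Eskin} (page 57, example 13.3), treating the result as standard background. Your sketch is a correct outline of precisely the argument one finds in those references: localization to a half-space via an adapted atlas, the Cauchy--Schwarz trace inequality for (i), the explicit Fourier-side right inverse for (ii), and Lions--Magenes/Boutet de Monvel regularity for (iii). So there is nothing to compare; you have supplied what the paper deliberately left to the literature.
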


\begin{def7}\label{rem-Dir-Neu-cond-mean}
  In this paper, a function~$u\in C^{\infty}(\ol{\Omega})$ is said to satisfy the \textsf{Dirichlet condition} (respectively \textsf{Neumann}) if~$u|_{\partial\Omega}=0$ (respectively~$(\partial_{\nu}u)|_{\partial\Omega}=0$,~$\nu$ the outward unit normal). The~$f$ appearing in (\ref{eqn-PI-pde-bdy}) is called a \textsf{Dirichlet datum}.
\end{def7}

\begin{deef}\label{def-pi-more-general}
   More generally if~$\Omega$ has boundaries and~$\Sigma$ is \textit{either} one component of~$\partial\Omega$ \textit{or} embedded in the \textit{interior} of~$\Omega$, then we denote by~$\PI_{\Omega}^{\Sigma,B}f$ the solution with its restriction equal to~$f$ on~$\Sigma$ and boundary condition ``$B$'' (Dirichlet or Neumann) on all components of~$\partial\Omega$ \textit{except}~$\Sigma$. Such notations raise no ambiguity when the situation is understood from context.
\end{deef}

\begin{lemm}\label{lemm-pi-hyp-reg}
  Let~$(M,g)$ and~$\Sigma$ be as above. Then~$\PI_{M}^{\Sigma}$ extends uniquely to a continuous operator
  \begin{equation}
    \PI_{M}^{\Sigma}:W^{\frac{1}{2}}(\Sigma)\lto W^{1}(M).
    \label{}
  \end{equation}
\end{lemm}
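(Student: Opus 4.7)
The plan is to extend $\PI_{M}^{\Sigma}$ via the variational (energy-minimizing) characterization of the Poisson integral, and then verify that this extension agrees with the classical solution operator on $C^{\infty}(\Sigma)$. By (ii) of lemma \ref{lemm-trace-prop}, the trace $\tau_{\Sigma}: W^{1}(M)\lto W^{1/2}(\Sigma)$ is continuous and surjective; by the open mapping theorem it admits a bounded right inverse $R:W^{1/2}(\Sigma)\lto W^{1}(M)$. Given $f\in W^{1/2}(\Sigma)$, the affine subspace $A_f:=\{w\in W^1(M):\tau_{\Sigma}w=f\}=Rf+\ker\tau_{\Sigma}$ is nonempty and closed in $W^1(M)$.

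Since $m>0$, the Dirichlet energy $E(w):=\int_{M}(|\nabla w|_g^2+m^2 w^2)\,\dd V_M$ is equivalent to $\snrm{w}_{W^1(M)}^2$, hence strictly convex and coercive. It therefore attains a unique minimum $u_f$ on $A_f$, and the obvious inequality $\snrm{u_f}_{W^1}^2\asymp E(u_f)\le E(Rf)\lesssim \snrm{f}_{W^{1/2}}^2$ shows that $f\longmapsto u_f$ is a bounded linear map $W^{1/2}(\Sigma)\lto W^1(M)$. The first variation of $E$ on $A_f$ reads
\begin{equation*}
\int_M\bigl(\nabla u_f\cdot \nabla\varphi+m^2 u_f\varphi\bigr)\,\dd V_M=0\quad\text{for every }\varphi\in\ker\tau_{\Sigma}\cap W^1(M),
\end{equation*}
in particular for all $\varphi\in C_c^{\infty}(M\setminus\Sigma)$, so $(\Delta_M+m^2)u_f=0$ as a distribution on $M\setminus\Sigma$ and $u_f$ is smooth away from $\Sigma$ by elliptic regularity.

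Uniqueness within $A_f\cap\{\ker(\Delta_M+m^2)\text{ on }M\setminus\Sigma\}$ follows from testing the PDE against the difference $w$ of two such candidates: $w\in W^1(M)$ with $\tau_{\Sigma}w=0$ lies in the Cameron-Martin space $W^1_{M\setminus\Sigma}(M)$ (see lemma \ref{lemm-diri-green-op-quad-form} in the appendix), and the integration by parts gives $0=\ank{(\Delta+m^2)w,w}_{L^2}=E(w)$, hence $w=0$. For smooth data $f\in C^{\infty}(\Sigma)$, the classical $\PI_M^{\Sigma}f$ given by (\ref{eqn-def-pi-emb-hyp-in-closed-case}) is smooth on each side of $\Sigma$ upto the boundary with matching traces, so it belongs to $W^1(M)$, lies in $A_f$, and solves the same PDE on $M\setminus\Sigma$; by the uniqueness just established it coincides with $u_f$. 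Continuity from $W^{1/2}(\Sigma)$ to $W^1(M)$, together with density of $C^{\infty}(\Sigma)$ in $W^{1/2}(\Sigma)$, then shows that $f\mapsto u_f$ is the unique continuous extension of the classical Poisson operator, proving the lemma.

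The main technical point is the matching-of-traces identification of $W^1(M)$ with the pairs $(u^+,u^-)\in W^1(M^+)\oplus W^1(M^-)$ that satisfy $\tau_{\Sigma}u^+=\tau_{\Sigma}u^-$, which is implicit in the claim that a function smooth upto $\Sigma$ on each side with matching boundary values belongs to $W^1(M)$. This is a standard gluing statement for Sobolev spaces that also underlies the alternative, equally viable approach of cutting $M$ along $\Sigma$ to obtain a compact manifold with boundary $\tilde{M}$ of boundary $\Sigma_{+}\sqcup\Sigma_{-}$ and applying (iii) of lemma \ref{lemm-trace-prop} to the doubled datum $(f,f)\in W^{1/2}(\partial\tilde{M})$.
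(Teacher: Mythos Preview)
Your argument is correct, but it takes a different and somewhat longer route than the paper. The paper does not construct the extension variationally; instead it applies (iii) of lemma~\ref{lemm-trace-prop} to the cut manifold $M\setminus\Sigma$ (with boundary $\Sigma\sqcup\Sigma$) to get $u=\PI_M^{\Sigma}f\in W^1(M\setminus\Sigma)$ directly, and then checks by a one--line Green--Stokes computation that the distributional gradient of $u$ on $M$ coincides with its gradient on $M\setminus\Sigma$: the two boundary integrals over $\Sigma$ carry opposite normals and the same trace $f$, so they cancel, whence $\nrm{u}_{W^1(M)}\approx\nrm{u}_{W^1(M\setminus\Sigma)}\lesssim\nrm{f}_{W^{1/2}}$. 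This is precisely the ``matching--of--traces'' gluing you flag at the end and describe as an alternative; the paper simply makes that the whole proof. Your variational construction has the virtue of producing the extension without first quoting the boundary Poisson estimate, but it requires the additional uniqueness step (where your reference to lemma~\ref{lemm-diri-green-op-quad-form} is not quite the right citation for $\ker\tau_\Sigma=W^1_{M\setminus\Sigma}(M)$, though the fact itself is standard). The paper's approach is shorter and already yields the key distributional identity $\nabla^M u=\nabla^{M\setminus\Sigma}u$ that is reused immediately afterwards in lemma~\ref{lemm-DN-trick}.
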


\begin{proof}
  Let~$f\in W^{\frac{1}{2}}(\Sigma)$ and~$u:=\PI_M^{\Sigma}f$. We know that~$u\in W^1(M\setminus\Sigma)\subset \mathcal{D}'(M\setminus\Sigma)$, this means~$u\in L^2(M\setminus \Sigma)=L^2(M)$, and~$\nabla u\in L^2(T(M\setminus \Sigma))$, as a distribution over~$M\setminus \Sigma$. The problem is to compute~$\nabla u$ as a distribution over~$M$. For this, one picks a testing vector field~$X\in C^{\infty}(M,TM)$ and applies the Green-Stokes formula to get
  \begin{align*}
    \ank{\nabla u,X}_{L^2(M,TM)}&\defeq -\ank{u,\ddiv X}_{L^2(M)}\\
    &=\int_{M\setminus\Sigma}^{}\ank{\nabla u,X}_g \dd V_M -\int_{\Sigma}^{} f\ank{X,\nu}_g \dd V_{\Sigma}-\int_{\Sigma}^{}f\ank{X,-\nu}_g\dd V_{\Sigma} \\
    &=\int_{M\setminus\Sigma}^{}\ank{\nabla u,X}_g \dd V_M.
  \end{align*}
  Here~$\nu$ is any one of the two possible unit normal vector fields along~$\Sigma$. This shows that, nevertheless,
  \begin{equation}
    \nabla^M u=\nabla^{M\setminus\Sigma}u,
    \label{}
  \end{equation}
  and hence~$\nrm{u}_{W^1(M)}=\nrm{\nabla u}_{L^2(T(M\setminus\Sigma))}+\nrm{u}_{L^2(M)}\approx\nrm{u}_{W^1(M\setminus\Sigma)}$. 
\end{proof}

  \begin{deef}
  Let~$(\Omega,g)$ be a compact Riemannian manifold with boundary~$\partial \Omega\ne\varnothing$, and~$\PI_{\Omega}^{\partial \Omega}:C^{\infty}(\partial \Omega)\to C^{\infty}(\ol{\Omega})$ the Poisson operator defined previously. Put
  \begin{equation}
    \left.
    \begin{array}{rcl}
       \DN_{\Omega}^{\partial \Omega}:C^{\infty}(\partial \Omega)&\lto & C^{\infty}(\partial \Omega),\\
       f&\longmapsto & \partial_{\nu}(\PI_{\Omega}^{\partial \Omega} f),
    \end{array}
    \right.
    \label{}
  \end{equation}
  where~$\nu=$ outward unit normal along $\partial \Omega$, called the \textsf{Dirichlet-to-Neumann operator} on $\partial\Omega$ with respect to $\Omega$.
\end{deef}

  \begin{deef} \label{def-jp-DN-map}
  Let~$(M,g)$ be a closed Riemannian manifold,~$\Sigma\subset M$ an embedded hypersurface, and~$\PI_{M}^{\Sigma}:C^{\infty}(\partial \Omega)\to C^{\infty}(M\setminus \Sigma)$ the hypersurface Poisson operator. Put
  \begin{equation}
    \left.
    \begin{array}{rcl}
       \DN_{M}^{\Sigma}:C^{\infty}(\partial \Omega)&\lto & C^{\infty}(\partial \Omega),\\
       f&\longmapsto & \partial_{\nu}(\PI_{M}^{\Sigma} f)|_{\Sigma_-}+\partial_{-\nu}(\PI_{M}^{\Sigma} f)|_{\Sigma_+},
    \end{array}
    \right.
    \label{}
  \end{equation}
  where~$\nu$ is any one of the two unit normal vector fields along~$\Sigma$, extended over a cylindrical neighborhood of~$\Sigma$. Here~$\Sigma_{-}$ and~$\Sigma_+$ means that we are taking one-sided derivatives, respectively, from the backward-time and forward-time directions with regard to the flow of~$\nu$. We call~$\DN_{M}^{\Sigma}$ the \textsf{jumpy Dirichlet-to-Neumann operator} on $\Sigma$ with respect to $M$.
\end{deef}

\begin{deef}\label{def-DN-boundary}
   Similarly if~$\Omega$ has boundaries and~$\Sigma$ is \textit{either} one component of~$\partial\Omega$ \textit{or} embedded in the \textit{interior} of~$\Omega$, then we denote by~$\DN_{\Omega}^{\Sigma,B}$ the corresponding operator with~$\PI_{M}^{\Sigma}$ replaced by~$\PI_{\Omega}^{\Sigma,B}$ in the definition (see definition \ref{def-pi-more-general}).
\end{deef}

\begin{def7}\label{rmk-dn-jump}
  If we see~$M\setminus\Sigma$ as a manifold with boundary~$\Sigma\sqcup \Sigma$, then~$\DN_M^{\Sigma}f$ is also the sum over~$\Sigma$ of the two \textit{outward} unit normal derivatives of~$\PI_M^{\Sigma}f$ along~$\Sigma\sqcup\Sigma$. Intuitively,~$\DN_M^{\Sigma}f$ describes the ``jump'' of~$\nabla\PI_M^{\Sigma}f$ across~$\Sigma$.
\end{def7}

We summarize in the following lemma the essential properties of~$\DN_{\Omega}^{\partial\Omega}$ and~$\DN_M^{\Sigma}$. Parallel results also hold for~$\DN_{\Omega}^{\Sigma,B}$ ($\Sigma$ being either one component of boundary or embedded in interior) but we shall not discuss them in order to simplify the presentation. The same applies to everything below this section.

\begin{lemm}\label{lemm-dn-prop}
  Under their respective settings,~$\DN_{\Omega}^{\partial\Omega}$ and~$\DN_M^{\Sigma}$ are such that
  \begin{enumerate}[(i)]
    \item their quadratic forms are given respectively by the Dirichlet energies of their harmonic extensions:
      \begin{align}
	\bank{f,\DN_{\Omega}^{\partial\Omega}f}_{L^2(\partial\Omega)}&=\int_{\Omega}^{}\big(|\nabla\PI_{\Omega}^{\partial\Omega}f|_g^2+m^2(\PI_{\Omega}^{\partial\Omega}f)^2\big) \dd V_{\Omega}, \\
	\bank{f,\DN_{M}^{\Sigma}f}_{L^2(\Sigma)}&=\int_{M}^{}\big(|\nabla\PI_{M}^{\Sigma}f|_g^2+m^2(\PI_{M}^{\Sigma}f)^2\big) \dd V_{M}, 
	\label{eqn-dn-qua-hyp}
      \end{align}
      for~$f\in C^{\infty}(\partial\Omega)$;
    \item they are formally self-adjoint, strictly positive, and~$L^2$-invertible;
    \item they are elliptic~$\Psi$DOs of order~$1$, with principal symbols being~$|\xi|_g$ and~$2|\xi|_g$ respectively;
    \item they afford a finer comparison with~$\mn{D}_{\partial\Omega}=(\Delta_{\partial\Omega}+m^2)^{\frac{1}{2}}$ or~$2\mn{D}_{\Sigma}=2(\Delta_{\Sigma}+m^2)^{\frac{1}{2}}$: the operators
      \begin{equation}
	\DN_{\Omega}^{\partial\Omega}-\mn{D}_{\partial\Omega},\quad \DN_{M}^{\Sigma}-2\mn{D}_{\Sigma},\quad \mn{D}_{\partial\Omega}^{-1}\DN_{\Omega}^{\partial\Omega}-\one, \quad \textrm{and}\quad (2\mn{D}_{\Sigma})^{-1}\DN_{M}^{\Sigma}-\one,
	\label{}
      \end{equation}
      are~$\Psi$DOs of orders at most~$-2$,~$-2$,~$-3$, and~$-3$ respectively. A fortiori, they are all of trace class when~$\dim \Omega=\dim M=2$ and~$\dim \Sigma=\dim \partial\Omega=1$.
  \end{enumerate}
\end{lemm}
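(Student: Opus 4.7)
The plan is to address (i)–(iv) in sequence, using the Green--Stokes formula for (i)–(ii), and the local parametrix construction of the Poisson operator near the boundary/hypersurface for (iii)–(iv).

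For (i), put $u:=\PI_\Omega^{\partial\Omega}f$, use $(\Delta+m^2)u=0$ together with $\Delta=-\ddiv\nabla$, and apply the divergence theorem to the vector field $u\nabla u$:
\[
\int_\Omega\bigl(|\nabla u|_g^2+m^2u^2\bigr)\dd V_\Omega = \int_\Omega u(\Delta+m^2)u\,\dd V_\Omega + \int_{\partial\Omega}u\,\partial_\nu u\,\dd V_{\partial\Omega} = \bank{f,\DN_\Omega^{\partial\Omega}f}_{L^2(\partial\Omega)}.
\]
For the hypersurface version, view $M\setminus\Sigma$ as a manifold whose boundary consists of two isometric copies of $\Sigma$; applying the same identity on $M\setminus\Sigma$ yields a boundary term in which the two outward normal derivatives sum to $\DN_M^\Sigma f$ by definition~\ref{def-jp-DN-map}. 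The required regularity for integration by parts is supplied by lemma~\ref{lemm-pi-hyp-reg}. Then (ii) is immediate: polarizing (i) gives formal self-adjointness on smooth test data; strict positivity follows since the Dirichlet energy is nonnegative and $m>0$ forces $\PI f\equiv 0$ (hence $f\equiv 0$ by trace continuity) whenever the quadratic form vanishes; $L^2$-invertibility follows from injectivity together with the order-$1$ ellipticity established in (iii).

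For (iii), I would work in boundary-normal coordinates $(x',t)$ near $\partial\Omega$ (respectively, tubular coordinates near $\Sigma$) in which $g=\dd t^2+h(x',t)$, and construct a parametrix for the harmonic extension of the form $u(x',t)=\Op(a(x',t,\xi'))f$ with amplitude $a$ a classical symbol normalized by $a(x',0,\xi')=1$. Inserting into $(\Delta+m^2)u=0$ gives an eikonal equation whose outgoing-decaying branch fixes the leading amplitude $a_0\sim \me^{-t\sqrt{|\xi'|^2_{h(x',0)}+m^2}}$, with lower-order amplitudes determined recursively by transport equations. Reading off the outward normal derivative at $t=0$ identifies $\DN_\Omega^{\partial\Omega}$ as a classical $\Psi$DO of order $1$ with principal symbol $|\xi'|_g$; for $\DN_M^\Sigma$ one repeats the construction on each side of $\Sigma$ and sums the two outward normal derivatives, giving $2|\xi'|_g$.

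For (iv), the key is to match the top two homogeneous pieces of the symbol of $\DN$ with those of $\mn{D}=(\Delta_{\partial\Omega}+m^2)^{1/2}$, whose Seeley expansion begins as $|\xi'|+\tfrac{m^2}{2|\xi'|}+\cdots$. Continuing the parametrix of (iii) to the next orders and using the fact that $\partial_t h(x',t)$ vanishes in leading order, a direct comparison places $\DN_\Omega^{\partial\Omega}-\mn{D}_{\partial\Omega}$ in $\Psi^{-2}(\partial\Omega)$, and the same argument on both sides of $\Sigma$ gives $\DN_M^\Sigma - 2\mn{D}_\Sigma \in \Psi^{-2}(\Sigma)$. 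The two ratio statements follow by factoring $\mn{D}^{-1}\in\Psi^{-1}$, which upgrades the order by an additional $-1$. The trace-class statement in dimensions $\dim\partial\Omega=\dim\Sigma=1$ is then immediate from Weyl-law eigenvalue decay $|\lambda_j|\lesssim j^{-2}$ for a $\Psi$DO of order $\le -2$ on a compact $1$-manifold. The \emph{main obstacle} is the subprincipal matching in (iv): one must carry out the eikonal and transport expansion carefully enough to see that the curvature and $m^2$ corrections to the symbol of $\DN$ agree with those in the Seeley symbol of $\mn{D}$ down to order $-2$. Once this explicit local computation is done, the rest of (iv) is mechanical.
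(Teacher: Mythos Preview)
The paper gives no proof: it refers to Taylor \cite{Taylor3} for (i)--(iii) and to proposition~A.3 of Kandel--Mnev--Wernli \cite{KMW} for (iv). Your outline is exactly the standard argument carried out in those references --- Green--Stokes for the quadratic-form identities, the boundary-normal parametrix for the Poisson kernel to read off the $\Psi$DO structure, and a direct symbol comparison for (iv) --- so substantively you agree with the paper.

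One point to sharpen in (iv): the phrase ``using the fact that $\partial_t h(x',t)$ vanishes in leading order'' is not a general fact about boundary-normal coordinates; $\partial_t h|_{t=0}$ is (twice) the second fundamental form of $\partial\Omega$, and its vanishing is precisely the totally-geodesic hypothesis. That hypothesis is in force in the paper's Segal-axiom setting (section~\ref{sec-segal-descript}) and in the cited \cite{KMW} result, and it is what kills the order-$0$ curvature contribution to the symbol of $\DN_\Omega^{\partial\Omega}$ so that the expansion matches that of $\mn{D}_{\partial\Omega}$ through order~$-1$. For the two-sided operator $\DN_M^\Sigma$ the order-$0$ curvature contributions from the two sides carry opposite signs and cancel, so that case is more robust; but for the one-sided $\DN_\Omega^{\partial\Omega}$ you should state totally-geodesic as an assumption rather than as a byproduct of the coordinate choice.
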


\begin{proof}
  See Taylor \cite{Taylor3} and the references therein. For (iv) see Kandel, Mnev and Wernli \cite{KMW} proposition A.3.
\end{proof}

\subsection{Two Consequences of the Green-Stokes Formula}\label{sec-green-stokes}

\noindent Let~$(M,g)$ be a closed Riemannian manifold and~$\Sigma\subset M$ an embedded hypersurface. Formula (\ref{eqn-dn-qua-hyp}) in a slightly more general form allows one to obtain an expression for the ``distributional adjoint'' of the trace map onto~$\Sigma$.

\begin{lemm}\label{lemm-DN-trick}
  Let~$\tau_{\Sigma}:C^{\infty}(M)\to C^{\infty}(\Sigma)$,~$\phi\mapsto \phi|_{\Sigma}$ be the trace map. One has, for any~$\phi\in C^{\infty}(M)$ and~$f\in C^{\infty}(\Sigma)$,
  \begin{equation}
    \ank{\tau_{\Sigma}\phi,f}_{L^2(\partial\Omega)}=\ank{\phi,j_{\Sigma}f}_{L^2(M)}
    \label{eqn-adj-trace}
  \end{equation}
  where~$j_{\Sigma}=(\Delta+m^2)\PI_M(\DN_M^{\Sigma})^{-1}$. Moreover, this equality can be extended to~$\phi\in W^1(M)$ and~$f\in W^{-\frac{1}{2}}(\Sigma)$.
\end{lemm}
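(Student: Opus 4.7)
The plan is to write $f = \DN_M^\Sigma h$ with $h := (\DN_M^\Sigma)^{-1} f \in C^\infty(\Sigma)$ and $u := \PI_M^\Sigma h$, then read (\ref{eqn-adj-trace}) as the distributional identity
\begin{equation*}
  (\Delta+m^2)\,u \;=\; (\DN_M^\Sigma h)\cdot \delta_\Sigma\quad\text{on } M,
\end{equation*}
where $\delta_\Sigma$ is the surface measure on $\Sigma$. Once this is established, pairing both sides against $\phi \in C^\infty(M)$ and using formal self-adjointness of $\Delta+m^2$ gives exactly $\langle \tau_\Sigma\phi,\DN_M^\Sigma h\rangle_{L^2(\Sigma)} = \langle \phi, (\Delta+m^2)u\rangle_{L^2(M)} = \langle \phi, j_\Sigma f\rangle_{L^2(M)}$.

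To prove the distributional identity, I would view $M \setminus \Sigma$ as a manifold with boundary $\Sigma \sqcup \Sigma$ (the two sides of $\Sigma$, each equipped with its outward unit normal from $M \setminus \Sigma$), and apply the Green--Stokes formula for a test function $\phi \in C^\infty(M)$:
\begin{equation*}
\int_{M\setminus\Sigma}\!\!\big(\phi(\Delta+m^2)u - u(\Delta+m^2)\phi\big)\dd V_M = \int_{\Sigma_-}(\phi\,\partial_{\nu}u - u\,\partial_{\nu}\phi)\dd V_\Sigma + \int_{\Sigma_+}(\phi\,\partial_{-\nu}u - u\,\partial_{-\nu}\phi)\dd V_\Sigma.
\end{equation*}
Since $(\Delta+m^2)u = 0$ pointwise on $M\setminus\Sigma$ by definition of $\PI_M^\Sigma$, the left-hand side collapses to $-\langle u,(\Delta+m^2)\phi\rangle_{L^2(M)}$. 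On the right, $u|_{\Sigma_\pm}=h$ from both sides (trace-continuous by construction), while $\partial_\nu\phi + \partial_{-\nu}\phi = 0$ because $\phi \in C^\infty(M)$ has no jump across $\Sigma$; so the two $u\,\partial\phi$ terms cancel, while the two $\phi\,\partial u$ terms combine to $\int_\Sigma \phi\,(\DN_M^\Sigma h)\,\dd V_\Sigma$ by the very definition of the jumpy Dirichlet-to-Neumann map (definition \ref{def-jp-DN-map}, cf.\ remark \ref{rmk-dn-jump}). Rearranging yields the identity.

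For the extension to $\phi \in W^1(M)$ and $f \in W^{-1/2}(\Sigma)$, the strategy is to check continuity of both sides of (\ref{eqn-adj-trace}) in the respective topologies and conclude by density of $C^\infty(M) \times C^\infty(\Sigma)$. The left-hand side is the duality pairing $W^{1/2}(\Sigma) \times W^{-1/2}(\Sigma) \to \mb{C}$ composed with the continuous trace $\tau_\Sigma : W^1(M) \to W^{1/2}(\Sigma)$ (lemma \ref{lemm-trace-prop} (i)). For the right-hand side, I would compose the continuity statements: $(\DN_M^\Sigma)^{-1} : W^{-1/2}(\Sigma) \to W^{1/2}(\Sigma)$ (since $\DN_M^\Sigma$ is elliptic of order $1$ and $L^2$-invertible by lemma \ref{lemm-dn-prop}), then $\PI_M^\Sigma : W^{1/2}(\Sigma) \to W^1(M)$ (lemma \ref{lemm-pi-hyp-reg}), then $(\Delta + m^2) : W^1(M) \to W^{-1}(M)$. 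Hence $j_\Sigma$ is continuous $W^{-1/2}(\Sigma) \to W^{-1}(M)$, and the pairing with $\phi\in W^1(M)$ makes sense and is continuous.

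The only subtle point is the vanishing of $\partial_\nu\phi + \partial_{-\nu}\phi$ across $\Sigma$, which requires $\phi$ to be honestly smooth across $\Sigma$ (not merely one-sidedly smooth) --- but this is automatic since we are testing against $C^\infty(M)$. No other delicate issue arises because $\PI_M^\Sigma h$ is one-sidedly smooth and the normal derivatives have well-defined one-sided boundary values. Everything else is linear algebra of distributions and standard Sobolev continuity, so the main work is simply the careful bookkeeping of signs and normals in the Green--Stokes step.
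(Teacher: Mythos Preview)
Your proof is correct and follows essentially the same route as the paper: both apply the Green--Stokes formula on $M\setminus\Sigma$ viewed as a manifold with boundary $\Sigma\sqcup\Sigma$, identify the boundary terms with $\DN_M^\Sigma h$, and then invoke the mapping properties of $\tau_\Sigma$, $\PI_M^\Sigma$, and $(\DN_M^\Sigma)^{-1}$ for the Sobolev extension. The only cosmetic difference is that the paper uses Green's first identity (passing through $\int_M\langle\nabla\phi,\nabla\PI_M^\Sigma h\rangle + m^2\phi\,\PI_M^\Sigma h$ and reading this directly as the \emph{definition} of the distributional pairing $\langle\phi,(\Delta+m^2)\PI_M^\Sigma h\rangle$), whereas you use Green's second identity; watch the sign in your displayed formula, since the paper's Laplacian is the \emph{positive} one $\Delta=-\ddiv\nabla$, so the boundary terms come with the opposite sign to the standard convention.
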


\begin{proof}
  First suppose~$h\in C^{\infty}(\Sigma)$, then applying the Green-Stokes formula to~$M\setminus\Sigma$ with boundary~$\Sigma\sqcup\Sigma$ gives
\begin{align*}
  \ank{\tau\phi,\DN_M^{\Sigma}h}_{L^2(\Sigma)}&=\int_{M\setminus\Sigma}^{} (\langle\nabla \phi,\nabla \PI_M h\rangle+ m^2 \phi(\PI_M h)) \dd V_{M} \\
&=\int_{M}^{} (\langle\nabla \phi,\nabla \PI_M h\rangle+ m^2 \phi(\PI_M h)) \dd V_{M} \tag{lemma \ref{lemm-pi-hyp-reg}}\\
&=\ank{\phi,(\Delta_M+m^2)\PI_M^{\Sigma} h}_{L^2(M)}. \tag{\#}
\end{align*}
We remark that step (\#) is the \textit{definition} of the action of~$(\Delta_M+m^2)$ on the distribution~$\PI_M^{\Sigma} h$. By lemma \ref{lemm-trace-prop}, lemma \ref{lemm-pi-hyp-reg}, and (iii) of lemma \ref{lemm-dn-prop}, then, this equality can be extended to~$\phi\in W^1(M)$ and~$h\in W^{\frac{1}{2}}(\Sigma)$. Finally, replacing~$h$ by~$(\DN_{M}^{\Sigma})^{-1}f$, with~$(\DN_M^{\Sigma})^{-1}$ being a~$\Psi$DO of order~$-1$, yields the desired relation (\ref{eqn-adj-trace}) as well as its domain.
\end{proof}

\begin{def7}
  One is advised to compare lemma \ref{lemm-DN-trick} with the fact in one dimensions that the distributional derivative of the Heaviside function~$H_a=a\cdot 1_{(0,\infty)}$ ($a\in\mb{R}$) is the delta function multiplied by the jump of~$H_a$ across~$0$, that is,
  \begin{equation}
  \ank{H_a',\varphi}_{L^2(\mb{R})}=[H_a(0+)-H_a(0-)]\cdot \varphi(0)
    \label{}
  \end{equation}
  for any~$\varphi\in \mathcal{S}(\mb{R})$. In our case the role of the Heaviside function is played by the vector field~$\nabla \PI_M^{\Sigma}(\DN_M^{\Sigma})^{-1}f$. Indeed, following remark \ref{rmk-dn-jump}, the ``jump'' of~$\nabla \PI_M^{\Sigma}(\DN_M^{\Sigma})^{-1}f$ across $\Sigma$ is exactly $f$, as the directions tangential to~$\Sigma$ does not contribute to the jump with~$(\DN_M^{\Sigma})^{-1}f$ being smooth. This comparison in mind, it is also customary to write~$j_{\Sigma}f$ as~$f\otimes \delta_{\Sigma}$, as for example, in Carron \cite{Carron}.
\end{def7}

\begin{corr}\label{cor-dn-is-conj-of-green}
  For~$f$,~$h\in C^{\infty}(\Sigma)$, we have
  \begin{equation}
    \ank{f,(\DN_M^{\Sigma})^{-1}h}_{L^2(\Sigma)}=\ank{j_{\Sigma}f,(\Delta+m^2)^{-1}j_{\Sigma}h}_{L^2(M)}.
    \label{}
  \end{equation}
  In other words,~$(\DN_M^{\Sigma})^{-1}=\tau(\Delta+m^{2})^{-1}j_{\Sigma}~(=\tau(\Delta+m^{2})^{-1}\tau^*)$.
\end{corr}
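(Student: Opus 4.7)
The plan is to deduce the corollary directly from Lemma \ref{lemm-DN-trick} by choosing an appropriate $\phi$ and exploiting the trivial identity $\tau_{\Sigma}\PI_M^{\Sigma}=\id_{C^{\infty}(\Sigma)}$ (which is just the defining boundary condition (\ref{eqn-PI-pde-hyp})). Specifically, given $h\in C^{\infty}(\Sigma)$, set
\begin{equation*}
  \phi_h := (\Delta+m^2)^{-1}j_{\Sigma}h \in \mathcal{D}'(M).
\end{equation*}
Unwinding the definition $j_{\Sigma}=(\Delta+m^2)\PI_M^{\Sigma}(\DN_M^{\Sigma})^{-1}$, we see that $\phi_h=\PI_M^{\Sigma}(\DN_M^{\Sigma})^{-1}h$, and therefore
\begin{equation*}
  \tau_{\Sigma}\phi_h = \tau_{\Sigma}\PI_M^{\Sigma}(\DN_M^{\Sigma})^{-1}h = (\DN_M^{\Sigma})^{-1}h.
\end{equation*}

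Now we would like to plug $\phi_h$ into the identity (\ref{eqn-adj-trace}), paired against $f$:
\begin{equation*}
  \bank{\tau_{\Sigma}\phi_h,f}_{L^2(\Sigma)}=\bank{\phi_h,j_{\Sigma}f}_{L^2(M)}.
\end{equation*}
The left-hand side equals $\bank{(\DN_M^{\Sigma})^{-1}h,f}_{L^2(\Sigma)} = \bank{f,(\DN_M^{\Sigma})^{-1}h}_{L^2(\Sigma)}$ since $(\DN_M^{\Sigma})^{-1}$ is self-adjoint. The right-hand side, after using the self-adjointness of $(\Delta+m^2)^{-1}$ on $W^{-1}(M)$, becomes $\bank{j_{\Sigma}f,(\Delta+m^2)^{-1}j_{\Sigma}h}_{L^2(M)}$. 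Combining the two gives the desired equality.

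The one point requiring care is to verify that $\phi_h$ and $f$ sit in the function spaces for which the extended form of Lemma \ref{lemm-DN-trick} is valid, namely $\phi_h\in W^1(M)$ and $f\in W^{-1/2}(\Sigma)$. For $h\in C^{\infty}(\Sigma)$, part (iii) of Lemma \ref{lemm-dn-prop} says $(\DN_M^{\Sigma})^{-1}h\in C^{\infty}(\Sigma)$, and Lemma \ref{lemm-pi-hyp-reg} then gives $\phi_h=\PI_M^{\Sigma}(\DN_M^{\Sigma})^{-1}h\in W^1(M)$; and of course $f\in C^{\infty}(\Sigma)\subset W^{-1/2}(\Sigma)$. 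Thus the hypothesis of the extended Lemma \ref{lemm-DN-trick} is satisfied, and no further regularity argument is needed. In effect, the corollary is just the statement that $j_{\Sigma}=\tau_{\Sigma}^{*}$ (the distributional adjoint from Lemma \ref{lemm-DN-trick}) combined with the sandwich $(\DN_M^{\Sigma})^{-1}=\tau_{\Sigma}(\Delta+m^2)^{-1}\tau_{\Sigma}^{*}$, which is an abstract consequence of the Green--Stokes formula. No substantial obstacle is expected; the only subtlety is the book-keeping of Sobolev regularity, already handled by Lemmas \ref{lemm-trace-prop}, \ref{lemm-pi-hyp-reg}, and \ref{lemm-dn-prop} (iii).
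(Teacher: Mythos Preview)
Your proof is correct and follows essentially the same route as the paper's: both set $\phi=(\Delta+m^2)^{-1}j_{\Sigma}h=\PI_M^{\Sigma}(\DN_M^{\Sigma})^{-1}h$, invoke the key identity $\tau_{\Sigma}\PI_M^{\Sigma}=\id$, and then apply the adjoint relation (\ref{eqn-adj-trace}) from Lemma \ref{lemm-DN-trick}. The paper compresses all of this into a one-line remark, while you have helpfully spelled out the regularity check $\phi_h\in W^1(M)$ via Lemmas \ref{lemm-pi-hyp-reg} and \ref{lemm-dn-prop}(iii); no substantive difference.
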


\begin{proof}
  This is immediate by noting that~$\tau_{\Sigma}\PI_M^{\Sigma}$ is the identity on~$L^2(\Sigma)$.
\end{proof}

\begin{def7}
  Indeed, noting that the Schwartz kernel~$K_{\tau}$ of~$\tau_{\Sigma}$ is the delta distribution on the diagonal~$\{(x,x)\}\subset \Sigma\times M$, and that $j_{\Sigma}$ is the distributional adjoint of $\tau_{\Sigma}$, corollary \ref{cor-dn-is-conj-of-green} allows one to deduce immediately the Schwartz kernel of~$(\DN_M^{\Sigma})^{-1}$, denoted $G_{\DN}^{\Sigma}$:
  \begin{align*}
    \ank{f,(\DN_M^{\Sigma})^{-1}h}_{L^2(\Sigma)}&=\iint_{\Sigma\times M}\iint_{\Sigma\times M} f(x)K_{\tau}(x,z)G_{(\Delta+m^2)}(z,w)K_{\tau}(y,w)h(y)\dd x\dd z\dd y\dd w \\
    &=\iint_{\Sigma\times \Sigma}f(x)G_{(\Delta+m^2)}(x,y)h(y)\dd x\dd y,
  \end{align*}
  that is,~$G_{\DN}^{\Sigma}=G_{(\Delta+m^2)}|_{\Sigma\times \Sigma}$, where $G_{(\Delta+m^2)}$ is the Helmholtz Green function on $M$, which is a well-known result. Of course, assuming this result, one could also work backwards to give lemma \ref{lemm-DN-trick} another proof, using the Poisson integral formula (lemma \ref{lemm-adj-PI-bdy} below) for $\PI_M^{\Sigma}$.
\end{def7}

Now we move to the second consequence of the Green-Stokes formula. Let~$(\Omega,g)$ be a compact Riemannian manifold with boundary~$\partial\Omega$. Recall that~$(\Delta_{\Omega,D}+m^2)^{-1}$ denotes the Helmholtz Green operator with \textit{Dirichlet} conditions on~$\partial\Omega$.

\begin{lemm}[\cite{Taylor2} page 46] \label{lemm-adj-PI-bdy}
  We have, for~$\varphi\in C^{\infty}(\partial\Omega)$ and~$f\in C_c^{\infty}(\Omega^{\circ})$,
  \begin{equation}
    \bank{\PI_{\Omega}^{\partial\Omega}\varphi,f}_{L^2(\Omega)}=
    \bank{\varphi,-\partial_{\nu}(\Delta_{\Omega,D}+m^2)^{-1}f|_{\partial\Omega}}_{L^2(\partial\Omega)},
    \label{}
  \end{equation}
  where~$\nu$, again, denotes the \textit{outward} unit normal vector field along~$\partial\Omega$.\hfill~$\Box$
\end{lemm}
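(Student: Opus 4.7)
The plan is to derive this identity directly from Green's second identity applied to the pair $u := \PI_{\Omega}^{\partial\Omega}\varphi$ and $v := (\Delta_{\Omega,D}+m^2)^{-1}f$. By construction $u\in C^\infty(\ol{\Omega})$ satisfies $(\Delta+m^2)u=0$ with $u|_{\partial\Omega}=\varphi$, while $v$ solves $(\Delta+m^2)v=f$ with $v|_{\partial\Omega}=0$. Since $f\in C_c^\infty(\Omega^\circ)$, standard elliptic regularity up to the boundary shows $v\in C^\infty(\ol{\Omega})$, so in particular $\partial_\nu v|_{\partial\Omega}$ is a well-defined smooth function on $\partial\Omega$ and all the boundary pairings below make sense.

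The first step is to compute $\int_{\Omega}u\,\Delta v\,\dd V_\Omega$ and $\int_{\Omega}v\,\Delta u\,\dd V_\Omega$ by the divergence theorem (remembering the sign convention $\Delta=-\ddiv\nabla$ used throughout the paper): each equals $\int_{\Omega}\ank{\nabla u,\nabla v}_g\dd V_\Omega$ minus one boundary term, namely $\int_{\partial\Omega}u\,\partial_\nu v\,\dd V_{\partial\Omega}$ or $\int_{\partial\Omega}v\,\partial_\nu u\,\dd V_{\partial\Omega}$ respectively. Subtracting yields Green's second identity
\begin{equation*}
  \int_{\Omega}\bigl(u\,\Delta v-v\,\Delta u\bigr)\dd V_\Omega=\int_{\partial\Omega}\bigl(v\,\partial_\nu u-u\,\partial_\nu v\bigr)\dd V_{\partial\Omega}.
\end{equation*}
The $m^2$ terms cancel when one replaces $\Delta u$ and $\Delta v$ by $-m^2 u$ and $f-m^2 v$ respectively, so the left-hand side collapses to $\int_\Omega u\,f\,\dd V_\Omega=\bank{\PI_\Omega^{\partial\Omega}\varphi,f}_{L^2(\Omega)}$.

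On the right-hand side, the Dirichlet condition $v|_{\partial\Omega}=0$ kills the first term, leaving $-\int_{\partial\Omega}\varphi\,\partial_\nu v\,\dd V_{\partial\Omega}=\bank{\varphi,-\partial_\nu(\Delta_{\Omega,D}+m^2)^{-1}f|_{\partial\Omega}}_{L^2(\partial\Omega)}$, which is exactly the claimed identity. No genuinely hard step arises here; the only thing one must be careful about is fixing the sign convention (nonnegative Laplacian) and verifying that $v$ is smooth up to $\partial\Omega$ so that the normal derivative trace on the right-hand side is literal rather than distributional, which is immediate since $f$ is compactly supported in the interior.
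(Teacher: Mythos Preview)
Your proof is correct and follows exactly the approach the paper intends: the lemma is stated in the subsection titled ``Two Consequences of the Green-Stokes Formula'' and is simply cited from Taylor with a $\Box$, so the paper does not spell out the argument but clearly signals it is Green's second identity applied to $u=\PI_\Omega^{\partial\Omega}\varphi$ and $v=(\Delta_{\Omega,D}+m^2)^{-1}f$, precisely as you did. Your handling of the sign convention $\Delta=-\ddiv\nabla$ and the regularity of $v$ up to the boundary are both correct.
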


\begin{corr}
  For~$(M,g)$ a closed Riemannian manifold and~$\Sigma\subset M$ an embedded hypersurface, for~$\varphi\in C^{\infty}(\Sigma)$ and~$f\in C^{\infty}_c(M\setminus\Sigma)$,
  \begin{equation}
    \ank{\PI_M^{\Sigma}\varphi,f}_{L^2(M)}=\ank{\varphi,-(\partial_{\nu}u|_{\Sigma_-}+\partial_{-\nu}u|_{\Sigma_+})}_{L^2(\Sigma)},
    \label{}
  \end{equation}
  where~$u=(\Delta_{M\setminus\Sigma,D}+m^2)^{-1}f$, and the notations~$\Sigma_-$ and~$\Sigma_+$ have the same meanings as in definition \ref{def-jp-DN-map}.
\end{corr}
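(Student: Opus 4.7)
The plan is to reduce this statement to the already-stated boundary version, Lemma \ref{lemm-adj-PI-bdy}, by viewing $M \setminus \Sigma$ as a compact manifold with (two-component) boundary $\Sigma \sqcup \Sigma$, one copy for each side of $\Sigma$ in $M$. Since $f$ has compact support in $M \setminus \Sigma$, it belongs to $C_c^\infty((M\setminus\Sigma)^\circ)$, and the $L^2$-pairing on $M$ agrees with the one on $M\setminus\Sigma$:
\begin{equation}
\ank{\PI_M^{\Sigma}\varphi,f}_{L^2(M)} = \ank{\PI_M^{\Sigma}\varphi,f}_{L^2(M\setminus\Sigma)}.
\end{equation}
So the statement becomes one purely about the manifold $M\setminus\Sigma$ with its two boundary components.

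Next I would identify the two relevant operators on $M\setminus\Sigma$ with their boundary analogues. By the very definitions (\ref{eqn-PI-pde-hyp}) and (\ref{eqn-def-pi-emb-hyp-in-closed-case}), the restriction of $\PI_M^{\Sigma}\varphi$ to $M\setminus\Sigma$ solves the Helmholtz boundary value problem on $M\setminus\Sigma$ with Dirichlet datum equal to $\varphi$ on \emph{both} copies of $\Sigma$. Likewise, $u = (\Delta_{M\setminus\Sigma,D}+m^2)^{-1}f$ is, by construction, the Helmholtz Green function on $M\setminus\Sigma$ with zero Dirichlet condition on both copies of $\Sigma$. Thus Lemma \ref{lemm-adj-PI-bdy}, applied to $\Omega=M\setminus\Sigma$ with $\partial\Omega=\Sigma\sqcup\Sigma$, yields directly
\begin{equation}
\ank{\PI_M^{\Sigma}\varphi,f}_{L^2(M\setminus\Sigma)} = \bank{\varphi,-\partial_{\nu_{\mm{out}}}u|_{\partial(M\setminus\Sigma)}}_{L^2(\Sigma\sqcup\Sigma)},
\end{equation}
where $\nu_{\mm{out}}$ is the outward unit normal along each boundary component of $M\setminus\Sigma$.

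The only remaining step, and the sole source of subtlety, is to translate the two outward-normal derivatives on $\Sigma_+$ and $\Sigma_-$ into the notation of Definition \ref{def-jp-DN-map}, where $\nu$ is a single unit normal vector field extended in a cylindrical neighborhood of $\Sigma$ in $M$. Following the convention that $\Sigma_{-}$ (resp. $\Sigma_{+}$) denotes the one-sided limit from the $-\nu$-side (resp. $+\nu$-side) of $\Sigma$, the outward normal of $M\setminus\Sigma$ at $\Sigma_-$ is $+\nu$ while at $\Sigma_+$ it is $-\nu$. Splitting the boundary pairing accordingly gives
\begin{equation}
-\partial_{\nu_{\mm{out}}}u|_{\partial(M\setminus\Sigma)} \;=\; -\bigl(\partial_{\nu}u|_{\Sigma_-}+\partial_{-\nu}u|_{\Sigma_+}\bigr)
\end{equation}
as a function on $\Sigma$, which is exactly the right-hand side of the claimed identity. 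No serious obstacle is expected here beyond this orientation bookkeeping; the hard analytic content (the Green--Stokes formula and the existence/regularity of the Dirichlet Green operator on $M\setminus\Sigma$) is already packaged into Lemma \ref{lemm-adj-PI-bdy}.
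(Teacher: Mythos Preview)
Your proposal is correct and follows essentially the same approach as the paper: view $M\setminus\Sigma$ as a manifold with boundary $\Sigma\sqcup\Sigma$, identify $\PI_M^{\Sigma}\varphi$ with $\PI_{M\setminus\Sigma}^{\Sigma\sqcup\Sigma}[\smx{\varphi\\ \varphi}]$, apply Lemma~\ref{lemm-adj-PI-bdy}, and collapse the $L^2(\Sigma\sqcup\Sigma)$-pairing to an $L^2(\Sigma)$-pairing. The paper's proof is terser but contains exactly these steps; your explicit orientation bookkeeping for $\nu_{\mm{out}}$ versus $\pm\nu$ is a welcome clarification.
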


\begin{proof}
  See~$M\setminus\Sigma$ as a manifold with two boundaries~$\Sigma\sqcup\Sigma$, and we note
  \begin{equation}
    \PI_M^{\Sigma}\varphi=\PI_{M\setminus\Sigma}^{\Sigma\sqcup\Sigma}\bnom{\varphi}{\varphi},
    \label{eqn-two-PI-same-law}
  \end{equation}
  as well as
  \begin{equation}
    \Bank{\bnom{\varphi}{\varphi},-\bnom{\partial_{\nu}u|_{\Sigma_-}}{\partial_{-\nu}u|_{\Sigma_+}}}_{L^2(\Sigma\sqcup\Sigma)}=
\bank{\varphi,-(\partial_{\nu}u|_{\Sigma_-}+\partial_{-\nu}u|_{\Sigma_+})}_{L^2(\Sigma)},
    \label{}
  \end{equation}
  while applying lemma \ref{lemm-adj-PI-bdy}.
\end{proof}

\subsection{Induced Laws}\label{sec-ind-law}

Results of this section rely on the possibility of extending linear functionals or operators measurably from the Cameron-Martin space, and the (almost sure) uniqueness of such extensions, a detailed treatment of which we refer to \cite{Bogachev} sections 2.10 and 3.7.

Now let~$(M,g)$ be a closed Riemannian manifold and~$\Sigma\subset M$ an embedded hypersurface. Lemma \ref{lemm-DN-trick} then says that, for each~$f\in C^{\infty}(\Sigma)$, the random variables
\begin{equation}
  \tau_{\Sigma}\phi(f)\quad\textrm{and}\quad \phi(j_{\Sigma}f),
  \label{}
\end{equation}
while~$\phi\sim \mu_{\mm{GFF}}^M$, are (surely) equal on the Cameron-Martin space~$W^1(M)$. By \cite{Bogachev} theorem 2.10.11, they are almost surely equal over~$\mathcal{D}'(M)$. Subsequently from corollary \ref{cor-dn-is-conj-of-green} we deduce
\begin{equation}
  \mb{E}_{\mm{GFF}}^M[\tau_{\Sigma}\phi(f)\tau_{\Sigma}\phi(h)]=\mb{E}_{\mm{GFF}}^M[\phi(j_{\Sigma}f)\phi(j_{\Sigma}h)]=\ank{f,(\DN_M^{\Sigma})^{-1}h}_{L^2(\Sigma)}.
  \label{}
\end{equation}
Taking into account lemma \ref{lemm-GHS-cam-mar-of-cov-C}, (iii) of lemma \ref{lemm-dn-prop} and \cite{Bogachev} theorem 3.7.6, we have proved the following.
\begin{prop}\label{prop-induce-law-DN}
  If~$\phi\in \mathcal{D}'(M)$ follows the law of~$\mu_{\mm{GFF}}^M$, then the random field~$\tau_{\Sigma}\phi\in \mathcal{D}'(\Sigma)$ can equivalently be realized as the (centered) Gaussian field~$\tilde{\varphi}$ on~$\Sigma$ with covariance
  \begin{equation}
    \mb{E}\big[\tilde{\varphi}(f)\tilde{\varphi}(h)\big]=\ank{f,(\DN_{M}^{\Sigma})^{-1}h}_{L^2(\Sigma)},
    \label{eqn-cov-dn-field}
  \end{equation}
  for~$f$,~$h\in C^{\infty}(\Sigma)$. In other words, the measure image~$\wh{\tau_{\Sigma}}_*(\mu_{\mm{GFF}}^M)$ of~$\mu_{\mm{GFF}}^M$ under any measurable linear extension~$\wh{\tau_{\Sigma}}$ of~$\tau_{\Sigma}:W^1(M)\lto W^{\frac{1}{2}}(\Sigma)$ coincides with the measure~$\mu_{\DN}^{\Sigma,M}$ on any~$\mathcal{D}'(\Sigma)$ for the field~$\tilde{\varphi}$ satisfying (\ref{eqn-cov-dn-field}). \hfill~$\Box$
\end{prop}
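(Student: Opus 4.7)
The plan is to reduce the identification of the induced law to the key operator identity already supplied by Lemma \ref{lemm-DN-trick} and Corollary \ref{cor-dn-is-conj-of-green}, combined with a standard but necessary measurable-extension argument for Gaussian measures on Fréchet spaces, since $\tau_\Sigma$ is not classically defined on the support of $\mu_{\mm{GFF}}^M$.

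First I would note that for every test function $f \in C^{\infty}(\Sigma)$, the distribution $j_\Sigma f = (\Delta+m^2)\PI_M^\Sigma(\DN_M^\Sigma)^{-1} f$ lies in $W^{-1}(M)$ (since $\PI_M^\Sigma(\DN_M^\Sigma)^{-1} f \in W^{1}(M)$ by lemmas \ref{lemm-dn-prop}(iii) and \ref{lemm-pi-hyp-reg}), so the pairing $\phi \mapsto \phi(j_\Sigma f) = \ank{\phi, j_\Sigma f}$ extends $\mu_{\mm{GFF}}^M$-measurably to all of $\mathcal{D}'(M)$, and in fact is already a continuous linear functional on the Cameron-Martin space $W^1(M)$. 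By Lemma \ref{lemm-DN-trick}, on this Cameron-Martin space the two functionals $\phi \mapsto \tau_\Sigma \phi(f)$ and $\phi \mapsto \phi(j_\Sigma f)$ agree pointwise. Bogachev's theorem 2.10.11 then guarantees that any measurable linear extension $\widehat{\tau_\Sigma}$ of $\tau_\Sigma$ (defined classically on $W^s(M)$ for $s>1/2$) must satisfy
\begin{equation*}
  \widehat{\tau_\Sigma} \phi(f) \;=\; \phi(j_\Sigma f), \qquad \mu_{\mm{GFF}}^M\text{-a.s.}
\end{equation*}

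With the identification of the random variable in hand, I would compute its two-point function. For $f,h \in C^\infty(\Sigma)$, by the defining covariance of the GFF,
\begin{equation*}
  \mb{E}_{\mm{GFF}}^M\bigl[\widehat{\tau_\Sigma}\phi(f)\,\widehat{\tau_\Sigma}\phi(h)\bigr]
  = \mb{E}_{\mm{GFF}}^M\bigl[\phi(j_\Sigma f)\phi(j_\Sigma h)\bigr]
  = \bank{j_\Sigma f,\,(\Delta+m^2)^{-1} j_\Sigma h}_{L^2(M)}
  = \bank{f,\,(\DN_M^\Sigma)^{-1} h}_{L^2(\Sigma)},
\end{equation*}
where the last equality is exactly Corollary \ref{cor-dn-is-conj-of-green}. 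This shows that the pushforward $\widehat{\tau_\Sigma}_*\mu_{\mm{GFF}}^M$ and the Gaussian measure $\mu_{\DN}^{\Sigma,M}$ on $\mathcal{D}'(\Sigma)$ have identical covariance structures. Since by lemma \ref{lemm-dn-prop}(iii) $(\DN_M^\Sigma)^{-1}$ is a positive elliptic $\Psi$DO of order $-1$, it is a valid covariance operator and the Gaussian measure $\mu_{\DN}^{\Sigma,M}$ exists and is supported on $\mathcal{D}'(\Sigma)$ with Cameron-Martin space $W^{1/2}(\Sigma)$ by lemma \ref{lemm-GHS-cam-mar-of-cov-C}. Bogachev's theorem 3.7.6 (uniqueness of a Gaussian measure from its covariance on a locally convex space) then forces $\widehat{\tau_\Sigma}_*\mu_{\mm{GFF}}^M = \mu_{\DN}^{\Sigma,M}$.

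The main (and essentially only) subtle point I expect is the measurable-extension step, because the classical trace $\tau_\Sigma$ is undefined on the $\mu_{\mm{GFF}}^M$-typical sample $\phi \in W^{-\delta}(M)$; its rescue relies on the fact that lemma \ref{lemm-DN-trick} already extends the pairing identity $\ank{\tau_\Sigma\phi,f} = \ank{\phi, j_\Sigma f}$ to $\phi \in W^1(M)$ and $f \in W^{-1/2}(\Sigma)$, i.e.\ exactly up to the Cameron-Martin regularity, which is the minimum needed to trigger Bogachev's uniqueness of measurable linear extensions. Everything else is a bookkeeping computation of Gaussian covariances through the ``Green-to-Dirichlet-to-Neumann'' identity of Corollary \ref{cor-dn-is-conj-of-green}.
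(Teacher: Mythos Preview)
Your proposal is correct and follows essentially the same route as the paper: identify $\tau_\Sigma\phi(f)$ with $\phi(j_\Sigma f)$ on the Cameron-Martin space via Lemma~\ref{lemm-DN-trick}, extend almost surely by Bogachev's theorem 2.10.11, compute the covariance through Corollary~\ref{cor-dn-is-conj-of-green}, and conclude using Lemma~\ref{lemm-dn-prop}(iii), Lemma~\ref{lemm-GHS-cam-mar-of-cov-C}, and Bogachev's theorem 3.7.6. Your write-up is slightly more explicit about why $j_\Sigma f \in W^{-1}(M)$ and about the role of the Cameron-Martin regularity threshold, but the argument is the same.
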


Next we study induced random fields in the other direction, by the Poisson integral operator. Namely, for~$\Omega$,~$\partial\Omega$ as in lemma \ref{lemm-adj-PI-bdy}, given a Gaussian random field~$\varphi$ on~$\partial\Omega$, what is the law of the field~$\PI_{\Omega}^{\partial\Omega}\varphi$? From another perspective one solves the Helmholtz (Laplace) equation with random boundary conditions. We write in shorthand
\begin{equation}
    (\PI_{\Omega}^{\partial\Omega})^*\defeq -\partial_{\nu}(\Delta_{\Omega,D}+m^2)^{-1}(-)|_{\partial\Omega}.
\end{equation}
Thus lemma \ref{lemm-adj-PI-bdy} says
\begin{equation}
\bank{\PI_{\Omega}^{\partial\Omega}\varphi,f}_{L^2(\Omega)}=
  \bank{\varphi,(\PI_{\Omega}^{\partial\Omega})^* f}_{L^2(\partial\Omega)},
  \label{}
\end{equation}
for~$\varphi\in C^{\infty}(\partial\Omega)$,~$f\in C_c^{\infty}(\Omega^{\circ})$. Suppose~$\varphi$ has covariance operator~$C$ of order~$-s$ ($s>0$). Note~$\PI_{\Omega}^{\partial\Omega}$ is always well-defined on the Cameron-Martin space~$W^s(\partial\Omega)$. By the same token as above, for~$f\in C_c^{\infty}(\Omega^{\circ})$,
\begin{equation}
  (\PI_{\Omega}^{\partial\Omega}\varphi)(f)\quad\textrm{and}\quad \varphi((\PI_{\Omega}^{\partial\Omega})^*f)
  \label{}
\end{equation}
are surely equal on~$W^s(\partial\Omega)$. Moreover,
\begin{equation}
  \mb{E}_{C}^{\partial\Omega}\big[(\PI_{\Omega}^{\partial\Omega}\varphi)(f)(\PI_{\Omega}^{\partial\Omega}\varphi)(h)\big]=\bank{f,\PI_{\Omega}^{\partial\Omega} C(\PI_{\Omega}^{\partial\Omega})^*h}_{L^2(\partial\Omega)}.
\end{equation}
We deduce
\begin{prop}\label{prop-induce-PI-law}
  If~$\varphi\in \mathcal{D}'(\partial\Omega)$ is a (centered) Gaussian random field with covariance operator~$C$, then the random field~$\PI_{\Omega}^{\partial\Omega}\varphi\in \mathcal{D}'(\Omega^{\circ})$ can equivalently be realized as the (centered) Gaussian field~$\tilde{\phi}$ on~$\Omega$ with covariance
  \begin{equation}
    \mb{E}\big[\tilde{\phi}(f)\tilde{\phi}(h)\big]=\bank{f,\PI_{\Omega}^{\partial\Omega} C(\PI_{\Omega}^{\partial\Omega})^*h}_{L^2(\partial\Omega)},
    \label{eqn-cov-pi-ind-field}
  \end{equation}
  for~$f$,~$h\in C^{\infty}(\partial\Omega)$. In other words, the measure image~$\wh{\PI_{\Omega}^{\partial\Omega}}_*(\mu_{C}^{\partial\Omega})$ of~$\mu_{C}^{\partial\Omega}$ under any measurable linear extension~$\wh{\PI_{\Omega}^{\partial\Omega}}$ of~$\PI_{\Omega}^{\partial\Omega}:W^s(\partial\Omega)\lto W^{s+\frac{1}{2}}(\Omega)$ coincides with the measure for the field~$\tilde{\phi}$ satisfying (\ref{eqn-cov-pi-ind-field}). \hfill~$\Box$
\end{prop}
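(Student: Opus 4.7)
The strategy is to mirror the argument used for Proposition \ref{prop-induce-law-DN}, replacing the role of lemma \ref{lemm-DN-trick} by its analogue, lemma \ref{lemm-adj-PI-bdy}. The plan has three steps: (a) pass the random variable $(\PI_{\Omega}^{\partial\Omega}\varphi)(f)$ through adjunction to a pairing of $\varphi$ with a smooth test function on $\partial\Omega$, (b) upgrade the deterministic equality on the Cameron--Martin space to an almost-sure equality on the full Fr\'echet sample space, and (c) compute the covariance directly.

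For step (a): let $\varphi \in C^{\infty}(\partial\Omega)$ and $f \in C_c^{\infty}(\Omega^{\circ})$. Lemma \ref{lemm-adj-PI-bdy} gives deterministically
\begin{equation*}
(\PI_{\Omega}^{\partial\Omega}\varphi)(f) \;=\; \bank{\PI_{\Omega}^{\partial\Omega}\varphi, f}_{L^2(\Omega)} \;=\; \bank{\varphi, (\PI_{\Omega}^{\partial\Omega})^{*}f}_{L^2(\partial\Omega)} \;=\; \varphi\bigl((\PI_{\Omega}^{\partial\Omega})^{*}f\bigr).
\end{equation*}
Since $(\PI_{\Omega}^{\partial\Omega})^{*}f \in C^{\infty}(\partial\Omega)$, the RHS is a legitimate random variable under $\mu_C^{\partial\Omega}$ in the sense of our definition.

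For step (b): by lemma \ref{lemm-trace-prop} (iii), $\PI_{\Omega}^{\partial\Omega}$ extends continuously to $W^{s}(\partial\Omega)\to W^{s+1/2}(\Omega)$ on the Cameron--Martin space of $\mu_C^{\partial\Omega}$ (which is $W^{s}(\partial\Omega)$ by lemma \ref{lemm-GHS-cam-mar-of-cov-C}), so a measurable linear extension $\wh{\PI_{\Omega}^{\partial\Omega}}:\mathcal{D}'(\partial\Omega)\lto \mathcal{D}'(\Omega^{\circ})$ exists in the sense of Bogachev \cite{Bogachev} section 2.10. On the Cameron--Martin space the two random variables $(\PI_{\Omega}^{\partial\Omega}\varphi)(f)$ and $\varphi((\PI_{\Omega}^{\partial\Omega})^{*}f)$ coincide by the identity above, and by \cite{Bogachev} theorem 2.10.11 this equality propagates to $\mu_C^{\partial\Omega}$-almost surely over $\mathcal{D}'(\partial\Omega)$. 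The almost sure uniqueness of the extension is also the content we need to conclude that $\wh{\PI_{\Omega}^{\partial\Omega}}_{*}\mu_C^{\partial\Omega}$ is unambiguously defined.

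For step (c): once the identification from (b) is in hand, the covariance is a one-line computation, namely
\begin{equation*}
\mb{E}\bigl[(\PI_{\Omega}^{\partial\Omega}\varphi)(f)\,(\PI_{\Omega}^{\partial\Omega}\varphi)(h)\bigr] \;=\; \mb{E}\bigl[\varphi((\PI_{\Omega}^{\partial\Omega})^{*}f)\,\varphi((\PI_{\Omega}^{\partial\Omega})^{*}h)\bigr] \;=\; \bank{(\PI_{\Omega}^{\partial\Omega})^{*}f,\, C\, (\PI_{\Omega}^{\partial\Omega})^{*}h}_{L^2(\partial\Omega)},
\end{equation*}
which, by adjunction, equals $\langle f,\,\PI_{\Omega}^{\partial\Omega} C (\PI_{\Omega}^{\partial\Omega})^{*} h\rangle_{L^2(\Omega)}$. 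Then cite \cite{Bogachev} theorem 3.7.6 (as was done for Proposition \ref{prop-induce-law-DN}) to conclude that the pushforward measure $\wh{\PI_{\Omega}^{\partial\Omega}}_{*}\mu_C^{\partial\Omega}$ is the centered Gaussian law on $\mathcal{D}'(\Omega^{\circ})$ with the stated covariance.

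The only step requiring any real care is (b): one must check that $(\PI_{\Omega}^{\partial\Omega})^{*}f$ is sufficiently regular to be paired with the distributional $\varphi$ under the conventions for the Gaussian field indexed by $C^{\infty}(\partial\Omega)$, and that $\PI_{\Omega}^{\partial\Omega}$ really does land in the Cameron--Martin space at the required Sobolev exponent. Both follow from lemma \ref{lemm-trace-prop} together with smoothness of $(\Delta_{\Omega,D}+m^{2})^{-1}f$ up to the boundary for $f\in C_c^{\infty}(\Omega^{\circ})$. No further subtlety arises because $\PI_{\Omega}^{\partial\Omega}$ is a bounded linear map between Hilbert spaces of Sobolev type, unlike the trace $\tau_{\Sigma}$ in Proposition \ref{prop-induce-law-DN} whose adjoint $j_{\Sigma}$ required the Green--Stokes computation of lemma \ref{lemm-DN-trick}.
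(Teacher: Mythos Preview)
Your proposal is correct and follows essentially the same approach as the paper: the paper's argument (given in the paragraph immediately preceding the proposition) uses lemma \ref{lemm-adj-PI-bdy} to identify $(\PI_{\Omega}^{\partial\Omega}\varphi)(f)$ with $\varphi((\PI_{\Omega}^{\partial\Omega})^*f)$ on the Cameron--Martin space, then invokes \cite{Bogachev} theorems 2.10.11 and 3.7.6 ``by the same token'' as for proposition \ref{prop-induce-law-DN}, exactly as you do. Your added remarks in step (b) about regularity of $(\PI_{\Omega}^{\partial\Omega})^*f$ and the Sobolev mapping property are a welcome spelling-out of details the paper leaves implicit.
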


 \begin{def7}\label{rem-induce-law-equality}
    What is strictly needed for showing Segal axioms is not the full proposition \ref{prop-induce-PI-law} but rather this innocent observation: by (\ref{eqn-two-PI-same-law}), if the random field~$\varphi\in \mathcal{D}'(\Sigma)$ follows a fixed probability law, then the induced random fields~$\PI_M^{\Sigma}\varphi$ and~$\PI_{M\setminus\Sigma}^{\Sigma\sqcup\Sigma}[\smx{\varphi\\ \varphi}]$ in~$\mathcal{D}'(M\setminus\Sigma)$ follows the same law.
  \end{def7}

\subsection{A Remark on Reflection Positivity}  
\label{sec-RP-first}

\noindent As the names would suggest, the positivity of the Dirichlet-to-Neumann map (itself the consequence of the positivity of the Dirichlet energy) gives an interesting inequality comparing the resolvants of Laplacians with Dirichlet and Neumann conditions (corollary \ref{cor-rp-op-ineq}), via the Poisson integral formula (lemma \ref{lemm-adj-PI-bdy}). Let~$\Omega$,~$\partial\Omega$ be as in lemma \ref{lemm-adj-PI-bdy}.  We adopt the shorthand notations
\begin{equation}
  \quad C_D\defeq (\Delta_{\Omega,D}+m^2)^{-1},\quad C_N\defeq (\Delta_{\Omega,N}+m^2)^{-1},
  \label{}
\end{equation}
where, similar to~$C_D$,~$C_N f$ for~$f\in C_c^{\infty}(\Omega^{\circ})$ solves the Neumann boundary value problem
\begin{equation}
  \left\{
    \begin{array}{ll}
      (\Delta+m^2)(C_N f)=f,&\textrm{in }\Omega,\\
      \partial_{\nu}(C_N f)|_{\partial\Omega}=0,&\textrm{on }\partial\Omega.
    \end{array}
    \right.
  \label{}
\end{equation}
There is the following simple, elementary relation:
\begin{lemm}
  In the situation as above, we have the operator equality on~$C_c^{\infty}(\Omega^{\circ})$,
  \begin{equation}
    \PI_{\Omega}^{\partial\Omega}(\DN_{\Omega}^{\partial\Omega})^{-1}(\PI_{\Omega}^{\partial\Omega})^*=C_N-C_D.
    \label{}
  \end{equation}
\end{lemm}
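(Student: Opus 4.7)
My plan is to prove the identity pointwise: for each $f \in C_c^{\infty}(\Omega^{\circ})$, I will show that both sides agree when applied to $f$. The key is to exploit the fact that both $C_N f$ and $C_D f$ solve the same inhomogeneous Helmholtz equation $(\Delta+m^2)w = f$ on $\Omega$, so their difference $u := C_N f - C_D f$ satisfies $(\Delta+m^2) u = 0$ in $\Omega$. Thus $u$ is nothing but the harmonic (or rather Helmholtz) extension of its own boundary trace, that is,
\begin{equation}
  u = \PI_{\Omega}^{\partial\Omega}\big(u|_{\partial\Omega}\big).
\end{equation}
It remains to identify $u|_{\partial\Omega}$ as $(\DN_{\Omega}^{\partial\Omega})^{-1}(\PI_{\Omega}^{\partial\Omega})^{*}f$.

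To do this, I will compute $\partial_\nu u|_{\partial\Omega}$ in two ways. On one hand, since $(C_N f)$ satisfies the Neumann condition $\partial_\nu (C_N f)|_{\partial\Omega} = 0$, we get
\begin{equation}
  \partial_\nu u|_{\partial\Omega} = -\partial_\nu (C_D f)|_{\partial\Omega} = (\PI_{\Omega}^{\partial\Omega})^{*} f,
\end{equation}
by the very definition of $(\PI_{\Omega}^{\partial\Omega})^{*}$ via lemma \ref{lemm-adj-PI-bdy}. On the other hand, since $u = \PI_{\Omega}^{\partial\Omega}(u|_{\partial\Omega})$ is a solution of the homogeneous Helmholtz equation, the definition of the Dirichlet-to-Neumann operator gives $\partial_\nu u|_{\partial\Omega} = \DN_{\Omega}^{\partial\Omega}(u|_{\partial\Omega})$. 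Equating the two expressions, inverting $\DN_{\Omega}^{\partial\Omega}$ (which is invertible by (ii) of lemma \ref{lemm-dn-prop}), and then applying $\PI_{\Omega}^{\partial\Omega}$ yields exactly
\begin{equation}
  C_N f - C_D f = \PI_{\Omega}^{\partial\Omega}(\DN_{\Omega}^{\partial\Omega})^{-1}(\PI_{\Omega}^{\partial\Omega})^{*} f,
\end{equation}
which is the claimed equality.

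The only mildly delicate point to verify is a mapping-property/regularity check: one must ensure that $C_N f$ and $C_D f$ are sufficiently regular up to the boundary so that $u|_{\partial\Omega}$ lies in a Sobolev space on which $(\DN_{\Omega}^{\partial\Omega})^{-1}$ acts unambiguously and on which $\PI_{\Omega}^{\partial\Omega}$ provides the Helmholtz extension. Since $f \in C_c^{\infty}(\Omega^{\circ})$ is smooth with compact support away from $\partial\Omega$, elliptic regularity for the boundary problems defining $C_D$ and $C_N$ gives $C_D f, C_N f \in C^{\infty}(\overline{\Omega})$, so both traces and normal derivatives are classical smooth functions and everything lies in $C^{\infty}(\partial\Omega)$. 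No main obstacle is anticipated; the proof is essentially a two-line computation once the decomposition $C_N f = C_D f + u$ with $u$ a Helmholtz-harmonic function is identified.
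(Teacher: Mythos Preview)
Your proof is correct and takes essentially the same approach as the paper: both identify $C_N f - C_D f$ as the Helmholtz-harmonic function whose Neumann data is $-\partial_{\nu}(C_D f)|_{\partial\Omega}=(\PI_{\Omega}^{\partial\Omega})^* f$, and then invoke the definition of $\DN_{\Omega}^{\partial\Omega}$ to recover the Dirichlet data. The only difference is the direction of the argument---the paper starts from the left-hand side, defines $w:=\PI_{\Omega}^{\partial\Omega}(\DN_{\Omega}^{\partial\Omega})^{-1}(-\partial_{\nu}(C_D f)|_{\partial\Omega})$, and checks that $C_D f + w$ solves the Neumann problem, whereas you start from $u:=C_N f - C_D f$ and work backwards---but the content is identical.
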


\begin{proof}
  Pick~$f\in C_c^{\infty}(\Omega^{\circ})$, let~$u:=C_D f$ and put~$w:=\PI_{\Omega}^{\partial\Omega}(\DN_{\Omega}^{\partial\Omega})^{-1}(-\partial_{\nu} u|_{\partial\Omega})$. Then, by the definition of~$\DN_{\Omega}^{\partial\Omega}$, $w$ solves the following boundary value problem:
  \begin{equation}
    \left\{
    \begin{array}{ll}
      (\Delta+m^2)w=0,&\textrm{in }\Omega,\\
      \partial_{\nu}w|_{\partial\Omega}=-\partial_{\nu}u|_{\partial\Omega},&\textrm{on }\partial\Omega.
    \end{array}
    \right.
    \label{}
  \end{equation}
  However,
  \begin{equation}
    \left\{
    \begin{array}{ll}
      (\Delta+m^2)u=f,&\textrm{in }\Omega,\\
     u|_{\partial\Omega}=0,&\textrm{on }\partial\Omega,
    \end{array}
    \right.\quad\textrm{implying}\quad
    \left\{
    \begin{array}{ll}
      (\Delta+m^2)(u+w)=f,&\textrm{in }\Omega,\\
      \partial_{\nu}(u+w)|_{\partial\Omega}=0,&\textrm{on }\partial\Omega,
    \end{array}
    \right.
    \label{}
  \end{equation}
  namely~$u+w=C_N f$, that is,~$w=C_N f-C_D f$. We obtain the result.
\end{proof}

Now the positivity of~$\DN_{\Omega}^{\partial\Omega}$ (lemma \ref{lemm-dn-prop}) implies~$C_N\ge C_D$, namely~$\ank{f,(C_N-C_D)f}_{L^2(\Omega)}\ge 0$ for all~$f\in C_c^{\infty}(\Omega^{\circ})$. One step further,

\begin{corr}\label{cor-rp-op-ineq}
  We have~$C_N\ge C_D$ as operators on~$L^2(\Omega)$. \hfill~$\Box$
\end{corr}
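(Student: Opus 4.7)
The plan is to read off the inequality directly from the operator identity established in the preceding lemma, and then upgrade it from the dense subspace $C_c^\infty(\Omega^\circ)$ to all of $L^2(\Omega)$.

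First, I would take $f \in C_c^\infty(\Omega^\circ)$ and compute
\begin{equation*}
  \bank{f,(C_N - C_D)f}_{L^2(\Omega)} = \bank{f,\PI_{\Omega}^{\partial\Omega}(\DN_{\Omega}^{\partial\Omega})^{-1}(\PI_{\Omega}^{\partial\Omega})^* f}_{L^2(\Omega)} = \bank{(\PI_{\Omega}^{\partial\Omega})^* f,(\DN_{\Omega}^{\partial\Omega})^{-1}(\PI_{\Omega}^{\partial\Omega})^* f}_{L^2(\partial\Omega)},
\end{equation*}
using the adjointness built into the notation $(\PI_{\Omega}^{\partial\Omega})^* = -\partial_\nu C_D(-)|_{\partial\Omega}$ (lemma \ref{lemm-adj-PI-bdy}). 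By part (ii) of lemma \ref{lemm-dn-prop}, $\DN_{\Omega}^{\partial\Omega}$ is strictly positive and $L^2$-invertible, so $(\DN_{\Omega}^{\partial\Omega})^{-1}$ is positive on $L^2(\partial\Omega)$. Hence the right-hand side is $\ge 0$, giving $\ank{f,(C_N-C_D)f}_{L^2} \ge 0$ for every $f \in C_c^\infty(\Omega^\circ)$.

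Second, I would extend this positivity to all of $L^2(\Omega)$ by density. Both $C_D$ and $C_N$ are bounded self-adjoint operators on $L^2(\Omega)$ (indeed they are compact), so $C_N - C_D$ is bounded and self-adjoint. Since $C_c^\infty(\Omega^\circ)$ is dense in $L^2(\Omega)$, for any $u \in L^2(\Omega)$ we can pick $f_n \in C_c^\infty(\Omega^\circ)$ with $f_n \to u$ in $L^2$, and continuity gives
\begin{equation*}
  \ank{u,(C_N - C_D)u}_{L^2(\Omega)} = \lim_{n\to\infty}\ank{f_n,(C_N - C_D)f_n}_{L^2(\Omega)} \ge 0,
\end{equation*}
which is the statement $C_N \ge C_D$.

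There is no real obstacle here since the previous lemma has already done the structural work of identifying $C_N - C_D$ with a manifestly positive operator. The only point worth flagging is that one should take $f$ in $C_c^\infty(\Omega^\circ)$ (rather than in $C^\infty(\bar\Omega)$) so that $(\PI_{\Omega}^{\partial\Omega})^* f = -\partial_\nu(C_D f)|_{\partial\Omega}$ is unambiguously defined via the Dirichlet Green operator, consistent with the domain on which the lemma's identity was proved.
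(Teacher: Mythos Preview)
Your proof is correct and follows the same approach as the paper: the paper simply remarks that positivity of $\DN_{\Omega}^{\partial\Omega}$ (hence of its inverse) combined with the preceding lemma gives $\ank{f,(C_N-C_D)f}_{L^2(\Omega)}\ge 0$ for $f\in C_c^{\infty}(\Omega^{\circ})$, and then extends to $L^2(\Omega)$ by density. You have spelled out the density step more carefully, but the argument is identical in substance.
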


It is emphasized in Jaffe and Ritter \cite{JR} section 3 that~$C_N\ge C_D$ is the crucial relation that leads to the so-called \textit{reflection positivity} (RP) of the GFF. The geometric set-up is as follows. Let now~$\partial\Omega\subset\Omega$ be totally geodesic,~$\Omega^*$ a copy of~$\Omega$ (reversing the coorientation of~$\partial\Omega$),~$|\Omega|^2:=\Omega^*\cup_{\partial\Omega}\Omega$, the \textsf{isometric double} which is a closed Riemannian manifold, and~$\Theta:|\Omega|^2\lto |\Omega|^2$ an isometric involution fixing~$\partial\Omega$, such that~$\Theta(\Omega)=\Omega^*$ and~$\Theta(\Omega^*)=\Omega$.

\begin{def7}\label{rem-real-tun-geo}
  Such~$\Omega$ is named in Gibbons \cite{Gibbons} (in the 4-dimensional case) as a \textsf{real tunnelling geometry} (see \cite{Gibbons} section 4). The isometric double $|\Omega|^2$ and the involution~$\Theta$ exist, the latter being a \textsf{reflection}, \textit{dissecting} $|\Omega|^2$ into $\Omega$ and $\Omega^*$, its fixed point set being~$\partial\Omega$ (see also Ritter \cite{Ritter} section 2.1.1).
\end{def7}

The action of~$\Theta$ extends in the usual way to~$C^{\infty}(|\Omega|^2)$ and~$\mathcal{D}'(|\Omega|^2)$ by pulling-back. This set-up brings in another resolvant operator which is
\begin{equation}
  C\defeq (\Delta_{|\Omega|^2}+m^2)^{-1}.
  \label{}
\end{equation}
Denote also by~$\Pi_+:L^2(|\Omega|^2)\lto L^2(\Omega)$ the orthogonal projection. Then we have

\begin{lemm}[\cite{JR} lemma 3]
  Let~$|\Omega|^2$,~$\Theta$,~$\Pi_+$ and~$C$ be as above. Then
  \begin{equation}
    \Pi_+ \Theta C=\frac{1}{2}(C_N-C_D)
    \label{}
  \end{equation}
  on~$C_c^{\infty}(\Omega^{\circ})$ and~$L^2(\Omega)$.\hfill~$\Box$
\end{lemm}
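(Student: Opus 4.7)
The plan is to exploit the reflection symmetry $\Theta$ to decompose the global Green function $Cf$ into even and odd parts, which will turn out to be precisely half of $C_N f$ and $C_D f$, respectively.

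First I would identify the input: given $f\in C_c^{\infty}(\Omega^{\circ})$, extend it by zero to a smooth compactly supported function on $|\Omega|^2$, still denoted $f$. Since $\Theta:|\Omega|^2\to|\Omega|^2$ is an isometry, the operator $\Delta_{|\Omega|^2}+m^2$ commutes with pullback by $\Theta$, hence so does its inverse: $\Theta^*\,C=C\,\Theta^*$. Moreover, since $\Theta(\Omega)=\Omega^*$ and $\supp f\subset\Omega^{\circ}$, the pulled-back source $\Theta^* f$ is supported in $(\Omega^*)^{\circ}$.

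Next, set $u:=Cf$, which lies in $C^{\infty}(|\Omega|^2)$ since $f$ is smooth and compactly supported away from $\partial\Omega$, and decompose
\begin{equation*}
u=u_s+u_a,\qquad u_s:=\tfrac{1}{2}(u+\Theta^* u),\qquad u_a:=\tfrac{1}{2}(u-\Theta^* u).
\end{equation*}
Then $u_s$ is $\Theta$-invariant and $u_a$ is $\Theta$-antiinvariant. Because $\partial\Omega$ is the (totally geodesic) fixed-point set of $\Theta$, $\Theta$-invariance forces $\partial_{\nu}u_s|_{\partial\Omega}=0$, while $\Theta$-antiinvariance forces $u_a|_{\partial\Omega}=0$. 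Applying $(\Delta+m^2)$ and using $\Theta^* f$ supported in $\Omega^*$ gives, on $\Omega$,
\begin{equation*}
(\Delta+m^2)(u_s|_\Omega)=\tfrac{1}{2}f,\qquad (\Delta+m^2)(u_a|_\Omega)=\tfrac{1}{2}f.
\end{equation*}
By uniqueness of the Dirichlet and Neumann boundary value problems for $\Delta+m^2$ (using $m>0$), these identify
\begin{equation*}
u_s|_\Omega=\tfrac{1}{2}C_N f,\qquad u_a|_\Omega=\tfrac{1}{2}C_D f.
\end{equation*}

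To conclude, I compute $\Pi_+\Theta Cf=(\Theta^* u)|_\Omega=(u_s-u_a)|_\Omega=\tfrac{1}{2}(C_N-C_D)f$, which is the claimed identity on $C_c^{\infty}(\Omega^{\circ})$. Density of $C_c^{\infty}(\Omega^{\circ})$ in $L^2(\Omega)$, together with the boundedness of $C$, $C_D$, $C_N$ and $\Pi_+$, $\Theta$ on their respective $L^2$ spaces, then extends the equality to all of $L^2(\Omega)$.

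The only slightly delicate point will be justifying that $\partial_{\nu}u_s|_{\partial\Omega}=0$ and $u_a|_{\partial\Omega}=0$ rigorously; for smooth $u$ on $|\Omega|^2$ this is immediate (a $\Theta$-invariant smooth function has vanishing derivative along $\nu$ at the fixed locus, and an antiinvariant one vanishes there), which is why one starts with $f\in C_c^{\infty}(\Omega^{\circ})$ so that $Cf$ is globally smooth. The remaining steps are essentially routine uniqueness arguments for elliptic boundary value problems.
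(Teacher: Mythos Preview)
Your proof is correct and is the standard method-of-images argument for this identity. Note that the paper does not actually give its own proof of this lemma: it is stated with a $\Box$ and attributed to \cite{JR}, lemma 3, so there is no ``paper's proof'' to compare against beyond observing that your argument is precisely the classical one.
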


In summary,

\begin{corr}
  [equivalent formulations of RP] \label{cor-equiv-form-rp} In the situation as above, we have
  \begin{equation}
    2\bank{f,\Theta Cf}_{L^2(|\Omega|^2)}=\bank{f,(C_N-C_D)f}_{L^2(\Omega)}=\bank{(\PI_{\Omega}^{\partial\Omega})^*f,(\DN_{\Omega}^{\partial\Omega})^{-1}(\PI_{\Omega}^{\partial\Omega})^* f}_{L^2(\partial\Omega)}
    \label{eqn-rp-equiv-form}
  \end{equation}
  for all~$f\in C_c^{\infty}(\Omega^{\circ})$, and all of the above quantities are nonnegative. \hfill~$\Box$
\end{corr}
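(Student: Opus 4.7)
The plan is to observe that this corollary is essentially the assembly of the two preceding lemmas together with the positivity statement for $\DN_{\Omega}^{\partial\Omega}$ from lemma \ref{lemm-dn-prop}. Nothing genuinely new needs to be proven; rather, one chains together identities that have already been established, so I expect no real obstacle beyond bookkeeping of domains.

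For the \emph{second} equality, I would start from the preceding lemma, namely $\PI_{\Omega}^{\partial\Omega}(\DN_{\Omega}^{\partial\Omega})^{-1}(\PI_{\Omega}^{\partial\Omega})^{*}=C_N-C_D$ on $C_c^{\infty}(\Omega^{\circ})$, and plug it into $\ank{f,(C_N-C_D)f}_{L^2(\Omega)}$. Moving the leftmost factor $\PI_{\Omega}^{\partial\Omega}$ across the $L^2(\Omega)$-pairing via the adjoint relation of lemma \ref{lemm-adj-PI-bdy} converts that inner product into an $L^2(\partial\Omega)$-pairing against $(\PI_{\Omega}^{\partial\Omega})^{*}f$, yielding exactly the third expression of (\ref{eqn-rp-equiv-form}).

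For the \emph{first} equality, I would use the Jaffe-Ritter lemma stated just above, which gives $\Pi_{+}\Theta C=\frac{1}{2}(C_N-C_D)$ on $C_c^{\infty}(\Omega^{\circ})$. Because $f\in C_c^{\infty}(\Omega^{\circ})$ is supported strictly inside $\Omega$, one has $\Pi_{+}f=f$, and therefore
\[
  \ank{f,\Theta Cf}_{L^2(|\Omega|^2)}=\ank{\Pi_{+}f,\Theta Cf}_{L^2(|\Omega|^2)}=\ank{f,\Pi_{+}\Theta Cf}_{L^2(\Omega)}=\tfrac{1}{2}\ank{f,(C_N-C_D)f}_{L^2(\Omega)},
\]
where in the second step I use that $\Pi_{+}$ is the orthogonal projection onto $L^2(\Omega)\hookrightarrow L^2(|\Omega|^2)$. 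Multiplying by $2$ gives the first equality in (\ref{eqn-rp-equiv-form}).

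Finally, for nonnegativity, the cleanest route is the third expression: by (ii) of lemma \ref{lemm-dn-prop} the operator $\DN_{\Omega}^{\partial\Omega}$ is strictly positive and $L^2$-invertible, so $(\DN_{\Omega}^{\partial\Omega})^{-1}$ is a positive self-adjoint operator on $L^2(\partial\Omega)$, which makes $\bank{(\PI_{\Omega}^{\partial\Omega})^{*}f,(\DN_{\Omega}^{\partial\Omega})^{-1}(\PI_{\Omega}^{\partial\Omega})^{*}f}_{L^2(\partial\Omega)}\ge 0$ manifest. The two other expressions in (\ref{eqn-rp-equiv-form}) then inherit nonnegativity from the equalities just established. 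The only small care needed throughout is that $f$ lies in $C_c^{\infty}(\Omega^{\circ})$ so that both the Jaffe-Ritter identity and the adjoint formula of lemma \ref{lemm-adj-PI-bdy} are directly applicable on the nose.
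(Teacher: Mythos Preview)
Your proposal is correct and matches the paper's own approach: the corollary is marked with a $\Box$ and introduced by ``In summary,'' precisely because it is the direct concatenation of the two preceding lemmas (the operator identity $\PI_{\Omega}^{\partial\Omega}(\DN_{\Omega}^{\partial\Omega})^{-1}(\PI_{\Omega}^{\partial\Omega})^*=C_N-C_D$ and the Jaffe--Ritter relation $\Pi_+\Theta C=\tfrac{1}{2}(C_N-C_D)$) together with the positivity of $\DN_{\Omega}^{\partial\Omega}$ from lemma~\ref{lemm-dn-prop}(ii). Your handling of the projection $\Pi_+$ and the adjoint passage is exactly the intended bookkeeping.
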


\begin{def7}
  Note that the 3 quantities in (\ref{eqn-rp-equiv-form}) make sense respectively on the spaces~$L^2(|\Omega|^2)$, $L^2(\Omega)$ and~$L^2(\partial\Omega)$. While the first quantity is the original view of RP, the second quantity offers a \textit{one-sided} view and the third provides a view \textit{within the boundary}~$\partial\Omega$. Nevertheless, the map~$\DN_{\Omega}^{\partial\Omega}$ reflects the geometry of the bulk, see for example Paternain, Salo and Uhlmann \cite{PSU} section 11.5.
\end{def7}

\section{Markov Property and Consequences}\label{sec-markov-main}

\noindent Materials in sections \ref{sec-sobo-decomp}, \ref{sec-stoc-decomp-gff} are largely classical with an excellent source being Dimock \cite{Dimock} which we follow roughly. See also Simon \cite{Sim2} section III.3 and see Powell and Werner \cite{PW} section 4.2 for a probabilistic point of view.

  \subsection{Decompositions of Sobolev Spaces}\label{sec-sobo-decomp}

   \noindent Let~$(M,g)$ be a closed Riemannian manifold. Recall from Appendix \ref{sec-app-sobo} the definitions of the spaces~$W^s_A(M)$ and~$W^s_U(M)$ for~$A\subset M$ closed and~$U\subset M$ open. Recall also (lemma \ref{lemm-sobo-inner-prod}) that we have the isometric isomorphism~$(\Delta+m^2):W^1(M)\xlongrightarrow{\sim}W^{-1}(M)$ as we endow~$W^1(M)$ and~$W^{-1}(M)$ respectively with the inner products~$\sank{-,(\Delta+m^2)-}_{L^2}$ and~$\sank{-,(\Delta+m^2)^{-1}-}_{L^2}$.

\begin{lemm}\label{lemm-sobo-decomp}
  Let~$A\subset M$ be a closed set. Then~$W^{1}(M)$ and~$W^{-1}(M)$ decompose orthogonally as
  \begin{equation}
  \begin{tikzcd}
    W^1(M) \ar[d,"(\Delta+m^2)"',"\sim"] \ar[rr, "p_{M\setminus A}^{\perp}" near start, bend left=60] \ar[rrrr, "p_{M\setminus A}" near end, bend left=30] &[-30pt] = &[-30pt] W^1_{M\setminus A}(M)^{\perp} \ar[d,"(\Delta+m^2)"',"\sim"] &[-30pt] \oplus &[-30pt] W^1_{M\setminus A}(M)\ar[d,"(\Delta+m^2)"',"\sim"]\\ [+20pt]
    W^{-1}(M) \ar[rr, "P_{ A}"' near start, bend right=60] \ar[rrrr, "P_{A}^{\perp}"' near end, bend right=30] &[-30pt] = &[-30pt] W^{-1}_{ A}(M) & \oplus & W^{-1}_{ A}(M)^{\perp} ,
  \end{tikzcd}
  \label{eqn-sobo-decomp}
\end{equation}
which is \textsf{preserved} by the isometric isomorphism~$\Delta+m^2$. In particular, we have
\begin{equation}
  P_{ A}(\Delta+m^2)=(\Delta+m^2)p_{M\setminus A}^{\perp},\quad\textrm{and}\quad P_{ A}^{\perp}(\Delta+m^2)=(\Delta+m^2)p_{M\setminus A},
  \label{eqn-sobo-proj-comm}
\end{equation}
where~$p_{M\setminus A}^{\perp}$,~$p_{M\setminus A}$,~$P_{ A}$, and~$P_{ A}^{\perp}$ are the corresponding orthogonal projections as indicated in the diagram.
\end{lemm}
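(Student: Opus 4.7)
The strategy is to reduce both decompositions in (\ref{eqn-sobo-decomp}) to a single duality statement relating the \emph{vanishing of $(\Delta+m^2)u$ off $A$} to \emph{orthogonality of $u$ to $W^1_{M\setminus A}(M)$}, and then transport everything via the isometric isomorphism $(\Delta+m^2):W^1(M)\xrightarrow{\sim}W^{-1}(M)$ established in Appendix \ref{sec-app-sobo}. Since $(\Delta+m^2)$ is already known to be an isometric bijection between the two spaces, once I show that it sends one summand of the top decomposition \emph{into} the prescribed summand of the bottom decomposition (and the other into the other), equality of images follows automatically from bijectivity, and the commutation relations (\ref{eqn-sobo-proj-comm}) are then just bookkeeping about where the projections land.

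First I would recall (from the appendix) that $W^1_{M\setminus A}(M)$ is by definition the $W^1$-closure of $C_c^\infty(M\setminus A)$ and that $W^{-1}_A(M)$ coincides with the closed subspace $\{v\in W^{-1}(M):\operatorname{supp}(v)\subset A\}$; in particular both are closed subspaces, so the orthogonal decompositions on each line of (\ref{eqn-sobo-decomp}) exist on general Hilbert-space grounds. The pivotal computation is then the following chain of equivalences for $u\in W^1(M)$: $u\in W^1_{M\setminus A}(M)^{\perp}$ iff $\ank{u,(\Delta+m^2)f}_{L^2}=0$ for every $f\in C_c^\infty(M\setminus A)$, iff $\ank{(\Delta+m^2)u,f}=0$ for every such $f$, iff $\operatorname{supp}((\Delta+m^2)u)\subset A$, iff $(\Delta+m^2)u\in W^{-1}_A(M)$. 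This uses only the definition of the distributional action of $\Delta+m^2$ and the standard characterization of $W^{-1}_A(M)$.

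For the other summand I would argue by approximation: if $u\in W^1_{M\setminus A}(M)$, pick $u_n\in C_c^\infty(M\setminus A)$ with $u_n\to u$ in $W^1$; then for any $v\in W^{-1}_A(M)$ the pairing $\ank{v,u_n}$ vanishes because $\operatorname{supp}(v)\subset A$ and $\operatorname{supp}(u_n)\subset M\setminus A$, and passing to the limit gives $\ank{v,u}=0$, which rewritten as $\bank{(\Delta+m^2)u,(\Delta+m^2)^{-1}v}_{L^2}=0$ is exactly the $W^{-1}$-orthogonality $(\Delta+m^2)u\perp W^{-1}_A(M)$. Combining the two containments, $(\Delta+m^2)$ maps $W^1_{M\setminus A}(M)^{\perp}$ into $W^{-1}_A(M)$ and $W^1_{M\setminus A}(M)$ into $W^{-1}_A(M)^{\perp}$; since the map is bijective between $W^1(M)$ and $W^{-1}(M)$ and the targets are complementary, both inclusions must be equalities. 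Finally, the two identities in (\ref{eqn-sobo-proj-comm}) are read off directly from the commuting square: decomposing $u=p_{M\setminus A}^{\perp}u+p_{M\setminus A}u$, applying $(\Delta+m^2)$, and matching the two halves with $P_Au+P_A^\perp u$ using what we just proved.

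The only nontrivial point I anticipate is being careful about the \emph{two different} inner products making the diagram commute isometrically — specifically that the dual characterization $W^{-1}_A(M)=\{v:\operatorname{supp}(v)\subset A\}$ is indeed a closed subspace for the inner product $\ank{-,(\Delta+m^2)^{-1}-}_{L^2}$ (equivalent to the standard $W^{-1}$ norm), and that the density of $C_c^\infty(M\setminus A)$ in $W^1_{M\setminus A}(M)$ suffices to carry the support argument through in the limiting pairing. Both are standard but should be invoked explicitly by reference to the appendix, rather than reproved in the body.
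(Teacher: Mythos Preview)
Your proposal is correct and follows essentially the same duality argument as the paper: the paper shows $(\Delta+m^2)(W^1_{M\setminus A}(M))=W^{-1}_A(M)^\perp$ by invoking the annihilator identity of lemma~\ref{lemm-sobo-dual} together with the relation $\sank{(\Delta+m^2)u,v}_{W^{-1}}=\ank{u,v}_{L^2}$, whereas you establish the equivalent statement $(\Delta+m^2)(W^1_{M\setminus A}(M)^\perp)=W^{-1}_A(M)$ by spelling out the support/annihilator equivalence directly. Note that your second paragraph (the approximation argument for the other summand) is in fact redundant, since your ``pivotal'' iff chain already gives a bijective correspondence and the complementary identity then follows automatically from $(\Delta+m^2)$ being an isometric isomorphism.
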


\begin{proof}
  We just need to show that the image of~$W^1_{M\setminus A}(M)$ under~$\Delta+m^2$ is precisely~$W^{-1}_{ A}(M)^{\perp}$. Indeed, by our definition of the inner products we have
  \begin{equation}
    \sank{(\Delta+m^2)u,v}_{W^{-1}(M)}=\ank{u,v}_{L^2(M)}
    \label{eqn-equal-sobo-distri-pair}
  \end{equation}
  for all~$u\in W^1(M)$,~$v\in W^{-1}(M)$, where the RHS denotes the duality (distributional) pairing. However, the \textsf{annihilator} of~$W^{-1}_{ A}(M)$ under the duality pairing is exactly~$W^1_{M\setminus A}(M)$, see lemma \ref{lemm-sobo-dual}. This translates as
  \begin{equation}
    \big\{u\in W^1(M)~\big|~\sank{(\Delta+m^2)u,v}_{W^{-1}(M)}=0\textrm{ for all }v\in W^{-1}_{ A}(M)\big\}=W^1_{M\setminus A}(M),    \label{}
  \end{equation}
  which is what we desired.
\end{proof}

\begin{corr}
  (\ref{eqn-sobo-decomp}) and (\ref{eqn-sobo-proj-comm}) holds for~$A=\Sigma\subset M$ an embedded closed hypersurface (codimension one submanifold). \hfill~$\Box$
\end{corr}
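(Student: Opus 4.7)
The statement is that the decomposition (\ref{eqn-sobo-decomp}) and the commutation identities (\ref{eqn-sobo-proj-comm}) persist when the closed set $A$ is specialized to an embedded closed hypersurface $\Sigma \subset M$. My plan is to simply observe that the hypothesis of Lemma \ref{lemm-sobo-decomp} is already satisfied in this case, since a closed embedded submanifold of a closed manifold is automatically a closed subset in the topological sense. Once this observation is in place, the entire conclusion of the lemma transfers verbatim with $A = \Sigma$.

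Concretely, the only thing to verify is that the two ingredients used in the proof of Lemma \ref{lemm-sobo-decomp} --- namely (i) the identification (\ref{eqn-equal-sobo-distri-pair}) of the $W^{-1}$-inner product with the duality pairing, and (ii) the annihilator statement from Lemma \ref{lemm-sobo-dual} --- continue to hold for $A = \Sigma$. Ingredient (i) does not involve $A$ at all. For ingredient (ii), one needs $W^1_{M\setminus \Sigma}(M)$ to be the annihilator of $W^{-1}_\Sigma(M)$ under the distributional pairing; this is precisely the content of Lemma \ref{lemm-sobo-dual} applied to the closed set $\Sigma$ and its open complement $M\setminus \Sigma$. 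Therefore the proof of Lemma \ref{lemm-sobo-decomp} goes through word for word.

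The only point perhaps worth emphasizing is that the decomposition is \emph{genuinely nontrivial} in the hypersurface case: although $\Sigma$ has measure zero so that $W^{-1}_\Sigma(M) \cap L^2(M) = \{0\}$, the space $W^{-1}_\Sigma(M)$ is nonzero --- it contains precisely the ``sharp-time localized'' distributions of the form $j_\Sigma f = f\otimes \delta_\Sigma$ for $f \in W^{-1/2}(\Sigma)$, as already encountered in Lemma \ref{lemm-DN-trick}. So the content of the corollary is that decomposing a Sobolev distribution with respect to $\Sigma$ isolates precisely its ``jump part'' across $\Sigma$, and the isometry $\Delta+m^2$ intertwines this with the $W^1$-level decomposition into functions vanishing on $\Sigma$ versus their $(\Delta+m^2)$-harmonic complements. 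No obstacle is expected: the argument is essentially a one-line specialization of the lemma.
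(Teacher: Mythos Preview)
Your proposal is correct and matches the paper's approach: the corollary is stated with only a $\Box$, indicating it follows immediately from Lemma~\ref{lemm-sobo-decomp} by specializing $A=\Sigma$, which is exactly what you do. Your additional remarks about the nontriviality of $W^{-1}_\Sigma(M)$ are helpful context but go beyond what the paper records.
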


Indeed, the crucial relation (\ref{eqn-equal-sobo-distri-pair}) together with (\ref{eqn-sobo-proj-comm}) gives

\begin{corr}
  [adjoints] \label{cor-adj-sobo-proj} We have
  \begin{equation}
    \bank{p_{M\setminus A}^{\perp}u,v}_{L^2}=\bank{u,P_A v}_{L^2},\quad\textrm{and}\quad \bank{p_{M\setminus A} u,v}_{L^2}=\bank{u,P_A^{\perp} v}_{L^2},
    \label{}
  \end{equation}
  for~$u\in W^1(M)$ and~$v\in W^{-1}(M)$. \hfill~$\Box$
\end{corr}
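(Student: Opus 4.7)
The plan is to exploit two ingredients already established in lemma~\ref{lemm-sobo-decomp}: first, the fundamental identity $\sank{(\Delta+m^2)u,v}_{W^{-1}(M)}=\ank{u,v}_{L^2(M)}$ from equation~(\ref{eqn-equal-sobo-distri-pair}), which provides a dictionary between the $L^2$-pairing and the $W^{-1}$ inner product; and second, the intertwining relations $P_A(\Delta+m^2)=(\Delta+m^2)p_{M\setminus A}^{\perp}$ and $P_A^{\perp}(\Delta+m^2)=(\Delta+m^2)p_{M\setminus A}$ from equation~(\ref{eqn-sobo-proj-comm}). The idea is that once the $L^2$-pairing is rewritten as a $W^{-1}$ inner product, the projections $P_A$ and $P_A^{\perp}$ act self-adjointly simply because they are orthogonal projections on the Hilbert space $W^{-1}(M)$.

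Concretely, for the first identity I would chain the following equalities for $u\in W^1(M)$ and $v\in W^{-1}(M)$:
\begin{equation*}
  \bank{p_{M\setminus A}^{\perp}u,v}_{L^2}
  \;=\; \bank{(\Delta+m^2)p_{M\setminus A}^{\perp}u,v}_{W^{-1}}
  \;=\; \bank{P_A(\Delta+m^2)u,v}_{W^{-1}}
  \;=\; \bank{(\Delta+m^2)u,P_A v}_{W^{-1}}
  \;=\; \bank{u,P_A v}_{L^2},
\end{equation*}
where the first and last equalities use (\ref{eqn-equal-sobo-distri-pair}) (applicable since $p_{M\setminus A}^{\perp}u$ and $u$ both lie in $W^1(M)$, while $v$ and $P_A v$ both lie in $W^{-1}(M)$), the second equality uses the intertwining relation (\ref{eqn-sobo-proj-comm}), and the middle equality uses that $P_A$ is an orthogonal projection (hence self-adjoint) on the Hilbert space $W^{-1}(M)$.

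The second identity is proved by literally the same four-step chain with $p_{M\setminus A}^{\perp}$ replaced by $p_{M\setminus A}$ and $P_A$ replaced by $P_A^{\perp}$, using the companion intertwining relation in (\ref{eqn-sobo-proj-comm}). There is essentially no obstacle: the corollary is a purely formal consequence of lemma~\ref{lemm-sobo-decomp}, reflecting the fact that the diagram in (\ref{eqn-sobo-decomp}) is an isometric isomorphism of Hilbert spaces that intertwines the two decompositions, so the adjoint of a projection on one side matches the projection on the other side.
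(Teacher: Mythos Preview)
Your proof is correct and follows exactly the approach the paper intends: the paper simply states that the corollary follows from the crucial relation~(\ref{eqn-equal-sobo-distri-pair}) together with~(\ref{eqn-sobo-proj-comm}), and your four-step chain is precisely the spelled-out version of that deduction.
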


\begin{def7}
\label{rem-sobo-proj=dist-res}
Denote~$U:=M\setminus A$. Recall from (\ref{eqn-sobo-open-set-ortho}) that~$W^{-1}(U):=W^{-1}_{ A}(M)^{\perp}$. In fact, corollary \ref{cor-adj-sobo-proj} shows~$P_{M\setminus U}^{\perp}$ coincides with the restriction map~$\rho_{M|U}:\mathcal{D}'(M)\lto \mathcal{D}'(U)$, since if already~$u\in W^1_U(M)$ then~$p_U u=i_*u$,~$i_*:W^1_U(M)\lto W^1(M)$ being inclusion.
\end{def7}

We are thus lead naturally to the following and eventually corollary \ref{cor-sobo-decomp-boundary}.

\begin{corr}\label{cor-sobo-decomp-open}
  Let~$U\subset M$ be an open set and~$F\subset U$ be a closed set. Then~$W^{1}_U(M)$ and~$W^{-1}(U)$ decompose further as
  \begin{equation}
  \begin{tikzcd}
    W^1_U(M) \ar[d,"(\Delta+m^2)"',"\sim"] &[-30pt] = &[-30pt] W^1_{U\setminus F}(M)^{\perp} \ar[d,"(\Delta+m^2)"',"\sim"] &[-30pt] \oplus &[-30pt] W^1_{U\setminus F}(M)\ar[d,"(\Delta+m^2)"',"\sim"]\\ [+20pt]
    W^{-1}(U)&[-30pt] = &[-30pt] P_{M\setminus U}^{\perp}(W^{-1}_{ F}(M)) & \oplus & W^{-1}_{ F}(M)^{\perp} ,
  \end{tikzcd}
\end{equation}
where the orthogonal complements are taken respectively \textsf{inside}~$W^1_U(M)$ and~$W^{-1}(U)$, which is \textsf{preserved} by the isometric isomorphism~$\Delta+m^2$. Also, we have commutation relations similar to (\ref{eqn-sobo-proj-comm}).
\end{corr}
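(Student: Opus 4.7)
The plan is to apply Lemma \ref{lemm-sobo-decomp} to the closed set $A := (M \setminus U) \cup F$, whose complement in $M$ is precisely the open set $U \setminus F$. The lemma immediately produces
\[
W^1(M) = W^1_{U\setminus F}(M)^\perp \oplus W^1_{U\setminus F}(M), \qquad W^{-1}(M) = W^{-1}_{(M\setminus U)\cup F}(M) \oplus W^{-1}_{(M\setminus U)\cup F}(M)^\perp,
\]
both preserved by $\Delta + m^2$ with commutation relations of the form (\ref{eqn-sobo-proj-comm}). I then restrict these decompositions to $W^1_U(M)$ and to $W^{-1}(U) = W^{-1}_{M\setminus U}(M)^\perp$ respectively.

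On the $W^1$ side this is immediate: $W^1_{U\setminus F}(M) \subset W^1_U(M)$ because $U \setminus F \subset U$, so intersecting with $W^1_U(M)$ produces the desired orthogonal splitting, with the symbol $W^1_{U \setminus F}(M)^\perp$ in the statement read as the orthogonal complement taken inside $W^1_U(M)$. The key preliminary step on the $W^{-1}$ side is the ``partition'' identity
\[
W^{-1}_{(M\setminus U) \cup F}(M) = W^{-1}_F(M) + W^{-1}_{M\setminus U}(M),
\]
which I would establish by a cut-off argument: choose $\chi \in C_c^\infty(U)$ with $\chi \equiv 1$ on an open neighbourhood of $F$. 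For any distribution $v$ with actual support in $(M \setminus U) \cup F$, a direct support analysis yields $\chi v \in W^{-1}_F(M)$ and $(1-\chi) v \in W^{-1}_{M\setminus U}(M)$, and since multiplication by smooth functions is continuous on $W^{-1}(M)$ this decomposition extends to all of $W^{-1}_{(M\setminus U)\cup F}(M)$.

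Granted this, the identity $(A+B)^\perp = A^\perp \cap B^\perp$ gives $W^{-1}_{(M\setminus U)\cup F}(M)^\perp = W^{-1}_F(M)^\perp \cap W^{-1}(U)$, which is exactly the second summand in the claim (read inside $W^{-1}(U)$). For the first summand, any $v \in W^{-1}_{(M\setminus U)\cup F}(M) \cap W^{-1}(U)$ decomposes as $v = u + w$ with $u \in W^{-1}_F(M)$ and $w \in W^{-1}_{M\setminus U}(M)$; applying $P_{M\setminus U}^\perp$ annihilates $w$, so $v = P_{M\setminus U}^\perp u \in P_{M\setminus U}^\perp\bigl(W^{-1}_F(M)\bigr)$, while the reverse containment is immediate. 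Preservation of the refined decompositions by $\Delta + m^2$ and the analogues of (\ref{eqn-sobo-proj-comm}) then transfer by restriction from Lemma \ref{lemm-sobo-decomp}, using that $(\Delta+m^2)$ is an isomorphism respecting intersections. The sole non-trivial step is the cut-off argument, which is routine modulo continuity of smooth multipliers on $W^{-1}(M)$.
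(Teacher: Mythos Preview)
Your proof is correct and follows essentially the same route as the paper's. Both apply Lemma~\ref{lemm-sobo-decomp} with $A=(M\setminus U)\cup F$ and then intersect with $W^1_U(M)$ and $W^{-1}(U)$; the paper identifies the only nontrivial point as the equality $(\Delta+m^2)\bigl(W^1_{U\setminus F}(M)^{\perp}\cap W^1_U(M)\bigr)=P_{M\setminus U}^{\perp}(W^{-1}_F(M))$ and deduces it from the same ``partition'' identity $W^{-1}_{(M\setminus U)\cup F}(M)=W^{-1}_{M\setminus U}(M)\oplus W^{-1}_F(M)$ that you establish, only the paper justifies this in one breath (``$F$ and $M\setminus U$ disjoint'') whereas you spell out the cutoff argument.
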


\begin{proof}
  The only point needing explanation is
  \begin{equation}
    (\Delta+m^2)( W^1_{U\setminus F}(M)^{\perp}\cap W_U^1(M))=P_{M\setminus U}^{\perp}(W^{-1}_{ F}(M)).
    \label{}
  \end{equation}
  Indeed, since~$\Delta+m^2$ is bijective, by (\ref{eqn-sobo-decomp}) we have
  \begin{align*}
    \textrm{LHS}&=W^{-1}_{F\cup(M\setminus U)}(M)\cap W_{M\setminus U}^{-1}(M)^{\perp}\\
    &=(W^{-1}_{M\setminus U}(M)\oplus W^{-1}_{F}(M)) \cap W_{M\setminus U}^{-1}(M)^{\perp} \tag{$F$ and~$M\setminus U$ disjoint}\\
    &=\textrm{RHS},
  \end{align*}
  where the direct sum in the second line is non-orthogonal.
\end{proof}

 Now let~$(\Omega,g)$ be compact with (smooth) boundary, smoothly and isometrically embedded in closed~$M$ (if $\Omega$ is a \textit{real tunnelling geometry}, then one choice for~$M$ is the ``isometric double'', see remark \ref{rem-real-tun-geo}) and~$\Sigma$ be embedded in~$ \Omega^{\circ}$. Then in particular

 \begin{corr}\label{cor-sobo-decomp-boundary}
  Corollary \ref{cor-sobo-decomp-open} holds for~$U=\Omega^{\circ}$ and~$\Sigma\subset \Omega^{\circ}$ an embedded closed hypersurface. \hfill~$\Box$
\end{corr}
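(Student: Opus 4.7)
The statement of Corollary \ref{cor-sobo-decomp-boundary} is essentially asserting that the general decomposition provided by Corollary \ref{cor-sobo-decomp-open} specialises to the geometric situation at hand. So the plan is simply to verify that the hypotheses of Corollary \ref{cor-sobo-decomp-open} are satisfied by the pair $(U, F) = (\Omega^{\circ}, \Sigma)$ in the ambient closed manifold $M$, and then invoke it verbatim.

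First I would observe that $\Omega^{\circ}$ is an open subset of $M$. This is immediate from the hypothesis that $\Omega$ is compact and smoothly (isometrically) embedded in $M$ with smooth boundary: the interior $\Omega^{\circ} = \Omega \setminus \partial\Omega$ is open in $M$ because $\partial\Omega$ is a smooth hypersurface and hence locally, at every boundary point, $\Omega^{\circ}$ looks like an open half-space in a chart of $M$. Thus $U = \Omega^{\circ}$ satisfies the assumption ``$U \subset M$ open'' of Corollary \ref{cor-sobo-decomp-open}.

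Second I would check that $F := \Sigma$ is a \emph{closed} subset of $U = \Omega^{\circ}$. By the conventions laid out in the preamble, an ``embedded closed hypersurface'' means a compact submanifold without boundary of codimension one; in particular $\Sigma$ is compact. Since $\Omega^{\circ}$ is Hausdorff (being an open subset of the manifold $M$), any compact subset is closed, so $\Sigma$ is closed in $\Omega^{\circ}$. Moreover, from the hypothesis $\Sigma \subset \Omega^{\circ}$ we have $\Sigma \cap \partial\Omega = \varnothing$, hence $\Sigma$ genuinely sits inside the open set $U$ as required.

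With these two observations in hand, the conclusion of Corollary \ref{cor-sobo-decomp-open} applies word-for-word to $(U, F) = (\Omega^{\circ}, \Sigma)$, yielding the decompositions
\[
W^1_{\Omega^{\circ}}(M) = W^1_{\Omega^{\circ} \setminus \Sigma}(M)^{\perp} \oplus W^1_{\Omega^{\circ} \setminus \Sigma}(M),
\qquad
W^{-1}(\Omega^{\circ}) = P_{M \setminus \Omega^{\circ}}^{\perp}(W^{-1}_{\Sigma}(M)) \oplus W^{-1}_{\Sigma}(M)^{\perp},
\]
both preserved by the isometric isomorphism $\Delta + m^2$, together with the analogue of the commutation relations (\ref{eqn-sobo-proj-comm}). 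There is no genuine obstacle here; the only substantive point worth flagging is the verification that $\Sigma$ is closed in $\Omega^{\circ}$ (i.e.\ stays away from $\partial\Omega$), which is exactly the content of the phrase ``embedded in $\Omega^{\circ}$'' in the hypothesis.
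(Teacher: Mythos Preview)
Your proposal is correct and takes essentially the same approach as the paper: the paper marks the corollary with a $\Box$ and no proof, treating it as an immediate specialisation of Corollary \ref{cor-sobo-decomp-open}, and you have simply spelled out the routine verification that $(U,F)=(\Omega^{\circ},\Sigma)$ meets its hypotheses.
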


 \subsection{The Markov Stochastic Decomposition of GFF}\label{sec-stoc-decomp-gff}
 
 \noindent Again suppose~$(M,g)$ is a closed Riemannian manifold and~$\Sigma\subset M$ an embedded closed hypersurface with induced metric. A probabilistic point of view of the results in this section is provided in Powell and Werner \cite{PW} section 4.1.

 \begin{lemm}\label{lemm-sobo-proj-trace-harmonic}
  We have
  \begin{equation}
    p_{M\setminus\Sigma}^{\perp}=\PI_M^{\Sigma}\circ \tau_{\Sigma}
    \label{}
\end{equation}
on~$W^1(M)$, where $p_{M\setminus\Sigma}^{\perp}:W^1(M)\lto W^1_{M\setminus A}(M)^{\perp}$ is the orthogonal projection as in lemma \ref{lemm-sobo-decomp}.
\end{lemm}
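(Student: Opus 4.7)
\medskip

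\noindent\textbf{Proof proposal.} The strategy is to exhibit, for any $u \in W^1(M)$, the explicit orthogonal decomposition $u = (u - \PI_M^\Sigma \tau_\Sigma u) + \PI_M^\Sigma \tau_\Sigma u$ corresponding to $W^1(M) = W^1_{M\setminus\Sigma}(M) \oplus W^1_{M\setminus\Sigma}(M)^\perp$, which will identify $\PI_M^\Sigma \tau_\Sigma$ with $p_{M\setminus\Sigma}^\perp$. To do so I will establish three facts:
\begin{enumerate}[(a)]
  \item $\PI_M^\Sigma \tau_\Sigma : W^1(M) \to W^1(M)$ is well-defined and continuous (combine (i) of lemma~\ref{lemm-trace-prop} with lemma~\ref{lemm-pi-hyp-reg});
  \item $\tau_\Sigma \PI_M^\Sigma = \mathrm{id}$ on $W^{\frac{1}{2}}(\Sigma)$, which is immediate from the definition of $\PI_M^\Sigma$ via (\ref{eqn-PI-pde-hyp});
  \item $\ker(\tau_\Sigma) = W^1_{M\setminus\Sigma}(M)$ as subspaces of $W^1(M)$, a standard fact about Sobolev traces on hypersurfaces.
\end{enumerate}

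\noindent Granted these, the main content is to verify that the image of $\PI_M^\Sigma \tau_\Sigma$ actually lands in $W^1_{M\setminus\Sigma}(M)^\perp$. For this I would pick $v \in C_c^\infty(M\setminus\Sigma)$ and compute, using lemma~\ref{lemm-sobo-inner-prod} (i.e.\ $\langle-,-\rangle_{W^1(M)} = \langle -,(\Delta+m^2)-\rangle_{L^2(M)}$) together with the Green--Stokes argument already rehearsed in the proof of lemma~\ref{lemm-pi-hyp-reg}:
\begin{equation*}
  \bigl\langle \PI_M^\Sigma \tau_\Sigma u, v \bigr\rangle_{W^1(M)}
  = \bigl\langle \PI_M^\Sigma \tau_\Sigma u, (\Delta+m^2) v \bigr\rangle_{L^2(M)}
  = \int_{M\setminus\Sigma} \bigl(\langle \nabla \PI_M^\Sigma \tau_\Sigma u,\nabla v\rangle + m^2 (\PI_M^\Sigma \tau_\Sigma u) v\bigr)\,dV_M,
\end{equation*}
where no boundary terms appear because $v$ has compact support away from $\Sigma$; and then on integrating by parts once more one picks up $(\Delta+m^2)(\PI_M^\Sigma \tau_\Sigma u) = 0$ on $M\setminus\Sigma$ (again no boundary contribution because $v$ vanishes near $\Sigma$). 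Density of $C_c^\infty(M\setminus\Sigma)$ in $W^1_{M\setminus\Sigma}(M)$ then gives the required orthogonality.

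\medskip

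\noindent The conclusion is assembled as follows. Given $u \in W^1(M)$, set $w := \PI_M^\Sigma \tau_\Sigma u$. By (b) we have $\tau_\Sigma(u - w) = \tau_\Sigma u - \tau_\Sigma u = 0$, so by (c), $u - w \in W^1_{M\setminus\Sigma}(M)$. By the orthogonality just established, $w \in W^1_{M\setminus\Sigma}(M)^\perp$. Thus $u = (u-w) + w$ is the orthogonal decomposition of $u$ with respect to (\ref{eqn-sobo-decomp}), whence $p_{M\setminus\Sigma}^\perp u = w = \PI_M^\Sigma \tau_\Sigma u$, as claimed.

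\medskip

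\noindent The principal obstacle is the verification of (b) together with the orthogonality computation, because one must carefully handle the interpretation of $(\Delta+m^2)\PI_M^\Sigma f$ as a distribution on all of $M$: it is zero on $M\setminus\Sigma$ (by the very PDE defining $\PI_M^\Sigma$) but it carries a singular contribution supported on $\Sigma$ — indeed this singular contribution is precisely the operator $j_\Sigma$ of lemma~\ref{lemm-DN-trick}. The test functions $v \in C_c^\infty(M\setminus\Sigma)$ are designed to kill exactly this singular part, which is why the computation closes cleanly without further analytic input.
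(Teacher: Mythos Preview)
Your proof is correct and takes the route dual to the paper's. The paper starts from $p_{M\setminus\Sigma}^\perp f$ and verifies that it solves the boundary value problem defining $\PI_M^\Sigma(\tau_\Sigma f)$ --- the PDE condition coming from lemma~\ref{lemm-sobo-decomp} (since $(\Delta+m^2)p_{M\setminus\Sigma}^\perp$ has range in $W^{-1}_\Sigma(M)$), and the boundary condition from $\tau_\Sigma p_{M\setminus\Sigma}=0$ --- then concludes by uniqueness of the BVP solution. You instead start from $\PI_M^\Sigma \tau_\Sigma u$ and verify directly that it furnishes the orthogonal decomposition, concluding by uniqueness of that decomposition. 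The paper's route is slightly leaner: it leverages the commutation relations already established in lemma~\ref{lemm-sobo-decomp} and needs only the easy inclusion $W^1_{M\setminus\Sigma}(M)\subset\ker\tau_\Sigma$. Your route requires the reverse inclusion $\ker\tau_\Sigma\subset W^1_{M\setminus\Sigma}(M)$ as well (your fact~(c)), which is standard but not established elsewhere in the paper; in exchange, your argument is more self-contained in that it does not appeal to the structure of lemma~\ref{lemm-sobo-decomp} beyond the bare definitions of the spaces.
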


\begin{proof}
  We show that for~$f\in C^{\infty}(M)$,~$p_{M\setminus\Sigma}^{\perp} f$ solves the boundary value problem
  \begin{equation}
    \left\{
    \begin{array}{ll}
      (\Delta+m^2)u=0,&\textrm{in }M\setminus\Sigma,\\
      u|_{\Sigma}=\tau_{\Sigma}f,&\textrm{on }\Sigma.
    \end{array}
    \right.
    \label{}
  \end{equation}
  Indeed, the first condition holds since~$\supp((\Delta+m^2)p_{M\setminus\Sigma}^{\perp} f)\subset \Sigma$ by lemma \ref{lemm-sobo-decomp}. To show the second, notice~$\tau_{\Sigma}=0$ on~$W_{M\setminus\Sigma}^1(M)$ by its definition (\ref{eqn-def-sobo-open-closure}). Hence
  \begin{equation}
    \tau_{\Sigma}p_{M\setminus\Sigma}^{\perp}=\tau_{\Sigma}(\one-p_{M\setminus\Sigma})=\tau_{\Sigma},
    \label{}
  \end{equation}
  on~$W^1(M)$. We conclude the proof.
\end{proof}

\begin{prop}\label{prop-stoc-decomp-closed}
  Suppose~$\phi\in \mathcal{D}'(M)$ follows~$\mu_{\mm{GFF}}^M$. Then there is a stochastic decomposition
  \begin{equation}
    \phi=\wh{p_{M\setminus\Sigma}^{\perp}}\phi+\wh{p_{M\setminus\Sigma}}\phi\defeq \phi_{\Sigma}+\phi_{M\setminus\Sigma}^D,
    \label{eqn-stoc-decomp-gff-closed}
  \end{equation}
  into independent random fields~$\phi_{\Sigma}$ and~$\phi_{M\setminus\Sigma}^D$. More precisely, for~$f\in C^{\infty}(M)$ we define the random variables
  \begin{equation}
    \phi_{\Sigma}(f)\defeq \phi(P_{\Sigma}f),\quad\textrm{and}\quad \phi_{M\setminus\Sigma}^D(f)\defeq \phi(P_{\Sigma}^{\perp}f).
    \label{eqn-def-markov-decomp-rv}
  \end{equation}
  Then,~$\phi_{M\setminus\Sigma}^D$ follows the law of~$\mu_{\mm{GFF}}^{M\setminus\Sigma,D}$, while~$\phi_{\Sigma}$ solves the boundary value problem
  \begin{equation}
    \left\{
    \begin{array}{ll}
      (\Delta+m^2)\phi_{\Sigma}=0,&\textrm{in }M\setminus\Sigma,\\
      \phi_{\Sigma}|_{\Sigma}=\tau_{\Sigma}\phi,&\textrm{on }\Sigma,
    \end{array}
    \right.
    \label{}
  \end{equation}
  almost surely.
\end{prop}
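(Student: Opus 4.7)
The plan is to proceed in four steps. The decomposition (\ref{eqn-stoc-decomp-gff-closed}) at the level of random variables follows immediately from the identity $\one = P_\Sigma + P_\Sigma^\perp$ on $W^{-1}(M)$ (lemma \ref{lemm-sobo-decomp} with $A=\Sigma$), applied inside the bilinear pairing $\phi(f) = \phi((P_\Sigma + P_\Sigma^\perp)f)$, so that definition (\ref{eqn-def-markov-decomp-rv}) literally gives $\phi = \phi_\Sigma + \phi_{M\setminus\Sigma}^D$.

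Next I would establish independence. Since $(\phi_\Sigma(f), \phi_{M\setminus\Sigma}^D(h))$ is jointly Gaussian for any $f,h \in C^\infty(M)$, it suffices to check that the cross-covariance vanishes. Using (\ref{eqn-sobo-proj-comm}) in the form $(\Delta+m^2)^{-1}P_\Sigma^\perp = p_{M\setminus\Sigma}(\Delta+m^2)^{-1}$, and then corollary \ref{cor-adj-sobo-proj}, one computes
\begin{align*}
  \mb{E}_{\mm{GFF}}^M[\phi_\Sigma(f)\,\phi_{M\setminus\Sigma}^D(h)]
  &= \bank{P_\Sigma f,(\Delta+m^2)^{-1}P_\Sigma^\perp h}_{L^2} \\
  &= \bank{P_\Sigma f, p_{M\setminus\Sigma}(\Delta+m^2)^{-1}h}_{L^2} \\
  &= \bank{P_\Sigma^\perp P_\Sigma f,(\Delta+m^2)^{-1}h}_{L^2} = 0,
\end{align*}
since $P_\Sigma^\perp P_\Sigma = 0$.

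Then I would identify the laws. For $\phi_{M\setminus\Sigma}^D$, rewriting the covariance as a $W^{-1}(M)$-inner product,
\begin{equation*}
  \mb{E}[\phi_{M\setminus\Sigma}^D(f)\phi_{M\setminus\Sigma}^D(h)]
  = \bank{P_\Sigma^\perp f,(\Delta+m^2)^{-1}P_\Sigma^\perp h}_{L^2}
  = \bank{P_\Sigma^\perp f, P_\Sigma^\perp h}_{W^{-1}(M)},
\end{equation*}
which for test functions $f,h$ supported in $M\setminus\Sigma$ coincides with $\bank{f,(\Delta_{M\setminus\Sigma,D}+m^2)^{-1}h}_{L^2}$ by the characterization of the Dirichlet Green operator as a quadratic form (lemma \ref{lemm-diri-green-op-quad-form}), so $\phi_{M\setminus\Sigma}^D \sim \mu_{\mm{GFF}}^{M\setminus\Sigma,D}$. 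For the boundary value problem, to show $(\Delta+m^2)\phi_\Sigma = 0$ in $M\setminus\Sigma$ it suffices to evaluate it (as a distribution) against $f\in C_c^\infty(M\setminus\Sigma)$: using (\ref{eqn-sobo-proj-comm}) and lemma \ref{lemm-sobo-proj-trace-harmonic},
\begin{equation*}
  P_\Sigma\bigl((\Delta+m^2)f\bigr) = (\Delta+m^2)p_{M\setminus\Sigma}^\perp f = (\Delta+m^2)\PI_M^\Sigma(\tau_\Sigma f) = 0,
\end{equation*}
so $\phi_\Sigma((\Delta+m^2)f) = \phi(0) = 0$ almost surely. For the boundary trace, I would use lemma \ref{lemm-DN-trick}: testing $\tau_\Sigma \phi_\Sigma$ against $f \in C^\infty(\Sigma)$ yields $\phi(P_\Sigma j_\Sigma f)$, where $j_\Sigma f = (\Delta+m^2)\PI_M^\Sigma(\DN_M^\Sigma)^{-1}f$ lies in $W^{-1}_\Sigma(M)$ (because $\PI_M^\Sigma(\DN_M^\Sigma)^{-1}f \in W^1_{M\setminus\Sigma}(M)^\perp$ by lemma \ref{lemm-sobo-proj-trace-harmonic}), hence $P_\Sigma j_\Sigma f = j_\Sigma f$ and $\tau_\Sigma \phi_\Sigma(f) = \phi(j_\Sigma f) = \tau_\Sigma\phi(f)$ almost surely.

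The one technical step requiring care is the passage from deterministic identities (valid on Cameron-Martin representatives in $W^1(M)$) to almost sure statements about the random fields, which are only defined via measurable linear extensions of the relevant operators to $\mathcal{D}'(M)$; this is handled as in section \ref{sec-ind-law} by invoking \cite{Bogachev} theorems 2.10.11 and 3.7.6. Apart from this bookkeeping, the argument is entirely algebraic, relying on the orthogonal decomposition of lemma \ref{lemm-sobo-decomp} together with the Green-Stokes identity encoded in lemma \ref{lemm-DN-trick}.
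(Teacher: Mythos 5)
Your proof is correct and follows the same strategy as the paper's: everything reduces to the orthogonal decomposition of lemma \ref{lemm-sobo-decomp}, the commutation relations (\ref{eqn-sobo-proj-comm}), the identity $p_{M\setminus\Sigma}^{\perp}=\PI_M^{\Sigma}\tau_{\Sigma}$ of lemma \ref{lemm-sobo-proj-trace-harmonic}, and the measurable-extension machinery of \cite{Bogachev}. The paper's own proof is terser (it asserts independence by noting the two Gaussian Hilbert spaces are $W^{-1}_{\Sigma}(M)$ and $W^{-1}_{\Sigma}(M)^{\perp}$, and leaves the boundary-value-problem check implicit); you have unpacked both the vanishing of the cross-covariance and the distributional verification of the two boundary conditions, which is a welcome amount of extra detail but not a different argument.
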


\begin{def7}
  The term ``stochastic decomposition'' is borrowed from Bogachev \cite{Bogachev} remark 3.7.7.
\end{def7}

\begin{proof}
  Firstly,~$\phi_{\Sigma}$ and~$\phi_{M\setminus\Sigma}^D$ are independent since by (\ref{eqn-def-markov-decomp-rv}) their Gaussian Hilbert spaces are respectively~$W_{\Sigma}^{-1}(M)$ and~$W_{\Sigma}^{-1}(M)^{\perp}=W^{-1}(M\setminus\Sigma)$. The latter also means~$\phi_{M\setminus\Sigma}^D$ follows~$\mu_{\mm{GFF}}^{M\setminus\Sigma,D}$. The last fact about~$\phi_{\Sigma}$ follows from lemma \ref{lemm-sobo-proj-trace-harmonic} and the uniqueness of the measurable linear extension (\cite{Bogachev} theorem 3.7.6).
\end{proof}

In the same vein using corollary \ref{cor-sobo-decomp-boundary}, we have also a decomposition for the Dirichlet GFF on a domain~$\Omega$ with smooth boundary, with respect to a hypersurface~$\Sigma$ (isometrically) embedded in the interior~$\Omega^{\circ}$.

\begin{prop}\label{prop-stoc-decomp-domain}
  Suppose~$\phi_{\Omega}^D\in \mathcal{D}'(\Omega^{\circ})$ follows~$\mu_{\mm{GFF}}^{\Omega,D}$. Then there is a stochastic decomposition
  \begin{equation}
    \phi=\wh{p_{\Omega^{\circ}\setminus\Sigma}^{\perp}}\phi+\wh{p_{\Omega^{\circ}\setminus\Sigma}}\phi\defeq (\phi_{\Omega}^D)_{\Sigma}+\phi_{\Omega\setminus\Sigma}^D,
    \label{}
  \end{equation}
  into independent random fields~$(\phi_{\Omega}^D)_{\Sigma}$ and~$\phi_{\Omega\setminus\Sigma}^D$. More precisely, for~$f\in W^{-1}(\Omega^{\circ})$ we define the random variables
  \begin{equation}
    (\phi_{\Omega}^D)_{\Sigma}(f)\defeq \phi_{\Omega}^D(P_{\Sigma}f),\quad\textrm{and}\quad \phi_{\Omega\setminus\Sigma}^D(f)\defeq \phi_{\Omega}^D(P_{\Sigma}^{\perp}f).
    \label{eqn-def-markov-decomp-rv-domain-diri}
  \end{equation}
  Then,~$\phi_{\Omega\setminus\Sigma}^D$ follows the law of~$\mu_{\mm{GFF}}^{\Omega\setminus\Sigma,D}$, while~$(\phi_{\Omega}^D)_{\Sigma}$ solves the boundary value problem
  \begin{equation}
    \left\{
    \begin{array}{ll}
      (\Delta+m^2)(\phi_{\Omega}^D)_{\Sigma}=0,&\textrm{in }\Omega^{\circ}\setminus\Sigma,\\
      (\phi_{\Omega}^D)_{\Sigma}|_{\Sigma}=\tau_{\Sigma}\phi_{\Omega}^D,&\textrm{on }\Sigma,\\
      (\phi_{\Omega}^D)_{\Sigma}|_{\partial\Omega}=0, &\textrm{on }\partial\Omega,
    \end{array}
    \right.
    \label{}
  \end{equation}
  almost surely. \hfill~$\Box$
\end{prop}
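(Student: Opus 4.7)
The plan is to mirror the proof of Proposition~\ref{prop-stoc-decomp-closed} step for step, with Lemma~\ref{lemm-sobo-decomp} replaced by Corollary~\ref{cor-sobo-decomp-boundary} in order to handle the double decomposition imposed by both the hypersurface $\Sigma$ and the ambient boundary $\partial\Omega$. First I would apply Corollary~\ref{cor-sobo-decomp-boundary} with $U=\Omega^\circ$ and $F=\Sigma$ to obtain the orthogonal decomposition $W^{-1}(\Omega^\circ) = P_{M\setminus\Omega^\circ}^\perp(W^{-1}_{\Sigma}(M)) \oplus W^{-1}_{\Sigma}(M)^\perp$, which is the image under $\Delta+m^2$ of the parallel decomposition of $W^1_{\Omega^\circ}(M) = W^1_{\Omega^\circ\setminus\Sigma}(M)^\perp \oplus W^1_{\Omega^\circ\setminus\Sigma}(M)$; reuse $P_\Sigma$, $P_\Sigma^\perp$ to denote the orthogonal projections onto these two summands, and $p_{\Omega^\circ\setminus\Sigma}^\perp$, $p_{\Omega^\circ\setminus\Sigma}$ the dual projections on the Cameron-Martin side.

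The independence of $(\phi_\Omega^D)_\Sigma$ and $\phi_{\Omega\setminus\Sigma}^D$ is then built in: by the construction in~(\ref{eqn-def-markov-decomp-rv-domain-diri}), their Gaussian Hilbert spaces are precisely the two $W^{-1}(\Omega^\circ)$-orthogonal summands above, and joint Gaussianity (both are linear functionals of the single Gaussian field $\phi_\Omega^D$) upgrades orthogonality to independence. To identify $\phi_{\Omega\setminus\Sigma}^D$ with $\mu_{\mm{GFF}}^{\Omega\setminus\Sigma,D}$, I would use the disjointness of $\Sigma$ and $M\setminus\Omega^\circ$ together with a partition of unity on $M$ to obtain $W^{-1}_\Sigma(M) + W^{-1}_{M\setminus\Omega^\circ}(M) = W^{-1}_{\Sigma\cup(M\setminus\Omega^\circ)}(M)$, whence
\[
W^{-1}_\Sigma(M)^\perp \cap W^{-1}(\Omega^\circ) = W^{-1}_{\Sigma\cup(M\setminus\Omega^\circ)}(M)^\perp = W^{-1}(\Omega^\circ\setminus\Sigma).
\]
The corresponding covariance $\ank{P_\Sigma^\perp f, P_\Sigma^\perp h}_{W^{-1}(\Omega^\circ)}$ is then the inner product on $W^{-1}(\Omega^\circ\setminus\Sigma)$, which is the Gaussian Hilbert space characterisation of the Dirichlet GFF on $\Omega\setminus\Sigma$ (lemma~\ref{lemm-ghs-for-gff}).

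The last and most delicate step is the boundary value problem for $(\phi_\Omega^D)_\Sigma$, for which I would prove a domain analogue of Lemma~\ref{lemm-sobo-proj-trace-harmonic}: on $W^1_{\Omega^\circ}(M)$ one has $p_{\Omega^\circ\setminus\Sigma}^\perp = \PI_\Omega^{\Sigma,D} \circ \tau_\Sigma$, the Poisson extension from $\Sigma$ into $\Omega$ with zero Dirichlet datum on $\partial\Omega$. The three BVP conditions will be verified as follows: the Helmholtz equation on $\Omega^\circ\setminus\Sigma$ holds because $(\Delta+m^2)p_{\Omega^\circ\setminus\Sigma}^\perp f$ lies in $P_{M\setminus\Omega^\circ}^\perp(W^{-1}_\Sigma(M))$, so its restriction to $\Omega^\circ\setminus\Sigma$ discards the piece possibly supported in $M\setminus\Omega^\circ$ and leaves only the $\Sigma$-supported part; the Dirichlet condition on $\partial\Omega$ is automatic since $p_{\Omega^\circ\setminus\Sigma}^\perp f \in W^1_{\Omega^\circ}(M)$; and the trace identity on $\Sigma$ follows because $\tau_\Sigma$ vanishes on the closure $W^1_{\Omega^\circ\setminus\Sigma}(M)$, giving $\tau_\Sigma\circ p_{\Omega^\circ\setminus\Sigma}^\perp = \tau_\Sigma\circ(\one-p_{\Omega^\circ\setminus\Sigma}) = \tau_\Sigma$. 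The principal obstacle is the first of these three: the support analysis requires threading between the Sobolev projection identities~(\ref{eqn-sobo-proj-comm}) and Remark~\ref{rem-sobo-proj=dist-res} carefully on a set with both an internal hypersurface and an outer boundary, so as to show that no spurious $\partial\Omega$-contribution leaks into $(\Delta+m^2)p_{\Omega^\circ\setminus\Sigma}^\perp f$ after restriction. The almost-sure identification of the random fields then extends from the Cameron-Martin space to $\mathcal{D}'(\Omega^\circ)$ by the uniqueness of the measurable linear extension (\cite{Bogachev} theorem 3.7.6), exactly as at the end of the proof of Proposition~\ref{prop-stoc-decomp-closed}.
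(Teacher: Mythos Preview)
Your proposal is correct and follows exactly the route the paper intends: the paper does not spell out a proof but simply writes ``In the same vein using corollary~\ref{cor-sobo-decomp-boundary}'' before the statement and marks it with~$\Box$, indicating that one should mirror the proof of Proposition~\ref{prop-stoc-decomp-closed} with Corollary~\ref{cor-sobo-decomp-boundary} replacing Lemma~\ref{lemm-sobo-decomp}. Your write-up supplies precisely those details, including the domain analogue of Lemma~\ref{lemm-sobo-proj-trace-harmonic} and the identification of the second summand with $W^{-1}(\Omega^\circ\setminus\Sigma)$, so there is nothing to add.
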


 The following version of proposition \ref{prop-stoc-decomp-closed} in the case of a dissecting (smooth isometrically embedded) hypersurface~$\Sigma\subset M$ such that~$M\setminus \Sigma=M^{\circ}_+\sqcup M^{\circ}_-$, will be useful in definition \ref{def-inter-over-domain}.

 \begin{lemm}
    We have~$W^{-1}_{M_-}(M)^{\perp}\subset W^{-1}_{M_+}(M)$ and thus~$W^{-1}_{M_+}(M)=W^{-1}_{M_-}(M)^{\perp}\oplus W^{-1}_{\Sigma}(M)$. Similarly~$W^{-1}_{M_-}(M)=W^{-1}_{M_+}(M)^{\perp}\oplus W^{-1}_{\Sigma}(M)$.
  \end{lemm}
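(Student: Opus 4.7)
\medskip

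The plan is to leverage the orthogonal decomposition of Lemma \ref{lemm-sobo-decomp} with $A=M_-$, combined with the identification $(\Delta+m^2): W^1_{M\setminus A}(M)\xrightarrow{\sim} W^{-1}_A(M)^\perp$ it provides. Under this isometry, the orthogonal complement $W^{-1}_{M_-}(M)^\perp$ is the image of $W^1_{M_+^\circ}(M)$. So the inclusion $W^{-1}_{M_-}(M)^\perp\subset W^{-1}_{M_+}(M)$ will reduce to showing that $(\Delta+m^2)u$ is supported in $M_+$ whenever $u\in W^1_{M_+^\circ}(M)$. The key point is that $u$ is approximated in $W^1(M)$ by $C_c^\infty(M_+^\circ)$ functions: pairing $(\Delta+m^2)u$ with any test function $\varphi\in C^\infty(M)$ supported in $M_-^\circ$ gives $\langle u,(\Delta+m^2)\varphi\rangle_{L^2}$, which vanishes by disjointness of supports after passing to the smooth approximations.

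Once the inclusion is in place, the direct sum decomposition $W^{-1}_{M_+}(M)=W^{-1}_{M_-}(M)^\perp\oplus W^{-1}_\Sigma(M)$ will be essentially formal. First I would observe that $W^{-1}_\Sigma(M)\subset W^{-1}_{M_+}(M)$ trivially as $\Sigma\subset M_+$. Orthogonality follows because $W^{-1}_\Sigma(M)\subset W^{-1}_{M_-}(M)$ (since also $\Sigma\subset M_-$), which is orthogonal to $W^{-1}_{M_-}(M)^\perp$ by construction. For the spanning property, I would take $v\in W^{-1}_{M_+}(M)$ and decompose $v=P_{M_-}v+P_{M_-}^\perp v$ in $W^{-1}(M)$ using Lemma \ref{lemm-sobo-decomp}. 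The second summand lies in $W^{-1}_{M_-}(M)^\perp\subset W^{-1}_{M_+}(M)$ by the inclusion just proved, hence $P_{M_-}v=v-P_{M_-}^\perp v\in W^{-1}_{M_+}(M)$. But $P_{M_-}v\in W^{-1}_{M_-}(M)$ as well, so $P_{M_-}v$ is a distribution whose support lies in $M_+\cap M_-=\Sigma$, giving $P_{M_-}v\in W^{-1}_\Sigma(M)$, as desired.

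The ``similarly'' part of the statement follows by interchanging the roles of $M_+$ and $M_-$, which is legitimate since the entire argument is symmetric in the two sides of $\Sigma$.

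I expect the main (and only real) obstacle to be justifying the inclusion step cleanly, since it requires correctly interpreting ``support'' for distributions in $W^{-1}(M)$ obtained as images under $\Delta+m^2$ of Sobolev functions that may only vanish in a trace sense at $\Sigma$. The argument via approximation by $C_c^\infty(M_+^\circ)$ seems the most transparent, but one must make sure the pairing $\langle u,(\Delta+m^2)\varphi\rangle_{L^2}$ remains the correct duality pairing in the limit; this is guaranteed by $(\Delta+m^2)$ being bounded $W^1(M)\to W^{-1}(M)$ and by continuity of the duality pairing between these spaces.
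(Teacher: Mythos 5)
Your proof is correct and follows essentially the same route as the paper: both invoke Lemma~\ref{lemm-sobo-decomp} with $A=M_-$ to identify $W^{-1}_{M_-}(M)^\perp=(\Delta+m^2)(W^1_{M_+^\circ}(M))$, and both rest on locality of $\Delta+m^2$ for the support statement. The paper gives only the inclusion and the one-line locality observation and treats the orthogonal direct sum as immediate; your approximation argument and the spanning step via $P_{M_-}v\in W^{-1}_{M_+}(M)\cap W^{-1}_{M_-}(M)=W^{-1}_\Sigma(M)$ simply spell out what the paper leaves implicit.
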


  \begin{proof}
    By lemma \ref{lemm-sobo-decomp} with~$A=M_-$ we see that~$W^{-1}_{M_-}(M)^{\perp}=(\Delta+m^2)(W^1_{M_+^{\circ}}(M))$. These distributions are supported in~$M_+$ since~$\Delta+m^2$ is local.
  \end{proof}

\begin{corr}\label{cor-stoc-decomp-disec}
  Suppose~$\phi\in \mathcal{D}'(M)$ follows~$\mu_{\mm{GFF}}^M$. Then there is a stochastic decomposition
 \begin{equation}
   \phi=\wh{p_{M\setminus\Sigma}^{\perp}}\phi+\wh{p_{M_+^{\circ}}}\phi +\wh{p_{M_-^{\circ}}}\phi\defeq \phi_{\Sigma}+\phi_{M_+^{\circ}}^D +\phi_{M_-^{\circ}}^D,
    \label{}
  \end{equation} 
  into independent random fields. More precisely, for~$f\in C^{\infty}(M)$ the random variables are defined by
  \begin{equation}
    \phi_{\Sigma}(f)\defeq \phi(P_{\Sigma}f),\quad \phi_{M_+^{\circ}}^D(f)\defeq \phi(P_{M_-}^{\perp}f), \quad\textrm{and}\quad \phi_{M_-^{\circ}}^D(f)\defeq \phi(P_{M_+}^{\perp}f),
  \end{equation}
  and with~$\phi_{\Sigma}$ the same as in proposition \ref{prop-stoc-decomp-closed},~$\phi_{M_+^{\circ}}^D$ and~$\phi_{M_-^{\circ}}^D$ follows respectively~$\mu_{\mm{GFF}}^{M_+^{\circ},D}$ and~$\mu_{\mm{GFF}}^{M_-^{\circ},D}$. Moreover,~$(\phi_{\Sigma}+\phi_{M_+^{\circ}}^D)(f)=\phi(P_{M_+}f)$,~$(\phi_{\Sigma}+\phi_{M_-^{\circ}}^D)(f)=\phi(P_{M_-}f)$. \hfill~$\Box$
\end{corr}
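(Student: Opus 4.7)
My plan is to proceed in direct analogy with the proof of Proposition~\ref{prop-stoc-decomp-closed}, the only additional ingredient being that the Dirichlet GFF on $M\setminus\Sigma = M_+^{\circ}\sqcup M_-^{\circ}$ splits as an independent product (Lemma~\ref{lemm-gaus-field-disj-indep}), which is reflected at the level of Sobolev spaces by the orthogonal decomposition provided by the preceding lemma.

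\medskip

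\textbf{Step 1: refine the orthogonal decomposition.} The preceding lemma gives $W^{-1}_{M_+}(M)=W^{-1}_{M_-}(M)^{\perp}\oplus W^{-1}_{\Sigma}(M)$, and in particular $W^{-1}_{M_-}(M)^{\perp}\perp W^{-1}_{M_+}(M)^{\perp}$ (the former sits inside $W^{-1}_{M_+}(M)$). Combining with $W^{-1}(M)=W^{-1}_{\Sigma}(M)\oplus W^{-1}_{\Sigma}(M)^{\perp}$ (from Lemma~\ref{lemm-sobo-decomp}) and the lemma just cited, I obtain the \emph{triple} orthogonal decomposition
\begin{equation*}
W^{-1}(M)= W^{-1}_{\Sigma}(M)\oplus W^{-1}_{M_-}(M)^{\perp}\oplus W^{-1}_{M_+}(M)^{\perp},
\end{equation*}
and by Remark~\ref{rem-sobo-proj=dist-res} the two ``perp'' summands identify with $W^{-1}(M_+^{\circ})$ and $W^{-1}(M_-^{\circ})$ respectively. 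The corresponding orthogonal projections are exactly $P_{\Sigma}$, $P_{M_-}^{\perp}$, $P_{M_+}^{\perp}$, and they sum to the identity on $W^{-1}(M)$.

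\medskip

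\textbf{Step 2: independence via Gaussian orthogonality.} Defining the three random variables by the prescribed formulas, their Gaussian Hilbert spaces are the three orthogonal summands above (within the Gaussian Hilbert space $W^{-1}(M)$ of $\mu_{\mm{GFF}}^M$, per Lemma~\ref{lemm-ghs-for-gff}). Mutual orthogonality of the Gaussian Hilbert spaces upgrades to mutual independence of Gaussian random variables, so $\phi_{\Sigma}$, $\phi_{M_+^{\circ}}^D$, $\phi_{M_-^{\circ}}^D$ are independent and their sum equals $\phi(P_{\Sigma}f+P_{M_-}^{\perp}f+P_{M_+}^{\perp}f)=\phi(f)$, recovering $\phi$.

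\medskip

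\textbf{Step 3: identification of the laws.} For $\phi_{\Sigma}$, nothing new needs to be said since this is the same random variable constructed in Proposition~\ref{prop-stoc-decomp-closed}. For $\phi_{M_+^{\circ}}^D$, I compute its covariance: for $f,h\in C^{\infty}(M)$,
\begin{equation*}
\mathbb{E}\big[\phi_{M_+^{\circ}}^D(f)\,\phi_{M_+^{\circ}}^D(h)\big]=\bank{P_{M_-}^{\perp}f,(\Delta+m^2)^{-1}P_{M_-}^{\perp}h}_{L^2}=\bank{P_{M_-}^{\perp}f,P_{M_-}^{\perp}h}_{W^{-1}(M)}.
\end{equation*}
Since $P_{M_-}^{\perp}$ is precisely the restriction map from $\mathcal{D}'(M)$ to $\mathcal{D}'(M_+^{\circ})$ (Remark~\ref{rem-sobo-proj=dist-res}), the right-hand side equals the canonical inner product on $W^{-1}(M_+^{\circ})$ of the restrictions, which by Lemma~\ref{lemm-diri-green-op-quad-form} is exactly the covariance of $\mu_{\mm{GFF}}^{M_+^{\circ},D}$. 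Thus $\phi_{M_+^{\circ}}^D\sim \mu_{\mm{GFF}}^{M_+^{\circ},D}$, and by symmetry $\phi_{M_-^{\circ}}^D\sim \mu_{\mm{GFF}}^{M_-^{\circ},D}$.

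\medskip

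\textbf{Step 4: the partial sum identities.} These follow purely from projector identities: on $W^{-1}(M)$, $P_{\Sigma}+P_{M_-}^{\perp}$ is the orthogonal projection onto $W^{-1}_{\Sigma}(M)\oplus W^{-1}_{M_-}(M)^{\perp}=W^{-1}_{M_+}(M)$, which is $P_{M_+}$; hence $(\phi_{\Sigma}+\phi_{M_+^{\circ}}^D)(f)=\phi(P_{M_+}f)$, and similarly with $+\leftrightarrow -$. The only mildly delicate point in the whole argument is the bookkeeping with the signed complements — that $W^{-1}(M_+^{\circ})$ corresponds to $W^{-1}_{M_-}(M)^{\perp}$ (with the \emph{opposite} subscript) — but once the preceding lemma is in hand this is routine and the corollary follows.
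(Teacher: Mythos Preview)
Your proof is correct and follows exactly the approach the paper intends: the corollary is stated with a $\Box$ and no proof, since it follows from the preceding lemma (giving the refined orthogonal splitting $W^{-1}_{M_+}(M)=W^{-1}_{M_-}(M)^{\perp}\oplus W^{-1}_{\Sigma}(M)$) combined with the argument pattern of Proposition~\ref{prop-stoc-decomp-closed}. Your Steps~1--4 spell out precisely these details, with the correct bookkeeping on the subscripts.
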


\subsection{The Bayes Principle Applied to GFF}\label{sec-bayes-gff}

\noindent Let now~$M$ be a closed Riemannian manifold and~$\Sigma_1$,~$\Sigma_2\subset M$ \textit{non-intersecting} isometrically embedded smooth closed hypersurfaces. The goal of this section is to derive the Bayes principle relating the probability laws of the two random fields~$\tau_{\Sigma_1} \phi$ and~$\tau_{\Sigma_2}\phi$ where~$\phi$ is the GFF on~$M$. To avoid convolving with nomenclatures of conditional probabilities we prefer a direct measure theoretic argument, although these are clearly equivalent. Throughout this section we identify continuous linear maps of Cameron-Martin spaces with their measurable extensions to distributional~$Q$-spaces, as well as their induced actions on measures.

To begin with, by proposition \ref{prop-stoc-decomp-closed} we have a stochastic decomposition
\begin{equation}
  \phi=\phi_{\Sigma_1}+\phi_{M\setminus\Sigma_1}^D
  \label{}
\end{equation}
for~$\phi\sim \mu_{\mm{GFF}}^M$, the two components \textit{independent} of each other. Then, apply~$\tau_{\Sigma_2}$ we get
\begin{align}
  \tau_{\Sigma_2}\phi&=\tau_{\Sigma_2}\phi_{\Sigma_1}+\tau_{\Sigma_2}\phi_{M\setminus \Sigma_1}^D=\tau_{\Sigma_2}\PI_{M}^{\Sigma_1}\tau_{\Sigma_1}\phi+\tau_{\Sigma_2}\phi_{M\setminus \Sigma_1}^D \\
  &\defeq \mathcal{M}_{M,2}^1 \tau_{\Sigma_1}\phi +\tau_{\Sigma_2}\phi_{M\setminus \Sigma_1}^D.
  \label{eqn-stoc-decomp-with-M}
\end{align}
Here we define
\begin{equation}
  \mathcal{M}_{M,2}^1\defeq \tau_{\Sigma_2}\PI_{M}^{\Sigma_1}:\left\{
    \begin{array}{rcl}
      C^{\infty}(\Sigma_1)&\lto& C^{\infty}(\Sigma_2),\\
      W^{\frac{1}{2}}(\Sigma_1) &\lto &W^{\frac{1}{2}}(\Sigma_2),
    \end{array}
    \right.
  \label{eqn-trans-op-def}
\end{equation}
called the \textsf{transition operator/propagator}. Define accordingly
\begin{equation}
  \left.
  \begin{array}{rcl}
    \mathcal{G}_{M,2}^1:W^{\frac{1}{2}}(\Sigma_1)\times W^{\frac{1}{2}}(\Sigma_2) &\lto& W^{\frac{1}{2}}(\Sigma_1)\times W^{\frac{1}{2}}(\Sigma_2),\\
    (\varphi_1,h)&\longmapsto & (\varphi_1,h+\mathcal{M}_{M,2}^1\varphi_1).
  \end{array}
  \right.
  \label{}
\end{equation}
called the \textsf{graph operator}.
\begin{lemm}\label{lemm-pre-bayes}
  We have
  \begin{equation}
    \tau_{\Sigma_1\sqcup \Sigma_2}(\mu_{\mm{GFF}}^{M})=(\mathcal{G}_{M,2}^1)_*( \mu_{\DN}^{\Sigma_1,M} \otimes \mu_{\DN,D}^{\Sigma_2,M\setminus \Sigma_1}),
    \label{}
  \end{equation}
  where~$\mu_{\DN}^{\Sigma_1,M}$ and~$\mu_{\DN,D}^{\Sigma_2,M\setminus \Sigma_1}$ are Gaussian measures on~$\mathcal{D}'(\Sigma_1)$ and~$\mathcal{D}'(\Sigma_2)$ with covariances~$(\DN_M^{\Sigma_1})^{-1}$ and~$(\DN_{M\setminus\Sigma_1}^{\Sigma_2,D})^{-1}$, respectively.
\end{lemm}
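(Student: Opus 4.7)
The plan is to combine the Markov stochastic decomposition of $\mu_{\mm{GFF}}^M$ with respect to $\Sigma_1$ (Proposition \ref{prop-stoc-decomp-closed}) with the induced-law results of Proposition \ref{prop-induce-law-DN} and its Dirichlet analogue, in order to express the joint trace $\tau_{\Sigma_1\sqcup\Sigma_2}\phi$ as the image of a product measure under $\mathcal{G}_{M,2}^1$.

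First I would use the stochastic decomposition $\phi = \phi_{\Sigma_1} + \phi_{M\setminus\Sigma_1}^D$ almost surely, with the two summands independent, the first living in the ``harmonic'' component and equal a.s.\ to $\PI_M^{\Sigma_1}\tau_{\Sigma_1}\phi$ (Lemma \ref{lemm-sobo-proj-trace-harmonic}) and the second following $\mu_{\mm{GFF}}^{M\setminus\Sigma_1,D}$. Applying $\tau_{\Sigma_1\sqcup\Sigma_2}$: on $\Sigma_1$, the Dirichlet component vanishes a.s., so $\tau_{\Sigma_1}\phi = \tau_{\Sigma_1}\phi_{\Sigma_1}$; on $\Sigma_2$, the identity (\ref{eqn-stoc-decomp-with-M}) already gives $\tau_{\Sigma_2}\phi = \mathcal{M}_{M,2}^1\tau_{\Sigma_1}\phi + \tau_{\Sigma_2}\phi_{M\setminus\Sigma_1}^D$. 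Combining these, almost surely
\begin{equation*}
\bigl(\tau_{\Sigma_1}\phi,\tau_{\Sigma_2}\phi\bigr) \;=\; \mathcal{G}_{M,2}^1\bigl(\tau_{\Sigma_1}\phi,\; \tau_{\Sigma_2}\phi_{M\setminus\Sigma_1}^D\bigr).
\end{equation*}

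Next I would identify the laws of the two arguments of $\mathcal{G}_{M,2}^1$ on the right-hand side. By Proposition \ref{prop-induce-law-DN}, $\tau_{\Sigma_1}\phi$ is distributed as $\mu_{\DN}^{\Sigma_1,M}$. For the second, since $\phi_{M\setminus\Sigma_1}^D$ is the Dirichlet GFF on the (now boundary-bearing) manifold $M\setminus\Sigma_1$ and $\Sigma_2$ sits in its interior, the Dirichlet-condition variant of Proposition \ref{prop-induce-law-DN} — obtained by running the same Green--Stokes computation of Lemma \ref{lemm-DN-trick} with $\PI_{M\setminus\Sigma_1}^{\Sigma_2,D}$ and $\DN_{M\setminus\Sigma_1}^{\Sigma_2,D}$ in the sense of Definitions \ref{def-pi-more-general}--\ref{def-DN-boundary} — yields that $\tau_{\Sigma_2}\phi_{M\setminus\Sigma_1}^D$ has law $\mu_{\DN,D}^{\Sigma_2,M\setminus\Sigma_1}$. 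The independence of $\phi_{\Sigma_1}$ and $\phi_{M\setminus\Sigma_1}^D$ (Proposition \ref{prop-stoc-decomp-closed}) transfers, via the a.s.\ identities $\tau_{\Sigma_1}\phi = \tau_{\Sigma_1}\phi_{\Sigma_1}$ and the measurability of the trace $\tau_{\Sigma_2}$, to independence of the two traces, so their joint law on $\mathcal{D}'(\Sigma_1)\times \mathcal{D}'(\Sigma_2)$ is exactly the product $\mu_{\DN}^{\Sigma_1,M}\otimes \mu_{\DN,D}^{\Sigma_2,M\setminus\Sigma_1}$.

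Pushing this product measure forward by $\mathcal{G}_{M,2}^1$ and invoking the almost-sure equality above then finishes the proof. The main subtlety I expect is the Dirichlet variant of the trace-induced covariance computation: one must adapt Lemma \ref{lemm-DN-trick} to $M\setminus\Sigma_1$ with mixed boundary data (the interior hypersurface $\Sigma_2$ together with the Dirichlet datum on $\Sigma_1$), being careful that the Green--Stokes formula produces no extra boundary terms on $\Sigma_1$ thanks to the vanishing of $\PI_{M\setminus\Sigma_1}^{\Sigma_2,D}f$ there, and that the resulting identity extends a.s.\ from Cameron--Martin space to distributional $Q$-space via the uniqueness of measurable linear extensions (as in Bogachev \cite{Bogachev} Theorems~2.10.11 and 3.7.6, already invoked in Section \ref{sec-ind-law}). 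Everything else is essentially a bookkeeping consequence of the two propositions cited above.
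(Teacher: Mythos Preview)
Your proposal is correct and follows essentially the same route as the paper's proof: decompose $\phi=\phi_{\Sigma_1}+\phi_{M\setminus\Sigma_1}^D$, take traces to obtain $(\tau_{\Sigma_1}\phi,\tau_{\Sigma_2}\phi)=\mathcal{G}_{M,2}^1(\tau_{\Sigma_1}\phi,\tau_{\Sigma_2}\phi_{M\setminus\Sigma_1}^D)$, then identify the two marginal laws via Proposition~\ref{prop-induce-law-DN} and note independence. Your discussion of the Dirichlet variant of the trace-law computation is more explicit than the paper, which simply appeals to ``both by proposition~\ref{prop-induce-law-DN}'' and has earlier remarked (after Lemma~\ref{lemm-dn-prop}) that parallel results hold for $\DN_{\Omega}^{\Sigma,B}$ without spelling them out.
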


\begin{proof}
  Following (\ref{eqn-stoc-decomp-with-M}),~$\tau_{\Sigma_2}\phi_{M\setminus \Sigma_1}^D=\tau_{\Sigma_2}\phi-\mathcal{M}_{M,2}^1 \tau_{\Sigma_1}\phi$ is independent of~$\tau_{\Sigma_1}\phi$, and follows the law~$\mu_{\DN,D}^{\Sigma_2,M\setminus \Sigma_1}$, while~$\tau_{\Sigma_1}\phi$ follows the law~$\mu_{\DN}^{\Sigma_1,M}$, both by proposition \ref{prop-induce-law-DN}.
\end{proof}

We could also define the operators~$\mathcal{M}_{M,1}^2$ and~$\mathcal{G}_{M,1}^2$ with the roles of~$\Sigma_1$ and~$\Sigma_2$ switched. Note that lemma \ref{lemm-pre-bayes} is entirely symmetric under the switching of~$\Sigma_1$ and~$\Sigma_2$. Recall the measures~$\mu_{2\mn{D}}^{\Sigma}$ from corollary \ref{corr-rad-niko-dense}.

  \begin{prop}[Bayes Principle for GFF]\label{prop-bayes-gff}
    Let now~$M$ be a closed Riemannian manifold and~$\Sigma_1$,~$\Sigma_2\subset M$ \textit{non-intersecting} isometrically embedded smooth closed hypersurfaces. We have equality of Radon-Nikodym densities
    \begin{align}
      \frac{\dd\tau_{\Sigma_1\sqcup \Sigma_2}(\mu_{\mm{GFF}}^{M}) }{\dd \mu_{2\mn{D}}^{\Sigma_1\sqcup \Sigma_2}}(\varphi_1,\varphi_2)
  &=\frac{\dd  \mu_{\DN}^{\Sigma_1,M}}{\dd\mu_{2\mn{D}}^{\Sigma_1}} (\varphi_1)
  \frac{\dd \big((\mathcal{M}_{M,2}^1 \varphi_1)_*\mu_{\DN,D}^{\Sigma_2,M\setminus \Sigma_1}\big)}{\dd\mu_{2\mn{D}}^{ \Sigma_2}}(\varphi_2) \label{eqn-bayes-gff-1}\\
  &=\frac{\dd  \mu_{\DN}^{\Sigma_2,M}}{\dd\mu_{2\mn{D}}^{\Sigma_2}} (\varphi_2)
  \frac{\dd \big((\mathcal{M}_{M,1}^2 \varphi_2)_*\mu_{\DN,D}^{\Sigma_1,M\setminus \Sigma_2}\big)}{\dd\mu_{2\mn{D}}^{ \Sigma_1}}(\varphi_1). \label{eqn-bayes-gff-2}
    \end{align}
    Here~$(\varphi)_*$ denotes the shift induced by~$\varphi$ on measures as in corollary \ref{cor-cam-mar-shift}.
  \end{prop}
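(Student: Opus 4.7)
The plan is to unfold Lemma~\ref{lemm-pre-bayes} directly and rewrite the resulting pushforward as a disintegration, then multiply through by the Radon-Nikodym densities supplied by Corollary~\ref{corr-rad-niko-dense} and by the Cameron-Martin-Maruyama shift formula. Concretely, by the definition of the graph operator $\mathcal{G}_{M,2}^1$ and Fubini, for every bounded Borel $F$,
\begin{align*}
\int F\,d\tau_{\Sigma_1\sqcup\Sigma_2}(\mu_{\mm{GFF}}^M)
&= \iint F\bigl(\varphi_1,h+\mathcal{M}_{M,2}^1\varphi_1\bigr)\,d\mu_{\DN}^{\Sigma_1,M}(\varphi_1)\,d\mu_{\DN,D}^{\Sigma_2,M\setminus\Sigma_1}(h) \\
&= \int d\mu_{\DN}^{\Sigma_1,M}(\varphi_1)\int F(\varphi_1,\varphi_2)\,d\bigl[(\mathcal{M}_{M,2}^1\varphi_1)_*\mu_{\DN,D}^{\Sigma_2,M\setminus\Sigma_1}\bigr](\varphi_2),
\end{align*}
identifying $\mu_{\DN}^{\Sigma_1,M}$ as the first marginal and $(\mathcal{M}_{M,2}^1\varphi_1)_*\mu_{\DN,D}^{\Sigma_2,M\setminus\Sigma_1}$ as the conditional law of $\varphi_2$ given $\varphi_1$. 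This is precisely the infinite-dimensional incarnation of the finite-dimensional identity~(\ref{eqn-bayes-ordinary}) from Remark~\ref{rem-bayes-descrip}.

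Next I would check admissibility of the shift direction, which is the only genuinely analytic step. By Lemma~\ref{lemm-dn-prop} the Dirichlet-to-Neumann maps are order-one elliptic, so the Cameron-Martin spaces of $\mu_{\DN,D}^{\Sigma_2,M\setminus\Sigma_1}$ and of $\mu_{2\mn{D}}^{\Sigma_2}$ both coincide with $W^1(\Sigma_2)$. The disjointness hypothesis $\Sigma_1\cap\Sigma_2=\varnothing$ enters here crucially: since $\PI_M^{\Sigma_1}\varphi_1$ satisfies $(\Delta+m^2)u=0$ on $M\setminus\Sigma_1$, interior elliptic regularity forces it to be smooth in a neighbourhood of $\Sigma_2$, so $\mathcal{M}_{M,2}^1=\tau_{\Sigma_2}\PI_M^{\Sigma_1}$ is a smoothing operator and $\mathcal{M}_{M,2}^1\varphi_1\in C^\infty(\Sigma_2)\subset W^1(\Sigma_2)$ for every distribution $\varphi_1$. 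Therefore the Cameron-Martin-Maruyama formula applies, and I factor the conditional density as
$$
\frac{d\bigl[(\mathcal{M}_{M,2}^1\varphi_1)_*\mu_{\DN,D}^{\Sigma_2,M\setminus\Sigma_1}\bigr]}{d\mu_{2\mn{D}}^{\Sigma_2}}(\varphi_2)
=\frac{d(\mathcal{M}_{M,2}^1\varphi_1)_*\mu_{\DN,D}^{\Sigma_2,M\setminus\Sigma_1}}{d\mu_{\DN,D}^{\Sigma_2,M\setminus\Sigma_1}}(\varphi_2)\cdot\frac{d\mu_{\DN,D}^{\Sigma_2,M\setminus\Sigma_1}}{d\mu_{2\mn{D}}^{\Sigma_2}}(\varphi_2),
$$
where the first factor is a classical Cameron-Martin exponential and the second is Corollary~\ref{corr-rad-niko-dense} applied to $\Sigma_2$ inside $M\setminus\Sigma_1$.

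Combining this with the marginal density $d\mu_{\DN}^{\Sigma_1,M}/d\mu_{2\mn{D}}^{\Sigma_1}$ (Corollary~\ref{corr-rad-niko-dense} applied to $\Sigma_1\subset M$) and using $\mu_{2\mn{D}}^{\Sigma_1\sqcup\Sigma_2}=\mu_{2\mn{D}}^{\Sigma_1}\otimes\mu_{2\mn{D}}^{\Sigma_2}$ from Lemma~\ref{lemm-gaus-field-disj-indep}, I read off~(\ref{eqn-bayes-gff-1}). For~(\ref{eqn-bayes-gff-2}) I simply rerun the same argument with the roles of $\Sigma_1$ and $\Sigma_2$ exchanged---Lemma~\ref{lemm-pre-bayes} is manifestly symmetric in this exchange---and both expressions then represent the same Radon-Nikodym density of $\tau_{\Sigma_1\sqcup\Sigma_2}(\mu_{\mm{GFF}}^M)$ against $\mu_{2\mn{D}}^{\Sigma_1\sqcup\Sigma_2}$. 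The principal technical hurdle throughout is securing that the shift $\mathcal{M}_{M,2}^1\varphi_1$ lies in the relevant Cameron-Martin space almost surely, which is precisely where disjointness of the two hypersurfaces is essential: without it, $\mathcal{M}_{M,2}^1\varphi_1$ would sit only in $W^{1/2}(\Sigma_2)$, the quadratic form $\snrm{\mathcal{M}_{M,2}^1\varphi_1}_{W^1}^2$ would diverge, and the shifted measure would become singular with respect to $\mu_{2\mn{D}}^{\Sigma_2}$.
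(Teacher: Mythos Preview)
Your proof is correct and follows essentially the same approach as the paper: both unfold Lemma~\ref{lemm-pre-bayes} via Fubini to obtain the disintegration with marginal $\mu_{\DN}^{\Sigma_1,M}$ and conditional law $(\mathcal{M}_{M,2}^1\varphi_1)_*\mu_{\DN,D}^{\Sigma_2,M\setminus\Sigma_1}$, then invoke the symmetry in $\Sigma_1\leftrightarrow\Sigma_2$ for~(\ref{eqn-bayes-gff-2}). The paper's proof is terser---it tests against tensor functionals $F\otimes G$ and stops once the measure-level disintegration is established---whereas you supply the additional analytic step (smoothing of $\mathcal{M}_{M,2}^1$ from elliptic regularity away from $\Sigma_1$, hence the shift lands in the Cameron-Martin space $W^1(\Sigma_2)$) that justifies why the shifted conditional measure is absolutely continuous with respect to $\mu_{2\mn{D}}^{\Sigma_2}$; the paper defers this to its Cameron-Martin shift corollary without spelling it out here.
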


  \begin{proof}
    We just need to prove (\ref{eqn-bayes-gff-1}). In other words,~$(\mathcal{M}_{M,2}^1 \varphi_1)_*\mu_{\DN,D}^{\Sigma_2,M\setminus \Sigma_1}$ is the conditional law of~$\tau_{\Sigma_2}\phi$ provided ``$\tau_{\Sigma_1}=\varphi_1$''. The proof is straightforward. For positive (Borel) measurable functionals~$F\in L^+(\mathcal{D}'(\Sigma_1))$,~$G\in L^+(\mathcal{D}'(\Sigma_2))$, we have by lemma \ref{lemm-pre-bayes}, the change of variables formula, and Fubini's theorem,
\begin{align*}
    \int_{}^{} (F\otimes G)(\varphi_1,\varphi_2)\dd\tau_{\Sigma_1\sqcup \Sigma_2}(\mu_{\mm{GFF}}^{M})&= \iint_{}^{}(F\otimes G)\circ \mathcal{G}_{M,2}^1(\varphi_1,\varphi_2)\dd\mu_{\DN}^{\Sigma_1,M} \otimes \dd \mu_{\DN,D}^{\Sigma_2,M\setminus \Sigma_1} \\
    &=\int_{}^{}F(\varphi_1)\dd  \mu_{\DN}^{\Sigma_1,M}(\varphi_1)\int G(\varphi_2+\mathcal{M}_{M,2}^1 \varphi_1)\dd \mu_{\DN,D}^{\Sigma_2,M\setminus \Sigma_1}(\varphi_2)  \\
  &=\int_{}^{}F(\varphi_1)\dd  \mu_{\DN}^{\Sigma_1,M}\int_{}^{}G(\varphi_2)\dd(\mathcal{M}_{M,2}^1 \varphi_1)_*\mu_{\DN,D}^{\Sigma_2,M\setminus \Sigma_1}. 
\end{align*}
as~$F$ and~$G$ range over all positive measurable functionals, we obtain the result.
  \end{proof}

\begin{def7}
  Applying corollary \ref{corr-rad-niko-dense} and the Cameron-Martin-Girsanov formula (\cite{Bogachev} corollary 2.4.3), one obtains a relation between quadratic forms (the corresponding relation for determinants also works out by BFK),
  \begin{align*}
    &\quad \Bank{\bnom{\varphi_1}{x},\DN_{M}^{\Sigma_1\sqcup\Sigma_2}\bnom{\varphi_1}{x}}_{L^2(\Sigma_1\sqcup\Sigma_2)}\\
    =&\quad \ank{\varphi_1,\DN_M^{\Sigma_1}\varphi_1}_{L^2(\Sigma_1)}+\ank{x,\DN_{M\setminus\Sigma_1}^{\Sigma_2,D}x}_{L^2(\Sigma_2)}\\
    &\quad -2
    \ank{x,\DN_{M\setminus\Sigma_1}^{\Sigma_2,D}\mathcal{M}_{M,2}^1\varphi_1}_{L^2(\Sigma_2)}+\ank{\mathcal{M}_{M,2}^1\varphi_1,\DN_{M\setminus\Sigma_1}^{\Sigma_2,D}\mathcal{M}_{M,2}^1\varphi_1}_{L^2(\Sigma_2)}
  \end{align*}
  which, of course, can also be obtained directly noting
  \begin{align*}
    \ank{\varphi_1,\DN_M^{\Sigma_1}\varphi_1}_{L^2(\Sigma_1)}&=\Bank{\bnom{\varphi_1}{\mathcal{M}_{M,2}^1\varphi_1},\DN_{M}^{\Sigma_1\sqcup\Sigma_2}\bnom{\varphi_1}{\mathcal{M}_{M,2}^1\varphi_1}}_{L^2(\Sigma_1\sqcup\Sigma_2)},\\
\ank{x,\DN_{M\setminus\Sigma_1}^{\Sigma_2,D}x}_{L^2(\Sigma_2)}&=\Bank{\bnom{0}{x},\DN_{M}^{\Sigma_1\sqcup\Sigma_2}\bnom{0}{x}}_{L^2(\Sigma_1\sqcup\Sigma_2)},
  \end{align*}
  et cetera, and expanding the left hand side by writing~$[\smx{\varphi_1\\x}]=[\smx{\varphi_1\\ \mathcal{M}_{M,2}^1\varphi_1}]+[\smx{0\\ x-\mathcal{M}_{M,2}^1\varphi_1}]$.
\end{def7}

\begin{def7}
  In the case~$M=\Sigma\times \mb{R}$ with~$\Sigma_1:=\Sigma\times \{0\}$,~$\Sigma_2:=\Sigma\times \{t\}$, and Dirichlet condition at~$\Sigma\times \{\pm \infty\}$ (namely decay at~$\infty$), we have the explicit expression
  \begin{equation}
    \mathcal{M}^1_{M,2}=\me^{-t\mn{D}_{\Sigma}},
    \label{}
  \end{equation}
  where~$\mn{D}_{\Sigma}=(\Delta_{\Sigma}+m^2)^{1/2}$. Also, in this case~$\mn{D}_{\Sigma}=\DN_{M_+}^{\Sigma}$,~$M_+=\Sigma\times [0,+\infty)$.
\end{def7}

\subsection{Locality of the $P(\phi)$ interaction and the Restricted GFF}\label{sec-locality}

In this subsection we pronounce what is meant by the \textsf{locality} of the $P(\phi)$-interaction and prove it. Let~$M$ be a closed Riemannian surface and~$\Omega^{\circ}\subset M$ a domain with smooth boundary~$\partial\Omega$. Denote~$\Omega=\ol{\Omega^{\circ}}$. The formulation will be based on the domain $\sigma$-algebras defined by the GFF random varibales $\ank{-,f}_{L^2}$, $f\in C^{\infty}(M)$. That is, for~$A\subset M$ a closed set, we define the~$\sigma$-algebra
\begin{equation}
  \mathcal{O}_A\defeq \sigma\left( \phi(f)~|~f\in W_A^{-1}(M) \right).
  \label{}
\end{equation}

\begin{def7}
  In general one could consider \textsf{regular} domains~$\Omega^{\circ}$. A domain is called \textsf{regular} if~$W^{-1}_{\Omega^{\circ}}(M)=W^{-1}_{\Omega}(M)$, the two spaces defined by (\ref{eqn-def-sobo-open-closure}) and (\ref{eqn-def-sobo-closed-support}). See also Simon \cite{Sim2} page 267.
\end{def7}
Suppose~$\chi \in C^{\infty}(M)$. We define
\begin{equation}
  S_{\Omega,\varepsilon,\chi}(\phi)\defeq \int_{M}^{}\chi(x) {:}P(\phi_{\Omega,\varepsilon}^{D}(x)+\phi_{\partial\Omega,\varepsilon}(x)){:}\dd V_M(x),
  \label{}
\end{equation}
where
\begin{equation}
  \phi_{\Omega,\varepsilon}^{D}(x)\defeq \phi(P_{\ol{\Omega^c}}^{\perp} E_{\varepsilon}\delta_x),\quad \phi_{\partial\Omega,\varepsilon}(x)\defeq\phi(P_{\partial\Omega}E_{\varepsilon}\delta_x),\quad \phi\sim \mu_{\mm{GFF}}^M.
  \label{}
\end{equation}

\begin{prop}[locality]\label{prop-locality}
  Whenever~$\chi\in C_c^{\infty}(\Omega^{\circ})$, we have, for any fixed Wick ordering~${:}\bullet{:}$,
  \begin{equation}
    S_{M,\chi}(\phi)=\int_{M}^{}\chi(x) {:}P(\phi(x)){:}\dd V_M(x)=\lim_{\varepsilon\to 0}S_{\Omega,\varepsilon,\chi}(\phi),
    \label{eqn-decoupling-eqv-rv}
  \end{equation}
  and hence~$S_{M,\chi}$ is~$\mathcal{O}_{\Omega}$-measurable. In particular,
  \begin{equation}
    \int_{\Omega}^{}{:}P(\phi(x)){:}\dd V_{\Omega}= \int_{\Omega}^{} {:}P(\phi^{D}_{\Omega}(x)+\phi_{\partial\Omega}(x)){:}\dd V_{\Omega}
    \label{}
  \end{equation}
  and is~$\mathcal{O}_{\Omega}$-measurable.
\end{prop}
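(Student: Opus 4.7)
My plan is to exploit the \textbf{strict locality} of the Dyatlov--Zworski mollifier $E_\varepsilon$: its Schwartz kernel vanishes outside $\{(x,y) : d_g(x,y) < \varepsilon\}$, so $E_\varepsilon\delta_x$ is a test function whose support shrinks to $\{x\}$ as $\varepsilon\to 0$. Combined with the orthogonal Sobolev decomposition of Lemma \ref{lemm-sobo-decomp}, this should reduce the locality statement to a deterministic identity of smeared test distributions; Proposition \ref{prop-nelson-main} (regulator-independent convergence) then closes the argument.

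I would set $K := \supp(\chi) \subset \Omega^\circ$ and $d_0 := \dist(K,\partial\Omega) > 0$ and restrict to $0 < \varepsilon < d_0$. For each $x \in K$, $E_\varepsilon\delta_x$ is supported in $B_\varepsilon(x) \subset \Omega^\circ$, hence lies in $C_c^\infty(\Omega^\circ) \subset W^{-1}_\Omega(M)$. Next I would refine Lemma \ref{lemm-sobo-decomp} along the dissecting hypersurface $\partial\Omega$, in the same way as the lemma preceding Corollary \ref{cor-stoc-decomp-disec}, to obtain the threefold $W^{-1}(M)$-orthogonal decomposition
$$
W^{-1}(M) = W^{-1}_{\partial\Omega}(M)\,\oplus\,(\Delta+m^2)W^1_{\Omega^\circ}(M)\,\oplus\,(\Delta+m^2)W^1_{(\overline{\Omega^c})^\circ}(M),
$$
with orthogonal projections $P_{\partial\Omega}$, $P^\perp_{\overline{\Omega^c}}$ and $P^\perp_\Omega$ respectively. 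A short support argument then identifies $W^{-1}_\Omega(M)$ with the sum of the first two summands (the third consists of distributions supported in $\overline{\Omega^c}$, whose intersection with $W^{-1}_\Omega(M)$ is trivial), giving the deterministic identity
$$
E_\varepsilon\delta_x = P_{\partial\Omega} E_\varepsilon\delta_x + P^\perp_{\overline{\Omega^c}} E_\varepsilon\delta_x, \qquad x\in K,\;\varepsilon<d_0.
$$
Pairing against $\phi$ yields the surely valid pointwise identity $\phi_\varepsilon(x) = \phi_{\partial\Omega,\varepsilon}(x) + \phi_{\Omega,\varepsilon}^{D}(x)$ on $K$. The two integrands defining $S_{M,\varepsilon,\chi}$ and $S_{\Omega,\varepsilon,\chi}$ thus become the very same Wick polynomial of the very same random variable---for any fixed Wick ordering, including the local ${:}\bullet{:}_0$ of Section \ref{sec-change-wick}, whose defining log-covariance depends only on $d_g$ locally around $x$. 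Hence $S_{M,\varepsilon,\chi}(\phi) = S_{\Omega,\varepsilon,\chi}(\phi)$ for all $\varepsilon<d_0$.

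Letting $\varepsilon\to 0$, Proposition \ref{prop-nelson-main} and Corollary \ref{cor-nelson-all-Lp-cutoff-limit} give $S_{M,\varepsilon,\chi}\to S_{M,\chi}$ in every $L^p(\mu_{\mm{GFF}}^M)$, and the identity above forces $S_{\Omega,\varepsilon,\chi}$ to converge to the same limit, which establishes the first displayed equation. For measurability, $P_{\partial\Omega} E_\varepsilon\delta_x \in W^{-1}_{\partial\Omega}(M)$ and $P^\perp_{\overline{\Omega^c}}E_\varepsilon\delta_x \in (\Delta+m^2)W^1_{\Omega^\circ}(M)$, both of which sit inside $W^{-1}_\Omega(M)$ (the second by locality of $\Delta+m^2$), so $\phi_{\partial\Omega,\varepsilon}(x)$ and $\phi_{\Omega,\varepsilon}^{D}(x)$ are $\mathcal{O}_\Omega$-measurable; each $S_{\Omega,\varepsilon,\chi}$ is therefore $\mathcal{O}_\Omega$-measurable, and so is the $L^p$-limit $S_{M,\chi}$. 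The main obstacle I expect is the Sobolev bookkeeping: carefully verifying that the three summands above are mutually orthogonal in $W^{-1}(M)$ with the advertised projection formulae, and that $(\Delta+m^2)W^1_{\Omega^\circ}(M)$ cuts out exactly the ``Dirichlet piece'' of $W^{-1}_\Omega(M)$ orthogonal to $W^{-1}_{\partial\Omega}(M)$. Once these Sobolev identities are in place, the probabilistic content is essentially automatic, which is the precise sense in which the regulator's locality is ``transferred'' to the limiting random variable.
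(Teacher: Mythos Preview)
Your proposal is correct and follows essentially the same approach as the paper: both exploit the strict locality of the Dyatlov--Zworski regulator $E_\varepsilon$ to force $E_\varepsilon\delta_x\in W^{-1}_\Omega(M)$ for $x\in\supp\chi$ and $\varepsilon$ small, then invoke the threefold Sobolev decomposition (the lemma preceding Corollary~\ref{cor-stoc-decomp-disec}) to obtain the deterministic identity $\phi_\varepsilon(x)=\phi_{\Omega,\varepsilon}^D(x)+\phi_{\partial\Omega,\varepsilon}(x)$, and conclude via Proposition~\ref{prop-nelson-main}. The ``Sobolev bookkeeping'' you flag as the main obstacle is precisely the content of that preparatory lemma, so no new work is needed there; the paper's proof is in fact shorter than your outline because it takes those orthogonality relations as already established.
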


\begin{proof}
  Without loss of generality we suppose~$\supp \chi$ stays~$\delta$-away from~$\partial\Omega$ for some~$\delta>0$. Then corollary \ref{cor-nelson-all-Lp-cutoff-limit} and approximation in~$L^4$ of a general~$\chi$ as well as~$1_{\Omega}$ gives the result. We note that the support of~$E_{\varepsilon}\delta_x$ is contained in the~$\varepsilon$-ball around~$x$. Then whenever~$\varepsilon<\delta$ for~$x\in \supp\chi$ we have~$\supp(E_{\varepsilon}\delta_x)\subset \Omega$, hence
  \begin{equation}
    P_{\Omega}^{\perp}E_{\varepsilon}\delta_x =0,
    \label{}
  \end{equation}
  and
  \begin{equation}
    \phi_{\varepsilon}(x)\equiv \phi(P_{\ol{\Omega^c}}^{\perp} E_{\varepsilon}\delta_x) +\phi(P_{\partial\Omega}E_{\varepsilon}\delta_x),
    \label{}
  \end{equation}
  both~$\mathcal{O}_{\Omega}$-measurable. Thus
  \begin{equation}
    {:}P(\phi_{\Omega,\varepsilon}^{D}(x)+\phi_{\partial\Omega,\varepsilon}(x)){:}\equiv {:}P(\phi_{\varepsilon}(x)){:}
    \label{}
  \end{equation}
  for~$x\in \supp \chi$ and~$\varepsilon<\delta$, and is~$\mathcal{O}_{\Omega}$-measurable. Therefore (\ref{eqn-decoupling-eqv-rv}) holds by proposition \ref{prop-nelson-main}.
\end{proof}

We emphasize here that for our purposes we employ the Wick ordering~${:}\bullet{:}_0$ of section \ref{sec-change-wick} in defining~$S_{M,\chi}$. This is to say
\begin{equation}
  {:}\phi_{\varepsilon}(x)^{2n}{:}=\sum_{j=0}^{n} \frac{(-1)^j (2n)!}{(2n-2j)! j! 2^j}C_{\varepsilon}(x)^j \phi_{\varepsilon}(x)^{2n-2j},
  \label{}
\end{equation}
where
\begin{equation}
  C_{\varepsilon}(x)=\iint E_{\varepsilon}(x,y)C_0(y,z)E_{\varepsilon}(z,x)\dd V_M(y)\dd V_M(z).
  \label{}
\end{equation}
Since~$E_{\varepsilon}(x,-)$ is supported in an~$\varepsilon$-ball around~$x$, and~$C_0(y,z)$ depends only on the geometry of~$M$ resctricted to a convex neighborhood of~$y$,~$z$, the term~$C_{\varepsilon}(x)$ depends only on the geometry of~$M$ locally near~$x$. This means that under~${:}\bullet{:}_0$, once~$\supp \chi\subset \Omega$, the limiting (integrated) random variable~$S_{M,\chi}$, in addition to being~$\mathcal{O}_{\Omega}$-measurable, is in fact fully determined with knowledge of the metric~$g|_{\Omega}$ restricted to~$\Omega$. This allows the freedom of choosing the ambient manifold~$M$ where~$\Omega$ isometrically embeds in defining the interaction over~$\Omega$ (see definition \ref{def-inter-over-domain}).

Now we repeat proposition \ref{prop-locality} in a different way:
\begin{corr}\label{cor-interact-locality}
   Whenever~$\chi\in C_c^{\infty}(\Omega^{\circ})$, for any fixed Wick ordering~${:}\bullet{:}$,
   \begin{equation}
     \mb{E}_{\mm{GFF}}^M[S_{M,\chi}|\mathcal{O}_{\Omega}]=S_{M,\chi},
     \label{}
   \end{equation}
   and~$S_{M,\chi}$ is independent of~$\phi_{\Omega^c}^D$, that is, of~$\phi(P_{\Omega}^{\perp} f)$ for all~$f\in C^{\infty}(M)$ or~$\phi((\Delta+m^2)f)$ for all~$f\in C_c^{\infty}(\Omega^c)$. Moreover, since~$\Omega$ is regular,~$S_{M,\chi}$ is the limit in~$L^2(\mu_{\mm{GFF}}^M)$ of polynomials of random variables of the form~$\phi(f)$ with~$f\in C_c^{\infty}(\Omega^{\circ})$ via the It\^o-Wiener-Segal isomorphism.\hfill~$\Box$
\end{corr}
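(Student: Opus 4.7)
The plan is to derive all three claims directly from Proposition \ref{prop-locality}, together with standard facts about Gaussian Hilbert spaces and the It\^o--Wiener--Segal decomposition. The identity $\mathbb{E}_{\mm{GFF}}^M[S_{M,\chi}|\mathcal{O}_\Omega]=S_{M,\chi}$ is immediate: by lemma \ref{lemmB6} and corollary \ref{cor-nelson-all-Lp-cutoff-limit}, $S_{M,\chi}\in L^2(\mu_{\mm{GFF}}^M)$, and by Proposition \ref{prop-locality} it is $\mathcal{O}_\Omega$-measurable, so the defining property of conditional expectation applies.

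For the independence of $S_{M,\chi}$ from $\phi_{\Omega^c}^D$, I would use the following standard Gaussian fact: two sub-$\sigma$-algebras generated by orthogonal closed subspaces of the Gaussian Hilbert space are independent. It thus suffices to identify the two relevant subspaces of $W^{-1}(M)$. By definition, $\mathcal{O}_\Omega$ is generated by $\{\phi(f):f\in W^{-1}_\Omega(M)\}$, whose closed linear span in the Gaussian Hilbert space is $W^{-1}_\Omega(M)$. On the other hand, by lemma \ref{lemm-sobo-decomp}, the random variables $\phi(P_\Omega^\perp f)$, $f\in C^\infty(M)$, take values in $W^{-1}_\Omega(M)^\perp$. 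These two subspaces are orthogonal by construction, so $\mathcal{O}_\Omega$ and $\sigma(\phi_{\Omega^c}^D)$ are independent, and hence so are $S_{M,\chi}$ and $\phi_{\Omega^c}^D$. The equivalent description in terms of $\phi((\Delta+m^2)f)$, $f\in C_c^\infty(\Omega^c)$, then follows from lemma \ref{lemm-sobo-decomp}: the operator $\Delta+m^2$ sends $W^1_{M\setminus\Omega}(M)$ isometrically onto $W^{-1}_\Omega(M)^\perp$, and $C_c^\infty(\Omega^c)$ is dense in $W^1_{M\setminus\Omega}(M)$ by the very definition (\ref{eqn-def-sobo-open-closure}), so the images span the same closed subspace.

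Finally, for the polynomial approximation, I would invoke the It\^o--Wiener--Segal isomorphism in the following form: for any closed subspace $V$ of the Gaussian Hilbert space, $L^2(\sigma(V))$ is the $L^2$-closure of polynomials in $\{\phi(f):f\in D\}$ for any $D\subset V$ that is dense in $V$. Proposition \ref{prop-locality} places $S_{M,\chi}$ in $L^2(\mathcal{O}_\Omega)$, i.e.\ in $L^2(\sigma(W^{-1}_\Omega(M)))$. The regularity assumption $W^{-1}_{\Omega^\circ}(M)=W^{-1}_\Omega(M)$ (which holds since $\partial\Omega$ is smooth) guarantees that $C_c^\infty(\Omega^\circ)$ is dense in $W^{-1}_\Omega(M)$ by the very definition of $W^{-1}_{\Omega^\circ}(M)$, so the corresponding polynomials are $L^2$-dense in the space to which $S_{M,\chi}$ belongs.

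I do not expect any real obstacle here: the whole statement is essentially a repackaging of Proposition \ref{prop-locality} once the $\mathcal{O}_\Omega$-measurability is in hand, and the only care needed is in matching the various Sobolev projections of lemma \ref{lemm-sobo-decomp} with the corresponding Gaussian sub-Hilbert spaces, which is exactly what that lemma was arranged to supply.
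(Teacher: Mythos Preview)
Your proposal is correct and matches the paper's approach: the paper presents this corollary with only a $\Box$, treating it as an immediate restatement of Proposition~\ref{prop-locality}, and the details you supply (measurability giving the conditional-expectation identity, orthogonality of $W^{-1}_{\Omega}(M)$ and $W^{-1}_{\Omega}(M)^{\perp}$ giving independence, and regularity plus It\^o--Wiener--Segal giving the polynomial approximation) are exactly the standard unpacking that the paper leaves implicit.
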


This result motivates

\begin{deef}\label{def-res-gff}
  Let~$\Omega^{\circ}\subset M$ be a regular domain. The \textsf{(massive) Gaussian Free Field} over~$\Omega^{\circ}$ \textsf{restricted from~$M$} is the centered Gaussian process indexed by~$C_c^{\infty}(\Omega^{\circ})$, with covariance
  \begin{equation}
    \mb{E}[\phi(f)\phi(h)]=\ank{f,h}_{W^{-1}_{\Omega}(M)}
    \label{}
  \end{equation}
  for any~$f$,~$h\in C_c^{\infty}(\Omega^{\circ})$. We denote it by $\phi|_{\Omega}$.
\end{deef}

Since the inner product of~$W^{-1}_{\Omega}(M)$ depends on~$M$, the restricted GFF does not make sense on~$\Omega^{\circ}$ alone. However, by corollary \ref{cor-stoc-decomp-disec}, it is equal in law to~$\phi_{\partial\Omega}+\phi_{\Omega}^D$, where~$\phi_{\Omega}^D$ does make sense over~$\Omega^{\circ}$ and the law of~$\phi_{\partial\Omega}=\PI_{\Omega}^{\partial\Omega}(\tau_{\partial\Omega}\phi)$ is determined via~$\PI_{\Omega}^{\partial\Omega}$ by that of a boundary data over~$\partial\Omega$.

Going back to corollary \ref{cor-interact-locality} we see that for~$\chi\in C_c^{\infty}(\Omega^{\circ})$, the interaction~$S_{M,\chi}$ can now be defined as a random variable of the sample~$\phi|_{\Omega}\in\mathcal{D}'(\Omega^{\circ})$. Equally, it is also a random variable over~$\mathcal{D}'(\Omega^{\circ})\times \mathcal{D}'(\partial\Omega)$ equipped with~$\mu_{\mm{GFF}}^{\Omega,D}\otimes \mu_C^{\partial\Omega}$ where~$\mu_C^{\partial\Omega}$ is some probability measure on~$\mathcal{D}'(\partial\Omega)$ mutually absolutely continuous with respect to~$\tau_{\partial\Omega}(\mu_{\mm{GFF}}^M)$ and on the same~$\sigma$-algebra. We consider only a few very specific candidates for~$\mu_C^{\partial\Omega}$.

Now suppose we start with a Riemannian surface~$(\Omega,g)$ with totally geodesic boundary~$\partial\Omega$. With the help of the isometric double~$|\Omega|^2$ discussed in remark \ref{rem-real-tun-geo} we define
\begin{deef}\label{def-inter-over-domain}
  The \textsf{interaction over~$\Omega$} is
  \begin{equation}
    S_{\Omega}(\phi_{\Omega}^D|\varphi)=S_{\Omega}(\phi_{\Omega}^D+\PI_{\Omega}^{\partial\Omega}\varphi)\defeq S_{|\Omega|^2,1_{\Omega}}(\phi_{\Omega}^D+\PI_{\Omega}^{\partial\Omega}\varphi)
    \label{}
  \end{equation}
  as an~$L^2$-random variable over~$\mathcal{D}'(\Omega^{\circ})\times \mathcal{D}'(\partial\Omega)$ equipped with~$\mu_{\mm{GFF}}^{\Omega,D}\otimes \mu_{2\DN}^{\partial\Omega,\Omega}$, where the latter is the Gaussian measure with covariance operator~$\frac{1}{2}(\DN_{\Omega}^{\partial\Omega})^{-1}$.
\end{deef}

\begin{def7}\label{rem-interp-inter-over-omega}
    We emphasize here that the variable~$S_{\Omega}(\phi_{\Omega}^D|\varphi)$ must be understood for~$\phi_{\Omega}^D$ and~$\varphi$ \textit{both} random, with the law of~$\varphi$ mutually absolutely continuous with respect to~$\mu_{2\DN}^{\partial\Omega,\Omega}$ (thus~$\mu_{2\mn{D}}^{\partial\Omega}$ is a valid candidate). Thus if one asks how much can one ``fix'' $\varphi$ as one interprets~$S_{\Omega}(\phi_{\Omega}^D|\varphi)$, the answer is that it makes sense as a random variable of~$\phi_{\Omega}^D$ alone for almost every fixed~$\varphi$ under $\mu_{2\mn{D}}^{\partial\Omega}$ (but this full-measure set with respect to $\mu_{2\mn{D}}^{\partial\Omega}$ is generally unknown). Said from a slightly different perspective, the expression~$S_{\Omega}(\phi)$ for a generic~$\phi\in \mathcal{D}'(\Omega^{\circ})$ makes sense as a random variable only when~$\phi$ follows a law mutually absolutely continuous with respect to~$\mu_{\mm{GFF}}^{\Omega,D} * (\PI_{\Omega}^{\partial\Omega})_*\mu_{2\mn{D}}^{\partial\Omega}$, where ``$*$'' denotes convolution product. Statements involving~$S_{\Omega}(\phi_{\Omega}^D|\varphi)$ in the sequel should be understood in this sense. 
  \end{def7}

  The following result will be useful in lemma \ref{lemm-amplitude-express}.

\begin{prop}\label{prop-decoupling}
  For any fixed Wick ordering~${:}\bullet{:}$ we have
  \begin{equation}
    \int_{M}^{}{:}P(\phi(x)){:}\dd V_M=\int_{\Omega}^{} {:}P(\phi^{D}_{\Omega}(x)+\phi_{\partial\Omega}(x)){:}\dd V_{\Omega} +\int_{\Omega^c}^{} {:}P(\phi_{\Omega^c}^{D}(x)+\phi_{\partial\Omega}(x)){:}\dd V_{\Omega^c}
    \label{eqn-decoup-inter}
  \end{equation}
  in~$L^p(\mu_{\mm{GFF}}^M)$ for all~$1\le p<\infty$ and in particular pointwise almost surely, as a result
  \begin{equation}
    \me^{- \int_{M}^{}{:}P(\phi(x)){:}\dd V_M}=\me^{-\int_{\Omega}^{} {:}P(\phi^{D}_{\Omega}(x)+\phi_{\partial\Omega}(x)){:}\dd V_{\Omega}} \me^{-\int_{\Omega^c}^{} {:}P(\phi_{\Omega^c}^{D}(x)+\phi_{\partial\Omega}(x)){:}\dd V_{\Omega^c}}
    \label{}
  \end{equation}
  pointwise almost surely and thus also in~$L^1(\mu_{\mm{GFF}}^M)$.
\end{prop}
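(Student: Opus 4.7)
The plan is to derive (\ref{eqn-decoup-inter}) by invoking Proposition \ref{prop-locality} separately on $\Omega$ and on its complement $\Omega^c$, and summing the two one-sided identities using the a.e.\ decomposition $1_M = 1_\Omega + 1_{\Omega^c}$ on $M$ (the interface $\partial\Omega$ being a smooth curve and hence Lebesgue-null in the two-dimensional $M$). Concretely, by the linearity and $L^4$-to-$L^p$ continuity of $\chi \mapsto S_{M,\chi}$ from Corollary \ref{cor-nelson-all-Lp-cutoff-limit}, one first writes
\[
\int_M {:}P(\phi(x)){:}\,\dd V_M \;=\; S_{M,1_\Omega}(\phi) \;+\; S_{M,1_{\Omega^c}}(\phi)
\]
as an equality of random variables in $L^p(\mu_{\mm{GFF}}^M)$ for every $1 \le p < \infty$.

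Next, the ``in particular'' statement of Proposition \ref{prop-locality} applied to $\Omega$ identifies $S_{M,1_\Omega}(\phi)$ with $\int_\Omega {:}P(\phi_\Omega^D + \phi_{\partial\Omega}){:}\,\dd V_\Omega$, and a symmetric application to $\Omega^c$---which is itself a compact regular subdomain of $M$ bounded by the same smooth hypersurface $\partial\Omega$, so all hypotheses of the proposition are satisfied---identifies $S_{M,1_{\Omega^c}}(\phi)$ with $\int_{\Omega^c} {:}P(\phi_{\Omega^c}^D + \phi_{\partial\Omega}){:}\,\dd V_{\Omega^c}$. The crucial point that the common ``boundary'' term $\phi_{\partial\Omega} = \PI_M^{\partial\Omega} \tau_{\partial\Omega} \phi$ is the \emph{same} in both decompositions is guaranteed by the three-way Markov splitting $\phi = \phi_{\partial\Omega} + \phi_\Omega^D + \phi_{\Omega^c}^D$ from Corollary \ref{cor-stoc-decomp-disec}. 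Summing the two identities yields (\ref{eqn-decoup-inter}) in $L^p(\mu_{\mm{GFF}}^M)$ for every $1 \le p < \infty$, and by extracting a subsequence, pointwise almost surely.

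The exponential factorization follows by exponentiating the almost-sure identity, using $\me^{-(a+b)} = \me^{-a}\me^{-b}$ pointwise. That the right-hand product lies in $L^1(\mu_{\mm{GFF}}^M)$ is then immediate from Theorem \ref{thrm-nelson}, which gives $\mb{E}[\me^{-S_{M,1}}] < \infty$, whence the equal product on the right is in $L^1$ as well.

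The only substantive ingredient beyond bookkeeping is Proposition \ref{prop-locality} itself; the $L^4$-approximation of $1_\Omega$ by elements of $C_c^\infty(\Omega^\circ)$ (take the characteristic function of $\Omega$ with an inwardly-shrunken tubular neighborhood of $\partial\Omega$ removed, then mollify---the error shrinks as $O(\varepsilon^{1/4})$ in $L^4$ since the tube has $V_M$-measure $O(\varepsilon)$) is already packaged into that proposition, and is what would otherwise be the main technical obstacle. A minor point to keep in mind is that the statement is for any fixed Wick ordering, not merely the local one of Section \ref{sec-change-wick}, so the ordering is taken as a formal datum on both sides and there is no ambiguity to reconcile.
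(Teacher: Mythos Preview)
Your argument is correct and follows essentially the same route as the paper's proof: both rely on the $L^4$-continuity of $\chi\mapsto S_{M,\chi}$ (Corollary \ref{cor-nelson-all-Lp-cutoff-limit}) together with Proposition \ref{prop-locality} to identify the two halves. The only cosmetic difference is that the paper passes through a three-term smooth partition of unity $1_M=\chi_\Omega+\chi_{\partial\Omega}+\chi_{\Omega^c}$ and then sends $\chi_{\partial\Omega}\to 0$ in $L^4$, whereas you invoke the ``in particular'' clause of Proposition \ref{prop-locality} directly; these are the same argument. One small slip: once (\ref{eqn-decoup-inter}) holds as an equality in $L^p$, it is automatically an a.s.\ equality of random variables---no subsequence extraction is needed, since there is no limit being taken at that stage.
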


\begin{def7}
  Note that the equality (\ref{eqn-decoup-inter}) says nothing about pointwise almost sure convergence of the mollified sequences~$S_{M,\varepsilon}$,~$S_{\Omega,\varepsilon}$, and~$S_{\Omega^c,\varepsilon}$. In fact, only separate subsequences in~$\varepsilon$ will converge pointwise almost surely to the three respective terms in (\ref{eqn-decoup-inter}). 
\end{def7}

\begin{proof}
  Consider a smooth partition of unity
  \begin{equation}
    1_M=\chi_{\Omega}+\chi_{\partial\Omega}+\chi_{\Omega^c},
    \label{eqn-decoup-pf-part-unity}
  \end{equation}
  where~$\chi_{\Omega}$ and~$\chi_{\Omega^c}$ are supported in the interiors of~$\Omega$ and~$\Omega^c$ respectively and~$\chi_{\partial\Omega}$ is supported near~$\partial\Omega$. Note then that (\ref{eqn-decoup-pf-part-unity}) holds as well in~$L^4(M)$ so by corollary \ref{cor-nelson-all-Lp-cutoff-limit} we have
  \begin{equation}
    \int_{M}^{}{:}P(\phi(x)){:}\dd V_M= \int_{M}^{}\chi_{\Omega}{:}P(\phi(x)){:}\dd V_M+\int_{M}^{}\chi_{\partial\Omega}{:}P(\phi(x)){:}\dd V_M+\int_{M}^{}\chi_{\Omega^c}{:}P(\phi(x)){:}\dd V_M
    \label{}
  \end{equation}
  in~$L^p(\mu_{\mm{GFF}}^M)$ for~$1\le p<\infty$. Now take
  \begin{equation}
    \chi_{\Omega}\to 1_{\Omega},\quad \chi_{\Omega^c}\to 1_{\Omega^c},\quad\textrm{and }\chi_{\partial\Omega}\to 0,
    \label{}
  \end{equation}
  in~$L^4(M)$, keeping the equality (\ref{eqn-decoup-pf-part-unity}) in the process. Then
  \begin{align*}
    \int_{M}^{}{:}P(\phi(x)){:}\dd V_M &=\int_{M}^{}1_{\Omega}{:}P(\phi(x)){:}\dd V_M+\int_{M}^{}1_{\Omega^c}{:}P(\phi(x)){:}\dd V_M \\
    &=\int_{\Omega}^{} {:}P(\phi^{D}_{\Omega}(x)+\phi_{\partial\Omega}(x)){:}\dd V_{\Omega} +\int_{\Omega^c}^{} {:}P(\phi_{\Omega^c}^{D}(x)+\phi_{\partial\Omega}(x)){:}\dd V_{\Omega^c}
  \end{align*}
  in~$L^p(\mu_{\mm{GFF}}^M)$ for~$1\le p<\infty$, and we finish the proof.
\end{proof}

\section{Presenting the $P(\phi)_2$ Model as a Segal Theory}\label{sec-segal-main}

\subsection{Description of Segal's Rules as Pertains to the Model}\label{sec-segal-descript}

\noindent The prototype of what we propose here is definition (4.4) on page 460 of \cite{Segal}, then we relate to the ``transfer operator'' formalism on page 455-456 (lemma \ref{lemm-segal-transfer}). We will consider real (separable) Hilbert spaces. We denote the category of such spaces with Hilbert-Schmidt operators as morphisms by~$\mn{Hilb}_{\mb{R}}$. The reason for Hilbert-Schmidt is \cite{Sim3} theorem 3.8.5.

As a preliminary consideration, let $\mathcal{C}$ denote the ``category'' whose objects are finite disjoint unions $\Sigma_j=\bigsqcup_i \mb{S}^1_i$ of Riemannian circles, each of which determined by its radius~$R_i$, and for any finite collection of such objects~$(\Sigma_j)_j$ we say that an \textsf{cobordism/unoriented morphism} among them is simply an orientable Riemannian surface with a \textit{totally geodesic} boundary~$\partial\Omega$ which is identified with~$\bigsqcup_j \Sigma_j$ via an isometry (see also remark \ref{rem-smooth-gluing}). With this in mind we can consider the following definition. Morphisms will be oriented once we make the transfer operator connection.

\begin{deef}
  [\cite{Segal} page 460] \label{def-segal-2} A 2d Riemannian QFT is a correspondence~$\mathcal{C}\lto\mn{Hilb}_{\mb{R}}$, such that
  \begin{enumerate}[(i)]
    \item to each oriented Riemannian circle~$\mb{S}^1_R$ of radius~$R$ there is associated a Hilbert space~$\mathcal{H}_R$, and to the disjoint union~$\bigsqcup_{i\in I} \mb{S}_i^1$ there is associated the tensor product~$\bigotimes_{i\in I} \mathcal{H}_i$ of the corresponding single-circle Hilbert spaces; in the degenerate case~$R=0$, we let~$\mathcal{H}_0=\mb{R}$;
    \item (\textsf{operator-reflection}) to each Riemannian surface~$\Omega$ such that~$\partial\Omega=\Sigma$, without distinguishing the orientations on the components, we associate an element~$\mathcal{A}_{\Omega}\in \mathcal{H}_{\Sigma} $; in the degenerate case where~$\Omega=M$ is closed, we associate a real number~$Z_{M}\in \mb{R}$, called the \textsf{partition function};
    \item (\textsf{sewing-trace}) suppose~$\Sigma_i$ and~$\Sigma_j$ are two connected components of~$\Sigma$ that are isometric, and let~$\rho:\Sigma_i\lto\Sigma_j$ be an (orientation reversing) isometry. Then
      \begin{equation}
	\mathcal{A}_{\Omega/\rho}=\ttr_{\rho}(\mathcal{A}_{\Omega}),
	\label{}
      \end{equation}
      where~$\Omega/\rho$ is the surface obtained from~$\Omega$ by gluing~$\Sigma_i$ with~$\Sigma_j$ along~$\rho$, and~$\ttr_{\rho}$ is the \textsf{trace map} such that writing~$\Sigma=\Sigma_i\sqcup\Sigma_j\sqcup\Sigma'$ (possibly~$\Sigma'=\{\mm{pt}\}$),~$\ttr_{\rho}$ is the map
      \begin{equation}
	\left.
	\begin{array}{rcl}
	  \ttr_{\rho}:\mathcal{H}_{\Sigma_i} \otimes \mathcal{H}_{\Sigma_j} \otimes \mathcal{H}_{\Sigma'}  &\lto&\mathcal{H}_{\Sigma'}  ,\\
	  F\otimes G\otimes H&\longmapsto &\ank{\rho_* F, G}_{\Sigma_j} H,
	\end{array}
	\right.
	\label{eqn-segal-general-trace}
      \end{equation}
      with~$\ank{-,-}_{\Sigma_j} $ being the (real) inner product on~$\mathcal{H}_{\Sigma_j} $.
    \end{enumerate}
\end{deef}

\begin{def7}
  A word of caution should be said immediately concerning (\ref{eqn-segal-general-trace}). As written,~$\ttr_{\rho}$ would not extend continuously to the whole of~$\mathcal{H}_{\Sigma_i} \otimes \mathcal{H}_{\Sigma_j} \otimes \mathcal{H}_{\Sigma'} $ for ``infinitely entangled'' states in~$\mathcal{H}_{\Sigma_i} \otimes \mathcal{H}_{\Sigma_j}$. However, for us each~$\mathcal{H}_{\Sigma}$ is~$L^2(Q_{\Sigma},\mu_{\Sigma})$ for a probability space~$(Q_{\Sigma},\mu_{\Sigma})$, and thus by discussion on \cite{RSim} page 52,~$\mathcal{H}_{\Sigma_i} \otimes \mathcal{H}_{\Sigma_j} \otimes \mathcal{H}_{\Sigma'} $ is represented as~$L^2(Q\times Q\times Q',\mu\otimes \mu\otimes \mu')$, where we identify~$(Q_{\Sigma_i},\mu_{\Sigma_i})$ and~$(Q_{\Sigma_j},\mu_{\Sigma_j})$ with~$(Q,\mu)$ via the isometry. The action of~$\ttr_{\rho}$ should be understood as
  \begin{equation}
    \ttr_{\rho}:\mathcal{A}_{\Omega}(x,y,z) \longmapsto \int_{}^{}\mathcal{A}_{\Omega}(x,x,z)\dd \mu(x),
    \label{eqn-def-flat-trace}
  \end{equation}
  when the latter integral converges, which we will show to happen for our model in sections \ref{sec-trace-axiom} and \ref{sec-gluing-final}. This is analogous to the ``flat trace'' of Atiyah and Bott \cite{AB}. It can be shown easily that (\ref{eqn-def-flat-trace}) coincides with (\ref{eqn-segal-general-trace}) on ``finitely entangled'' states of~$\mathcal{H}_{\Sigma_i} \otimes \mathcal{H}_{\Sigma_j} \otimes \mathcal{H}_{\Sigma'} $, that is, finite linear combinations~$\sum_i F_i\otimes G_i\otimes H_i$. However, the finiteness of neither (\ref{eqn-def-flat-trace}) nor (\ref{eqn-segal-general-trace}) would imply the operator~$U_{\Omega}$ of (\ref{eqn-def-segal-transfer}) is trace class proper.
\end{def7}

 \begin{def7}\label{rem-trace-and-diagonal}
   There is in \textit{general} no relation between the trace of an integral operator and the integral along diagonal of its Schwartz kernel. For there to exist a relation the Schwartz kernel needs to satisfy some continuity condition (as the diagonal has measure zero in the product space), or we must be in the case of \cite{Sim3} theorem 3.8.5. See Simon \cite{Sim3} section 3.11, also Vershik, Petrov and Zatitskiy \cite{PVZ} section 3.3.
 \end{def7}

Now we manufacture Hilbert-Schmidt operators in~$\mn{Hilb}_{\mb{R}}$ out of definition \ref{def-segal-2} with the help of \cite{Sim3} theorem 3.8.4. Let~$\Sigma_{\mm{in}}$,~$\Sigma_{\mm{out}}$ be two objects in~$\mathcal{C}$, now \textit{oriented}. We say that an unoriented morphism~$\Omega$ among~$\Sigma_{\mm{in}}$,~$\Sigma_{\mm{out}}$ is in~$\Mor(\Sigma_{\mm{in}},\Sigma_{\mm{out}})$ if the isometry~$\Sigma_{\mm{in}}\sqcup \Sigma_{\mm{out}}\lto \partial\Omega$ identifies the orientation of~$\Sigma_{\mm{in}}$ with that induced by an \textit{inward pointing} normal on~$\partial\Omega$, and that of~$\Sigma_{\mm{out}}$ an \textit{outward pointing} normal.

 Theorem 3.8.4 of \cite{Sim3} and definition \ref{def-segal-2} now implies
\begin{lemm}\label{lemm-segal-transfer}
  If~$\Omega\in \Mor(\Sigma_{\mm{in}},\Sigma_{\mm{out}})$, then the \textsf{Segal transfer operator}~$U_{\Omega}$ defined by
 \begin{equation}
    \left.
    \begin{array}{rcl}
      U_{\Omega}: \mathcal{H}_{\Sigma_{\mm{in}}} &\lto& \mathcal{H}_{\Sigma_{\mm{out}}},\\
      F &\longmapsto & \ttr_{\rho}(F\otimes \mathcal{A}_{\Omega}),
    \end{array}
    \right.
    \label{eqn-def-segal-transfer}
  \end{equation} 
  where~$\ttr_{\rho}$ is the map of (\ref{eqn-def-flat-trace}) with~$\Sigma_i=\Sigma_j=\Sigma_{\mm{in}}$,~$\Sigma'=\Sigma_{\mm{out}}$, and~$\rho=\one$, is Hilbert-Schmidt. Moreover,
  \begin{enumerate}[(i)]
    \item if~$\Omega_1\in\Mor(\Sigma_1,\Sigma_2)$,~$\Omega_2\in \Mor(\Sigma_2,\Sigma_3)$,~and $\Omega_2\cup_2 \Omega_1$ the Riemannian connected sum by gluing components corresponding to~$\Sigma_2$, then
      \begin{equation}
    U_{\Omega_2\cup_2 \Omega_1}=U_{\Omega_2}\circ U_{\Omega_1}.
    \label{eqn-def-segal-sew}
  \end{equation}
  \item suppose~$\Omega\in \Mor(\Sigma,\Sigma)$, with $\Sigma$ identified with itself along an isometry $\rho$, and denote by~$\check{\Omega}$ the closed surface obtained by gluing these components together, then
  \begin{equation}
    Z_{\check{\Omega}}=\ttr_{\rho} (U_{\Omega}).
    \label{}
  \end{equation}
\item if~$\Omega\in \Mor(\Sigma_1,\Sigma_2)$, denote by~$\Omega^*\in \Mor(\Sigma_2,\Sigma_1)$ the surface obtained by reversing the orientations of boundaries of~$\Omega$ without changing the orientation of~$\Omega$, then
  \begin{equation}
    U_{\Omega^*}=U_{\Omega}^{\dagger},
    \label{}
  \end{equation}
  with~$U_{\Omega}^{\dagger}$ denoting the (real) adjoint of~$U_{\Omega}$. \hfill~$\Box$
  \end{enumerate}
\end{lemm}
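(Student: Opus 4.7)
The lemma is essentially a translation of the axioms in Definition~\ref{def-segal-2} into the operator language via the identification $\mathcal{H}_\Sigma = L^2(Q_\Sigma, \mu_\Sigma)$, so each claim reduces to a direct kernel manipulation. First I would invoke Definition~\ref{def-segal-2}~(i) to identify
\[
  \mathcal{H}_{\partial\Omega} = \mathcal{H}_{\Sigma_{\mm{in}}} \otimes \mathcal{H}_{\Sigma_{\mm{out}}} \cong L^2(Q_{\mm{in}} \times Q_{\mm{out}}, \mu_{\mm{in}} \otimes \mu_{\mm{out}}),
\]
so that the amplitude $\mathcal{A}_\Omega$ provided by Definition~\ref{def-segal-2}~(ii) is literally an $L^2$-Schwartz kernel on $Q_{\mm{in}} \times Q_{\mm{out}}$. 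The operator $U_\Omega$ of (\ref{eqn-def-segal-transfer}), unwrapped via the flat-trace convention (\ref{eqn-def-flat-trace}) with $\Sigma_i = \Sigma_j = \Sigma_{\mm{in}}$, $\Sigma' = \Sigma_{\mm{out}}$, and $\rho = \one$, is then the integral operator
\[
  (U_\Omega F)(y) = \int_{Q_{\mm{in}}} \mathcal{A}_\Omega(x, y)\, F(x)\, \dd\mu_{\mm{in}}(x),
\]
and Simon~\cite{Sim3} Theorem~3.8.4 immediately yields $\nrm{U_\Omega}_{\mm{HS}} = \nrm{\mathcal{A}_\Omega}_{L^2} < \infty$, establishing the main claim.

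For (i), I would apply the sewing axiom of Definition~\ref{def-segal-2}~(iii) to the disjoint union $\Omega_1 \sqcup \Omega_2 \in \Mor(\Sigma_1 \sqcup \Sigma_2, \Sigma_2 \sqcup \Sigma_3)$, whose amplitude is $\mathcal{A}_{\Omega_1} \otimes \mathcal{A}_{\Omega_2}$ by the tensor-product functoriality implicit in (i)-(ii). Gluing the two copies of $\Sigma_2$ along $\rho = \one$ then produces $\Omega_2 \cup_2 \Omega_1$, and the axiom gives
\[
  \mathcal{A}_{\Omega_2 \cup_2 \Omega_1}(x_1, x_3) = \int_{Q_2} \mathcal{A}_{\Omega_1}(x_1, y)\, \mathcal{A}_{\Omega_2}(y, x_3)\, \dd\mu_2(y),
\]
which is precisely the kernel of $U_{\Omega_2} \circ U_{\Omega_1}$; interchanging the iterated integrals to conclude $U_{\Omega_2 \cup_2 \Omega_1} F = U_{\Omega_2}(U_{\Omega_1} F)$ is justified by Fubini together with the Hilbert-Schmidt (and hence $L^2$) property. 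For (ii), self-gluing $\Omega$ along $\rho$ and applying the same sewing axiom gives $Z_{\check{\Omega}} = \ttr_\rho(\mathcal{A}_\Omega)$, which via (\ref{eqn-def-flat-trace}) is the flat-trace integral along the diagonal of the kernel of $U_\Omega$, i.e.\ $\ttr_\rho(U_\Omega)$ by definition. Claim (iii) is the most direct: reversing the orientations of both boundary components interchanges the inward- and outward-pointing conventions, so the isometry $\partial\Omega^* \cong \Sigma_2 \sqcup \Sigma_1$ swaps the two arguments of $\mathcal{A}_\Omega$, and the transposed $L^2$ kernel is exactly the real adjoint $U_\Omega^\dagger$.

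The only point requiring vigilance is the one flagged in Remark~\ref{rem-trace-and-diagonal}: for a generic $L^2$ kernel the flat-trace integral along the diagonal need not agree with the operator trace, and $U_\Omega$ need not be trace class. This however causes no difficulty for the present lemma, which only asserts Hilbert-Schmidtness and uses $\ttr_\rho$ throughout in the flat-trace sense (\ref{eqn-def-flat-trace}); all four statements are then formal consequences of the axioms. The deeper question of whether $U_\Omega$ is actually trace class for the $P(\phi)_2$ model --- needed for the mass-gap and thermodynamic-limit arguments of Section~\ref{sec-period-cover} --- must be settled separately by exploiting the explicit Gaussian structure of $\mathcal{A}_\Omega$ and is not part of the axiomatic content addressed here.
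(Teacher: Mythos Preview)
Your proposal is correct and matches the paper's approach exactly: the paper states the lemma as an immediate consequence of Simon~\cite{Sim3} Theorem~3.8.4 and Definition~\ref{def-segal-2} (indicated by the sentence preceding the lemma and the terminal~$\Box$), and your write-up simply unpacks those details. Your caveat about the flat trace versus operator trace is also well-placed and mirrors the paper's own Remark~\ref{rem-trace-and-diagonal}.
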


\begin{figure}
    \centering
    \includegraphics[width=0.8\linewidth]{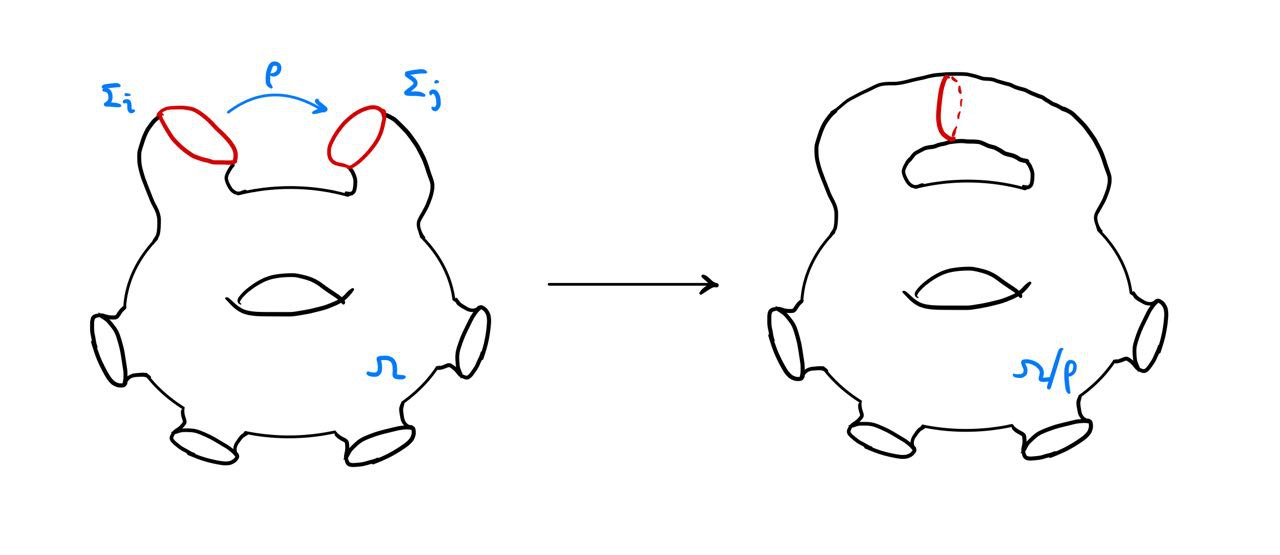}
    \caption{sewing}
    \label{fig-segal-gluing}
\end{figure}

\begin{def7}\label{rem-segal-proj}
  As alluded to at the bottom of subsection \ref{sec-BFK}, one could allow each~$\mathcal{A}_{\Omega}$ to be a \textit{ray} instead of a fixed vector in $\mathcal{H}_{\partial\Omega}$, which eventually enables an arbitrary (nonzero) constant to be included in (\ref{eqn-def-segal-sew}), obtaining the so-called \textsf{projective} version of the axiom. This is reasonable since quantum states are rays, not vectors.
\end{def7}

\begin{def7}
  An interesting remark appears at the bottom of page 457 of \cite{Segal}, namely the conjecture of Friedan that a theory in the sense of definition \ref{def-segal-2} is in fact completely determined by its restriction to \textit{closed surfaces}, namely the knowledge of the numbers~$Z_M$ for all closed~$M$. While far from proving this conjecture, we shall see in section \ref{sec-amplitude} that the correct definition of the amplitude ``derives'' very naturally from a special case of the trace axiom ((iii) of definition \ref{def-segal-2}). It seems certain that if one has knowledge not just of the numbers~$Z_M$ but the \textit{measures}~$\mu_{P(\phi)}^M$ as in (\ref{eqn-def-mes-gibbs-heu}), then the theory would be determined.
\end{def7}

\begin{def7}\label{rem-smooth-gluing}
To have safety of smooth gluing while taking into account at the same time our reflection constructions (the isometric double) required in section \ref{sec-amplitude}, one should enhance the objects~$\Sigma$ with \textsf{symmetric two-sided collars}~$\Sigma\times (-\varepsilon,\varepsilon)$ equipped with symmetric metrics making~$\Sigma\times \{0\}$ geodesic, and a cobordism~$\Omega$ as described above must allow each~$\Sigma_j$ as a component of~$\partial\Omega$ to have a tubular neighborhood isometric to~$\Sigma_j\times (-\varepsilon,0]$. Here \textsf{symmetric} means having an isometric involution~$\Theta$ that exchanges~$\Sigma\times\{\pm t\}$ and fixes exactly~$\Sigma\times\{0\}$ (actually the reflection would ensure~$\Sigma\times\{0\}$ being geodesic, see \cite{Ritter} section 2.1.1 and Alekseevsky et al \cite{AKLM}). In fact, one should consider all proper ingredients that make up what is called the \textsf{Riemannian bordism category} as described in detail in section 6.2 of H. Hohnhold, S. Stolz, and P. Teichner \cite{HST}. Perhaps a better consideration is not fixed collars but \textit{germs} of collars; see also the discussion in Kontsevich and Segal \cite{KS21} following definition 3.1. However, the problem of finding the right underlying category is somewhat orthogonal to the problems considered in this article.
\end{def7}

All these said, we now restate theorem \ref{thrm-intro-main-1} in the introduction in the following precise manner.

\begin{thrmbisbis}{thrm-intro-main-1}
\label{thrm-main-1-precise}
  The Hilbert spaces~$\mathcal{H}_R$ given by (\ref{eqn-def-hilb-space-circ}), the amplitudes~$\mathcal{A}_{\Omega}^P$ given by (\ref{eqn-def-amp-omega}) and the partition functions~$Z_M$ given by (\ref{eqn-def-part-func-rigor}) together gives a 2d Riemannian QFT satisfying the requirements of definition \ref{def-segal-2}.
\end{thrmbisbis}

The proof of this theorem constitutes the main body of subsections \ref{sec-trace-axiom} and \ref{sec-gluing-final}.

\subsection{The Hilbert Spaces}\label{sec-segal-hilb-space}

\noindent Now we associate a real Hilbert space~$\mathcal{H}_{R}$ to the Riemannian circle~$\mb{S}^1_R$ of radius~$R$. Let
\begin{equation}
    \mn{D}_R\defeq(\Delta_{\mb{S}^1_R}+m^2)^{1/2}
\end{equation}
be the positive square root of the positive Helmholtz operator on~$\mb{S}^1_R$. The circle having radius~$R$ would mean that we take~$\Delta_{\mb{S}^1_R}:=-\partial_{\theta}^2$ when the metric is~$R\dd\theta$, parametrized by the arc length~$\theta$. Denote by~$\mu_{(2\mn{D}_R)^{-1}}$ the Gaussian measure on~$\mathcal{D}'(\mb{S}^1_R)$ with covariance~$(2\mn{D}_R)^{-1}$ (for convenience, we just write~$\mu_{2\mn{D}}$). Then we define
\begin{equation}
  \mathcal{H}_R \defeq L^2_{\mb{R}}(\mathcal{D}'(\mb{S}^1_R),\wh{\mu}_{2\mn{D}}).
  \label{eqn-def-hilb-space-circ}
\end{equation}

 Here we take~$\wh{\mu}_{2\mn{D}}:=\mu_{2\mn{D}}\cdot\detz(2\mn{D}_{R})^{-\frac{1}{2}}$, namely the probability measure scaled by the positive finite constant~$\detz(2\mn{D}_{R})^{-\frac{1}{2}}$. Indeed, this same method could be applied directly to (finite) disjoint unions~$\Sigma=\bigsqcup_i \mb{S}_i^1$ and produce a measure on the corresponding~$\mathcal{D}'(\Sigma)$, and this is compatible with (i) of definition \ref{def-segal-2}. Indeed,
\begin{equation}
  \mathcal{D}'(\mb{S}^1_1\sqcup \mb{S}^1_2)=\mathcal{D}'(\mb{S}^1_1)\oplus \mathcal{D}'(\mb{S}^1_2)
  \label{}
\end{equation}
and the operator~$\mn{D}_{\Sigma}$ acts diagonally, and~$\mu_{2\mn{D}}^{\mb{S}_1\sqcup\mb{S}_2}=\mu_{2\mn{D}}^{\mb{S}_1}\otimes \mu_{2\mn{D}}^{\mb{S}_2}$ in view of lemma \ref{lemm-gaus-field-disj-indep} (for Gaussian fields). The real~$L^2$ space would then be the tensor product (\cite{RSim} page 52).

\subsection{Amplitudes = Schwartz Kernels} \label{sec-amplitude}

\noindent Let~$\Omega$ be an unoriented morphism among~$(\Sigma_j)_j$. We denote~$\Sigma:=\bigsqcup_j \Sigma_j$, and seek to define~$\mathcal{A}_{\Omega}\in \mathcal{H}_{\Sigma}$. We consider gluing~$\Omega$ with itself along an orientation reversing isometry~$\rho:\Sigma\lto \Sigma$, forming the isometric double~$|\Omega|^2=\Omega^*\cup_{\Sigma}\Omega$ (here~$\Omega^*$, as in (iii) of lemma \ref{lemm-segal-transfer}, denote the copy of~$\Omega$ with coorientation of~$\Sigma$ reversed). Suppose we have defined~$\mathcal{A}_{\Omega}$ satisfying definition \ref{def-segal-2}, then (iii) would imply
\begin{align*}
  Z_{|\Omega|^2}&=\int_{}^{}|\mathcal{A}_{\Omega}(\varphi)|^2 \dd\mu_{2\mn{D}}^{\Sigma}(\varphi)\detz (2\mn{D}_{\Sigma})^{-\frac{1}{2}} \\
  &=\int_{\mathcal{D}'(M)}^{}\me^{-S_{|\Omega|^2}(\phi)}\dd\mu_{\mm{GFF}}^{|\Omega|^2}(\phi)\detz(\Delta_{|\Omega|^2}+m^2)^{-\frac{1}{2}} \\
  &=\int_{\mathcal{D}'(\Sigma)}^{}\dd\tau_{\Sigma}\big(\me^{-S_{|\Omega|^2}}\cdot\mu_{\mm{GFF}}^{|\Omega|^2}\big)(\varphi) \detz(\Delta_{|\Omega|^2}+m^2)^{-\frac{1}{2}}.
\end{align*}
In the case~$\Omega\in \Mor(\Sigma_1,\Sigma_2)$ this corresponds to the fact that~$Z_{|\Omega|^2}=\ttr_{\rho}(U_{\Omega^*}U_{\Omega})$. This motivates the following definition.

\begin{deef}
  Let~$\Omega$ be an unoriented morphism among~$\Sigma$ and $P$ the polynomial defining the interaction. We define the \textsf{amplitude} associated to~$\Omega$ to be the quantity
\begin{equation}
  \mathcal{A}^P_{\Omega}(\varphi)\defeq \bigg[\frac{\detz(\Delta_{|\Omega|^2}+m^2)^{-\frac{1}{2}}}{\detz (2\mn{D}_{\Sigma})^{-\frac{1}{2}}}\cdot \frac{\dd\big[\tau_{\Sigma}\big(\me^{-S_{|\Omega|^2}}\cdot\mu_{\mm{GFF}}^{|\Omega|^2}\big)\big]}{\dd\mu_{2\mn{D}}^{\Sigma}}(\varphi) \bigg] ^{\frac{1}{2}},
  \label{eqn-def-amp-omega}
\end{equation}
where the second ratio denotes the Radon-Nikodym density.
\end{deef}

This is well-defined since~$\tau_{\Sigma}(\me^{-S_{|\Omega|^2}}\cdot\mu_{\mm{GFF}}^{|\Omega|^2}) \ll \mu_{2\mn{D}}^{\Sigma}$ as~$\tau_{\Sigma}(\mu_{\mm{GFF}}^{|\Omega|^2}) =\mu_{2\DN}^{\Sigma,\Omega} \ll \mu_{2\mn{D}}^{\Sigma}$ by corollary \ref{corr-rad-niko-dense}, and both are finite positive measures. We see also that~$\mathcal{A}_{\Omega}>0$ almost surely with respect to~$\wh{\mu}_{2\mn{D}}^{\Sigma}$. We have automatically~$\mathcal{A}_{\Omega}\in L^2(\mathcal{D}'(\Sigma),\wh{\mu}_{2\mn{D}}^{\Sigma})=\mathcal{H}_{\Sigma}$ since~$|\mathcal{A}_{\Omega}|^2\in L^1(\mathcal{D}'(\Sigma),\wh{\mu}_{2\mn{D}}^{\Sigma})$ by definition.

\begin{def7}
This definition (\ref{eqn-def-amp-omega}) is inspired by \cite{Pickrell} (definition 3). In fact it hints at an infinite-dimensional analogue of the notion of ``half-densities'', and is actually what \cite{Pickrell} considered. This latter notion has the advantage of being intrinsic. But due to insufficient literature discussing half-densities in the infinite-dimensional setting\footnote{D. Pickrell confirmed with the author that he did not know of a reference which discuss infinite-dimensional half-densities other than the appendix of \cite{Pickrell}. The author agrees with him that this notion might be a helpful one worthy of a detailed development.}, we adopt here the equivalent approach of fixing a background measure. The reader could also see below that the expression we obtain from this definition agrees, for example, with the one of \cite{GKRV} in its form.
\end{def7}

\begin{exxx}
  Let us derive the amplitude~$\mathcal{A}_{\Omega}^0$ for the free field, that is, with $S_{|\Omega|^2}=0$. Indeed,
  \begin{equation}
    |\mathcal{A}^0_{\Omega}(\varphi)|^2=\frac{\dd\tau_{\Sigma}\big(\mu_{\mm{GFF}}^{|\Omega|^2}\detz(\Delta_{|\Omega|^2}+m^2)^{-\frac{1}{2}}\big)}{\dd\mu_{2\mn{D}}^{\Sigma}\detz (2\mn{D})^{-\frac{1}{2}}}(\varphi).
    \label{}
  \end{equation}
  Since
  \begin{equation}
    \tau_{\Sigma}(\mu_{\mm{GFF}}^{|\Omega|^2})=\mu_{2\DN}^{\Sigma,\Omega},
    \label{}
  \end{equation}
  namely the Gaussian measure on~$\mathcal{D}'(\Sigma)$ with covariance~$\frac{1}{2}(\DN_{\Omega}^{\Sigma})^{-1}$, taking into account the BFK formula
  \begin{equation}
    \detz(\Delta_{|\Omega|^2}+m^2)=[\detz(\Delta_{\Omega,D}+m^2)]^2\detz(2\DN_{\Omega}^{\Sigma}),
    \label{}
  \end{equation}
  we obtain, by corollary \ref{corr-rad-niko-dense},
  \begin{equation} \mathcal{A}_{\Omega}^0(\varphi)=\detz(\Delta_{\Omega,D}+m^2)^{-\frac{1}{2}}\me^{-\frac{1}{2}\sank{\varphi,(\DN_{\Omega}^{\Sigma}-\mn{D})\varphi}_{L^2(\Sigma)}}.
    \label{}
  \end{equation}
\end{exxx}

\begin{lemm}\label{lemm-amplitude-express}
  We have
  \begin{equation}
    \mathcal{A}_{\Omega}^P(\varphi)=\mathcal{A}_{\Omega}^0(\varphi)\int_{}^{}\me^{-S_{\Omega}(\phi^{D}_{\Omega}|\varphi)}\dd\mu_{\mm{GFF}}^{\Omega,D}(\phi^{D}_{\Omega}),
    \label{}
  \end{equation}
  for almost every~$\varphi$ under~$\mu_{2\mn{D}}^{\Sigma}$. Here~$S_{\Omega}(\phi^{D}_{\Omega}|\varphi)$ is as defined in definition \ref{def-inter-over-domain}.
\end{lemm}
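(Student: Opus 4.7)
The plan is to derive both sides from a single computation of the push-forward measure $\tau_\Sigma(e^{-S_{|\Omega|^2}}\,\mu_{\mm{GFF}}^{|\Omega|^2})$ on $\mathcal{D}'(\Sigma)$, using the Markov decomposition of the GFF on the isometric double $|\Omega|^2=\Omega^*\cup_\Sigma\Omega$ together with the locality of the interaction. The key actors are already in place: corollary \ref{cor-stoc-decomp-disec} (tripartite Markov decomposition when $\Sigma$ dissects), proposition \ref{prop-decoupling} (splitting of the interaction), and corollary \ref{corr-rad-niko-dense} (Radon-Nikodym density $d\mu_{2\DN}^{\Sigma,\Omega}/d\mu_{2\mn{D}}^\Sigma$).

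First, I would apply corollary \ref{cor-stoc-decomp-disec} to $|\Omega|^2$ with dividing hypersurface $\Sigma$: under $\mu_{\mm{GFF}}^{|\Omega|^2}$, the field decomposes almost surely as $\phi=\PI_{|\Omega|^2}^\Sigma\varphi+\phi_\Omega^D+\phi_{\Omega^*}^D$ with the three summands independent and respectively distributed according to $\mu_{2\DN}^{\Sigma,\Omega}$, $\mu_{\mm{GFF}}^{\Omega,D}$ and $\mu_{\mm{GFF}}^{\Omega^*,D}$ (using $\tau_\Sigma(\mu_{\mm{GFF}}^{|\Omega|^2})=\mu_{2\DN}^{\Sigma,\Omega}$ because $\DN_{|\Omega|^2}^\Sigma=2\DN_\Omega^{\partial\Omega}$ by reflection symmetry). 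Since on $\Omega$ we have $\PI_{|\Omega|^2}^\Sigma\varphi|_\Omega=\PI_\Omega^{\partial\Omega}\varphi$ and analogously on $\Omega^*$, proposition \ref{prop-decoupling} applied to $|\Omega|^2=\Omega\cup_\Sigma\Omega^*$ yields, almost surely,
\begin{equation*}
  e^{-S_{|\Omega|^2}(\phi)}=e^{-S_\Omega(\phi_\Omega^D|\varphi)}\,e^{-S_{\Omega^*}(\phi_{\Omega^*}^D|\varphi)},
\end{equation*}
where each factor matches the definition \ref{def-inter-over-domain}. Because the integrand is non-negative and integrable (by Nelson's theorem \ref{thrm-nelson} applied on the closed surface $|\Omega|^2$), Fubini-Tonelli legitimately interchanges the integrals below.

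Next, I would compute for any bounded measurable $F$ on $\mathcal{D}'(\Sigma)$:
\begin{align*}
  \int_{\mathcal{D}'(\Sigma)}F\,d\big[\tau_\Sigma(e^{-S_{|\Omega|^2}}\mu_{\mm{GFF}}^{|\Omega|^2})\big]
  &=\iiint F(\varphi)\,e^{-S_\Omega(\phi_\Omega^D|\varphi)}e^{-S_{\Omega^*}(\phi_{\Omega^*}^D|\varphi)}\,d\mu_{\mm{GFF}}^{\Omega,D}\otimes d\mu_{\mm{GFF}}^{\Omega^*,D}\otimes d\mu_{2\DN}^{\Sigma,\Omega}\\
  &=\int F(\varphi)\,I_\Omega(\varphi)\,I_{\Omega^*}(\varphi)\,d\mu_{2\DN}^{\Sigma,\Omega}(\varphi),
\end{align*}
where $I_\Omega(\varphi):=\int e^{-S_\Omega(\phi_\Omega^D|\varphi)}d\mu_{\mm{GFF}}^{\Omega,D}(\phi_\Omega^D)$ and similarly $I_{\Omega^*}$. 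The isometric involution $\Theta$ of remark \ref{rem-real-tun-geo} conjugates $\mu_{\mm{GFF}}^{\Omega,D}$ to $\mu_{\mm{GFF}}^{\Omega^*,D}$ and, since $S_{\Omega^*}(-|\varphi)=S_\Omega(\Theta^*(-)|\varphi)$ by definition \ref{def-inter-over-domain} and the invariance of $\varphi$ under $\Theta|_\Sigma=\mathrm{id}_\Sigma$, we deduce $I_{\Omega^*}(\varphi)=I_\Omega(\varphi)$ for $\mu_{2\DN}^{\Sigma,\Omega}$-almost every $\varphi$. Hence
\begin{equation*}
  \frac{d\big[\tau_\Sigma(e^{-S_{|\Omega|^2}}\mu_{\mm{GFF}}^{|\Omega|^2})\big]}{d\mu_{2\DN}^{\Sigma,\Omega}}(\varphi)=I_\Omega(\varphi)^2\quad\mu_{2\DN}^{\Sigma,\Omega}\text{-a.s.},
\end{equation*}
and by the chain rule, multiplying through by $d\mu_{2\DN}^{\Sigma,\Omega}/d\mu_{2\mn{D}}^\Sigma$ gives the density with respect to $\mu_{2\mn{D}}^\Sigma$.

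Finally, I would plug this density into the definition (\ref{eqn-def-amp-omega}) of $\mathcal{A}_\Omega^P$, and compare with the free-field expression derived in the example preceding the lemma, namely
\begin{equation*}
  |\mathcal{A}_\Omega^0(\varphi)|^2=\frac{\detz(\Delta_{|\Omega|^2}+m^2)^{-1/2}}{\detz(2\mn{D}_\Sigma)^{-1/2}}\,\frac{d\mu_{2\DN}^{\Sigma,\Omega}}{d\mu_{2\mn{D}}^\Sigma}(\varphi).
\end{equation*}
One obtains $|\mathcal{A}_\Omega^P(\varphi)|^2=I_\Omega(\varphi)^2\,|\mathcal{A}_\Omega^0(\varphi)|^2$, and since both amplitudes are non-negative by construction (and $I_\Omega(\varphi)>0$ as the integral of a strictly positive function), taking the positive square root yields the claimed identity $\mu_{2\mn{D}}^\Sigma$-almost surely. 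The main obstacle is precisely justifying the almost-sure well-definedness of $S_\Omega(\phi_\Omega^D|\varphi)$ as a random variable in $\phi_\Omega^D$ for $\mu_{2\mn{D}}^\Sigma$-typical $\varphi$ (in the sense of remark \ref{rem-interp-inter-over-omega}): this is exactly what the decoupling identity above provides, because the integrability on the full double propagates, via Fubini, to almost-sure finiteness of the inner $\phi_\Omega^D$-integral.
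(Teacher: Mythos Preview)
Your proof is correct and follows essentially the same route as the paper's: Markov decomposition of $\mu_{\mm{GFF}}^{|\Omega|^2}$ along $\Sigma$, proposition~\ref{prop-decoupling} to split $\me^{-S_{|\Omega|^2}}$ into the two factors over $\Omega$ and $\Omega^*$, reflection symmetry to identify $I_{\Omega^*}=I_\Omega$, and then taking the positive square root. Your write-up is more explicit about the Fubini step and the almost-sure well-definedness issue (which the paper handles via remark~\ref{rem-interp-inter-over-omega}), but the argument is the same.
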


\begin{proof}
  Indeed, by definition of the measure image,
  \begin{equation}
    \mathcal{A}_{\Omega}^P(\varphi)^2=\mathcal{A}_{\Omega}^0(\varphi)^2\iint_{}^{}\me^{-S_{|\Omega|^2}(\phi^{D}_{\Omega}+\phi^D_{\Omega^*}+\PI_{|\Omega|^2}^{\Sigma}\varphi)}\dd\mu_{\mm{GFF}}^{\Omega,D}(\phi^{D}_{\Omega})\otimes \dd\mu_{\mm{GFF}}^{\Omega^*,D}(\phi^{D}_{\Omega^*}).
    \label{}
  \end{equation}
  However, by proposition \ref{prop-decoupling} which decouples the interaction into a sum over complementary regions,
  \begin{equation}
    \me^{-S_{|\Omega|^2}(\phi)}=\me^{-S_{\Omega}(\phi^{D}_{\Omega}|\varphi)}\me^{-S_{\Omega^*}(\phi^D_{\Omega^*}|\varphi)},
    \label{}
  \end{equation}
  almost surely againt~$\mu_{\mm{GFF}}^{\Omega,D}\otimes \mu_{\mm{GFF}}^{\Omega^*,D}\otimes \tau_{\Sigma}(\mu_{\mm{GFF}}^{|\Omega|^2})$, and hence also against~$\mu_{\mm{GFF}}^{\Omega,D}\otimes \mu_{\mm{GFF}}^{\Omega^*,D}\otimes \mu_{2\mn{D}}^{\Sigma}$. Thus by reflection symmetry between $\Omega$ and $\Omega^*$ and independence between~$\phi^{D}_{\Omega}$ and~$\phi^D_{\Omega^*}$ we obtain the result.
\end{proof}

\subsection{Trace Axiom and its Consequences}\label{sec-trace-axiom}

\noindent In this subsection we treat separately (ii) of lemma \ref{lemm-segal-transfer} as part of (iii) of definition \ref{def-segal-2}. Let~$\Omega\in\Mor(\Sigma,\Sigma)$, \textit{not necessarily connected}. We will consider two closed surfaces:~$\check{\Omega}$ and~$|\Omega|^2:=(\Omega^*\cup_{\Sigma\sqcup\Sigma} \Omega)^{\vee}$.
\begin{prop}
  [pre-trace]\label{prop-pre-trace}
  We have
  \begin{equation}
    \frac{\dd\tau_{\Sigma}(\mu_{\mm{GFF}}^{\check{\Omega}})}{\dd\mu_{2\mn{D}}^{\Sigma}}(\varphi)=\frac{\detz(\DN_{\check{\Omega}}^{\Sigma})^{\frac{1}{2}}}{\detz(2\DN_{\Omega}^{\Sigma\sqcup\Sigma})^{\frac{1}{4}}}\bigg( 
    \frac{\dd\tau_{\Sigma\sqcup\Sigma}(\mu_{\mm{GFF}}^{|\Omega|^2})}{\dd\mu_{2\mn{D}}^{\Sigma\sqcup\Sigma}}(\varphi,\varphi)
    \bigg)^{\frac{1}{2}}.
    \label{eqn-pre-trace}
  \end{equation}
\end{prop}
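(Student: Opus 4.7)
The plan is to compute each side of (\ref{eqn-pre-trace}) explicitly using Corollary \ref{corr-rad-niko-dense}, which gives the Radon-Nikodym density of the boundary-trace-induced Gaussian measure (with covariance a jumpy Dirichlet-to-Neumann inverse) against the fixed background $\mu_{2\mn{D}}$. Since $\tau_{\Sigma}(\mu_{\mm{GFF}}^{\check{\Omega}})=\mu_{\DN}^{\Sigma,\check{\Omega}}$ and $\tau_{\Sigma\sqcup\Sigma}(\mu_{\mm{GFF}}^{|\Omega|^2})=\mu_{\DN}^{\Sigma\sqcup\Sigma,|\Omega|^2}$ by Proposition \ref{prop-induce-law-DN}, Corollary \ref{corr-rad-niko-dense} gives
\begin{align*}
  \frac{\dd \tau_\Sigma(\mu^{\check\Omega}_{\mm{GFF}})}{\dd\mu_{2\mn{D}}^\Sigma}(\varphi) &= \frac{(\detz \DN^\Sigma_{\check\Omega})^{1/2}}{(\detz 2\mn{D}_\Sigma)^{1/2}}\, \me^{-\frac12 \langle\varphi, (\DN_{\check\Omega}^\Sigma-2\mn{D}_\Sigma)\varphi\rangle}, \\
  \frac{\dd \tau_{\Sigma\sqcup\Sigma}(\mu^{|\Omega|^2}_{\mm{GFF}})}{\dd\mu_{2\mn{D}}^{\Sigma\sqcup\Sigma}}(\varphi_1,\varphi_2)&=\frac{(\detz \DN^{\Sigma\sqcup\Sigma}_{|\Omega|^2})^{1/2}}{(\detz 2\mn{D}_{\Sigma\sqcup\Sigma})^{1/2}}\, \me^{-\frac12 \langle (\varphi_1,\varphi_2), (\DN^{\Sigma\sqcup\Sigma}_{|\Omega|^2}-2\mn{D}_{\Sigma\sqcup\Sigma})(\varphi_1,\varphi_2)\rangle}.
\end{align*}
Taking the square root of the second expression and comparing with the first reduces the problem to two geometric identities on $\Sigma$ after setting $\varphi_1=\varphi_2=\varphi$.

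The key geometric input concerns the jumpy Dirichlet-to-Neumann operator on the isometric double. By the reflection symmetry built into $|\Omega|^2=\Omega^*\cup_{\Sigma\sqcup\Sigma}\Omega$, the harmonic extension $\PI_{|\Omega|^2}^{\Sigma\sqcup\Sigma}f$ restricts on each of $\Omega$ and $\Omega^*$ to the (identical, upto reflection) Poisson extension $\PI_\Omega^{\Sigma\sqcup\Sigma}f$; summing the two outward normal derivatives from each side as in Definition \ref{def-jp-DN-map} gives
\[
  \DN_{|\Omega|^2}^{\Sigma\sqcup\Sigma} \;=\; 2\,\DN_\Omega^{\Sigma\sqcup\Sigma}.
\]
Separately, writing $\DN_\Omega^{\Sigma\sqcup\Sigma}$ as a $2\times 2$ block matrix $\bigl(\begin{smallmatrix}A & B\\ B^* & C\end{smallmatrix}\bigr)$ relative to the two boundary circles, evaluating the associated quadratic form at $(\varphi,\varphi)$ yields $\langle\varphi,(A+B+B^*+C)\varphi\rangle$; but the harmonic extension of $(\varphi,\varphi)$ on $\Omega$ descends to $\check\Omega$ as precisely $\PI_{\check\Omega}^\Sigma\varphi$, and identifying $\Sigma_{\mm{in}}$ with $\Sigma_{\mm{out}}$ turns $(A+B)+(B^*+C)$ into the sum of the two outward normal derivatives of $\PI_{\check\Omega}^\Sigma\varphi$ across $\Sigma$ in $\check\Omega$, namely $\DN_{\check\Omega}^\Sigma\varphi$. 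Equivalently, $\langle(\varphi,\varphi),\DN_\Omega^{\Sigma\sqcup\Sigma}(\varphi,\varphi)\rangle=\langle\varphi,\DN_{\check\Omega}^\Sigma\varphi\rangle$, which can also be checked directly from the Dirichlet-energy formula (i) of Lemma \ref{lemm-dn-prop}.

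Combining the two identities, together with the trivial $\mn{D}_{\Sigma\sqcup\Sigma}=\mn{D}_\Sigma\oplus\mn{D}_\Sigma$, the exponent in the square root of the right-hand density becomes
\[
  -\tfrac12\bigl(\langle\varphi, \DN_{\check\Omega}^\Sigma\varphi\rangle - 2\langle\varphi,\mn{D}_\Sigma\varphi\rangle\bigr) \;=\; -\tfrac12 \langle\varphi, (\DN_{\check\Omega}^\Sigma - 2\mn{D}_\Sigma)\varphi\rangle,
\]
matching the exponent on the left. The determinant prefactors work out from $\detz 2\mn{D}_{\Sigma\sqcup\Sigma}=(\detz 2\mn{D}_\Sigma)^2$ and $\detz \DN_{|\Omega|^2}^{\Sigma\sqcup\Sigma}=\detz(2\DN_\Omega^{\Sigma\sqcup\Sigma})$, and a short algebraic check then produces precisely the factor $\detz(\DN_{\check\Omega}^\Sigma)^{1/2}/\detz(2\DN_\Omega^{\Sigma\sqcup\Sigma})^{1/4}$ asserted in (\ref{eqn-pre-trace}).

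The main obstacle is not any single computation but rather justifying the Dirichlet-to-Neumann identities rigorously: the equality $\DN_{|\Omega|^2}^{\Sigma\sqcup\Sigma}=2\DN_\Omega^{\Sigma\sqcup\Sigma}$ requires invoking that the double is built along a \emph{totally geodesic} boundary so the reflected extension is genuinely harmonic across $\Sigma\sqcup\Sigma$ (cf.\ Remark \ref{rem-real-tun-geo} and the isometric involution $\Theta$), and the relation to $\DN_{\check\Omega}^\Sigma$ uses that the harmonic extension $\PI_\Omega^{\Sigma\sqcup\Sigma}(\varphi,\varphi)$ on $\Omega$ with identical boundary data descends smoothly to $\check\Omega$ across the gluing. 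Once these geometric facts are in hand the rest is a direct bookkeeping of determinants and quadratic forms.
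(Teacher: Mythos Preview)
Your proof is correct and follows essentially the same approach as the paper: both reduce, via Corollary \ref{corr-rad-niko-dense}, to the quadratic-form identity $\bigl\langle(\varphi,\varphi),\DN_\Omega^{\Sigma\sqcup\Sigma}(\varphi,\varphi)\bigr\rangle=\langle\varphi,\DN_{\check\Omega}^\Sigma\varphi\rangle$, justified by the Dirichlet-energy characterisation in Lemma \ref{lemm-dn-prop}(i). The paper's write-up is terser---it states the needed relation on quadratic forms directly and does not separate out the intermediate identification $\DN_{|\Omega|^2}^{\Sigma\sqcup\Sigma}=2\DN_\Omega^{\Sigma\sqcup\Sigma}$ as you do---and it adds one sentence you omit: the quadratic-form identity is first established for $\varphi\in W^{1/2}(\Sigma)$ and then extended by continuity to $W^{-\delta}(\Sigma)$, where the measure $\mu_{2\mn{D}}^\Sigma$ is supported (Lemma \ref{lemm-supp-gaus-meas}), so that the equality of Radon--Nikodym densities holds almost surely.
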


\begin{proof}
  This boils down to comparing the explicit expressions for the densities as given by corollary \ref{corr-rad-niko-dense}. Indeed, the relation that we need is
  \begin{equation}
    \Bank{\bnom{\varphi}{\varphi},(2\DN_{\Omega}^{\Sigma\sqcup\Sigma} -2\mn{D}_{\Sigma\sqcup\Sigma})\bnom{\varphi}{\varphi}}_{L^2(\Sigma\sqcup\Sigma)}=
    2\bank{\varphi,(\DN_{\check{\Omega}}^{\Sigma}-2\mn{D}_{\Sigma})\varphi}_{L^2(\Sigma)}.
    \label{}
  \end{equation}
  This is true for~$\varphi\in W^{\frac{1}{2}}(\Sigma)$ because~$\mn{D}_{\Sigma\sqcup\Sigma}=\mn{D}_{\Sigma}\oplus \mn{D}_{\Sigma}$ and~$\sank{[\smx{\varphi\\ \varphi}], \DN_{\Omega}^{\Sigma\sqcup\Sigma}[\smx{\varphi\\ \varphi}]}_{L^2(\Sigma\sqcup\Sigma)}$ and~$\sank{\varphi,\DN_{\check{\Omega}}^{\Sigma}\varphi}_{L^2(\Sigma)}$ are both the Dirichlet energy of the harmonic extension over~$\Omega$ with boundary condition~$(\varphi,\varphi)$. The equality then extends to~$\varphi\in W^{-\delta}(\Sigma)$ for small enough~$\delta$ by continuity (see lemma \ref{lemm-supp-gaus-meas}).
\end{proof}

\begin{corr}
  [trace for free field] \label{cor-free-trace} In the situation as above, we have
  \begin{equation}
    \int_{}^{} \mathcal{A}_{\Omega}^0(\varphi,\varphi)\dd\mu_{2\mn{D}}^{\Sigma}(\varphi) \detz(2\mn{D}_{\Sigma})^{-\frac{1}{2}}=\detz(\Delta_{\check{\Omega}}+m^2)^{-\frac{1}{2}}.
    \label{}
  \end{equation}
\end{corr}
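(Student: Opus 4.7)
The plan is to combine the explicit formula for $\mathcal{A}_\Omega^0$ in terms of a Radon-Nikodym density (from the definition (\ref{eqn-def-amp-omega})) with the pre-trace identity (\ref{eqn-pre-trace}), and then collapse the determinantal prefactors via BFK. The whole argument reduces to a bookkeeping exercise once one recognises that proposition \ref{prop-pre-trace} is precisely the statement that $\mathcal{A}_\Omega^0(\varphi,\varphi)$ is, up to a deterministic constant, the density of $\tau_\Sigma(\mu_{\mm{GFF}}^{\check\Omega})$ with respect to $\mu_{2\mn{D}}^\Sigma$.

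More concretely, from the very definition (\ref{eqn-def-amp-omega}) (with $\Sigma\sqcup\Sigma$ replacing $\Sigma$, since $\Omega\in\Mor(\Sigma,\Sigma)$ has boundary $\Sigma\sqcup\Sigma$), one has
\begin{equation*}
\mathcal{A}_\Omega^0(\varphi,\varphi)^2 \;=\; \frac{\detz(\Delta_{|\Omega|^2}+m^2)^{-\frac{1}{2}}}{\detz(2\mn{D}_{\Sigma\sqcup\Sigma})^{-\frac{1}{2}}}\cdot\frac{\dd\tau_{\Sigma\sqcup\Sigma}(\mu_{\mm{GFF}}^{|\Omega|^2})}{\dd\mu_{2\mn{D}}^{\Sigma\sqcup\Sigma}}(\varphi,\varphi).
\end{equation*}
Substituting (\ref{eqn-pre-trace}) (squared) into the last factor and taking the positive square root (the densities are strictly positive $\mu_{2\mn{D}}$-a.s., so there is no sign issue), one obtains an identity of the shape
\begin{equation*}
\mathcal{A}_\Omega^0(\varphi,\varphi) \;=\; C\cdot\frac{\dd\tau_\Sigma(\mu_{\mm{GFF}}^{\check\Omega})}{\dd\mu_{2\mn{D}}^\Sigma}(\varphi),
\end{equation*}
with $C$ an explicit deterministic constant assembled from the four zeta determinants appearing in the two identities above. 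Integrating against $\mu_{2\mn{D}}^\Sigma$ and using that $\mu_{\mm{GFF}}^{\check\Omega}$ and hence $\tau_\Sigma(\mu_{\mm{GFF}}^{\check\Omega})$ is a probability measure collapses the right-hand side to $C$.

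It remains to verify that $C\cdot\detz(2\mn{D}_\Sigma)^{-1/2}=\detz(\Delta_{\check\Omega}+m^2)^{-1/2}$. Using $\detz(2\mn{D}_{\Sigma\sqcup\Sigma})=\detz(2\mn{D}_\Sigma)^2$ (from $\mn{D}_{\Sigma\sqcup\Sigma}=\mn{D}_\Sigma\oplus\mn{D}_\Sigma$) and the identity $\DN_{|\Omega|^2}^{\Sigma\sqcup\Sigma}=2\DN_\Omega^{\Sigma\sqcup\Sigma}$ (which follows because, for any boundary datum on $\Sigma\sqcup\Sigma$, the Helmholtz solution on $|\Omega|^2\setminus(\Sigma\sqcup\Sigma)$ is the $\Theta$-symmetric extension of the solution on $\Omega$), the BFK dissection formula (\ref{eqn-bfk-disec}) applied to the double gives
\begin{equation*}
\detz(\Delta_{|\Omega|^2}+m^2)=\detz(\Delta_{\Omega,D}+m^2)^2\detz(2\DN_\Omega^{\Sigma\sqcup\Sigma}).
\end{equation*}
Plugging this into $C$, the $\detz(2\DN_\Omega^{\Sigma\sqcup\Sigma})$ factors cancel against those of the pre-trace formula and one lands on
\begin{equation*}
C\cdot\detz(2\mn{D}_\Sigma)^{-\frac{1}{2}}\;=\;\bigl(\detz(\Delta_{\Omega,D}+m^2)\cdot\detz(\DN_{\check\Omega}^\Sigma)\bigr)^{-\frac{1}{2}},
\end{equation*}
and a second application of BFK, this time in the non-dissecting form (\ref{eqn-bfk-non-disec}) applied to $\check\Omega$ with embedded hypersurface $\Sigma$ (noting $\check\Omega\setminus\Sigma\cong\Omega^\circ$ as Riemannian manifolds, so the Dirichlet determinants agree), rewrites the right-hand side as $\detz(\Delta_{\check\Omega}+m^2)^{-1/2}$, which is exactly the desired equality. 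The only non-bookkeeping ingredient is the identification $\DN_{|\Omega|^2}^{\Sigma\sqcup\Sigma}=2\DN_\Omega^{\Sigma\sqcup\Sigma}$; I expect this to be the one point deserving explicit justification, but it is immediate from the reflection symmetry of the double together with remark \ref{rmk-dn-jump} describing the jumpy DN as a sum of two outward normal derivatives.
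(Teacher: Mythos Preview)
Your proposal is correct and follows essentially the same route as the paper: you unfold the definition of $\mathcal{A}_\Omega^0(\varphi,\varphi)$, invoke the pre-trace identity (\ref{eqn-pre-trace}) to recognise it as a constant times the density of $\tau_\Sigma(\mu_{\mm{GFF}}^{\check\Omega})$ against $\mu_{2\mn{D}}^\Sigma$, integrate to collapse the density to $1$, and then reduce the determinantal prefactor via BFK applied first to $|\Omega|^2$ and then to $\check\Omega$. The one extra ingredient you make explicit, the identification $\DN_{|\Omega|^2}^{\Sigma\sqcup\Sigma}=2\DN_\Omega^{\Sigma\sqcup\Sigma}$, is precisely what the paper uses (implicitly, via the BFK formula for the double already recorded in the example preceding lemma \ref{lemm-amplitude-express}).
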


\begin{proof}
  Indeed,
  \begin{align*}
    \textrm{LHS}&=\frac{\detz(\Delta_{|\Omega|^2}+m^2)^{-\frac{1}{4}}}{\detz(2\mn{D}_{\Sigma}\oplus 2\mn{D}_{\Sigma})^{-\frac{1}{4}}}
    \int_{}^{}\bigg( 
    \frac{\dd\tau_{\Sigma\sqcup\Sigma}(\mu_{\mm{GFF}}^{|\Omega|^2})}{\dd\mu_{2\mn{D}}^{\Sigma\sqcup\Sigma}}(\varphi,\varphi)
    \bigg)^{\frac{1}{2}} \dd\mu_{2\mn{D}}^{\Sigma}(\varphi) \detz(2\mn{D}_{\Sigma})^{-\frac{1}{2}} \\
    &=\frac{\detz(2\DN_{\Omega}^{\Sigma\sqcup\Sigma})^{\frac{1}{4}}}{\detz(\DN_{\check{\Omega}}^{\Sigma})^{\frac{1}{2}}}
    \frac{\detz(\Delta_{|\Omega|^2}+m^2)^{-\frac{1}{4}}}{\detz(2\mn{D}_{\Sigma})^{-\frac{1}{2}}}
    \underbrace{\int_{}^{}\frac{\dd\tau_{\Sigma}(\mu_{\mm{GFF}}^{\check{\Omega}})}{\dd\mu_{2\mn{D}}^{\Sigma}}(\varphi) \dd\mu_{2\mn{D}}^{\Sigma}(\varphi)}_{=~ 1} \detz(2\mn{D}_{\Sigma})^{-\frac{1}{2}} \\
    &=\detz(\DN_{\check{\Omega}}^{\Sigma})^{-\frac{1}{2}}\detz(\Delta_{\Omega,D}+m^2)^{-\frac{1}{2}} \tag{BFK for~$|\Omega|^2$}\\
    &=\textrm{RHS}, \tag{BFK for~$\check{\Omega}$}
  \end{align*}
  finishing the proof.
\end{proof}

Note that we do not insist that~$\Omega$ be connected. This leads to the following important consequence of proposition \ref{prop-pre-trace}.

\begin{corr}\label{cor-disec-dens}
  Let~$\Omega_1$,~$\Omega_2$ be two dualizable surfaces such that~$\partial\Omega_1=\partial\Omega_2=\Sigma$, all with the same co-orientation. Denote~$M:=\Omega_1\cup_{\Sigma}\Omega_2$, namely~$\Omega_1$ and~$\Omega_2$ glued along~$\Sigma$. Then
  \begin{equation}
    \frac{\dd\tau_{\Sigma}(\mu_{\mm{GFF}}^{M})}{\dd\mu_{2\mn{D}}^{\Sigma}}(\varphi)=
    \frac{\detz(\DN_{M}^{\Sigma})^{\frac{1}{2}}}{\detz(2\DN_{\Omega_1}^{\Sigma})^{\frac{1}{4}}\detz(2\DN_{\Omega_2}^{\Sigma})^{\frac{1}{4}}}\bigg( 
    \frac{\dd\tau_{\Sigma}(\mu_{\mm{GFF}}^{|\Omega_1|^2})}{\dd\mu_{2\mn{D}}^{\Sigma}}(\varphi)
    \bigg)^{\frac{1}{2}}
    \bigg( 
    \frac{\dd\tau_{\Sigma}(\mu_{\mm{GFF}}^{|\Omega_2|^2})}{\dd\mu_{2\mn{D}}^{\Sigma}}(\varphi)
    \bigg)^{\frac{1}{2}}.
    \label{eqn-disect-density}
  \end{equation}
\end{corr}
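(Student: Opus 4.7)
The plan is to \emph{apply Proposition \ref{prop-pre-trace} to the disconnected morphism} $\Omega := \Omega_1 \sqcup \Omega_2$, viewed as an element of $\mathrm{Mor}(\Sigma,\Sigma)$ by designating one copy of $\Sigma$ (coming from $\Omega_1$) as the ``in'' boundary and the other (coming from $\Omega_2$) as the ``out'' boundary. Indeed, the paper explicitly notes in the statement of Proposition \ref{prop-pre-trace} that $\Omega$ need not be connected, so this is legitimate once we check that the two sides of the identity specialise correctly under this choice.

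First I would identify the two closed surfaces appearing in (\ref{eqn-pre-trace}). On one hand, $\check{\Omega}$ is obtained by gluing the ``in'' and ``out'' boundaries of $\Omega_1 \sqcup \Omega_2$ along the identity isometry, which is precisely $M = \Omega_1 \cup_{\Sigma} \Omega_2$; so $\mathrm{DN}_{\check{\Omega}}^{\Sigma} = \mathrm{DN}_{M}^{\Sigma}$ and $\tau_\Sigma(\mu_{\mathrm{GFF}}^{\check{\Omega}}) = \tau_\Sigma(\mu_{\mathrm{GFF}}^{M})$. On the other hand, $|\Omega|^2 = \Omega^* \cup_{\partial\Omega}\Omega$ with $\Omega^* = \Omega_1^* \sqcup \Omega_2^*$ splits isometrically as
\[
|\Omega|^2 = |\Omega_1|^2 \sqcup |\Omega_2|^2,
\]
since the gluing takes place on each connected component separately. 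Similarly, $\mathrm{DN}_\Omega^{\Sigma\sqcup\Sigma}$ acts block-diagonally as $\mathrm{DN}_{\Omega_1}^{\Sigma} \oplus \mathrm{DN}_{\Omega_2}^{\Sigma}$.

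Next I would use three multiplicativity facts. By Lemma \ref{lemm-gaus-field-disj-indep}, $\mu_{\mathrm{GFF}}^{|\Omega|^2} = \mu_{\mathrm{GFF}}^{|\Omega_1|^2} \otimes \mu_{\mathrm{GFF}}^{|\Omega_2|^2}$, and by construction $\mu_{2\mathbf{D}}^{\Sigma\sqcup\Sigma} = \mu_{2\mathbf{D}}^{\Sigma} \otimes \mu_{2\mathbf{D}}^{\Sigma}$. Consequently the trace maps factorise and, as Radon--Nikodym densities of product measures are products of densities on each factor,
\[
\frac{\dd\tau_{\Sigma\sqcup\Sigma}(\mu_{\mathrm{GFF}}^{|\Omega|^2})}{\dd\mu_{2\mathbf{D}}^{\Sigma\sqcup\Sigma}}(\varphi,\varphi) = \frac{\dd\tau_\Sigma(\mu_{\mathrm{GFF}}^{|\Omega_1|^2})}{\dd\mu_{2\mathbf{D}}^{\Sigma}}(\varphi) \cdot \frac{\dd\tau_\Sigma(\mu_{\mathrm{GFF}}^{|\Omega_2|^2})}{\dd\mu_{2\mathbf{D}}^{\Sigma}}(\varphi).
\]
Additivity of the zeta function under direct sums of positive elliptic $\Psi$DOs (whose spectra are disjoint unions, hence the series (\ref{eqn-zeta-eigenval}) splits) yields $\zeta_{2\mathrm{DN}_\Omega^{\Sigma\sqcup\Sigma}} = \zeta_{2\mathrm{DN}_{\Omega_1}^{\Sigma}} + \zeta_{2\mathrm{DN}_{\Omega_2}^{\Sigma}}$ and so
\[
{\tts\det_\zeta}(2\mathrm{DN}_\Omega^{\Sigma\sqcup\Sigma}) = {\tts\det_\zeta}(2\mathrm{DN}_{\Omega_1}^{\Sigma}) \cdot {\tts\det_\zeta}(2\mathrm{DN}_{\Omega_2}^{\Sigma}).
\]
Substituting these three factorisations into (\ref{eqn-pre-trace}) yields the claimed formula (\ref{eqn-disect-density}).

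The only point that genuinely requires care is the bookkeeping of co-orientations: one must check that, with the convention fixed in the statement of the corollary (``all with the same co-orientation''), the gluing $\Omega_1 \cup_\Sigma \Omega_2$ really does coincide with $\check{\Omega}$ for the chosen identification $\rho$ in Proposition \ref{prop-pre-trace}, and that the isometric doubles $|\Omega_i|^2$ are the same objects that appear in (\ref{eqn-pre-trace}) applied to each connected piece. This is a matter of verification rather than a mathematical difficulty, and once the orientations are aligned, the proof is a one-line specialisation of the pre-trace identity. Consequently no new analytic estimate is required beyond what has already been established in Proposition \ref{prop-pre-trace} and Corollary \ref{corr-rad-niko-dense}.
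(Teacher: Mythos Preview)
Your proposal is correct and follows essentially the same approach as the paper: apply Proposition~\ref{prop-pre-trace} to the disconnected cobordism $\Omega=\Omega_1\sqcup\Omega_2$, identify $\check{\Omega}=M$ and $|\Omega|^2=|\Omega_1|^2\sqcup|\Omega_2|^2$, and use the block-diagonal splitting $\DN_{\Omega}^{\Sigma\sqcup\Sigma}=\DN_{\Omega_1}^{\Sigma}\oplus\DN_{\Omega_2}^{\Sigma}$ together with independence of the GFFs on disjoint components. Your write-up is in fact more detailed than the paper's, spelling out the multiplicativity of the zeta-determinant and the factorisation of the Radon--Nikodym density.
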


\begin{proof}
  Indeed, in this case~$\Omega:=\Omega_1\sqcup\Omega_2$ can be seen as an element of~$\Mor(\Sigma,\Sigma)$. Then~$|\Omega|^2=|\Omega_1|^2\sqcup |\Omega_2|^2$ and the GFFs on the two components are independent. In addition,~$\DN_{\Omega}^{\Sigma\sqcup\Sigma}$ is the direct sum~$\DN_{\Omega_1}^{\Sigma}\oplus \DN_{\Omega_2}^{\Sigma}$. Thus (\ref{eqn-pre-trace}) gives (\ref{eqn-disect-density}).
\end{proof}

\begin{corr}
  [dissection gluing] \label{cor-disect-free} In the same situation as corollary \ref{cor-disec-dens}, we have
  \begin{equation}
    \int_{}^{}\mathcal{A}_{\Omega_1}^0(\varphi)\mathcal{A}_{\Omega_2}^0(\varphi)\dd\mu_{2\mn{D}}^{\Sigma}(\varphi) \detz(2\mn{D}_{\Sigma})^{-\frac{1}{2}}=\detz(\Delta_M+m^2)^{-\frac{1}{2}}.
    \label{}
  \end{equation}
\end{corr}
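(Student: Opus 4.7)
The plan is to mimic the structure of the proof of Corollary \ref{cor-free-trace} but using the ``disecting'' variants of all relevant formulae. The key observation driving this is that for $M=\Omega_1\cup_{\Sigma}\Omega_2$, the jumpy Dirichlet-to-Neumann operator across $\Sigma$ decomposes additively as
\[
  \DN_M^{\Sigma}=\DN_{\Omega_1}^{\Sigma}+\DN_{\Omega_2}^{\Sigma},
\]
which is immediate from Definition \ref{def-jp-DN-map} since the one-sided normal derivatives of the Poisson extension $\PI_M^{\Sigma}\varphi$ on the two sides of $\Sigma$ coincide with the boundary Dirichlet-to-Neumann operators $\DN_{\Omega_i}^{\Sigma}$. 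Together with Corollary \ref{cor-disec-dens}, which is the analogue of Proposition \ref{prop-pre-trace} in the disecting configuration, this will allow us to identify the product $\mathcal{A}_{\Omega_1}^0\mathcal{A}_{\Omega_2}^0$ (up to explicit determinant prefactors) with the Radon-Nikodym density $d\tau_{\Sigma}(\mu_{\mm{GFF}}^M)/d\mu_{2\mn{D}}^{\Sigma}$, which integrates to one against $\mu_{2\mn{D}}^{\Sigma}$.

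Concretely, I will first substitute the explicit formula
\[
  \mathcal{A}_{\Omega_i}^0(\varphi)^2 = \frac{\detz(\Delta_{|\Omega_i|^2}+m^2)^{-\frac{1}{2}}}{\detz(2\mn{D}_{\Sigma})^{-\frac{1}{2}}} \cdot \frac{\dd\tau_{\Sigma}(\mu_{\mm{GFF}}^{|\Omega_i|^2})}{\dd\mu_{2\mn{D}}^{\Sigma}}(\varphi)
\]
from (\ref{eqn-def-amp-omega}) for each $i=1,2$, so that the product of the two amplitudes is an explicit constant times the product of the square roots of the two densities appearing in the right-hand side of (\ref{eqn-disect-density}). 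Applying Corollary \ref{cor-disec-dens} replaces this product of square roots with $d\tau_{\Sigma}(\mu_{\mm{GFF}}^M)/d\mu_{2\mn{D}}^{\Sigma}(\varphi)$ multiplied by the determinant ratio $\detz(2\DN_{\Omega_1}^{\Sigma})^{1/4}\detz(2\DN_{\Omega_2}^{\Sigma})^{1/4}/\detz(\DN_M^{\Sigma})^{1/2}$. Integrating against $\mu_{2\mn{D}}^{\Sigma}$ then kills the Radon-Nikodym density, leaving only determinant factors.

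The remaining step is purely algebraic: one applies the non-disecting BFK formula (\ref{eqn-bfk-non-disec}) to each isometric double $|\Omega_i|^2$, namely $\detz(\Delta_{|\Omega_i|^2}+m^2) = \detz(\Delta_{\Omega_i,D}+m^2)^2\cdot \detz(2\DN_{\Omega_i}^{\Sigma})$, to simplify the prefactors coming from (\ref{eqn-def-amp-omega}); this is where the square roots from the definition of the amplitude conspire with the $\frac{1}{4}$-th roots from (\ref{eqn-disect-density}) to cancel the Dirichlet-to-Neumann determinants. After this cancellation, what remains is precisely the product $\detz(\Delta_{\Omega_1,D}+m^2)^{-1/2}\detz(\Delta_{\Omega_2,D}+m^2)^{-1/2}\detz(\DN_M^{\Sigma})^{-1/2}$ multiplied by $\detz(2\mn{D}_{\Sigma})^{1/2}$. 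The final factor of $\detz(2\mn{D}_{\Sigma})^{-1/2}$ on the left-hand side of the desired identity cancels $\detz(2\mn{D}_{\Sigma})^{1/2}$, and the disecting version of BFK (\ref{eqn-bfk-disec}) applied to $M=\Omega_1\cup_{\Sigma}\Omega_2$ produces the claimed $\detz(\Delta_M+m^2)^{-1/2}$.

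There is essentially no obstacle here beyond careful bookkeeping of determinants and square roots; the nontrivial content has already been absorbed into Corollary \ref{cor-disec-dens} (which in turn rests on Proposition \ref{prop-pre-trace} and the fact that $|\Omega_1\sqcup\Omega_2|^2=|\Omega_1|^2\sqcup|\Omega_2|^2$ makes the corresponding GFFs independent), Corollary \ref{corr-rad-niko-dense}, and the BFK gluing formula.
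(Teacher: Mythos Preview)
Your proof is correct and is essentially the same argument as the paper's, just unfolded one level. The paper observes that for $\Omega=\Omega_1\sqcup\Omega_2$ one has $\mathcal{A}_{\Omega}^0(\varphi,\varphi)=\mathcal{A}_{\Omega_1}^0(\varphi)\mathcal{A}_{\Omega_2}^0(\varphi)$ and $\check{\Omega}=M$, so it invokes Corollary~\ref{cor-free-trace} as a black box; you instead substitute Corollary~\ref{cor-disec-dens} for Proposition~\ref{prop-pre-trace} and redo the determinant bookkeeping of Corollary~\ref{cor-free-trace} by hand, arriving at the same BFK cancellations. (Incidentally, the additive decomposition $\DN_M^{\Sigma}=\DN_{\Omega_1}^{\Sigma}+\DN_{\Omega_2}^{\Sigma}$ you flag as the ``key observation'' is never actually used in your computation --- it is already hidden inside Corollary~\ref{cor-disec-dens}.)
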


\begin{proof}
  Again, we apply corollary \ref{cor-free-trace} directly to the case~$\Omega=\Omega_1\sqcup\Omega_2$ and~$|\Omega|^2=|\Omega_1|^2\sqcup |\Omega_2|^2$. We note that~$\mathcal{A}_{\Omega_1}^0(\varphi)\mathcal{A}_{\Omega_2}^0(\varphi)= \mathcal{A}_{\Omega}^0(\varphi,\varphi)$ because
  \begin{equation}
    \frac{\dd\tau_{\Sigma\sqcup\Sigma}(\mu_{\mm{GFF}}^{|\Omega|^2})}{\dd\mu_{2\mn{D}}^{\Sigma\sqcup\Sigma}}(\varphi,\varphi)=\frac{\dd\tau_{\Sigma}(\mu_{\mm{GFF}}^{|\Omega_1|^2})}{\dd\mu_{2\mn{D}}^{\Sigma}}(\varphi) 
    \frac{\dd\tau_{\Sigma}(\mu_{\mm{GFF}}^{|\Omega_2|^2})}{\dd\mu_{2\mn{D}}^{\Sigma}}(\varphi)
    \label{}
  \end{equation}
  and~$\detz(\Delta_{|\Omega|^2}+m^2)=\detz(\Delta_{|\Omega_1|^2}+m^2)\detz(\Delta_{|\Omega_2|^2}+m^2)$.
\end{proof}

Next we deal with the trace axiom in the interacting case. Again let~$\Omega\in\Mor(\Sigma,\Sigma)$ and consider~$\check{\Omega}$. One has in this case the decomposition
\begin{equation}
  \mu_{\mm{GFF}}^{\check{\Omega}}=\mu_{\mm{GFF}}^{\Omega,D}\otimes \tau_{\Sigma}(\mu_{\mm{GFF}}^{\check{\Omega}})
  \label{}
\end{equation}
and against which
\begin{equation}
  S_{\check{\Omega}}(\phi)=S_{\check{\Omega}}(\phi^{D}_{\Omega}+\PI_{\check{\Omega}}^{\Sigma}\varphi)=S_{\Omega}(\phi^{D}_{\Omega}|\varphi,\varphi)
  \quad\textrm{and}\quad
  \me^{-S_{\check{\Omega}}(\phi^{D}_{\Omega}+\PI_{\check{\Omega}}^{\Sigma}\varphi)}=\me^{-S_{\Omega}(\phi^{D}_{\Omega}|\varphi,\varphi)}
  \label{}
\end{equation}
for~$\phi^{D}_{\Omega}\sim \mu_{\mm{GFF}}^{\Omega,D}$ and~$\varphi\sim \tau_{\Sigma}(\mu_{\mm{GFF}}^{\check{\Omega}})$ almost surely (see also remark \ref{rem-induce-law-equality} and \ref{rem-interp-inter-over-omega}).

\begin{corr}
  [trace for~$P(\phi)$ field] \label{cor-P(phi)-trace} In the situation as above, we have
  \begin{equation}
    \int_{}^{} \mathcal{A}_{\Omega}^P(\varphi,\varphi)\dd\mu_{2\mn{D}}^{\Sigma}(\varphi) \detz(2\mn{D}_{\Sigma})^{-\frac{1}{2}}=Z_{\check{\Omega}}.
    \label{}
  \end{equation}
\end{corr}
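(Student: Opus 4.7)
The plan is to mimic the free-field computation of corollary \ref{cor-free-trace}, but with the interaction pulled through at every step, and then collapse the resulting iterated integral using the Markov decomposition together with the locality of the $P(\phi)$ interaction.

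First I would invoke lemma \ref{lemm-amplitude-express} to write
\begin{equation*}
\mathcal{A}_{\Omega}^P(\varphi,\varphi)=\mathcal{A}_{\Omega}^0(\varphi,\varphi)\int \me^{-S_{\Omega}(\phi^{D}_{\Omega}\,|\,\varphi,\varphi)}\,\dd\mu_{\mm{GFF}}^{\Omega,D}(\phi^{D}_{\Omega}).
\end{equation*}
Next, I would combine the definition of $\mathcal{A}_{\Omega}^0$ with the pre-trace identity (proposition \ref{prop-pre-trace}) to rewrite $\mathcal{A}_{\Omega}^0(\varphi,\varphi)\,\dd\mu_{2\mn{D}}^{\Sigma}(\varphi)\,\detz(2\mn{D}_{\Sigma})^{-1/2}$ as a constant determinant prefactor times $\dd\tau_{\Sigma}(\mu_{\mm{GFF}}^{\check{\Omega}})(\varphi)$. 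Matching all the $\zeta$-determinants via BFK twice --- once on $|\Omega|^2$ (giving $\detz(\Delta_{|\Omega|^2}+m^2)=\detz(\Delta_{\Omega,D}+m^2)^2\cdot\detz(2\DN_{\Omega}^{\Sigma\sqcup\Sigma})$, exactly as used in the proof of corollary \ref{cor-free-trace}) and once on $\check{\Omega}$ (giving $\detz(\Delta_{\check{\Omega}}+m^2)=\detz(\Delta_{\Omega,D}+m^2)\cdot\detz(\DN_{\check{\Omega}}^{\Sigma})$) --- the prefactor simplifies to precisely $\detz(\Delta_{\check{\Omega}}+m^2)^{-1/2}$, namely
\begin{equation*}
\mathcal{A}_{\Omega}^0(\varphi,\varphi)\,\dd\mu_{2\mn{D}}^{\Sigma}(\varphi)\,\detz(2\mn{D}_{\Sigma})^{-1/2}=\detz(\Delta_{\check{\Omega}}+m^2)^{-1/2}\,\dd\tau_{\Sigma}(\mu_{\mm{GFF}}^{\check{\Omega}})(\varphi).
\end{equation*}

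Inserting this into the integral and applying Fubini yields
\begin{equation*}
\int \mathcal{A}_{\Omega}^P(\varphi,\varphi)\,\dd\mu_{2\mn{D}}^{\Sigma}(\varphi)\,\detz(2\mn{D}_{\Sigma})^{-1/2}=\detz(\Delta_{\check{\Omega}}+m^2)^{-1/2}\iint \me^{-S_{\Omega}(\phi^{D}_{\Omega}\,|\,\varphi,\varphi)}\,\dd\mu_{\mm{GFF}}^{\Omega,D}(\phi^{D}_{\Omega})\,\dd\tau_{\Sigma}(\mu_{\mm{GFF}}^{\check{\Omega}})(\varphi).
\end{equation*}
Now I would invoke the Markov stochastic decomposition of proposition \ref{prop-stoc-decomp-closed} applied to $\check{\Omega}$ with the dissecting hypersurface $\Sigma\subset\check{\Omega}$: writing $\phi=\phi^D_{\Omega}+\PI_{\check{\Omega}}^{\Sigma}\varphi$ for $\phi\sim\mu_{\mm{GFF}}^{\check{\Omega}}$, the pair $(\phi^D_{\Omega},\varphi)$ is distributed as $\mu_{\mm{GFF}}^{\Omega,D}\otimes\tau_{\Sigma}(\mu_{\mm{GFF}}^{\check{\Omega}})$. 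By locality (proposition \ref{prop-locality}, plus the boundary interpretation of definition \ref{def-inter-over-domain} as in corollary \ref{cor-P(phi)-trace}'s preamble), one has $S_{\check{\Omega}}(\phi)=S_{\Omega}(\phi^{D}_{\Omega}\,|\,\varphi,\varphi)$ almost surely. Hence the double integral is exactly $\int \me^{-S_{\check{\Omega}}(\phi)}\,\dd\mu_{\mm{GFF}}^{\check{\Omega}}(\phi)$, and by the rigorous definition (\ref{eqn-def-part-func-rigor}) the right-hand side is $Z_{\check{\Omega}}$.

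The main obstacle I expect is the careful justification of Fubini together with the almost-sure identification $S_{\check{\Omega}}(\phi)=S_{\Omega}(\phi^D_{\Omega}\,|\,\varphi,\varphi)$ under the product measure $\mu_{\mm{GFF}}^{\Omega,D}\otimes\mu_{2\mn{D}}^{\Sigma}$ (rather than under $\mu_{\mm{GFF}}^{\Omega,D}\otimes\tau_{\Sigma}(\mu_{\mm{GFF}}^{\check{\Omega}})$). This is precisely the subtle point flagged in remark \ref{rem-interp-inter-over-omega}: one needs $\tau_{\Sigma}(\mu_{\mm{GFF}}^{\check{\Omega}})$ and $\mu_{2\mn{D}}^{\Sigma}$ to be mutually absolutely continuous --- which is corollary \ref{corr-rad-niko-dense} applied to $\check{\Omega}$ --- so that null sets of the trace measure are null for $\mu_{2\mn{D}}^{\Sigma}$, and the interaction $S_{\Omega}(\cdot\,|\,\varphi,\varphi)$ is well-defined as an $L^1$-random variable of $\phi^D_{\Omega}$ for $\mu_{2\mn{D}}^{\Sigma}$-a.e.\ $\varphi$. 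Once this is in hand, the integrability of $\me^{-S_{\check{\Omega}}}$ from Nelson's theorem \ref{thrm-nelson} makes Fubini automatic and the computation closes.
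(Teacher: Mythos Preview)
Your proposal is correct and follows essentially the same route as the paper: express $\mathcal{A}_{\Omega}^P$ via lemma \ref{lemm-amplitude-express}, reduce the free-field factor to $\detz(\Delta_{\check{\Omega}}+m^2)^{-1/2}\,\dd\tau_{\Sigma}(\mu_{\mm{GFF}}^{\check{\Omega}})$ exactly as in corollary \ref{cor-free-trace}, and then collapse the remaining double integral via the Markov decomposition and the locality identity $S_{\check{\Omega}}(\phi)=S_{\Omega}(\phi^{D}_{\Omega}\,|\,\varphi,\varphi)$ stated in the preamble to the corollary. Your explicit discussion of the absolute-continuity and Fubini justifications is more detailed than the paper's terse treatment but adds nothing new in substance.
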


\begin{proof}
  Indeed,
  \begin{equation}
    \mathcal{A}_{\Omega}^P(\varphi,\varphi)=\mathcal{A}_{\Omega}^0(\varphi,\varphi)\int_{}^{}\me^{-S_{\Omega}(\phi^{D}_{\Omega}|\varphi,\varphi)}\dd\mu_{\mm{GFF}}^{\Omega,D}(\phi^{D}_{\Omega}),
    \label{}
  \end{equation}
  therefore, with the constants involving determinants working out in exactly the same way as corollary \ref{cor-free-trace}, one has
  \begin{align*}
    \textrm{LHS}&=\detz(\Delta_{\check{\Omega}}+m^2)^{-\frac{1}{2}} \int_{}^{}\frac{\dd\tau_{\Sigma}(\mu_{\mm{GFF}}^{\check{\Omega}})}{\dd\mu_{2\mn{D}}^{\Sigma}}(\varphi) \dd\mu_{2\mn{D}}^{\Sigma}(\varphi)\int_{}^{}\me^{-S_{\Omega}(\phi^{D}_{\Omega}|\varphi,\varphi)}\dd\mu_{\mm{GFF}}^{\Omega,D}(\phi^{D}_{\Omega}) \\
    &=\detz(\Delta_{\check{\Omega}}+m^2)^{-\frac{1}{2}}
    \int_{}^{}\me^{-S_{\Omega}(\phi^{D}_{\Omega}|\varphi,\varphi)} \dd\mu_{\mm{GFF}}^{\Omega,D}\otimes \dd\tau_{\Sigma}(\mu_{\mm{GFF}}^{\check{\Omega}})(\phi^{D}_{\Omega},\varphi)\\
    &=\detz(\Delta_{\check{\Omega}}+m^2)^{-\frac{1}{2}} \mb{E}_{\mm{GFF}}^{\check{\Omega}}[\me^{-S_{\check{\Omega}}(\phi)}]=\textrm{RHS}.
  \end{align*}
  We arrive at the proof.
\end{proof}

\begin{corr}
  [dissection gluing for $P(\phi)$ field] In the same situation as corollary \ref{cor-disec-dens}, we have
  \begin{equation}
    \int_{}^{}\mathcal{A}_{\Omega_1}^P(\varphi)\mathcal{A}_{\Omega_2}^P(\varphi)\dd\mu_{2\mn{D}}^{\Sigma}(\varphi) \detz(2\mn{D}_{\Sigma})^{-\frac{1}{2}}=Z_M.
    \label{}
  \end{equation}
\end{corr}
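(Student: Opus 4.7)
The plan is to proceed in direct parallel to the proof of Corollary \ref{cor-P(phi)-trace} (trace for $P(\phi)$ field), replacing the role of $\check{\Omega}$ with the disconnected surface $\Omega := \Omega_1 \sqcup \Omega_2 \in \Mor(\Sigma,\Sigma)$ closed up by diagonal gluing, so that $\check{\Omega}$ becomes exactly $M = \Omega_1 \cup_{\Sigma} \Omega_2$. The three technical inputs needed are: (a) the amplitude factorization from Lemma \ref{lemm-amplitude-express}, (b) the free-field dissection gluing from Corollary \ref{cor-disect-free}, and (c) the locality/decoupling of the interaction from Proposition \ref{prop-decoupling} combined with the Markov stochastic decomposition of $\mu_{\mm{GFF}}^M$ from Corollary \ref{cor-stoc-decomp-disec}.

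First, apply Lemma \ref{lemm-amplitude-express} to each $\Omega_i$ to write
\begin{equation*}
  \mathcal{A}_{\Omega_i}^P(\varphi) = \mathcal{A}_{\Omega_i}^0(\varphi)\int \me^{-S_{\Omega_i}(\phi^D_{\Omega_i}|\varphi)}\dd\mu_{\mm{GFF}}^{\Omega_i,D}(\phi^D_{\Omega_i}),
\end{equation*}
for $\mu_{2\mn{D}}^{\Sigma}$-a.e.\ $\varphi$. Forming the product and invoking Fubini (valid by non-negativity of all integrands, eventually justified by the convergence of the LHS), the left-hand side of the desired identity becomes the triple integral
\begin{equation*}
  \iiint \mathcal{A}_{\Omega_1}^0(\varphi)\mathcal{A}_{\Omega_2}^0(\varphi)\me^{-S_{\Omega_1}(\phi^D_{\Omega_1}|\varphi)-S_{\Omega_2}(\phi^D_{\Omega_2}|\varphi)}\dd\mu_{\mm{GFF}}^{\Omega_1,D}\otimes\dd\mu_{\mm{GFF}}^{\Omega_2,D}\otimes \dd\mu_{2\mn{D}}^{\Sigma}\cdot\detz(2\mn{D}_{\Sigma})^{-\frac{1}{2}}.
\end{equation*}

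Next, recognise that the measure $\mathcal{A}_{\Omega_1}^0(\varphi)\mathcal{A}_{\Omega_2}^0(\varphi)\,\dd\mu_{2\mn{D}}^{\Sigma}(\varphi)\cdot\detz(2\mn{D}_{\Sigma})^{-1/2}$ on $\mathcal{D}'(\Sigma)$ is exactly $\detz(\Delta_M+m^2)^{-1/2}\,\dd\mu_{\DN}^{\Sigma,M}(\varphi)$. Indeed, using the explicit formula for $\mathcal{A}_{\Omega_i}^0$ derived in the example after definition of the amplitude, the additivity $\DN_M^{\Sigma}=\DN_{\Omega_1}^{\Sigma}+\DN_{\Omega_2}^{\Sigma}$ of the jumpy Dirichlet-to-Neumann map across the dissecting hypersurface, Corollary \ref{corr-rad-niko-dense} applied to $(M,\Sigma)$, and the BFK factorization (\ref{eqn-bfk-disec}), all the determinant and Gaussian factors collapse to this single expression. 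This is the same bookkeeping as in Corollary \ref{cor-disect-free} but kept in its pre-integrated form.

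Finally, combine the decoupling of the interaction (Proposition \ref{prop-decoupling}) with the stochastic decomposition $\phi=\PI_M^{\Sigma}\varphi+\phi_{\Omega_1}^D+\phi_{\Omega_2}^D$ of $\mu_{\mm{GFF}}^M$ under $\phi\sim\mu_{\mm{GFF}}^M$, $\varphi=\tau_{\Sigma}\phi\sim\mu_{\DN}^{\Sigma,M}$, with the three components independent (Corollary \ref{cor-stoc-decomp-disec}). Under this decomposition,
\begin{equation*}
  S_M(\phi)=S_{\Omega_1}(\phi_{\Omega_1}^D|\varphi)+S_{\Omega_2}(\phi_{\Omega_2}^D|\varphi)\quad\text{a.s.,}
\end{equation*}
so the triple integral reorganises as
\begin{equation*}
  \detz(\Delta_M+m^2)^{-\frac{1}{2}}\int_{\mathcal{D}'(M)}\me^{-S_M(\phi)}\dd\mu_{\mm{GFF}}^M(\phi)=Z_M,
\end{equation*}
by the very definition (\ref{eqn-def-part-func-rigor}). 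The main potential obstacle is ensuring the Fubini interchange and the almost-sure identification of $S_M$ as a sum are simultaneously valid despite the $\varphi$-dependence entering through the (only $\mu_{2\mn{D}}^{\Sigma}$-a.s.\ defined) shift $\PI_M^{\Sigma}\varphi$; this is handled exactly as in Remark \ref{rem-interp-inter-over-omega} since $\mu_{\DN}^{\Sigma,M}\ll\mu_{2\mn{D}}^{\Sigma}$ by Corollary \ref{corr-rad-niko-dense}, so $\mu_{2\mn{D}}^{\Sigma}$-a.s.\ statements transfer to $\mu_{\DN}^{\Sigma,M}$-a.s.\ statements.
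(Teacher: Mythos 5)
Your proposal is correct and relies on exactly the same technical inputs as the paper's proof (Lemma \ref{lemm-amplitude-express}, the additivity $\DN_M^{\Sigma}=\DN_{\Omega_1}^{\Sigma}+\DN_{\Omega_2}^{\Sigma}$, BFK, Corollary \ref{corr-rad-niko-dense}, Corollary \ref{cor-stoc-decomp-disec}, and Proposition \ref{prop-decoupling}). The only organisational difference is that the paper first establishes the multiplicativity $\mathcal{A}_{\Omega_1\sqcup\Omega_2}^P(\varphi,\varphi)=\mathcal{A}_{\Omega_1}^P(\varphi)\mathcal{A}_{\Omega_2}^P(\varphi)$ and then invokes the $P(\phi)$-trace computation (Corollary \ref{cor-P(phi)-trace}) applied to the disconnected cobordism $\Omega_1\sqcup\Omega_2\in\Mor(\Sigma,\Sigma)$, whereas you unwind that reference and carry out the identical calculation directly; unpacking the paper's citations reproduces your chain of equalities essentially verbatim.
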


\begin{proof}
  One verifies that for~$\Omega=\Omega_1\sqcup \Omega_2$,
  \begin{equation}
    \mathcal{A}_{\Omega}^P(\varphi,\varphi)=\mathcal{A}_{\Omega_1}^P(\varphi)\mathcal{A}_{\Omega_2}^P(\varphi).
    \label{}
  \end{equation}
  This in fact also comes directly from the definition as
  \begin{equation}
    S_{|\Omega|^2}(\phi_{\mm{GFF}}^{|\Omega|^2})=S_{|\Omega_1|^2}(\phi_{\mm{GFF}}^{|\Omega_1|^2})+S_{|\Omega_2|^2}(\phi_{\mm{GFF}}^{|\Omega_2|^2})
    \label{}
  \end{equation}
  pointwise almost surely with~$\phi_{\mm{GFF}}^{|\Omega|^2}=\phi_{\mm{GFF}}^{|\Omega_1|^2}+\phi_{\mm{GFF}}^{|\Omega_2|^2}$. One then follows a verbatim reasoning as for corollary \ref{cor-disect-free}.
\end{proof}

\subsection{Bayes-Type Density Formulae and Gluing with Free Ends}\label{sec-gluing-final}

\noindent In fact let us first put ourselves in the general situation of (iii) of definition \ref{def-segal-2}. Suppose the Riemannian surface~$\Omega$ is such that~$\partial\Omega=\Sigma_2\sqcup \Sigma_2^*\sqcup\Sigma_1$ and that~$\rho:\Sigma_2\lto \Sigma_2^*$ is an orientation reversing isometry. Denote by~$\Omega/\rho$ the surface obtained from~$\Omega$ by gluing~$\Sigma_2$ with~$\Sigma_2^*$ along~$\rho$. Denote by~$\Sigma_4$ the reflected copy of~$\Sigma_2$ in~$|\Omega/\rho|^2$.

\begin{notation}
  In general, for~$M$ a closed manifold and~$\Sigma_1$, \dots,~$\Sigma_{\ell}\subset M$ finitely many nonintersecting closed embedded hypersurfaces, we denote
  \begin{align*}
    \tau_{i\sqcup j\sqcup\cdots}&\defeq \textrm{the trace map }C^{\infty}(M)\lto C^{\infty}(\Sigma_i)\times C^{\infty}(\Sigma_j)\times\cdots,\textrm{ and its extensions,}\\
    \PI_M^{i\sqcup j\sqcup\cdots}&\defeq \textrm{the Poisson integral operator }W^{\frac{1}{2}}(\Sigma_i)\times W^{\frac{1}{2}}(\Sigma_j)\times\cdots \lto W^1(M), \\
    \mathcal{M}_{M,k\sqcup\cdots}^{i\sqcup j\sqcup\cdots}&\defeq \tau_{k\sqcup\cdots}\PI_M^{i\sqcup j\sqcup\cdots},\textrm{ the transition operator }\mathcal{D}'(\Sigma_i)\times \mathcal{D}'(\Sigma_j)\times\cdots \lto \mathcal{D}'(\Sigma_k)\times\cdots\textrm{ as in (\ref{eqn-trans-op-def}).}
  \end{align*}
  We also remind the reader of the notations set up in definitions \ref{def-DN-boundary} and \ref{def-pi-more-general}.
\end{notation}

\begin{prop}
  [Bayes with free ends] \label{prop-bayes-free-end} We have
  \begin{equation}
    \frac{\dd\tau_{1}(\mu_{\mm{GFF}}^{|\Omega/\rho|^2})}{\dd \mu_{2\mn{D}}^{\Sigma_1}}(\varphi_1)
    \bigg( 
    \frac{\dd \big((\mathcal{M}_{|\Omega/\rho|^2,2}^{1}\varphi_1)_*\mu_{\DN}^{\Sigma_2,\Omega/\rho,D}\big)}{\dd\mu_{2\mn{D}}^{ \Sigma_2}}(x)
    \bigg)^2 =\frac{\detz(\DN_{|\Omega/\rho|^2}^{\Sigma_2\sqcup\Sigma_4})^{\frac{1}{2}}}{\detz(\DN_{|\Omega|^2}^{\Sigma_2\sqcup\Sigma_2^*})^{\frac{1}{2}}}
    \frac{\dd\tau_{2\sqcup 2^*\sqcup 1}(\mu_{\mm{GFF}}^{|\Omega|^2})}{\dd \mu_{2\mn{D}}^{2\sqcup 2^*\sqcup 1}}(x,x,\varphi_1)
  \end{equation}
\end{prop}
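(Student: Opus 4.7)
The plan is to apply the Bayes principle (Proposition \ref{prop-bayes-gff}) inside $|\Omega/\rho|^2$ and reconcile the output with the joint trace density on $|\Omega|^2$ via the explicit Gaussian formula of Corollary \ref{corr-rad-niko-dense}. The key geometric observation is that $|\Omega/\rho|^2$ is obtained from $|\Omega|^2$ by further identifying $\Sigma_2\sim_\rho\Sigma_2^*$ in both reflected copies of $\Omega$; equivalently, cutting $|\Omega/\rho|^2$ along $\Sigma_2\sqcup\Sigma_4$ recovers $\Omega\cup_{\Sigma_1}\Omega^* = |\Omega|^2\setminus(\Sigma_2\sqcup\Sigma_2^*)$, and cutting further along $\Sigma_1$ yields $\Omega\sqcup\Omega^* = |\Omega|^2\setminus(\Sigma_1\sqcup\Sigma_2\sqcup\Sigma_2^*)$.

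First, apply Proposition \ref{prop-bayes-gff} to $M = |\Omega/\rho|^2$ with the non-intersecting hypersurfaces $\Sigma_1$ and $\Sigma_2\sqcup\Sigma_4$. Since $|\Omega/\rho|^2\setminus\Sigma_1 = (\Omega/\rho)\sqcup(\Omega/\rho)^*$, the post-cut Dirichlet GFF factorises by Lemma \ref{lemm-gaus-field-disj-indep}, and reflection symmetry identifies $((\Omega/\rho)^*,\Sigma_4)\cong(\Omega/\rho,\Sigma_2)$ isometrically, making the transition-operator shift $\mathcal{M}^1_{|\Omega/\rho|^2,2\sqcup 4}\varphi_1$ equal to $(\mathcal{M}^1_{|\Omega/\rho|^2,2}\varphi_1,\mathcal{M}^1_{|\Omega/\rho|^2,2}\varphi_1)$ under this identification. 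Evaluating the resulting Bayes density at $x_2=x_4=x$ then produces precisely the LHS of the proposition.

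Next, apply Corollary \ref{corr-rad-niko-dense} to both joint trace densities involved — one on $|\Omega/\rho|^2$ at $(\varphi_1,x,x)$ and one on $|\Omega|^2$ at $(x,x,\varphi_1)$ — writing each in the form $\detz(\DN)^{1/2}\detz(2\mn{D})^{-1/2}\exp(-Q/2)$. The $\detz(2\mn{D})$ prefactors agree as the underlying circle spectra match up. For the quadratic forms, Lemma \ref{lemm-dn-prop}(i) presents $Q$ as the Dirichlet energy of a harmonic extension; the two extensions $\PI_{|\Omega/\rho|^2}^{\Sigma_1\sqcup\Sigma_2\sqcup\Sigma_4}(\varphi_1,x,x)$ and $\PI_{|\Omega|^2}^{\Sigma_2\sqcup\Sigma_2^*\sqcup\Sigma_1}(x,x,\varphi_1)$ solve the same elliptic problem $(\Delta+m^2)=0$ on the common interior $\Omega^\circ\sqcup\Omega^{*\circ}$ with matching boundary data once the cut-quotient correspondence between $\Sigma_2\sqcup\Sigma_2^*\subset\partial\Omega$ and the gluing locus $\Sigma_2\subset\Omega/\rho$ is applied, so by uniqueness of solutions they coincide and their Dirichlet energies agree. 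The exponential factors therefore cancel.

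What remains is the determinant identity
\[
\frac{\detz(\DN_{|\Omega/\rho|^2}^{\Sigma_1\sqcup\Sigma_2\sqcup\Sigma_4})}{\detz(\DN_{|\Omega/\rho|^2}^{\Sigma_2\sqcup\Sigma_4})} = \frac{\detz(\DN_{|\Omega|^2}^{\Sigma_2\sqcup\Sigma_2^*\sqcup\Sigma_1})}{\detz(\DN_{|\Omega|^2}^{\Sigma_2\sqcup\Sigma_2^*})},
\]
which by the BFK formula (\ref{eqn-bfk-non-disec}) reduces on both sides to the reciprocal ratio of $\detz$ of Dirichlet-cut Laplacians; by the geometric pivot at the outset the cut manifolds are pairwise isometric and the identity follows. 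The main obstacle lies in step three: verifying that under the singular quotient $|\Omega|^2\to|\Omega/\rho|^2$ (a local isometry away from $\Sigma_2\sqcup\Sigma_2^*$) the assignment of boundary data $(\varphi_1,x,x)$ corresponds coherently to $(x,x,\varphi_1)$, which is the precise analytic content of the $\rho$-gluing and is what lets the Dirichlet energy transport without loss.
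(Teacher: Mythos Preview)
Your argument is correct, and the first step --- applying (\ref{eqn-bayes-gff-1}) backward inside $|\Omega/\rho|^2$ and using the reflection $\Sigma_2\leftrightarrow\Sigma_4$ to identify the squared conditional density with the joint trace density at $(\varphi_1,x,x)$ --- coincides with the paper's opening move. From there, however, you and the paper diverge. The paper stays within the Bayes framework: it applies (\ref{eqn-bayes-gff-2}) \emph{forward} (now conditioning on $\Sigma_2\sqcup\Sigma_4$ instead of $\Sigma_1$), uses the method of images to identify $\mathcal{M}_{|\Omega/\rho|^2,1}^{2\sqcup 4}[\smx{x\\x}]=\mathcal{M}_{|\Omega|^2,1}^{2\sqcup 2^*}[\smx{x\\x}]$, and then invokes the pre-trace Proposition \ref{prop-pre-trace} \emph{twice} through an auxiliary ``twice-reflected'' manifold $|\Omega\cup_1\Omega^*|^2$ to convert the $\Sigma_2\sqcup\Sigma_4$ trace density on $|\Omega/\rho|^2$ into the $\Sigma_2\sqcup\Sigma_2^*$ trace density on $|\Omega|^2$. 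You instead compare the two joint trace densities directly via the explicit formula of Corollary \ref{corr-rad-niko-dense}: the exponential factors match because the harmonic extensions live on the common cut manifold $\Omega\sqcup\Omega^*$ with identical boundary data, and the determinantal prefactors match by BFK (both ratios reduce to $\detz(\Delta_{\Omega\cup_{\Sigma_1}\Omega^*,D}+m^2)/\detz(\Delta_{\Omega\sqcup\Omega^*,D}+m^2)$). Your route is shorter and avoids the auxiliary doubled manifold, at the cost of invoking BFK directly rather than having it absorbed into Proposition \ref{prop-pre-trace}; the paper's route keeps the probabilistic Bayes principle as the sole organizing device, which is consistent with the methodological emphasis in Remark \ref{rem-bayes-descrip}.
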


\begin{figure}
    \centering
    \includegraphics[width=0.8\linewidth]{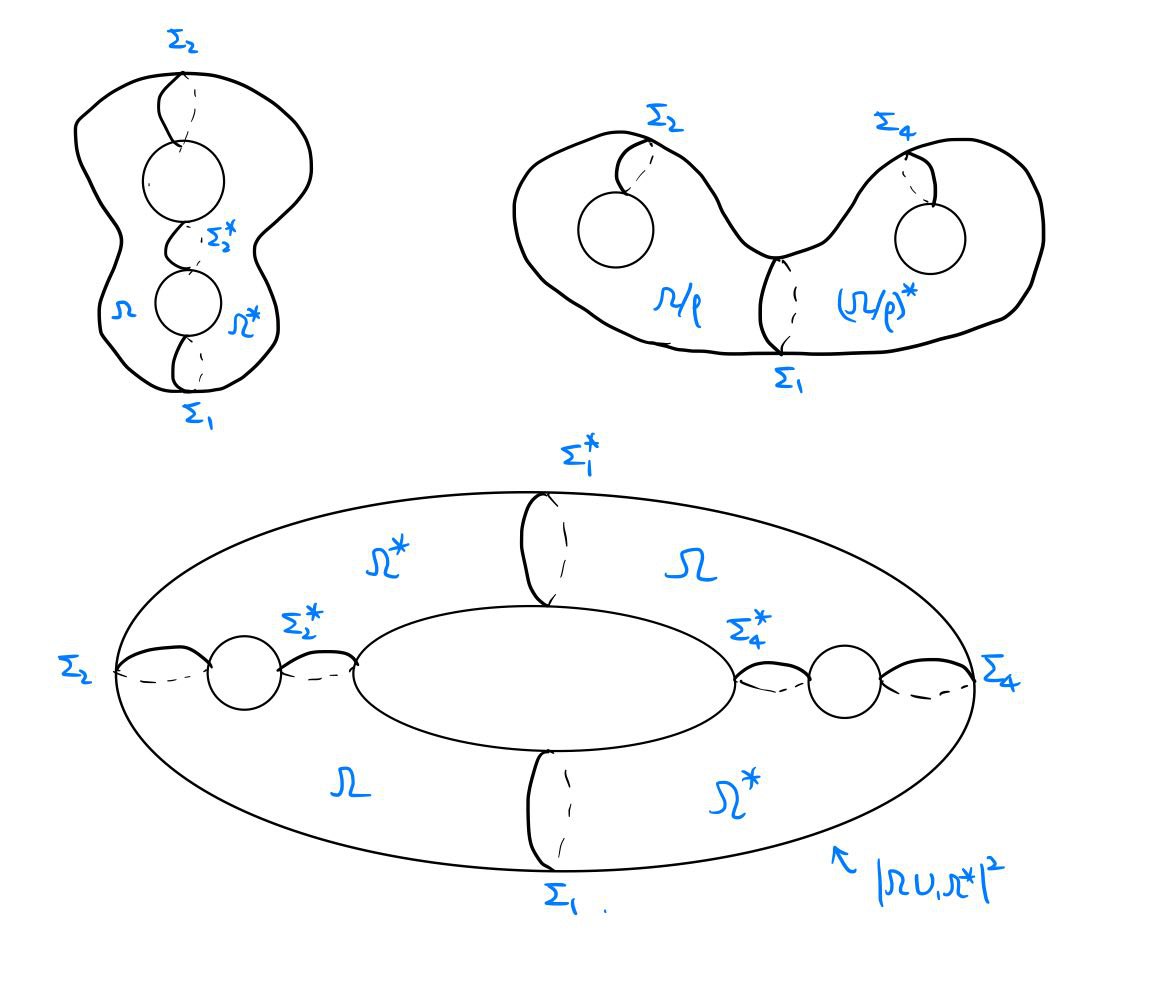}
    \caption{two ways of taking trace}
    \label{fig-general-trace}
\end{figure}

\begin{proof}
  Indeed, by reflection symmetry across~$\Sigma_1$ where~$\Sigma_2$ gets identified with~$\Sigma_4$, we have
  \begin{equation}
  \mathcal{M}_{|\Omega/\rho|^2,2}^{1}\varphi_1 =
\mathcal{M}_{|\Omega/\rho|^2,4}^{1}\varphi_1,\quad\textrm{and}\quad
\mu_{\DN}^{\Sigma_2,\Omega/\rho,D}=\mu_{\DN}^{\Sigma_4,\Omega/\rho,D},
    \label{}
  \end{equation}
  under the symmetry. Also by symmetry (method of images)   
 , we have
  \begin{equation}
  \mathcal{M}_{|\Omega/\rho|^2,1}^{2\sqcup 4}\bnom{x}{ x}
    =\mathcal{M}_{|\Omega|^2,1}^{2\sqcup 2^*}\bnom{x}{ x},
    \label{}
  \end{equation}
  because both of them expresses the associated Dirichlet data on~$\Sigma_1$ of the solution of the Helmholtz boundary value problem over $\Omega$ with Dirichlet data equal to~$x$ on both~$\Sigma_2$,~$\Sigma_2^*$ and Neumann condition (zero normal derivative) on~$\Sigma_1$. Thus,
  \begin{align*}
    \textrm{LHS}&=\frac{\dd\tau_{2\sqcup 4\sqcup 1}(\mu_{\mm{GFF}}^{|\Omega/\rho|^2})}{\dd \mu_{2\mn{D}}^{2\sqcup 4\sqcup 1}}(x,x,\varphi_1) \tag{formula (\ref{eqn-bayes-gff-1}) backward}\\
    &=\frac{\dd\tau_{2\sqcup 4}(\mu_{\mm{GFF}}^{|\Omega/\rho|^2})}{\dd \mu_{2\mn{D}}^{\Sigma_2\sqcup \Sigma_4}}(x,x)
    \frac{\dd \big((\mathcal{M}_{|\Omega/\rho|^2,1}^{2\sqcup 4}[\smx{x\\ x}])_*\mu_{2\DN}^{\Sigma_1,\Omega,D}\big)}{\dd\mu_{2\mn{D}}^{ \Sigma_1}}(\varphi_1) \tag{formula (\ref{eqn-bayes-gff-2}) forward}\\
    &=\frac{\detz(\DN_{|\Omega/\rho|^2}^{\Sigma_2\sqcup\Sigma_4})^{\frac{1}{2}}}{\detz(\DN_{|\Omega|^2}^{\Sigma_2\sqcup\Sigma_2^*})^{\frac{1}{2}}}
   \frac{\dd\tau_{2\sqcup 2^*}(\mu_{\mm{GFF}}^{|\Omega|^2})}{\dd \mu_{2\mn{D}}^{\Sigma_2\sqcup \Sigma_2^*}}(x,x)
   \frac{\dd \big((\mathcal{M}_{|\Omega|^2,1}^{2\sqcup 2^*}[\smx{x\\ x}])_*\mu_{2\DN}^{\Sigma_1,\Omega,D}\big)}{\dd\mu_{2\mn{D}}^{ \Sigma_1}}(\varphi_1) \tag{\#}\\
  &=\textrm{RHS}.
  \end{align*}
We see that with the help of Bayes principle we ``glued away'' the free end and we are reduced to the situation of proposition \ref{prop-pre-trace}. Indeed,  at step (\#) we use proposition \ref{prop-pre-trace} twice with respect to~$|\Omega\cup_1\Omega^*|^2$ which is~$\Omega$ ``reflected twice'' by first gluing~$\Omega$ with a reflected copy along~$\Sigma_1$ to get~$\Omega\cup_1\Omega^*$, and then reflect and glue again along~$\Sigma_2\sqcup\Sigma_2^*\sqcup\Sigma_4\sqcup\Sigma_4^*$, with~$\Sigma_4$,~$\Sigma_4^*$ being the reflected copies of~$\Sigma_2$,~$\Sigma_2^*$ across~$\Sigma_1$. From this we get
\begin{align*}
  \frac{\dd\tau_{2\sqcup 4}(\mu_{\mm{GFF}}^{|\Omega/\rho|^2})}{\dd \mu_{2\mn{D}}^{\Sigma_2\sqcup \Sigma_4}}(x,x)&=
  \frac{\detz(\DN_{|\Omega/\rho|^2}^{\Sigma_2\sqcup\Sigma_4})^{\frac{1}{2}}}{\detz(2\DN_{|\Omega\cup_1 \Omega^*|^2}^{2\sqcup 2^*\sqcup 4\sqcup 4^*})^{\frac{1}{4}}}
  \bigg( 
    \frac{\dd\tau_{2\sqcup 2^*\sqcup 4\sqcup 4^*}(\mu_{\mm{GFF}}^{|\Omega\cup_1 \Omega^*|^2})}{\dd\mu_{2\mn{D}}^{2\sqcup 2^*\sqcup 4\sqcup 4^*}}(x,x,x,x)
    \bigg)^{\frac{1}{2}}\\ 
    &=\frac{\detz(\DN_{|\Omega/\rho|^2}^{\Sigma_2\sqcup\Sigma_4})^{\frac{1}{2}}}{\detz(\DN_{|\Omega|^2}^{\Sigma_2\sqcup\Sigma_2^*})^{\frac{1}{2}}}
   \frac{\dd\tau_{2\sqcup 2^*}(\mu_{\mm{GFF}}^{|\Omega|^2})}{\dd \mu_{2\mn{D}}^{\Sigma_2\sqcup \Sigma_2^*}}(x,x),
\end{align*}
  finishing the proof.
\end{proof}

\begin{corr}
  [free-end gluing for free field] \label{cor-real-glue-free} We have
  \begin{equation}
    \int_{}^{}\mathcal{A}_{\Omega}^0(\varphi,\varphi,\psi)\dd\mu_{2\mn{D}}^{\Sigma_2}(\varphi)\detz(2\mn{D}_{\Sigma_2})^{-\frac{1}{2}}=\mathcal{A}_{\Omega/\rho}^0(\psi),
    \label{}
  \end{equation}
  for almost every $\psi\sim \mu_{2\mn{D}}^{\Sigma_1}$.
\end{corr}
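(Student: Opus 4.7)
My plan is to expand both sides of the target identity using the definition (\ref{eqn-def-amp-omega}) of the amplitude in terms of Radon-Nikodym densities, and then exploit the fact that proposition \ref{prop-bayes-free-end} gives a factorization of the joint density $\frac{d\tau_{2\sqcup 2^*\sqcup 1}(\mu_{\mm{GFF}}^{|\Omega|^2})}{d\mu_{2\mn{D}}^{2\sqcup 2^*\sqcup 1}}(x,x,\psi)$ in which the $x$-dependent part appears squared. Taking the square root (as required by the amplitude definition) kills the square and leaves the $x$-dependence living inside the genuine density $\frac{d((\mathcal{M}_{|\Omega/\rho|^2,2}^1\psi)_*\mu_{\mm{DN}}^{\Sigma_2,\Omega/\rho,D})}{d\mu_{2\mn{D}}^{\Sigma_2}}(x)$, which integrates to one against $\mu_{2\mn{D}}^{\Sigma_2}$ because $(\mathcal{M}_{|\Omega/\rho|^2,2}^1\psi)_*\mu_{\mm{DN}}^{\Sigma_2,\Omega/\rho,D}$ is a (shifted Gaussian) probability measure. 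This is precisely the conceptual payoff of defining amplitudes via \emph{square roots} of densities: the Bayes square matches the amplitude square, and the surviving factor is normalized.

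After performing the integration one is left with the identity $\int \mathcal{A}_\Omega^0(x,x,\psi)\,d\mu_{2\mn{D}}^{\Sigma_2}(x)\,{\det}_\zeta(2\mn{D}_{\Sigma_2})^{-1/2}=\mathcal{A}_{\Omega/\rho}^0(\psi)$ reduced (after squaring) to a purely determinantal equality of the form
\begin{equation}
\frac{{\det}_\zeta(\mm{DN}_{|\Omega|^2}^{\Sigma_2\sqcup\Sigma_2^*})}{{\det}_\zeta(\Delta_{|\Omega|^2}+m^2)}=\frac{{\det}_\zeta(\mm{DN}_{|\Omega/\rho|^2}^{\Sigma_2\sqcup\Sigma_4})}{{\det}_\zeta(\Delta_{|\Omega/\rho|^2}+m^2)},
\end{equation}
where the $d_{\Sigma_2}$ and $d_{\Sigma_2^*}$ factors on the left already cancel (they coincide by construction, and collapse with the prefactor ${\det}_\zeta(2\mn{D}_{\Sigma_2})^{-1/2}$).

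To prove this determinantal equality I would apply the BFK non-dissecting formula (\ref{eqn-bfk-non-disec}) to both sides: the numerator/denominator ratio on each side equals ${\det}_\zeta(\Delta_{D}+m^2)^{-1}$ on the complement manifold obtained by cutting the closed surface along the hypersurfaces indicated. The key geometric observation is that $|\Omega|^2\setminus(\Sigma_2\sqcup\Sigma_2^*)$ and $|\Omega/\rho|^2\setminus(\Sigma_2\sqcup\Sigma_4)$ are \emph{isometric} as Riemannian manifolds with boundary: both coincide with $\Omega^\circ\cup_{\Sigma_1}(\Omega^*)^\circ$, equipped with four boundary components (two copies each of $\Sigma_2,\Sigma_2^*$). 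This identification uses the reflection symmetry invoked in the proof of proposition \ref{prop-bayes-free-end} (method of images identifies $\Sigma_2^*$ on one side with $\Sigma_4$ on the other).

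The main obstacle, technically, is bookkeeping: making sure that all the determinants of $2\mn{D}$'s on the various circles, the two determinants of full Laplacians on $|\Omega|^2$ and $|\Omega/\rho|^2$, and the two DN-determinants assemble correctly so that everything outside the geometric BFK identity is absorbed. A secondary subtlety is that the equality holds only for \textit{almost every} $\psi$ under $\mu_{2\mn{D}}^{\Sigma_1}$, since the identity in proposition \ref{prop-bayes-free-end} is between Radon-Nikodym densities, which are defined modulo null sets in $\varphi_1$; one must therefore apply Fubini/Tonelli carefully to integrate the $x$-variable first for $\mu_{2\mn{D}}^{\Sigma_1}$-a.e.\ $\psi$, not pointwise. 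No new analytic input beyond proposition \ref{prop-bayes-free-end}, the BFK formula, and the probability-measure normalization should be needed.
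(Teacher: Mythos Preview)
Your proposal is correct and follows essentially the same route as the paper: expand $\mathcal{A}_{\Omega}^0(\varphi,\varphi,\psi)$ via the definition, invoke proposition~\ref{prop-bayes-free-end} so that the square root of the joint density splits off a genuine probability density in~$\varphi$ which integrates to~$1$, and then reduce to the determinantal identity you wrote, which is handled by two applications of BFK together with the isometry $|\Omega|^2\setminus(\Sigma_2\sqcup\Sigma_2^*)\cong |\Omega/\rho|^2\setminus(\Sigma_2\sqcup\Sigma_4)\cong \Omega\cup_{\Sigma_1}\Omega^*$. The paper presents the same chain of equalities slightly more telegraphically (it absorbs the isometry into the two BFK steps without stating it separately), but the logic and ingredients are identical.
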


\begin{proof}
  The key point is to disintegrate $\mathcal{A}_{\Omega}^0(\varphi,\varphi,\psi)$ into a part involving only $\varphi$, which one could ``integrate out'' cleanly, multiplied by a part independent of $\varphi$, using proposition \ref{prop-bayes-free-end}. Indeed,
  \begin{align*}
    \textrm{LHS}&=\frac{\detz(\Delta_{|\Omega|^2}+m^2)^{-\frac{1}{4}}}{\detz(2\mn{D}_{\Sigma_2})^{-\frac{1}{2}}\detz( 2\mn{D}_{\Sigma_1})^{-\frac{1}{4}}}
    \int_{}^{}\bigg( 
    \frac{\dd\tau_{2\sqcup 2^*\sqcup 1}(\mu_{\mm{GFF}}^{|\Omega|^2})}{\dd \mu_{2\mn{D}}^{2\sqcup 2^*\sqcup 1}}(\varphi,\varphi,\psi)
    \bigg)^{\frac{1}{2}} \dd\mu_{2\mn{D}}^{\Sigma_2}(\varphi) \detz(2\mn{D}_{\Sigma_2})^{-\frac{1}{2}} \\
    &=\frac{\detz(\DN_{|\Omega|^2}^{\Sigma_2\sqcup\Sigma_2^*})^{\frac{1}{4}}}{\detz(\DN_{|\Omega/\rho|^2}^{\Sigma_2\sqcup\Sigma_4})^{\frac{1}{4}}}
    \frac{\detz(\Delta_{|\Omega|^2}+m^2)^{-\frac{1}{4}}}{\detz(2\mn{D}_{\Sigma_1})^{-\frac{1}{4}}}
    \bigg( \frac{\dd\tau_{1}(\mu_{\mm{GFF}}^{|\Omega/\rho|^2})}{\dd \mu_{2\mn{D}}^{\Sigma_1}}(\psi) \bigg)^{\frac{1}{2}}
    \underbrace{\int_{}^{}\frac{\dd \big((\mathcal{M}_{|\Omega/\rho|^2,2}^{1}\psi)_*\mu_{\DN}^{\Sigma_2,\Omega/\rho,D}\big)}{\dd\mu_{2\mn{D}}^{ \Sigma_2}}(\varphi)
    \dd\mu_{2\mn{D}}^{\Sigma_2}(\varphi)}_{=~ 1}  \\
    &=\detz(\DN_{|\Omega/\rho|^2}^{\Sigma_2\sqcup\Sigma_4})^{-\frac{1}{4}} \detz(\Delta_{|\Omega|^2\setminus \Sigma_2\sqcup\Sigma_2^*,D}+m^2)^{-\frac{1}{4}}
    \detz(2\mn{D}_{\Sigma_1})^{\frac{1}{4}} 
    \bigg( \frac{\dd\tau_{1}(\mu_{\mm{GFF}}^{|\Omega/\rho|^2})}{\dd \mu_{2\mn{D}}^{\Sigma_1}}(\psi) \bigg)^{\frac{1}{2}}\tag{BFK for~$|\Omega|^2$ and~$\Sigma_2\sqcup\Sigma_2^*$}\\
    &=\textrm{RHS}, \tag{BFK for~$|\Omega/\rho|^2$ and~$\Sigma_2\sqcup\Sigma_4$}
  \end{align*}
  finishing the proof.
\end{proof}

Before proceeding to the interacting case we remind the reader of remark \ref{rem-interp-inter-over-omega}.

\begin{lemm}\label{lemm-equal-law-2-PI}
  For fixed~$\psi\in W^{\frac{1}{2}}(\Sigma_1)$, then the random fields
  \begin{equation}
    \phi\defeq\PI_{\Omega/\rho}^{1}\psi+\PI_{\Omega/\rho}^{2,D}\tau_{2}\phi_{\Omega/\rho}^D,\quad\textrm{and}\quad \tilde{\phi}\defeq\PI_{\Omega}^{2\sqcup 2^*\sqcup 1}[\varphi,\varphi,\psi]
    \label{}
  \end{equation}
  with~$\phi_{\Omega/\rho}^D\sim \mu_{\mm{GFF}}^{\Omega/\rho,D}$ and~$\varphi\sim (\mathcal{M}_{|\Omega/\rho|^2,2}^{1}\psi)_*\mu_{\DN}^{\Sigma_2,\Omega/\rho,D}$, follow the same law on~$\mathcal{D}'(\Omega^{\circ})$.
\end{lemm}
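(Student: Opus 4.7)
The strategy is to (i) compute the boundary trace $\tau_2\phi$ on the interior hypersurface $\Sigma_2 \subset \Omega/\rho$ and recognise its law as that of $\varphi$, and then (ii) exhibit $\phi|_{\Omega^\circ}$ as the same deterministic Poisson extension over $\Omega$ applied to $(\tau_2\phi,\tau_2\phi,\psi)$ as $\tilde\phi$ is to $(\varphi,\varphi,\psi)$. Equality in law then follows from pushforward under a common measurable map.

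For (i), by linearity
\[
\tau_2\phi \;=\; \tau_2 \PI_{\Omega/\rho}^{1}\psi \;+\; \tau_2 \PI_{\Omega/\rho}^{2,D}\tau_2\phi_{\Omega/\rho}^D.
\]
The second summand equals $\tau_2\phi_{\Omega/\rho}^D$, since $\tau_2\PI_{\Omega/\rho}^{2,D}$ is the identity on $\Sigma_2$ on smooth data and extends as such at the level of measurable linear extensions (Bogachev's uniqueness theorem 2.10.11, invoked repeatedly in the paper). For the first summand, reflection symmetry across $\Sigma_1$ in $|\Omega/\rho|^2$ together with uniqueness of the Helmholtz Dirichlet problem on $\Omega/\rho$ gives $\PI_{|\Omega/\rho|^2}^{\Sigma_1}\psi\bigl|_{\Omega/\rho} = \PI_{\Omega/\rho}^{1}\psi$, so $\tau_2\PI_{\Omega/\rho}^{1}\psi = \mathcal{M}_{|\Omega/\rho|^2,2}^{1}\psi$. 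By the domain-with-boundary analogue of proposition~\ref{prop-induce-law-DN} (specifically, lemma~\ref{lemm-DN-trick} applied within $\Omega/\rho$ with Dirichlet condition on $\Sigma_1$), $\tau_2\phi_{\Omega/\rho}^D$ is a centred Gaussian on $\Sigma_2$ with covariance $(\DN_{\Omega/\rho}^{\Sigma_2,D})^{-1}$, i.e.\ of law $\mu_{\DN}^{\Sigma_2,\Omega/\rho,D}$. Combining,
\[
\tau_2\phi \;\sim\; \bigl(\mathcal{M}_{|\Omega/\rho|^2,2}^{1}\psi\bigr)_*\,\mu_{\DN}^{\Sigma_2,\Omega/\rho,D},
\]
which is exactly the prescribed law of $\varphi$.

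For (ii), the restriction $\phi|_{\Omega^\circ}$ is $(\Delta+m^2)$-harmonic on $\Omega^\circ$, being the sum of a function harmonic on all of $\Omega/\rho$ and one harmonic on $\Omega/\rho \setminus \Sigma_2 = \Omega^\circ$. Its boundary trace on $\Sigma_1$ equals $\psi + 0 = \psi$. Undoing the gluing $\rho:\Sigma_2\to\Sigma_2^*$, the traces of $\phi|_{\Omega^\circ}$ on $\Sigma_2$ and $\Sigma_2^*$ (identified via $\rho$) both equal $\tau_2\phi$: the summand $\PI_{\Omega/\rho}^{1}\psi$ is smooth across $\Sigma_2$ in $\Omega/\rho$, while $\PI_{\Omega/\rho}^{2,D}\tau_2\phi_{\Omega/\rho}^D$ has matching trace $\tau_2\phi_{\Omega/\rho}^D$ from both sides by construction. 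Uniqueness of the Dirichlet problem on $\Omega$ therefore yields
\[
\phi|_{\Omega^\circ} \;=\; \PI_\Omega^{2\sqcup 2^*\sqcup 1}\bigl[\tau_2\phi,\,\tau_2\phi,\,\psi\bigr]
\]
almost surely.

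Finally, $\tilde\phi = \PI_\Omega^{2\sqcup 2^*\sqcup 1}[\varphi,\varphi,\psi]$ is obtained by applying the same measurable linear map (in the $\Sigma_2$-entries, with $\psi$ fixed) to $(\varphi,\varphi,\psi)$. Since $\tau_2\phi \stackrel{d}{=} \varphi$, the two pushforwards on $\mathcal{D}'(\Omega^\circ)$ coincide, proving the lemma. The main obstacle is essentially bookkeeping: every operator identity above is elementary on smooth data, but $\tau_2\phi_{\Omega/\rho}^D$ lies only in a negative Sobolev space, so each identity has to be interpreted via measurable linear extensions of continuous operators on Cameron--Martin spaces, and pointwise equalities promoted to almost-sure ones by Bogachev's uniqueness theorem, as throughout the paper.
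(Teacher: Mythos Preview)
Your proof is correct and takes essentially the same approach as the paper: both arguments observe that $\phi$ and $\tilde\phi$ solve the same stochastic Dirichlet problem on $\Omega^\circ = (\Omega/\rho)\setminus\Sigma_2$ with deterministic datum $\psi$ on $\Sigma_1$ and random datum of law $(\mathcal{M}_{|\Omega/\rho|^2,2}^{1}\psi)_*\mu_{\DN}^{\Sigma_2,\Omega/\rho,D}$ on $\Sigma_2\sqcup\Sigma_2^*$, then invoke uniqueness. The paper's proof is a two-line statement of this; yours spells out the verification that $\tau_2\phi$ has the claimed law and that $\phi|_{\Omega^\circ}$ is the right Poisson extension, which is exactly the content the paper leaves implicit.
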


\begin{proof}
  Here~$\Omega^{\circ}$ is also~$(\Omega/\rho)\setminus\Sigma_2$. Indeed, both~$\phi$ and~$\tilde{\phi}$ solves the stochastic boundary value problem
  \begin{equation}
    \left\{
    \begin{array}{ll}
      (\Delta+m^2)u=0&\textrm{in }\Omega^{\circ},\\
      u|_{\Sigma_2}=u|_{\Sigma_2^*}\sim(\mathcal{M}_{|\Omega/\rho|^2,2}^{1}\psi)_*\mu_{\DN}^{\Sigma_2,\Omega/\rho,D}, & \textrm{on }\Sigma_2\textrm{ or }\Sigma_2\sqcup\Sigma_2^* \\
      u|_{\Sigma_1}=\psi,&\textrm{on }\Sigma_1,
    \end{array}
    \right.
    \label{}
  \end{equation}
  with equalities holding almost surely.
\end{proof}

\begin{corr}
  [free-end gluing for~$P(\phi)$ field]  In the situation as above, we have
  \begin{equation}
    \int_{}^{} \mathcal{A}_{\Omega}^P(\varphi,\varphi,\psi)\dd\mu_{2\mn{D}}^{\Sigma_2}(\varphi) \detz(2\mn{D}_{\Sigma_2})^{-\frac{1}{2}}=\mathcal{A}_{\Omega/\rho}^P(\psi),
    \label{}
  \end{equation}
  for almost every $\psi\sim \mu_{2\mn{D}}^{\Sigma_1}$.
\end{corr}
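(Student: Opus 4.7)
The plan is to mimic the free-field free-end gluing of Corollary~\ref{cor-real-glue-free}, carrying the interaction factor along via Lemma~\ref{lemm-amplitude-express}, and then to identify the resulting double integral with the $\Omega/\rho$-expectation using Lemma~\ref{lemm-equal-law-2-PI} together with the locality of the $P(\phi)$-interaction. First, by Lemma~\ref{lemm-amplitude-express},
\begin{align*}
\mathcal{A}_\Omega^P(\varphi,\varphi,\psi)&=\mathcal{A}_\Omega^0(\varphi,\varphi,\psi)\,\mb{E}_{\mm{GFF}}^{\Omega,D}\big[\me^{-S_\Omega(\phi_\Omega^D|\varphi,\varphi,\psi)}\big],\\
\mathcal{A}_{\Omega/\rho}^P(\psi)&=\mathcal{A}_{\Omega/\rho}^0(\psi)\,\mb{E}_{\mm{GFF}}^{\Omega/\rho,D}\big[\me^{-S_{\Omega/\rho}(\phi_{\Omega/\rho}^D|\psi)}\big].
\end{align*}
Running the exact chain of manipulations in the proof of Corollary~\ref{cor-real-glue-free} --- applying Proposition~\ref{prop-bayes-free-end} to the density ratio coming from the square root in the definition of $\mathcal{A}^0_\Omega$, then using the BFK formula on both $|\Omega|^2$ cut along $\Sigma_2\sqcup\Sigma_2^*$ and $|\Omega/\rho|^2$ cut along $\Sigma_2\sqcup\Sigma_4$, and observing that $|\Omega|^2\setminus(\Sigma_2\sqcup\Sigma_2^*)$ and $|\Omega/\rho|^2\setminus(\Sigma_2\sqcup\Sigma_4)$ are both isometric to $\Omega\cup_{\Sigma_1}\Omega^*$ (so their Dirichlet determinants cancel) --- yields the pointwise identity
\[
\mathcal{A}_\Omega^0(\varphi,\varphi,\psi)\,\detz(2\mn{D}_{\Sigma_2})^{-\tfrac12}=\mathcal{A}_{\Omega/\rho}^0(\psi)\cdot\frac{\dd\big((\mathcal{M}_{|\Omega/\rho|^2,2}^{1}\psi)_*\mu_{\DN}^{\Sigma_2,\Omega/\rho,D}\big)}{\dd\mu_{2\mn{D}}^{\Sigma_2}}(\varphi),
\]
valid $\mu_{2\mn{D}}^{\Sigma_2}$-almost everywhere.

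Substituting this identity into the left-hand side of the statement and applying Tonelli (positivity of the integrand),
\begin{align*}
&\int \mathcal{A}_\Omega^P(\varphi,\varphi,\psi)\,\dd\mu_{2\mn{D}}^{\Sigma_2}(\varphi)\,\detz(2\mn{D}_{\Sigma_2})^{-\tfrac12}\\
&\qquad=\mathcal{A}_{\Omega/\rho}^0(\psi)\iint \me^{-S_\Omega(\phi_\Omega^D|\varphi,\varphi,\psi)}\,\dd\mu_{\mm{GFF}}^{\Omega,D}(\phi_\Omega^D)\otimes\dd\big[(\mathcal{M}_{|\Omega/\rho|^2,2}^{1}\psi)_*\mu_{\DN}^{\Sigma_2,\Omega/\rho,D}\big](\varphi),
\end{align*}
so it remains to show that this double integral equals $\mb{E}_{\mm{GFF}}^{\Omega/\rho,D}\big[\me^{-S_{\Omega/\rho}(\phi_{\Omega/\rho}^D|\psi)}\big]$.

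Two ingredients settle this. Firstly, Proposition~\ref{prop-stoc-decomp-domain} applied with hypersurface $\Sigma_2\subset\Omega/\rho$ gives the stochastic decomposition $\phi_{\Omega/\rho}^D=\phi_\Omega^D+\PI_{\Omega/\rho}^{2,D}\tau_{2}\phi_{\Omega/\rho}^D$ into two independent pieces, the first following $\mu_{\mm{GFF}}^{\Omega,D}$. Secondly, Lemma~\ref{lemm-equal-law-2-PI} asserts that with $\varphi\sim(\mathcal{M}_{|\Omega/\rho|^2,2}^{1}\psi)_*\mu_{\DN}^{\Sigma_2,\Omega/\rho,D}$, the random field $\PI_\Omega^{2\sqcup 2^*\sqcup 1}[\varphi,\varphi,\psi]$ equals in law $\PI_{\Omega/\rho}^{1}\psi+\PI_{\Omega/\rho}^{2,D}\tau_{2}\phi_{\Omega/\rho}^D$. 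Combining these with the independence of $\phi_\Omega^D$ from the harmonic-extension piece in each setting, the joint law of $(\phi_\Omega^D,\,\phi_\Omega^D+\PI_\Omega^{2\sqcup 2^*\sqcup 1}[\varphi,\varphi,\psi])$ under the LHS product measure agrees with that of $(\phi_\Omega^D,\,\phi_{\Omega/\rho}^D+\PI_{\Omega/\rho}^{1}\psi)$, where now $\phi_\Omega^D$ denotes the $\Sigma_2$-Dirichlet piece of $\phi_{\Omega/\rho}^D\sim\mu_{\mm{GFF}}^{\Omega/\rho,D}$. The closing step is to argue that $S_\Omega(\phi_\Omega^D|\varphi,\varphi,\psi)=S_{\Omega/\rho}(\phi_{\Omega/\rho}^D|\psi)$ almost surely under this coupling; taking expectations then yields the desired identity.

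The main obstacle lies precisely in this closing step. Although morally obvious, it needs care because the two interactions are defined through \emph{different} isometric doubles $|\Omega|^2$ and $|\Omega/\rho|^2$. The route is via locality (Remark~\ref{rem-locality-crucial}, Proposition~\ref{prop-locality}) together with locality of the Wick ordering ${:\!\bullet\!:}_0$ from Section~\ref{sec-change-wick}: both interactions are $L^p$-limits of regularised Wick polynomials $\int\chi\,{:\!P(E_\varepsilon\phi)\!:}_0\,\dd V$ with $\chi\to 1_\Omega$ (resp.\ $1_{\Omega/\rho}$) in $L^4$, and the local regulator $E_\varepsilon$ on the two doubles produces identical convolution values in the common bulk once $\varepsilon$ is smaller than the geodesic distance to any non-$\Sigma_2$ boundary component; the regulator-independence provided by Proposition~\ref{prop-nelson-main} ensures the $L^p$-limits depend only on the restriction of the field to a neighborhood of $\Omega^\circ\cong(\Omega/\rho)^\circ\setminus\Sigma_2$. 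Since $\Sigma_2$ itself has zero volume, $1_\Omega$ and $1_{\Omega/\rho}$ agree on the common bulk up to a null set, so the two interactions coincide on the coupled fields, and the Fubini step reassembles the double integral into $\mb{E}_{\mm{GFF}}^{\Omega/\rho,D}[\me^{-S_{\Omega/\rho}(\phi_{\Omega/\rho}^D|\psi)}]$, finishing the proof.
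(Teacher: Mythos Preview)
Your proof is correct and follows essentially the same route as the paper's: factor $\mathcal{A}_\Omega^P$ via Lemma~\ref{lemm-amplitude-express}, run the free-field computation from Corollary~\ref{cor-real-glue-free} (via Proposition~\ref{prop-bayes-free-end} and BFK) to extract $\mathcal{A}_{\Omega/\rho}^0(\psi)$ times the conditional-law density in $\varphi$, then use Lemma~\ref{lemm-equal-law-2-PI} and the Markov decomposition of Proposition~\ref{prop-stoc-decomp-domain} to identify the remaining double integral with $\mb{E}_{\mm{GFF}}^{\Omega/\rho,D}[\me^{-S_{\Omega/\rho}(\phi_{\Omega/\rho}^D|\psi)}]$. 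The paper compresses your ``closing step'' into a single line, tacitly invoking the locality of Section~\ref{sec-locality} and Definition~\ref{def-inter-over-domain} to pass from $S_\Omega$ to $S_{\Omega/\rho}$; your more explicit unpacking of why the two interactions agree (local regulator, local Wick ordering, regulator-independence) is a welcome clarification of a point the paper treats as understood.
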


\begin{proof}
  Indeed,
  \begin{equation}
    \mathcal{A}_{\Omega}^P(\varphi,\varphi,\psi)=\mathcal{A}_{\Omega}^0(\varphi,\varphi,\psi)\int_{}^{}\me^{-S_{\Omega}(\phi^{D}_{\Omega}|\varphi,\varphi,\psi)}\dd\mu_{\mm{GFF}}^{\Omega,D}(\phi^{D}_{\Omega}),
    \label{}
  \end{equation}
  therefore, with the constants involving determinants working out in exactly the same way as corollary \ref{cor-real-glue-free}, one has
  \begin{align*}
    \textrm{LHS}&=\frac{\detz(\Delta_{|\Omega/\rho|^2}+m^2)^{-\frac{1}{4}}}{\detz(2\mn{D}_{\Sigma_1})^{\frac{1}{4}}}
    \bigg( \frac{\dd\tau_{1}(\mu_{\mm{GFF}}^{|\Omega/\rho|^2})}{\dd \mu_{2\mn{D}}^{\Sigma_1}}(\psi) \bigg)^{\frac{1}{2}}
    \int_{}^{}\dd \big((\mathcal{M}_{|\Omega/\rho|^2,2}^{1}\psi)_*\mu_{\DN}^{\Sigma_2,\Omega/\rho,D}\big)(\varphi) 
    \int_{}^{} \me^{-S_{\Omega}(\phi^{D}_{\Omega}|\varphi,\varphi,\psi)}\dd\mu_{\mm{GFF}}^{\Omega,D}(\phi^{D}_{\Omega}) \\
    &=\mathcal{A}_{\Omega/\rho}^0(\psi)
    \int_{}^{}\me^{-S_{\Omega/\rho}(\phi^{D}_{\Omega}+(\phi_{\Omega/\rho}^D)_{\Sigma_2}|\psi)} \dd\mu_{\mm{GFF}}^{\Omega,D}\otimes \dd \big[(\PI_{\Omega/\rho}^{2,D}\tau_{2})_*(\mu_{\mm{GFF}}^{\Omega/\rho,D})\big](\phi^{D}_{\Omega},(\phi_{\Omega/\rho}^D)_{\Sigma_2}) \tag{lemma \ref{lemm-equal-law-2-PI}}\\
    &=\mathcal{A}_{\Omega/\rho}^0(\psi)
    \int_{}^{}\me^{-S_{\Omega/\rho}(\phi_{\Omega/\rho}^D|\psi)} \dd\mu_{\mm{GFF}}^{\Omega/\rho,D}(\phi_{\Omega/\rho}^D)=\textrm{RHS},
  \end{align*}
  where the notation $(\phi_{\Omega/\rho}^D)_{\Sigma_2}$ is as in proposition \ref{prop-stoc-decomp-domain}. We arrive at the proof.
\end{proof}

Now let~$\Omega_1\in\Mor(\Sigma_1,\Sigma_2)$,~$\Omega_2\in\Mor(\Sigma_2,\Sigma_3)$ and~$|\Omega_2\Omega_1|^2:=(\Omega_1^*\cup_{4}\Omega_2^*\cup_{3}\Omega_2\cup_2\Omega_1)^{\vee}$, where we denote by~$\Sigma_4$ the ``glued'' outgoing and incoming boundaries of~$\Omega_2^*$ and~$\Omega_1^*$. In this case the result could be seen as a special case of the previous one, where $\Omega=\Omega_1\sqcup\Omega_2$ has two disjoint components. We shall re-state these results without proofs.

\begin{corr}\label{prop-pre-glu}
  We have
  \begin{align*}
    &\quad\frac{\dd\tau_{1\sqcup 3}(\mu_{\mm{GFF}}^{|\Omega_2\Omega_1|^2})}{\dd \mu_{2\mn{D}}^{\Sigma_1\sqcup \Sigma_3}}(\varphi_1,\varphi_3)
    \left( \frac{\dd \big((\mathcal{M}_{|\Omega_2\Omega_1|^2,2}^{1\sqcup 3}[\smx{\varphi_1\\ \varphi_3}])_*\mu_{2\DN}^{\Sigma_2,\Omega_2\Omega_1,D}\big)}{\dd\mu_{2\mn{D}}^{ \Sigma_2}}(x) \right)^2 \\
    =&\quad \frac{\detz(\DN_{|\Omega_2\Omega_1|^2}^{\Sigma_2\sqcup\Sigma_4})^{\frac{1}{2}}}{\detz(2\DN_{\Omega_1}^{\Sigma_2,N})^{\frac{1}{2}}\detz(2\DN_{\Omega_2}^{\Sigma_2,N})^{\frac{1}{2}}}
    \frac{\dd\tau_{\Sigma_1\sqcup \Sigma_2}(\mu_{\mm{GFF}}^{|\Omega_1|^2}) }{\dd \mu_{2\mn{D}}^{\Sigma_1\sqcup \Sigma_2}}(\varphi_1,x) 
    \frac{\dd\tau_{\Sigma_2\sqcup \Sigma_3}(\mu_{\mm{GFF}}^{|\Omega_2|^2}) }{\dd \mu_{2\mn{D}}^{\Sigma_2\sqcup \Sigma_3}}(x,\varphi_3). \quad\Box
  \end{align*}
\end{corr}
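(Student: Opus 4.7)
The plan is to reduce this to proposition \ref{prop-bayes-free-end} applied to the disconnected surface $\Omega := \Omega_1\sqcup \Omega_2$, interpreting the role-assignment of objects carefully. In the statement of \ref{prop-bayes-free-end}, the two boundary components $\Sigma_2$, $\Sigma_2^*$ glued via $\rho$ correspond here to the ``out'' boundary of $\Omega_1$ and the ``in'' boundary of $\Omega_2$ (both copies of $\Sigma_2$), while the ``free end'' $\Sigma_1$ there corresponds to the disjoint union $\Sigma_1\sqcup \Sigma_3$ here. With this dictionary, $\Omega/\rho=\Omega_2\cup_2 \Omega_1=:\Omega_2\Omega_1$ and $|\Omega|^2=|\Omega_1|^2\sqcup |\Omega_2|^2$. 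Applying proposition \ref{prop-bayes-free-end} verbatim gives
\begin{equation*}
    \textrm{LHS}= \frac{\detz(\DN_{|\Omega_2\Omega_1|^2}^{\Sigma_2\sqcup\Sigma_4})^{\frac{1}{2}}}{\detz(\DN_{|\Omega|^2}^{\Sigma_2\sqcup\Sigma_2^*})^{\frac{1}{2}}}
    \frac{\dd\tau_{2\sqcup 2^*\sqcup 1\sqcup 3}(\mu_{\mm{GFF}}^{|\Omega|^2})}{\dd \mu_{2\mn{D}}^{2\sqcup 2^*\sqcup 1\sqcup 3}}(x,x,\varphi_1,\varphi_3).
\end{equation*}

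Next I would split the density factor using the independence result lemma \ref{lemm-gaus-field-disj-indep}: since $|\Omega|^2=|\Omega_1|^2\sqcup|\Omega_2|^2$, the GFF on $|\Omega|^2$ is the independent product of GFFs on each piece, and accordingly the density of the joint trace on $\Sigma_2\sqcup\Sigma_2^*\sqcup\Sigma_1\sqcup\Sigma_3$ factorises as
\begin{equation*}
    \frac{\dd\tau_{2\sqcup 2^*\sqcup 1\sqcup 3}(\mu_{\mm{GFF}}^{|\Omega|^2})}{\dd \mu_{2\mn{D}}^{2\sqcup 2^*\sqcup 1\sqcup 3}}(x,x,\varphi_1,\varphi_3) = \frac{\dd\tau_{\Sigma_1\sqcup \Sigma_2}(\mu_{\mm{GFF}}^{|\Omega_1|^2})}{\dd \mu_{2\mn{D}}^{\Sigma_1\sqcup \Sigma_2}}(\varphi_1,x) \cdot \frac{\dd\tau_{\Sigma_2\sqcup \Sigma_3}(\mu_{\mm{GFF}}^{|\Omega_2|^2})}{\dd \mu_{2\mn{D}}^{\Sigma_2\sqcup \Sigma_3}}(x,\varphi_3),
\end{equation*}
which is exactly the product of densities appearing on the RHS of the corollary.

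It remains to identify the determinant factor. Since $|\Omega|^2=|\Omega_1|^2\sqcup|\Omega_2|^2$ is disconnected and $\Sigma_2$ sits inside $|\Omega_1|^2$ while $\Sigma_2^*$ sits inside $|\Omega_2|^2$, the jumpy DN operator $\DN_{|\Omega|^2}^{\Sigma_2\sqcup \Sigma_2^*}$ is block-diagonal with blocks $\DN_{|\Omega_i|^2}^{\Sigma_2}$, $i=1,2$. I would now use the reflection symmetry built into the isometric double $|\Omega_i|^2$: for any $f\in C^{\infty}(\Sigma_2)$, by uniqueness of solutions to (\ref{eqn-PI-pde-hyp}) the extension $\PI_{|\Omega_i|^2}^{\Sigma_2} f$ is invariant under the reflection across the ``other'' boundary (which is $\Sigma_1$ for $i=1$ and $\Sigma_3$ for $i=2$), and hence its normal derivative along this other boundary vanishes. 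The restriction to the $\Omega_i$-side is therefore the Neumann Poisson extension $\PI_{\Omega_i}^{\Sigma_2,N} f$, and the two sides contribute equally to the jump, yielding
\begin{equation*}
    \DN_{|\Omega_i|^2}^{\Sigma_2} = 2\,\DN_{\Omega_i}^{\Sigma_2,N},\qquad i=1,2,
\end{equation*}
so that $\detz(\DN_{|\Omega|^2}^{\Sigma_2\sqcup \Sigma_2^*})=\detz(2\DN_{\Omega_1}^{\Sigma_2,N})\detz(2\DN_{\Omega_2}^{\Sigma_2,N})$. Substituting gives the RHS as stated.

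The main obstacle is book-keeping rather than any real analytic difficulty: one must verify that the conditional law $(\mathcal{M}_{|\Omega_2\Omega_1|^2,2}^{1\sqcup 3}[\smx{\varphi_1\\ \varphi_3}])_*\mu_{2\DN}^{\Sigma_2,\Omega_2\Omega_1,D}$ appearing on the LHS matches the one produced by proposition \ref{prop-bayes-free-end} --- this is a matter of unpacking definition \ref{def-DN-boundary} and noting that in $\Omega_2\Omega_1$, the role of the ``Dirichlet-forced boundary'' is played precisely by $\Sigma_1\sqcup\Sigma_3$. The symmetry identification $\DN_{|\Omega_i|^2}^{\Sigma_2}=2\DN_{\Omega_i}^{\Sigma_2,N}$ requires the totally-geodesic-boundary / symmetric-collar setup from remark \ref{rem-smooth-gluing} (so that the reflection is a genuine isometry that preserves $(\Delta+m^2)$), but once that is in place the argument is routine.
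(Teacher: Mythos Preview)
Your proof is correct and follows exactly the approach the paper itself indicates (``this case the result could be seen as a special case of the previous one, where $\Omega=\Omega_1\sqcup\Omega_2$ has two disjoint components''). You in fact go further than the paper by spelling out the determinant identification $\DN_{|\Omega_i|^2}^{\Sigma_2}=2\,\DN_{\Omega_i}^{\Sigma_2,N}$ via the reflection-symmetry argument, which the paper leaves implicit.
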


\begin{corr}
  [composition for~$P(\phi)$ field]  In the situation as above, we have
  \begin{equation}
    \int_{}^{} \mathcal{A}_{\Omega_1}^P(\psi_1,\varphi)\mathcal{A}_{\Omega_2}^P(\varphi,\psi_3)\dd\mu_{2\mn{D}}^{\Sigma_2}(\varphi) \detz(2\mn{D}_{\Sigma_2})^{-\frac{1}{2}}=\mathcal{A}_{\Omega_2\cup_2\Omega_1}^P(\psi_1,\psi_3),
    \label{}
  \end{equation}
  for almost every $(\psi_1,\psi_3)\sim \mu_{2\mn{D}}^{\Sigma_1\sqcup\Sigma_3}$. \hfill~$\Box$
\end{corr}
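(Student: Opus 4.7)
The plan is to parallel the proof of the free-end gluing result for the $P(\phi)$ field, viewing the present situation as a special case where $\Omega := \Omega_1 \sqcup \Omega_2 \in \Mor(\Sigma_1 \sqcup \Sigma_3, \Sigma_2 \sqcup \Sigma_2)$ and gluing is performed along the isometry identifying the two copies of $\Sigma_2$. With this reformulation, the free-field composition is immediate from corollary \ref{prop-pre-glu} combined with the BFK formula, exactly as in corollary \ref{cor-real-glue-free}: write each $\mathcal{A}_{\Omega_i}^0$ as the square root of a Radon–Nikodym density times a determinant prefactor, rewrite the product using proposition \ref{prop-bayes-free-end} (which pulls out the factor $\dd((\mathcal{M}_{|\Omega_2\Omega_1|^2,2}^{1\sqcup 3}[\psi_1,\psi_3])_*\mu_{\DN}^{\Sigma_2,\Omega_2\Omega_1,D})/\dd\mu_{2\mn{D}}^{\Sigma_2}$, which integrates to $1$), and check that the remaining determinants collapse via BFK to give $\mathcal{A}^0_{\Omega_2\cup_2\Omega_1}(\psi_1,\psi_3)$.

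For the interacting part, I would use lemma \ref{lemm-amplitude-express} to expand
\[
\mathcal{A}_{\Omega_i}^P(\cdot) = \mathcal{A}_{\Omega_i}^0(\cdot)\int \me^{-S_{\Omega_i}(\phi^D_{\Omega_i}|\cdot)}\,\dd\mu_{\mm{GFF}}^{\Omega_i,D}(\phi^D_{\Omega_i}),
\]
and carry the free-field gluing through, leaving the two interaction integrals attached. After applying proposition \ref{prop-bayes-free-end} the Gaussian integration in $\varphi$ becomes integration against $(\mathcal{M}_{|\Omega_2\Omega_1|^2,2}^{1\sqcup 3}[\psi_1,\psi_3])_*\mu_{\DN}^{\Sigma_2,\Omega_2\Omega_1,D}$; combined with the independent $\phi^D_{\Omega_1}$, $\phi^D_{\Omega_2}$ integrations coming from the amplitudes, one should recognise the joint integrand as giving precisely the random field produced by decomposing a sample from $\mu_{\mm{GFF}}^{\Omega_2\cup_2\Omega_1,D}$ (with boundary data $\psi_1,\psi_3$ imposed via $\PI$) into its Dirichlet bulk components on $\Omega_1$, $\Omega_2$ and its trace on the interior hypersurface $\Sigma_2$. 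This identification is the analogue of lemma \ref{lemm-equal-law-2-PI}: both the assembled random field and $\PI_{\Omega_2\cup_2\Omega_1}^{1\sqcup 3}[\psi_1,\psi_3] + \phi^D_{\Omega_2\cup_2\Omega_1}$ solve the same stochastic boundary value problem, hence coincide in law.

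Once the laws are identified, the locality of the interaction (proposition \ref{prop-decoupling} / corollary \ref{cor-interact-locality}) gives the crucial additivity
\[
S_{\Omega_2\cup_2\Omega_1}(\phi \mid \psi_1,\psi_3) = S_{\Omega_1}(\phi|_{\Omega_1} \mid \psi_1, \tau_{\Sigma_2}\phi) + S_{\Omega_2}(\phi|_{\Omega_2} \mid \tau_{\Sigma_2}\phi, \psi_3)
\]
in $L^1$ and pointwise almost surely, so that the product of the two interior exponentials equals the exponential of the composed interaction. Pushing this through Fubini (justified by the $L^1$ integrability of $\me^{-S}$ from Nelson's theorem, applied now to the domain $\Omega_2\cup_2\Omega_1$ embedded in its own isometric double) yields the interacting amplitude $\mathcal{A}^P_{\Omega_2\cup_2\Omega_1}(\psi_1,\psi_3)$.

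The main obstacle I anticipate is not the formal manipulation but the measure-theoretic bookkeeping: one must justify the identification of conditional laws on the nested system of interior and boundary hypersurfaces (ensuring the relevant equalities hold on a set of full $\mu_{2\mn{D}}^{\Sigma_1\sqcup\Sigma_3}$-measure in $(\psi_1,\psi_3)$), and verify that the locality equation for $S$ holds jointly under the composed measure, not just under $\mu_{\mm{GFF}}^{\Omega_2\cup_2\Omega_1,D} \otimes \mu_{\mm{GFF}}^{M}$. This requires invoking remark \ref{rem-interp-inter-over-omega} to interpret each $S_{\Omega_i}(\cdot|\cdot)$ as a random variable against a measure mutually absolutely continuous with $\mu_{\mm{GFF}}^{\Omega_i,D}\otimes\mu_{2\mn{D}}^{\partial\Omega_i}$, and then checking that the laws built from the Bayes-type disintegration of $\mu_{\mm{GFF}}^{|\Omega_2\Omega_1|^2}$ indeed fall within this mutually-absolutely-continuous class. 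Everything else is a parallel to corollaries \ref{cor-P(phi)-trace} and \ref{cor-real-glue-free}.
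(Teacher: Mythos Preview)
Your proposal is correct and follows exactly the paper's approach: the paper explicitly states that this corollary ``could be seen as a special case of the previous one, where $\Omega=\Omega_1\sqcup\Omega_2$ has two disjoint components'' and accordingly gives no separate proof. Your elaboration of how the free-end gluing argument specializes (via lemma \ref{lemm-amplitude-express}, corollary \ref{prop-pre-glu}, the analogue of lemma \ref{lemm-equal-law-2-PI}, and locality) is precisely the intended reasoning.
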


 \newpage
 
\section{Periodic Cover = Spin Chain} \label{sec-period-cover}

\subsection{Geometric setting}
\label{s:abeliancovers}

\noindent Our set-up corresponds to example 1 in Bergeron \cite{Bergeron}. Let~$M$ be a closed oriented Riemannian surface\footnote{In particular its~$\mb{Z}$-homology groups are non-torsion.} of genus~$g\ge 1$ and~$\Sigma\subset M$ an embedded \textit{primitive} closed geodesic whose $\mb{Z}$-homology class is non-trivial (exists by a classical theorem of E. Cartan). Necessarily,~$\Sigma$ is nondissecting.\footnote{If~$\Sigma$ dissects~$M$ into~$M_+^{\circ}\sqcup M_-^{\circ}$, then~$\Sigma=\partial M_+$; now closed 1-forms integrate to zero over $\Sigma$ by Stokes theorem, thus $\Sigma$ is null-homologous via de Rham's theorem. Alternatively, note~$\partial:H_2(M_+,\Sigma;\mb{Z})\lto H_1(\Sigma;\mb{Z})$ in the exact sequence of the pair $(M_+,\Sigma)$ is surjective, producing the fundamental class.} We consider the covering space~$M_{\infty}^{\Sigma}\lto M$ over~$M$ given by a (normal) subgroup~$\ker\rho$ of the fundamental group~$\pi_1(M)$ where~$\rho$ is the map
\begin{equation}
  \begin{tikzcd}
    \rho:\pi_1(M) \ar[r,"\mm{Ab}"] &[+10pt] H_1(M;\mb{Z}) \ar[r, " I(-{,}{[}\Sigma{]})"] &[+10pt] \mb{Z},
  \end{tikzcd}
  \label{}
\end{equation}
where the first is Abelianization and~$I(-,[\Sigma])$ is the \textsf{oriented intersection number}\footnote{$I([\gamma],[\Sigma])=D(D^{-1}[\gamma]\smile D^{-1}[\Sigma])=\int_{\gamma}^{}\eta_{\Sigma}\in\mb{Z}$, where~$D^{-1}$ is the Poincar\'e dual map and~$\eta_{\Sigma}$ is a smooth bump 1-form supported in a tubular neighborhood of~$\Sigma$ such that~$\int_{\Sigma}^{}\alpha=\int_{M}^{}\alpha\wedge\eta_{\Sigma}$ for any 1-form~$\alpha$.} with~$\Sigma$, which is surjective (since~$\Sigma$ is primitive). In other words, we put~$M_{\infty}^{\Sigma}=\ker \rho\setminus \tilde{M}$ where~$\tilde{M}$ is the universal cover of~$M$ and~$\ker\rho$ acts on~$\tilde{M}$ as deck transformations. Equip~$M_{\infty}^{\Sigma}$ with the covering metric (thus deck transformations act by isometries).

Geometrically,~$M_{\infty}^{\Sigma}$ can be understood as first cutting~$M$ along~$\Sigma$ and obtaining the surface~$\Omega:=M\setminus\Sigma$ with boundaries~$\Sigma_{\mm{in}}\sqcup \Sigma_{\mm{out}}$ where~$\Sigma_{\mm{in}} \cong \Sigma_{\mm{out}}\cong \Sigma$, and gluing~$\Omega$ periodically where each~$\Sigma_{\mm{out}}$ is glued to the ``next''~$\Sigma_{\mm{in}}$. Indeed, the class~$[\gamma]\in \pi_1(M)$ of a loop~$\gamma$ is in~$\ker \rho$ iff~$I([\gamma],[\Sigma])=0$; in other words, these loops are exactly those which are ``not cut'', i.e. lifts to a loop on~$M_{\infty}^{\Sigma}$, and loops which do intersect~$\Sigma$ are lifted to segments whose end points are related by a deck transformation, i.e. they are ``cut''.

Now, for~$N\in\mb{N}$, compose~$\rho$ further with the mod~$N$ map~$\mb{Z}\lto \mb{Z}_N=:\mb{Z}/N\mb{Z}$ and denote it by~$\rho_N$, and let the covering space of~$M$ corresponding to~$\ker\rho_N$ be~$M_N^{\Sigma}$. Since~$\ker\rho\subset \ker\rho_N$,~$M_{\infty}^{\Sigma}$ also covers~$M_N^{\Sigma}$. Geometrically, this corresponds to closing the surface after gluing~$N$ copies of~$\Omega$ --- loops that intersect~$\Sigma$~$N$-times are now lifted to a ``big loop'' in~$M_N^{\Sigma}$.

\begin{figure}
    \centering
    \includegraphics[width=0.8\linewidth]{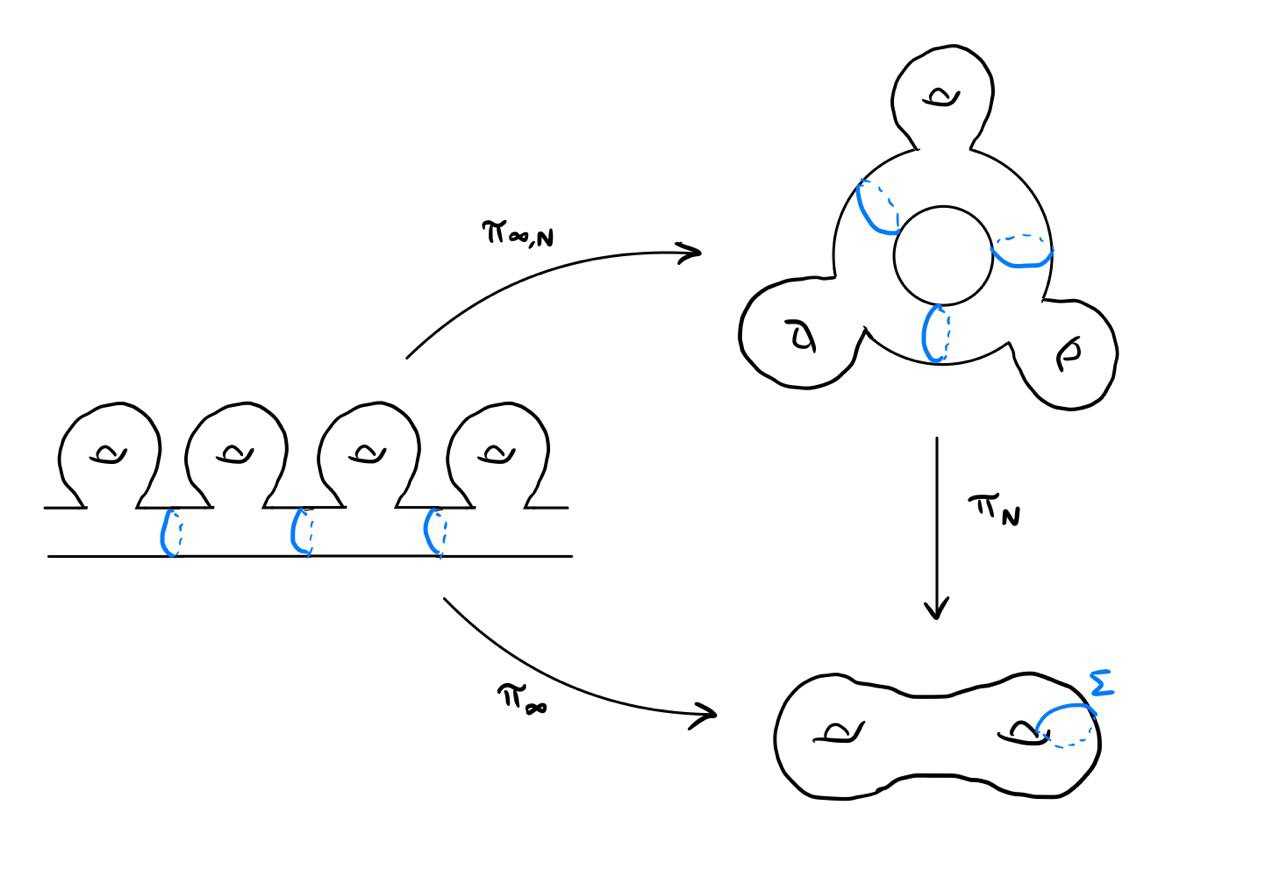}
    \caption{periodic covering and cyclic covering}
    \label{fig-period-cover}
\end{figure}

\begin{def7}
  We also say that the sequence of covers~$(M_N^{\Sigma})_N$ \textsf{converges} to~$M_{\infty}^{\Sigma}$.
\end{def7}

That~$\Omega:=M\setminus\Sigma$ should be understood for what follows.

\subsection{Continued Introduction of Spin Chain Example}\label{sec-spin-chain-cont}

\noindent Here we continue our discussion of the circular spin chain proposed in section \ref{sec-intro-spin-chain}, in particular the equation (\ref{eqn-spin-chain-part-func-trace}).  More generally, one can insert ``nice'' functionals~$F_1$, \dots,~$F_k$ in between at the sites~$1\le i_1<\cdots <i_k\le N$, then
\begin{equation}
  \int_{\mb{R}^N}^{}F_k(\sigma(i_k))\cdots F_1(\sigma(i_1))\me^{-S(\sigma)} \dd^N\sigma =\ttr_{L^2(\mb{R})}\left( T^{N+1-i_k}F_k T^{i_k-i_{k-1}}\cdots F_1 T^{i_1-1} \right),
  \label{}
\end{equation}
where~$F_j$ are thought of as multiplication operators. The evaluation
\begin{equation}
  \left.
  \def\arraystretch{2.5}
  \begin{array}{rcl}
     F_k\otimes\cdots\otimes F_1&\longmapsto& \ddp \frac{1}{\mathcal{Z}(N)}\int_{\mb{R}^N}^{}F_k(\sigma(i_k))\cdots F_1(\sigma(i_1))\me^{-S(\sigma)} \dd^N\sigma \\
     &= & \ddp \frac{\ttr_{L^2(\mb{R})}\left( T^{N+1-i_k}F_k T^{i_k-i_{k-1}}\cdots F_1 T^{i_1-1} \right)}{\ttr_{L^2(\mathbb{R})}( T^{N} )}
  \end{array}
  \right.
  \label{eqn-spin-chain-gibbs-state}
\end{equation}
is said to define a \textsf{Gibbs state} of our spin chain on~$\mb{Z}_N$. Alternatively this is the expected value functional under the discrete Gibbs measure (\ref{eqn-spin-chain-gibbs-part-func}).

\begin{deef}\label{def-thermo-limit}
  We say that a Gibbs state exists in the \textsf{thermodynamic limit} if the second expression in (\ref{eqn-spin-chain-gibbs-state}) tends to a limit as~$N\to \infty$ for bounded functionals~$F_1$, \dots,~$F_k\in L^{\infty}(\mb{R})$.
\end{deef}

We can see from (\ref{eqn-spin-chain-trans-kernel}) that the operator~$T$ is not just bounded, smoothing, but its kernel is strictly positive. Such an operator has a special property which we call the \textsf{Perron-Frobenius} property, referring to the consequence of proposition \ref{prop-perron-frob} below. In particular, this would ensure that the thermodynamic limit does exist, as we shall see in corollaries \ref{cor-segal-trans-exp-decay}, \ref{cor-log-trace-asymp}, and \ref{cor-gibbs-state-limit}.

 As announced in the introduction,
   Corollary \ref{cor-segal-trans-exp-decay} could be equivalently understood as saying that for compactly supported observables $(F,G)$, and $(\tau\sigma)(i):=\sigma(i+1) $ the shift operator, then
\begin{eqnarray*}
  \mathbb{E}[F(\tau^k \sigma) G(\sigma)]= \mathbb{E}[F] \mathbb{E}[G] + \mathcal{O}(\alpha^k)  
\end{eqnarray*}
with $\alpha<1$ the same as in corollary \ref{cor-segal-trans-exp-decay}. Here the expected value should be thought of as coming from a Gibbs measure on the infinite path space~$\mb{R}^{\mb{Z}}$ over~$\mb{Z}$. Indeed, by definition \ref{def-thermo-limit}, this is exactly the vague limit of the finite dimensional (periodic) Gibbs measures over~$\mb{Z}_N$. We say that with respect to this Gibbs measure the shift operator  is \textsf{exponentially mixing} which is exactly saying the $P(\phi)_2$ Gibbs state constructed has a \textbf{mass gap}.

With transfer operator being Perron-Frobenius, the partition functions as in (\ref{eqn-spin-chain-gibbs-part-func}) also enjoy explicit asymptotics. We will explore a consequence in the case of periodic surfaces in the last section.

\begin{exxx}
  In the case $P(\sigma)=m^2\sigma^2$, the spin chain is the discrete massive GFF. We have an exact formula for the partition function $\mathcal{Z}(N)=\prod_{k=0}^{N-1} (1+m^2 -\cos(2\pi\frac{k}{N}  ))^{-\frac{1}{2}}$
hence
\begin{eqnarray*}
  \lim_{N\rightarrow+\infty}    \frac{1}{N} \log \left( \mathcal{Z}(N)\right) = -\frac{1}{2} \int_0^1 \log(1+m^2-\cos(2\pi x))\dd x.
\end{eqnarray*}
\end{exxx}

\subsection{Perron-Frobenius Property and Gibbs State}\label{sec-perron-frob-gibbs}

  A large part of this section follow from the ``properties of a Perron-Frobenius operator''. But we include them here as they form an integral part of the discussion of physical phenomena.

We remind the reader of remark \ref{rem-sym-time-trans}.

\begin{deef}
  An operator~$A$ on~$L^2(Q,\mu)$ of some measure space~$(Q,\mu)$ has \textsf{strictly positive kernel} if for any nonnegative~$F\in L^2(Q,\mu)$ such that~$\nrm{F}_{L^2}\ne 0$ we have~$AF>0$ almost surely.
\end{deef}

\begin{prop}
  [Perron-Frobenius, \cite{GJ} page 51] \label{prop-perron-frob} If~$A$ on~$L^2(Q,\mu)$ has strictly positive kernel, and~$\lambda=\nrm{A}$ is an eigenvalue of~$A$, then~$\lambda$ is simple, and the corresponding eigenvector can be chosen to be strictly positive almost surely. \hfill~$\Box$
\end{prop}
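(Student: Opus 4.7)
The plan is to proceed in three steps: produce a nonnegative eigenvector of $\lambda$, upgrade it to a strictly positive one via the hypothesis, and then establish simplicity by a ``constant sign'' lemma.

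First, given an eigenvector $f$ with $Af = \lambda f$, I would show that $|f|$ is again a $\lambda$-eigenvector. Since ``strictly positive kernel'' implies in particular positivity preservation (a nonzero nonnegative input maps to a strictly positive, hence nonnegative, output, and $A0 = 0$), the splittings $|f| \pm f = 2 f_{\pm} \ge 0$ give $A|f| \ge \pm Af$, so $A|f| \ge |Af| = \lambda |f|$ a.s. Setting $u := A|f| - \lambda|f| \ge 0$, the chain $\lambda \nrm{|f|}_{L^2} = \nrm{Af}_{L^2} \le \nrm{A|f|}_{L^2} \le \nrm{A}\nrm{|f|}_{L^2} = \lambda\nrm{|f|}_{L^2}$ is forced to equality throughout; then the expansion $\nrm{A|f|}_{L^2}^2 - \lambda^2 \nrm{|f|}_{L^2}^2 = \nrm{u}_{L^2}^2 + 2\lambda \ank{u,|f|}_{L^2} = 0$, with all terms nonnegative (note $\lambda > 0$ since a positivity-improving operator cannot be zero), yields $u=0$ a.s., that is, $A|f| = \lambda |f|$. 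Applying the strict positivity hypothesis to $|f|\ne 0$ then gives $\lambda |f| = A|f| > 0$ a.s., whence $|f| > 0$ almost surely. This already produces a strictly positive eigenvector.

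The simplicity step goes as follows. Let $g$ be another real $\lambda$-eigenvector. The key lemma is that every real $\lambda$-eigenvector $h$ has essentially constant sign: applying the first step to $h$ gives $A|h| = |Ah|$ a.s., so decomposing $h = h_+ - h_-$ with $h_\pm \ge 0$ disjointly supported, one obtains $|Ah_+ - Ah_-| = Ah_+ + Ah_-$ a.s., equivalent to $\min(Ah_+, Ah_-) = 0$ a.s. If $h_+ \not\equiv 0$, then $Ah_+ > 0$ a.s.\ by strict positivity, forcing $Ah_- = 0$ a.s., and then $h_- = 0$ by strict positivity again; symmetrically if $h_- \not\equiv 0$. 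Since $|f| > 0$ a.s., the ratio $g/f$ is well-defined almost surely; if $g$ and $f$ were linearly independent, this ratio would not be essentially constant, so one could find $t_1 < t_2$ with both $\{g/f < t_1\}$ and $\{g/f > t_2\}$ of positive measure. For any intermediate $t$, the function $h_t := g - t f = f(g/f - t)$ would then be a $\lambda$-eigenvector taking both signs on sets of positive measure, contradicting the constant sign lemma. Hence $g \in \mb{R} f$.

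The main obstacle I anticipate is the passage from $\min(Ah_+,Ah_-) = 0$ a.s.\ to the vanishing of one of $h_{\pm}$: this is where the full \emph{positivity-improving} (not merely positivity-preserving) strength of the hypothesis enters, and it is the one genuinely nontrivial step. A minor technicality is the a.s.\ definition of the ratio $g/f$, but that is clean thanks to $|f| > 0$ a.s. I note that the proof does not use self-adjointness of $A$; strict positivity plus the attainment of $\lambda = \nrm{A}$ as an eigenvalue suffice, which is convenient given that several of the transfer operators $U_\Omega$ of the paper are self-adjoint only under reflection symmetry. Complex eigenvectors require no extra work, since they can be split into real and imaginary parts, each a real $\lambda$-eigenvector to which the above applies.
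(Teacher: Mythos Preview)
The paper does not supply its own proof of this proposition; it simply cites \cite{GJ} page 51 and closes with a~$\Box$. Your argument is correct and is essentially the standard proof one finds in that reference (or in Reed--Simon~IV): pass to $|f|$ via positivity preservation and a norm-saturation argument, upgrade to strict positivity using the positivity-improving hypothesis, and deduce simplicity from a constant-sign lemma combined with the ratio trick. There is nothing to compare against in the paper itself, and your proof would serve as a perfectly good fill-in.
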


By our definition of the amplitudes~$\mathcal{A}_{\Omega}$ as the square-root of the Radon-Nikodym density between two mutually absolutely continuous positive finite measures ($\me^{-S_{\Omega}}$ is almost surely positive since~$S_{\Omega}$ is a real-valued random variable, recall also remark \ref{rem-zeta-det-positive} that the zeta-determinants are positive), we get immediately

\begin{lemm}
  For any traceable cobordism~$\Omega\in \Mor(\Sigma,\Sigma)$, the Segal transfer operator~$U_{\Omega}:L^2(\mathcal{D}'(\Sigma),\mu_{2\mn{D}})\lto L^2(\mathcal{D}'(\Sigma),\mu_{2\mn{D}})$ has strictly positive kernel. \hfill~$\Box$
\end{lemm}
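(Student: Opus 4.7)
The plan is to unpack the definition of $U_{\Omega}$ as an integral operator against $\mathcal{A}_{\Omega}$ and then show that $\mathcal{A}_{\Omega}$ is almost surely strictly positive with respect to the product background measure on $\mathcal{D}'(\Sigma)\times \mathcal{D}'(\Sigma)$. Recall that by lemma \ref{lemm-segal-transfer} combined with (\ref{eqn-def-segal-transfer}) and (\ref{eqn-def-flat-trace}), the operator $U_{\Omega}$ acts on $F\in \mathcal{H}_{\Sigma}$ via
\begin{equation*}
  (U_{\Omega}F)(\varphi_{\mm{out}})=\int \mathcal{A}_{\Omega}(\varphi_{\mm{in}},\varphi_{\mm{out}})\,F(\varphi_{\mm{in}})\,\dd\wh{\mu}_{2\mn{D}}(\varphi_{\mm{in}}).
\end{equation*}
So once $\mathcal{A}_{\Omega}>0$ almost surely, for any $F\ge 0$ with $\snrm{F}_{L^2}\ne 0$, the integrand is almost surely nonnegative and strictly positive on a set of positive measure, giving $U_{\Omega}F>0$ almost surely.

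The positivity of $\mathcal{A}_{\Omega}$ follows from examining its expression (\ref{eqn-def-amp-omega}). The constant prefactor $\detz(\Delta_{|\Omega|^2}+m^2)^{-1/2}/\detz(2\mn{D}_{\Sigma})^{-1/2}$ is strictly positive by remark \ref{rem-zeta-det-positive}. It therefore suffices to check that the Radon--Nikodym density
\begin{equation*}
  \frac{\dd\big[\tau_{\Sigma}\big(\me^{-S_{|\Omega|^2}}\cdot\mu_{\mm{GFF}}^{|\Omega|^2}\big)\big]}{\dd\mu_{2\mn{D}}^{\Sigma}}(\varphi)
\end{equation*}
is strictly positive for $\mu_{2\mn{D}}^{\Sigma}$-almost every $\varphi$. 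I would argue this in two steps. First, since $S_{|\Omega|^2}$ is a real-valued random variable in $L^p(\mu_{\mm{GFF}}^{|\Omega|^2})$ (theorem \ref{thrm-nelson} and corollary \ref{cor-nelson-all-Lp-cutoff-limit}), the weight $\me^{-S_{|\Omega|^2}}$ is almost surely strictly positive and lies in $L^1(\mu_{\mm{GFF}}^{|\Omega|^2})$, so $\me^{-S_{|\Omega|^2}}\mu_{\mm{GFF}}^{|\Omega|^2}$ and $\mu_{\mm{GFF}}^{|\Omega|^2}$ are mutually absolutely continuous positive finite measures on $\mathcal{D}'(|\Omega|^2)$. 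Second, the pushforward $\tau_{\Sigma}(\mu_{\mm{GFF}}^{|\Omega|^2})=\mu_{2\DN}^{\Sigma,\Omega}$ (proposition \ref{prop-induce-law-DN}) is mutually absolutely continuous with $\mu_{2\mn{D}}^{\Sigma}$ by corollary \ref{corr-rad-niko-dense}, with density given by an exponential of a real Wiener quadratic form --- hence strictly positive almost surely.

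Combining the two steps, the density displayed above equals the product of a strictly positive factor (the density from corollary \ref{corr-rad-niko-dense}) times the conditional expectation of $\me^{-S_{|\Omega|^2}}$ given $\tau_{\Sigma}\phi=\varphi$, by the Bayes principle of proposition \ref{prop-bayes-gff} (or directly by the Markov decomposition $\mu_{\mm{GFF}}^{|\Omega|^2}=\mu_{\mm{GFF}}^{|\Omega|^2\setminus \Sigma,D}\otimes \mu_{2\DN}^{\Sigma,\Omega}$ composed with $\tau_{\Sigma}$). The conditional expectation of an almost surely strictly positive integrable random variable is itself strictly positive almost surely, so the product is strictly positive for $\mu_{2\mn{D}}^{\Sigma}$-almost every $\varphi$, and taking the square root yields $\mathcal{A}_{\Omega}>0$ almost surely, as required.

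The only genuinely substantive point --- what I would call the main obstacle --- is the passage through the trace map: one must ensure that strict positivity of $\me^{-S_{|\Omega|^2}}$ on $\mathcal{D}'(|\Omega|^2)$ survives conditioning on $\tau_{\Sigma}\phi=\varphi$, so that the conditional expectation does not accidentally vanish on a $\mu_{2\mn{D}}^{\Sigma}$-positive set. This is however automatic, since conditioning a strictly positive integrable random variable on any sub-$\sigma$-algebra produces a strictly positive conditional expectation. All other ingredients (positivity of zeta-determinants, mutual absolute continuity, realness of $S_{|\Omega|^2}$) have been established earlier in the paper, so the proof reduces to assembling these facts.
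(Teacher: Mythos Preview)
Your proof is correct and follows the same route as the paper: the paper observes right after (\ref{eqn-def-amp-omega}) that $\mathcal{A}_{\Omega}>0$ almost surely because it is the square root of the Radon--Nikodym density between two mutually absolutely continuous positive finite measures (using that $\me^{-S_{|\Omega|^2}}>0$ a.s.\ as $S_{|\Omega|^2}$ is real, and that the zeta-determinants are positive by remark \ref{rem-zeta-det-positive}), and the lemma is then immediate. Your conditional-expectation argument simply unpacks the mutual absolute continuity step in more detail, but the approach is identical.
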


We deduce that~$U_{\Omega}$ has a simple top eigenvalue~$\lambda_0=\nrm{U_{\Omega}}$ spanned by a normalized, almost surely strictly positive eigenvector~$\Omega_0\in L^2(\mathcal{D}'(\Sigma),\mu_{2\mn{D}})$. Alternatively speaking $U_{\Omega}$ has a \textsf{spectral gap}. We get
\begin{corr}\label{cor-segal-trans-exp-decay}
  Denote~$\wh{U}_{\Omega}:=\lambda_0^{-1}U_{\Omega}$, let~$\lambda_1$ be the eigenvalue of~$U_{\Omega}$ with next largest modulus, thus~$\lambda_0>|\lambda_1|$, and put~$\alpha:=|\lambda_1|/\lambda_0 <1$. Then for any~$F$,~$G\in L^2(\mathcal{D}'(\Sigma),\mu_{2\mn{D}})$, we have
  \begin{equation}
    \big|\bank{F,\wh{U}_{\Omega}^N G}-\bank{F,\Omega_0}\bank{\Omega_0,G}\big| \le \alpha^N \nrm{F}\nrm{G}.
    \label{}
  \end{equation}
\end{corr}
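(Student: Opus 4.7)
The argument is a standard spectral-gap estimate. The key ingredients are already in place: by lemma~\ref{lemm-segal-transfer} the operator $U_\Omega$ is Hilbert-Schmidt, hence compact with spectrum consisting of eigenvalues of finite multiplicity accumulating only at $0$; and by the preceding lemma combined with proposition~\ref{prop-perron-frob}, the top eigenvalue $\lambda_0 = \nrm{U_\Omega}$ is simple, spanned by an almost-surely strictly positive unit vector $\Omega_0$. Compactness and simplicity ensure the existence of a ``next'' eigenvalue $\lambda_1$ with $|\lambda_1|<\lambda_0$, so $\alpha<1$.

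Thus the plan is to split $\wh{U}_\Omega = P + R$, where $P = \bank{\Omega_0,\cdot}\Omega_0$ is the orthogonal projection onto $\mathbb{R}\Omega_0$ and $R = \wh{U}_\Omega(I-P)$. Using self-adjointness of $U_\Omega$ (discussed below) together with Perron--Frobenius simplicity, $P$ and $R$ commute, $PR = RP = 0$, and the restriction of $\wh{U}_\Omega$ to $\Omega_0^{\perp}$ has operator norm $\alpha$. Iterating gives $\wh{U}_\Omega^N = P + R^N$ with $\nrm{R^N}_{L^2\to L^2} \le \alpha^N$. Evaluating,
\[
\bank{F,\wh{U}_\Omega^N G} - \bank{F,\Omega_0}\bank{\Omega_0,G} = \bank{F, R^N G},
\]
and Cauchy--Schwarz closes the estimate.

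The only non-routine point is self-adjointness of $U_\Omega$. By (iii) of lemma~\ref{lemm-segal-transfer}, $U_\Omega^{\dagger} = U_{\Omega^*}$, so self-adjointness holds whenever $\Omega$ admits an isometric involution exchanging its two boundary copies of $\Sigma$. This is precisely the setting of the periodic-cover application in section~\ref{s:abeliancovers}, where the fundamental domain of the $\mathbb{Z}$-action naturally inherits such a symmetry. In the general case, this is the main obstacle: one must replace $P$ by the Riesz spectral projection attached to $\lambda_0$, invoke the Krein--Rutman variant of Perron--Frobenius for both $U_\Omega$ and $U_\Omega^{\dagger}$ to secure a positive left eigenvector $\Omega_0^{\dagger}$, and restate the right-hand side of the inequality as $\bank{F,\Omega_0}\bank{\Omega_0^{\dagger},G}$. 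Under self-adjointness these two versions coincide and the clean form stated in the corollary results.
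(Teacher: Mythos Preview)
Your argument is correct and follows the same route as the paper: project onto $\spn\{\Omega_0\}$ and its orthogonal complement, then use that $\wh{U}_\Omega$ restricted to $\Omega_0^{\perp}$ has norm $\alpha$. The paper compresses this into a single line, writing $\sank{F,\wh{U}_{\Omega}^N G}-\ank{F,\Omega_0}\ank{\Omega_0,G}=\sank{\Pi_0^{\perp}F,\wh{U}_{\Omega}^N \Pi_0^{\perp}G}$ and invoking the norm bound on the complement.

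Your discussion of self-adjointness is a genuine addition over the paper's proof. The identity $\sank{\Pi_0 F,\wh{U}_\Omega^N \Pi_0^{\perp}G}=0$ and the equality $\snrm{\wh{U}_\Omega|_{\Omega_0^{\perp}}}=\alpha$ (as opposed to merely the spectral radius being $\alpha$) both silently require that $\Omega_0$ is also a left eigenvector and that the norm on the complement equals the spectral radius---both consequences of self-adjointness. The paper does not flag this in the proof of the corollary itself, though the reflection-symmetry hypothesis is imposed later in section~\ref{s:abeliancovers} where the corollary is actually applied. Your observation that without self-adjointness one would need Riesz projections and a left eigenvector $\Omega_0^{\dagger}$ is correct, and clarifies the scope of the statement.
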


\begin{proof}
  Note~$\sank{F,\wh{U}_{\Omega}^N G}-\ank{F,\Omega_0}\ank{\Omega_0,G}=\sank{\Pi_0^{\perp}F,\wh{U}_{\Omega}^N \Pi_0^{\perp}G}$ where~$\Pi_0^{\perp}$ is the orthogonal projection onto the complement of~$\spn\{\Omega_0\}$, where~$\wh{U}_{\Omega}$ has norm~$\alpha<1$.
\end{proof}

The above corollary is exactly stating that the Gibbs state we defined has a mass gap, in perfect analogy with $1$D spin chains with local interactions which are well known to have no phase transitions.

\begin{corr}\label{cor-log-trace-asymp}
  We have
  \begin{equation}
    \lim_{N\to\infty}\frac{1}{N}\log\ttr (U_{\Omega}^N) =\log\lambda_0.
    \label{}
  \end{equation}
\end{corr}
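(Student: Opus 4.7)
The plan is to compute $\ttr(U_{\Omega}^N)$ via Lidskii's theorem and then factor out the top eigenvalue, using the spectral gap afforded by the Perron--Frobenius property of $U_{\Omega}$. First, I would observe that by lemma \ref{lemm-segal-transfer} the operator $U_{\Omega}$ is Hilbert--Schmidt, so that for every $N\ge 2$ the power $U_{\Omega}^N$ is trace class (as a product of at least two Hilbert--Schmidt operators). In particular Lidskii's theorem (\cite{Sim1} theorem 3.7) applies and gives
\begin{equation*}
  \ttr(U_{\Omega}^N)=\sum_{j\ge 0}\lambda_j^N,
\end{equation*}
where $(\lambda_j)_{j\ge 0}$ are the eigenvalues of $U_{\Omega}$ repeated according to algebraic multiplicity and ordered so that $|\lambda_0|\ge |\lambda_1|\ge\cdots \to 0$.

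The next step is to use Perron--Frobenius (proposition \ref{prop-perron-frob}), which tells us that $\lambda_0=\nrm{U_{\Omega}}>0$ is a simple eigenvalue with strictly positive eigenvector $\Omega_0$, and thus the strict inequality $\lambda_0>|\lambda_1|$ holds. Setting $\alpha:=|\lambda_1|/\lambda_0<1$ (exactly as in corollary \ref{cor-segal-trans-exp-decay}), I would then write
\begin{equation*}
  \frac{\ttr(U_{\Omega}^N)}{\lambda_0^N}=1+\sum_{j\ge 1}\left(\frac{\lambda_j}{\lambda_0}\right)^{\!N}.
\end{equation*}
The remainder is controlled by Weyl's inequality: since $U_{\Omega}$ is Hilbert--Schmidt, $\sum_{j\ge 0}|\lambda_j|^2\le \nrm{U_{\Omega}}_{\mm{HS}}^2<\infty$, and therefore for $N\ge 2$,
\begin{equation*}
  \left|\sum_{j\ge 1}\lambda_j^N\right|\le |\lambda_1|^{N-2}\sum_{j\ge 1}|\lambda_j|^2 \le |\lambda_1|^{N-2}\nrm{U_{\Omega}}_{\mm{HS}}^2.
\end{equation*}
Hence $\ttr(U_{\Omega}^N)/\lambda_0^N=1+\mathcal{O}(\alpha^{N-2})$, and in particular this quantity is strictly positive for $N$ large enough (ensuring $\log$ is well defined). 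Taking logarithms and dividing by $N$ gives
\begin{equation*}
  \frac{1}{N}\log\ttr(U_{\Omega}^N)=\log\lambda_0+\frac{1}{N}\log\big(1+\mathcal{O}(\alpha^{N-2})\big)\lto \log\lambda_0
\end{equation*}
as $N\to\infty$, which is the desired conclusion.

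The main subtlety --- which I view as the one point deserving care rather than a real obstacle --- is the interpretation of $\ttr(U_\Omega^N)$. In the body of the paper (see remark \ref{rem-trace-and-diagonal}) the trace appearing in lemma \ref{lemm-segal-transfer} is the ``flat trace'' obtained by integrating the Schwartz kernel along the diagonal, and this is what equals $Z_{M_N}$ by the gluing axiom. For $N\ge 2$, however, $U_{\Omega}^N$ is genuinely trace class, and the Schwartz kernel of $U_{\Omega}^N$ (which can be read off from the amplitude $\mathcal{A}_{\Omega_N}^P$ via the composition rule proved in subsection \ref{sec-gluing-final}) is continuous enough along the diagonal to justify the equality of flat and operator traces. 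Granting this identification, the above spectral argument applies verbatim and yields the stated asymptotic.
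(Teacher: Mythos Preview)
Your proof is correct and follows essentially the same spectral-gap strategy as the paper: both arguments use that $U_\Omega^N$ is trace class for $N\ge 2$ and that the Perron--Frobenius property yields $\lambda_0>|\lambda_1|$, then compare $\ttr(U_\Omega^N)$ with $\lambda_0^N$ up to an $\mathcal{O}(\alpha^{N-2})$ error. The only technical difference is that you invoke Lidskii's theorem and spectral mapping to write $\ttr(U_\Omega^N)=\sum_j\lambda_j^N$ directly, whereas the paper obtains matching upper and lower bounds via the trace-norm estimate $\ttr(U_\Omega^N)\le\snrm{U_\Omega^2}_{\mm{tr}}\lambda_0^{N-2}$ and the splitting $\mathcal{H}_\Sigma=\spn\{\Omega_0\}\oplus\spn\{\Omega_0\}^\perp$; both routes are short and yield the same conclusion.
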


\begin{proof}
  Without loss of generality let~$N\ge 2$ so each~$U_{\Omega}^N$ is trace class. On one hand
  \begin{equation}
    \ttr(U_{\Omega}^N)\le \bnrm{U_{\Omega}^2}_{\mm{tr}}\bnrm{U_{\Omega}^{N-2}}\le \bnrm{U_{\Omega}^2}_{\mm{tr}}\lambda_0^{N-2}.
    \label{eqn-log-trace-upper-bound}
  \end{equation}
  On the other, we decompose~$\mathcal{H}_{\Sigma}=\spn\{\Omega_0\}\oplus \spn\{\Omega_0\}^{\perp}$ where
  \begin{align*}
    \ttr(U_{\Omega}^N)&=\lambda_0^N +\ttr(U_{\Omega}^N|_{\spn\{\Omega_0\}^{\perp}}) \\
    &\ge \lambda_0^N -\bnrm{U_{\Omega}^2|_{\spn\{\Omega_0\}^{\perp}}}_{\mm{tr}}|\lambda_1|^{N-2}\\
    &=\lambda_0^N\left( 1-C_1 \alpha^{N-2}\lambda_0^{-2} \right).
  \end{align*}
  This and (\ref{eqn-log-trace-upper-bound}) gives the result after taking~$N\to\infty$.
\end{proof}

Corollary \ref{cor-segal-trans-exp-decay} and the proof of corollary \ref{cor-log-trace-asymp} implies

\begin{corr}\label{cor-gibbs-state-limit}
  For any bounded operator~$F\in \mathcal{L}(\mathcal{H}_{\Sigma})$ we have
  \begin{equation}
    \lim_{N\to \infty}\frac{\ttr(U_{\Omega}^{N-L}F)}{\ttr(U_{\Omega}^N)}=\frac{1}{\lambda_0^L}\ank{\Omega_0,F\Omega_0}=\frac{\ank{\Omega_0,F\Omega_0}}{\ank{\Omega_0,U_{\Omega}^L \Omega_0}}.
    \label{}
  \end{equation}
\end{corr}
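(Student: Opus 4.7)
The plan is to exploit the spectral gap of $U_\Omega$ in exactly the same manner as the proof of corollary \ref{cor-log-trace-asymp}, but now tested against $F$ rather than the identity. Concretely, let $\Pi_0$ denote the rank-one projection onto $\spn\{\Omega_0\}$ (where $\Omega_0$ is the normalized Perron-Frobenius eigenvector associated with $\lambda_0 = \|U_\Omega\|$) and $\Pi_0^{\perp} = \one - \Pi_0$. Since $U_\Omega \Omega_0 = \lambda_0 \Omega_0$ and $\spn\{\Omega_0\}^{\perp}$ is invariant under $U_\Omega$ (as in the proof of corollary \ref{cor-log-trace-asymp}), we may write
\begin{equation}
  U_{\Omega}^{N-L} = \lambda_0^{N-L}\, \Pi_0 + \tilde U^{N-L},\qquad \tilde U\defeq U_\Omega \Pi_0^{\perp},
  \label{}
\end{equation}
with $\|\tilde U\| = |\lambda_1|$. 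Apply this to the numerator and use the rank-one structure of $\Pi_0$ to obtain
\begin{equation}
  \ttr(U_{\Omega}^{N-L} F) = \lambda_0^{N-L}\ank{\Omega_0,F\Omega_0} + \ttr(\tilde U^{N-L} F).
  \label{}
\end{equation}

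Next I would estimate the remainder. Using $\tilde U^{N-L} = \tilde U^2 \cdot \tilde U^{N-L-2}$ together with the fact (already invoked in the previous corollary) that $U_\Omega^2$ is trace class, one gets
\begin{equation}
  \bigl|\ttr(\tilde U^{N-L} F)\bigr| \le \bnrm{\tilde U^2}_{\mm{tr}}\bnrm{\tilde U^{N-L-2}}\nrm{F} \le C_1\, |\lambda_1|^{N-L-2}\,\nrm{F},
  \label{}
\end{equation}
so this term is $\mathcal{O}((\alpha\lambda_0)^{N-L-2}\nrm{F})$. For the denominator, corollary \ref{cor-log-trace-asymp} (or, more precisely, the estimate $\ttr(U_\Omega^N) = \lambda_0^N(1+\mathcal{O}(\alpha^{N-2}))$ established in its proof) gives a sharp asymptotic. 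Dividing, one finds
\begin{equation}
  \frac{\ttr(U_{\Omega}^{N-L} F)}{\ttr(U_{\Omega}^N)} = \frac{1}{\lambda_0^L}\cdot \frac{\ank{\Omega_0,F\Omega_0} + \mathcal{O}(\alpha^{N-L-2}\nrm{F})}{1+\mathcal{O}(\alpha^{N-2})} \xrightarrow[N\to\infty]{} \frac{1}{\lambda_0^L}\ank{\Omega_0,F\Omega_0},
  \label{}
\end{equation}
as $\alpha<1$. The second equality in the statement is immediate from $U_\Omega^L \Omega_0 = \lambda_0^L \Omega_0$, which yields $\ank{\Omega_0,U_\Omega^L\Omega_0} = \lambda_0^L$.

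The only potential subtlety is that $F$ is merely bounded (not trace class), so one cannot a priori trace-norm-estimate $\tilde U^{N-L} F$ via $F$ alone; the resolution is precisely the factorization $\tilde U^{N-L} = \tilde U^2 \cdot \tilde U^{N-L-2}$ that moves the trace-class burden onto $U_\Omega^2$ and keeps $F$ paired only with an operator norm. Everything else is a direct reprise of the spectral-gap argument already spelled out in the preceding corollaries.
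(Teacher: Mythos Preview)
Your argument is correct and follows essentially the same spectral-gap decomposition as the paper: split $U_\Omega^{N-L}$ along $\spn\{\Omega_0\}\oplus\spn\{\Omega_0\}^{\perp}$, identify the leading term as $\lambda_0^{N-L}\ank{\Omega_0,F\Omega_0}$, and kill the remainder using that $U_\Omega^2$ is trace class together with the contraction $\|\tilde U\|=|\lambda_1|$. If anything, your treatment of the remainder (the explicit factorization $\tilde U^{N-L}=\tilde U^2\cdot\tilde U^{N-L-2}$ to place the trace-class burden on $\tilde U^2$ and pair $F$ only with an operator norm) is slightly more careful than the paper's one-line bound.
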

\begin{proof}
  Remember that~$N\gg L$. Thus by~$\mathcal{H}_{\Sigma}=\spn\{\Omega_0\}\oplus \spn\{\Omega_0\}^{\perp}$ and \cite{Sim1} theorem 2.14 we have
  \begin{equation}
    \lambda_0^{-N+L} \ttr\big(U_{\Omega}^{N-L}F\big)=\bank{\Omega_0,\wh{U}_{\Omega}^{N-L}F\Omega_0}+ \ttr\big(\wh{U}_{\Omega}^{N-L}F|_{\spn\{\Omega_0\}^{\perp}}\big)
  \end{equation}
  with
  \begin{equation}
    \big|\ttr\big(\wh{U}_{\Omega}^{N-L}F|_{\spn\{\Omega_0\}^{\perp}}\big)\big| \le \nrm{F} \bnrm{\wh{U}_{\Omega}^{N-L}|_{\spn\{\Omega_0\}^{\perp}}} \to 0.
    \label{}
  \end{equation}
  Since $\sank{\Omega_0,\wh{U}_{\Omega}^{N-L}F\Omega_0}\to \sank{\Omega_0,F\Omega_0}$ by corollary \ref{cor-segal-trans-exp-decay} and~$\ttr(U_{\Omega}^N)\asymp \lambda_0^N$ by corollary \ref{cor-log-trace-asymp}, we obtain the result.
\end{proof}

We arrive at the conclusion that, for functionals~$F_1$, \dots,~$F_k\in L^{\infty}(\mathcal{D}'(\Sigma),\mu_{2\mn{D}})$ and integers~$1\le i_1<\cdots<i_k$, the evaluation
\begin{equation}
  \left.
  \def\arraystretch{2.5}
  \begin{array}{rcl}
     F_k\otimes\cdots\otimes F_1 &\longmapsto& \ddp \lim_{N\to \infty}\frac{\ttr\big( U_{\Omega}^{N+1-i_k}F_k U_{\Omega}^{i_k-i_{k-1}}\cdots F_1 U_{\Omega}^{i_1-1} \big)}{\ttr\big( U_{\Omega}^{N} \big)} \\
     &= & \ddp \frac{\bank{\Omega_0,F_k U_{\Omega}^{i_k-i_{k-1}}\cdots F_1 U_{\Omega}^{i_1-1}\Omega_0}}{\bank{\Omega_0,U_{\Omega}^{i_k-1} \Omega_0}}
  \end{array}
  \right.
  \label{}
\end{equation}
defines a Gibbs state in the thermodynamic limit on a~$\mathcal{D}'(\Sigma)$-valued~$\mb{Z}$-spin chain.

\begin{def7}
  In fact, a valid functional~$F\in L^{\infty}(\mathcal{D}'(\Sigma),\mu_{2\mn{D}})$ could be given by
  \begin{equation}
    F(\psi)\defeq \int_{}^{}\mathcal{A}_{\Omega}^0(\psi,\varphi)\dd\mu_{2\mn{D}}^{\Sigma}(\varphi)\int_{}^{}\tilde{F}(\phi^{D}_{\Omega},\psi,\varphi)\me^{-S_{\Omega}(\phi^{D}_{\Omega}|\psi, \varphi)}\dd\mu_{\mm{GFF}}^{\Omega,D}(\phi^{D}_{\Omega}),
    \label{}
  \end{equation}
  for~$\tilde{F}\in L^{\infty}(\mathcal{D}'(\Omega^{\circ}),\mu_{\mm{GFF}}^{\Omega,D}\otimes \mu_{2\mn{D}}^{\Sigma\sqcup\Sigma})$, with~$\mu_{2\mn{D}}^{\Sigma\sqcup\Sigma}$ considered as the induced measure on~$\mathcal{D}'(\Omega^{\circ})$ via~$\PI_{\Omega}^{\Sigma\sqcup\Sigma}$. Hence the Gibbs state actually extends to the continuum~$M_{\infty}^{\Sigma}$.
\end{def7}

\subsection{Asymptotic of Partition Function}

\noindent For the technical reason of remark \ref{rem-trace-and-diagonal} we shall assume the surface~$\Omega\in\Mor(\Sigma,\Sigma)$ be \textsf{reflection symmetric}, which in simplest words means~$\Omega=\tilde{\Omega}^*\cup_{\Sigma'}\tilde{\Omega}$ for some~$\tilde{\Omega}\in\Mor(\Sigma,\Sigma')$, in the notations of lemma \ref{lemm-segal-transfer}. In other words there is an isometric involution~$\Theta:\Omega\lto \Omega$ whose fixed point set is exactly~$\Sigma'$, and exchanges the two components of~$\partial\Omega$. This is not much of a restriction. In this case~$U_{\Omega}=U_{\tilde{\Omega}^*}\circ U_{\tilde{\Omega}}$ and \cite{Sim3} theorem 3.8.5 applies, namely~$\ttr(U_{\Omega})=\ttr_{\rho}(U_{\Omega})$, as well as for~$U_{\Omega}^N$. From corollary \ref{cor-P(phi)-trace} and corollary \ref{cor-log-trace-asymp} we then deduce that
\begin{equation}
  \lim_{N\to\infty}\frac{1}{N}\log(Z_{M_N^{\Sigma}})=\log \lambda_0.
  \label{}
\end{equation}
Thus we have proved theorem \ref{thrm-intro-main-2}. This applies in particular to the free case where~$Z_{M_N^{\Sigma}}=\detz(\Delta_{M_N^{\Sigma}}+m^2)^{-\frac{1}{2}}$.

\begin{def7}
  There arises the interesting question of how~$\lambda_0$ (or~$\log \lambda_0$) would depend on the geometry of~$\Omega$. Very crudely one would expect~$\log\lambda_0 \propto \vol(\Omega)$, since in our case the corresponding~$\tilde{\lambda}_0$ for~$N\Omega$ ($N$ copies of~$\Omega$ glued) is just~$\lambda_0^N$. A more precise formula in the general, non-periodic case seems desirable.
\end{def7}

\appendix

\newpage
 \section{Some Recollections}

    \subsection{Wiener Chaos and Wick's Theorem}
  
  \begin{prop}
  [Wiener Chaos decomposition, \cite{Janson} theorem 2.6]
  Let~$(Q,\mathcal{O},\mb{P})$ be a probability space and~$\mathcal{H}\subset L^2(Q,\mathcal{O},\mb{P})$ a Gaussian Hilbert space. Then there is an orthogonal decomposition
  \begin{equation}
    L^2(Q,\mathcal{O}(\mathcal{H}),\mb{P})\cong \bigoplus_{n=0}^{\infty}\mathcal{H}^{{:}n{:}},
    \label{}
  \end{equation}
  where~$\mathcal{O}(\mathcal{H})$ is the~$\sigma$-algebra generated by variables in~$\mathcal{H}$,~$\mathcal{H}^{{:}n{:}}=\ol{\mathcal{P}}_n(\mathcal{H})\cap \ol{\mathcal{P}}_{n-1}(\mathcal{H})^{\perp}$, where~$\mathcal{P}_j(\mathcal{H})$ denotes the span of polynomials of random variables in~$\mathcal{H}$ of degree~$\le j$; in particular~$\mathcal{H}^{{:}0{:}}$ denotes the constants.
\end{prop}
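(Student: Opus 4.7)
The plan is to reduce everything to the case of finitely many independent standard Gaussians, where the decomposition follows from the classical fact that Hermite polynomials form a complete orthonormal basis of $L^2(\mb{R},\gamma)$, $\gamma$ the standard Gaussian. The first step is to fix an orthonormal basis $\{X_\alpha\}_{\alpha\in A}$ of the Gaussian Hilbert space $\mathcal{H}$; by Gaussianity, $(X_\alpha)_{\alpha\in A}$ are jointly i.i.d.\ $\nor(0,1)$, and $\mathcal{O}(\mathcal{H})$ is generated by them. Any $F\in L^2(Q,\mathcal{O}(\mathcal{H}),\mb{P})$ depends measurably on some \textit{countable} subfamily $(X_n)_{n\ge 1}$ up to null sets (standard measure-theoretic approximation), and by the $L^2$ martingale convergence theorem, $\mb{E}[F\,|\,\sigma(X_1,\dots,X_N)]\to F$ in $L^2$ as $N\to\infty$. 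Thus it suffices to prove the decomposition for each finite-dimensional subspace $\spn\{X_1,\dots,X_N\}\subset \mathcal{H}$, i.e.\ on $(\mb{R}^N,\gamma_N)$.

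On $(\mb{R}^N,\gamma_N)$, the key assertion is that the multivariate Hermite polynomials $H_{k_1,\dots,k_N}(x)=h_{k_1}(x_1)\cdots h_{k_N}(x_N)$ form a complete orthonormal system of $L^2(\mb{R}^N,\gamma_N)$. Orthonormality is immediate from the one-dimensional case by Fubini. For completeness, the second step is to show that polynomials are dense in $L^2(\mb{R}^N,\gamma_N)$. The cleanest way I plan to use is the following: if $F\in L^2(\mb{R}^N,\gamma_N)$ is orthogonal to every polynomial then all its moments against $F(x)\me^{-|x|^2/2}$ vanish, so the Fourier transform of $F\me^{-|x|^2/2}$, which is real-analytic by the Gaussian decay, has vanishing Taylor coefficients at $0$ (these Taylor coefficients are exactly the moments up to factors), hence is identically zero; thus $F=0$. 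Alternatively one can verify the Hamburger moment-determinacy via Carleman's criterion (the Gaussian moments satisfy the required growth bound).

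Granted that polynomials are dense in $L^2(\mb{R}^N,\gamma_N)$ for every $N$, the third step is to reassemble. Denote by $\mathcal{P}_n$ the space of polynomials of degree $\le n$ in the $X_\alpha$'s, and by $\ol{\mathcal{P}}_n$ its closure in $L^2(Q,\mathcal{O}(\mathcal{H}),\mb{P})$. Density at each finite dimension, combined with the martingale approximation above, shows $\bigcup_n \ol{\mathcal{P}}_n$ is dense in $L^2(Q,\mathcal{O}(\mathcal{H}),\mb{P})$. By definition, $\mathcal{H}^{{:}n{:}}=\ol{\mathcal{P}}_n\cap\ol{\mathcal{P}}_{n-1}^\perp$ gives the orthogonal complement of $\ol{\mathcal{P}}_{n-1}$ inside $\ol{\mathcal{P}}_n$, so by induction $\ol{\mathcal{P}}_n=\bigoplus_{k=0}^n \mathcal{H}^{{:}k{:}}$ orthogonally. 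Passing to the limit $n\to\infty$ and using the density just established yields the orthogonal decomposition
\begin{equation}
L^2(Q,\mathcal{O}(\mathcal{H}),\mb{P})=\bigoplus_{n=0}^{\infty}\mathcal{H}^{{:}n{:}}. \nonumber
\end{equation}

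\paragraph{Main obstacle.} The routine-looking but essential point is the density of polynomials in $L^2(\mb{R}^N,\gamma_N)$. The approximation by conditional expectations onto $\sigma(X_1,\dots,X_N)$ is standard and costs nothing, and the algebraic definition of $\mathcal{H}^{{:}n{:}}$ automatically gives orthogonal nested increments; everything rests on the finite-dimensional density statement. It is this moment-determinacy/Fourier-analytic step that I anticipate as the only nontrivial input.
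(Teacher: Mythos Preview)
Your argument is correct and follows the standard route (reduction to finite dimensions via martingale convergence, then completeness of Hermite polynomials in $L^2(\mb{R}^N,\gamma_N)$ via a Fourier/moment-determinacy argument); this is essentially the proof in Janson. The paper itself does not supply a proof of this proposition --- it is stated in the appendix as a recollection with a citation to \cite{Janson} theorem 2.6 --- so there is nothing to compare against beyond noting that your proof matches the cited source.
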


If~$F\in \ol{\mathcal{P}}_n(\mathcal{H})$, denote by~${:}F{:}$ the projection of~$F$ onto~$\mathcal{H}^{{:}n{:}}$, and is called a \textsf{Wick ordered polynomial}; for~$F\in L^2(\Omega,\mathcal{O}(\mathcal{H}),\mb{P})$, denote by~$I_n(F)$ its projection onto~$\mathcal{H}^{{:}n{:}}$. Define the \textsf{Hermite polynomials}~$h_n(x)$ by
\begin{equation}
  \exp\Big( zx-\frac{1}{2}z^2 \Big)=\sum_{n=0}^{\infty}\frac{z^n}{n!}h_n(x).
  \label{}
\end{equation}
We have~$h_0(x)=1$,~$h_2(x)=x^2-1$,~$h_4(x)=x^4-6x^2+3$, etc. 

\begin{lemm}
  [Wick's theorem/Feynman rules, \cite{Sim2} propositions I.2, I.3, I.4, \cite{Janson} theorems 1.28, 3.9, 3.19] \label{lemm-wick-feyn} ~
  \begin{enumerate}[(i)]
  \item For~$X_1$, \dots,~$X_n\in \mathcal{H}$ (not necessarily distinct) jointly Gaussian random variables,
  \begin{equation}
    \mb{E}[X_1\cdots X_n]=\sum_{\mss{P}}\prod_k \mb{E}[X_{i_k}X_{j_k}],
    \label{}
  \end{equation}
  where the sum is over all partitions~$\mss{P}$ of the set~$\{1,\dots,n\}$ into disjoint pairs~$\{i_k,j_k\}$,~$1\le k\le n/2$. In particular, the result is zero if~$n$ is odd.
    \item For~$X\in \mathcal{H}$,
      \begin{equation}
	{:}X^n{:}=\sum_{j=0}^{\lfloor n/2 \rfloor} \frac{(-1)^j n!}{(n-2j)! j! 2^j}\mb{E}[X^2]^j X^{n-2j}=\mb{E}[X^2]^{\frac{n}{2}} h_n\big(X\big/\mb{E}[X^2]^{\frac{1}{2}}\big).
	\label{}
      \end{equation}
    \item Let~$X_1$, \dots,~$X_n\in \mathcal{H}$ and~$Y_1$, \dots,~$Y_m\in \mathcal{H}$ be jointly Gaussian random variables. Then
   \begin{equation}
     \mb{E}\big[{:}X_1\dots X_n{:}~{:}Y_1\dots Y_m{:}\big]=\left\{
       \begin{aligned}
	 &\sum_{\sigma\in\fk{S}_n}\prod_{i=1}^n \mb{E}[X_i Y_{\sigma(i)}], && m=n,\\
	 &~~~0, && m\ne n.
       \end{aligned}
     \right.
     \label{eqn-wick-theorem-exp}
   \end{equation}
   In particular,~$\mb{E}[{:}X^n{:}~{:}Y^m{:}]=\delta_{nm}n!\mb{E}[XY]^n$.
  \end{enumerate}
\end{lemm}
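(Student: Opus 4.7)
The plan is to prove all three parts using the moment generating function (characteristic function) of a centered multivariate Gaussian, leveraging the explicit form $M(t) = \mathbb{E}[\exp(\sum_i t_i X_i)] = \exp(\tfrac{1}{2} \sum_{i,j} t_i t_j C_{ij})$ with $C_{ij} = \mathbb{E}[X_i X_j]$. This single formula is the engine behind all of Wick's theorem, and the three statements are essentially three different ways of extracting information from it.

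For (i), I would compute $\mathbb{E}[X_1 \cdots X_n] = \partial_{t_1} \cdots \partial_{t_n} M(t)|_{t=0}$. Expanding
\[
  M(t) = \sum_{k \ge 0} \frac{1}{k!\, 2^k}\bigg(\sum_{i,j} t_i t_j C_{ij}\bigg)^k,
\]
only the degree $n$ term can contribute to the monomial $t_1 \cdots t_n$, so the result vanishes when $n$ is odd and only the $k = n/2$ term survives when $n$ is even. For even $n$, the coefficient of $t_1 \cdots t_n$ in $(\sum t_i t_j C_{ij})^{n/2}$ precisely enumerates ordered sequences of $n/2$ ordered pairs, and the division by $k!\, 2^k$ collapses this to a sum over unordered partitions into disjoint pairs, giving the Isserlis/Wick formula.

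For (ii), I would take the generating-function definition of the one-variable Wick power, namely $\sum_{n\ge 0} \frac{z^n}{n!} {:}X^n{:} := \exp(zX - \tfrac{1}{2}z^2 \mathbb{E}[X^2])$, and compare with the Hermite polynomial generating function $\exp(zx - z^2/2) = \sum_n \frac{z^n}{n!} h_n(x)$. Setting $x = X/\mathbb{E}[X^2]^{1/2}$ and matching coefficients produces both the explicit polynomial expansion in $X$ and the identification ${:}X^n{:} = \mathbb{E}[X^2]^{n/2} h_n(X/\mathbb{E}[X^2]^{1/2})$. One verifies consistency with the $\mathcal{H}^{:n:}$ characterization by checking that $\mathbb{E}[{:}X^n{:}\,{:}X^m{:}] = \delta_{nm}\, n!\, \mathbb{E}[X^2]^n$, which follows directly from applying (i) to products of $n+m$ Gaussian factors.

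For (iii), the orthogonality part ($m \ne n$) follows from the Wiener chaos decomposition: ${:}X_1 \cdots X_n{:} \in \mathcal{H}^{:n:}$ and these subspaces are mutually orthogonal. For $m = n$, I would polarize (ii) to define the multilinear Wick product $:X_1 \cdots X_n:$ as the symmetric multilinear form reproducing $:X^n:$ on the diagonal, so that computing $\mathbb{E}[{:}X_1 \cdots X_n{:}\,{:}Y_1 \cdots Y_n{:}]$ reduces to extracting the coefficient of $s_1 \cdots s_n t_1 \cdots t_n$ in
\[
  \mathbb{E}\big[\exp\big({\textstyle\sum_i} s_i X_i - Q_s\big)\exp\big({\textstyle\sum_j} t_j Y_j - Q_t\big)\big] = \exp\big({\textstyle\sum_{i,j}} s_i t_j \mathbb{E}[X_i Y_j]\big),
\]
where $Q_s, Q_t$ are the quadratic subtractions that make each exponential a Wick generating function; the cross-term $\exp(\sum_{i,j} s_i t_j \mathbb{E}[X_i Y_j])$ comes out by applying the Gaussian moment generating formula and watching the diagonal quadratic contributions cancel with $Q_s + Q_t$. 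Extracting the coefficient of $s_1 \cdots s_n t_1 \cdots t_n$ from the expansion of this exponential gives exactly the sum $\sum_{\sigma \in \mathfrak{S}_n} \prod_i \mathbb{E}[X_i Y_{\sigma(i)}]$.

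The main obstacle is purely bookkeeping in (iii): keeping the polarization identity and the cancellation of $Q_s + Q_t$ with the diagonal quadratic terms straight. Once the generating function identity for the cross-expectation is established, the combinatorics of matching indices via bijections $\sigma \in \mathfrak{S}_n$ is immediate, and the case $m \ne n$ reduces to extracting a coefficient that is simply not present (so orthogonality is automatic rather than requiring the abstract chaos decomposition).
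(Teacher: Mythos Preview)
Your proof is correct, and the generating-function approach you outline is the standard route to these identities. Note however that the paper does not actually prove this lemma: it is stated with citations to Simon \cite{Sim2} (propositions I.2--I.4) and Janson \cite{Janson} (theorems 1.28, 3.9, 3.19) and left at that. So there is no ``paper's own proof'' to compare against; your argument would serve as a self-contained replacement for those external references.

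One small point worth tightening in part (iii): the paper defines ${:}F{:}$ as the orthogonal projection of $F \in \ol{\mathcal{P}}_n(\mathcal{H})$ onto $\mathcal{H}^{{:}n{:}}$, not via polarization of the one-variable Wick power. Your generating-function identity implicitly uses that ${:}\exp(\sum_i s_i X_i){:} = \exp(\sum_i s_i X_i - \tfrac{1}{2}\mathrm{Var}(\sum_i s_i X_i))$, which when expanded in the $s_i$ gives the multilinear ${:}X_1\cdots X_n{:}$. For completeness you should verify that this agrees with the projection definition --- this is immediate once you check (e.g.\ from (i)) that the right-hand side is orthogonal to all lower-degree polynomials, but it deserves a sentence since the two characterizations are a priori different.
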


 There is a general way of associating random variables (or numbers) to \textsf{Feynman diagrams}. A \textsf{Feynman diagram} consists of vertices, legs (segments with only one end attached to a vertex), and edges (contracted legs). Two legs are \textsf{contracted} means they are connected to form an edge. A Feynman diagram is called \textsf{fully contracted} if there is no unconnected legs. Given~$n$ random variables~$X_1$, \dots,~$X_n$, a Feynman diagram \textsf{labelled by}~$(X_1,\dots,X_n)$ is simply any Feynman diagram whose vertices are in bijection with~$\{X_1,\dots,X_n\}$. A Feynman diagram labelled by~$(X_1,\dots,X_n)$ can now be associated with \textit{either} a random variable \textit{or} a number using the following rules:
\begin{enumerate}[(i)]
  \item for each leg attached to a vertex~$j$, write down the corresponding random variable~$X_j$, and multiply them all together;
  \item whenever two legs are contracted, enclose the corresponding two random variables by~$\mb{E}[\bullet]$.
\end{enumerate}
Thus fully contracted diagrams are always associated with numbers. If~$\gamma$ is a Feynman diagram labelled by~$X_1$, \dots,~$X_n$, denote by~$v(\gamma)$ the associated object following the above rules.

\begin{exxx}[\cite{PS} section 4.4]
  In a physical context Feynman diagrams are used in rather formal calculations. We would like to find
  \begin{equation}
    \mb{E}_{\mm{GFF}}^M\big[\phi(x)\phi(y)\me^{-\frac{\lambda}{4!}\int_{M}^{}\phi(x)^4\dd V_M(x)}\big]
    \label{eqn-feynman-calc}
  \end{equation}
  by Taylor expanding~$\exp(-\frac{\lambda}{4!}\int_{M}^{}\phi(x)^4\dd V_M(x))$. This gives
  \begin{align*}
    (\textrm{\ref{eqn-feynman-calc}})&\heueq \mb{E}_{\mm{GFF}}^M\big[\phi(x)\phi(y)\big] +
    \mb{E}_{\mm{GFF}}^M\Big[\phi(x)\phi(y)\Big( -\frac{\lambda}{4!} \Big)\int_{}^{}\phi(z)^4\dd V_M(z) \Big] \\
    &\quad\quad + \mb{E}_{\mm{GFF}}^M\Big[\phi(x)\phi(y)\frac{1}{2}\Big( -\frac{\lambda}{4!} \Big)^2\iint_{}^{}\phi(z)^4\phi(w)^4\dd V_M(z)\dd V_M(w) \Big]  +\cdots\\
    &=\mb{E}_{\mm{GFF}}^M\big[\phi(x)\phi(y)\big]  -\frac{\lambda}{4!}
   \underbrace{ \int_{}^{}\mb{E}_{\mm{GFF}}^M\big[\phi(x)\phi(y)\phi(z)^4\big]\dd V_M(z) }_{A} \\
    &\quad\quad + \frac{1}{2}\Big( -\frac{\lambda}{4!} \Big)^2 \iint_{}^{} \mb{E}_{\mm{GFF}}^M\big[\phi(x)\phi(y)\phi(z)^4\phi(w)^4\big]\dd V_M(z)\dd V_M(w) +\cdots
  \end{align*}
  Treating~$\phi(x)$,~$\phi(y)$,~$\phi(z)$ as (jointly Gaussian!) random variables, by (i) of lemma \ref{lemm-wick-feyn} and the above rules we write
  \begin{align*}
    A&\heueq 3\int_{}^{} \dd V_M(z) v\bigg( \raisebox{-.5\height}{\includegraphics[height=2cm]{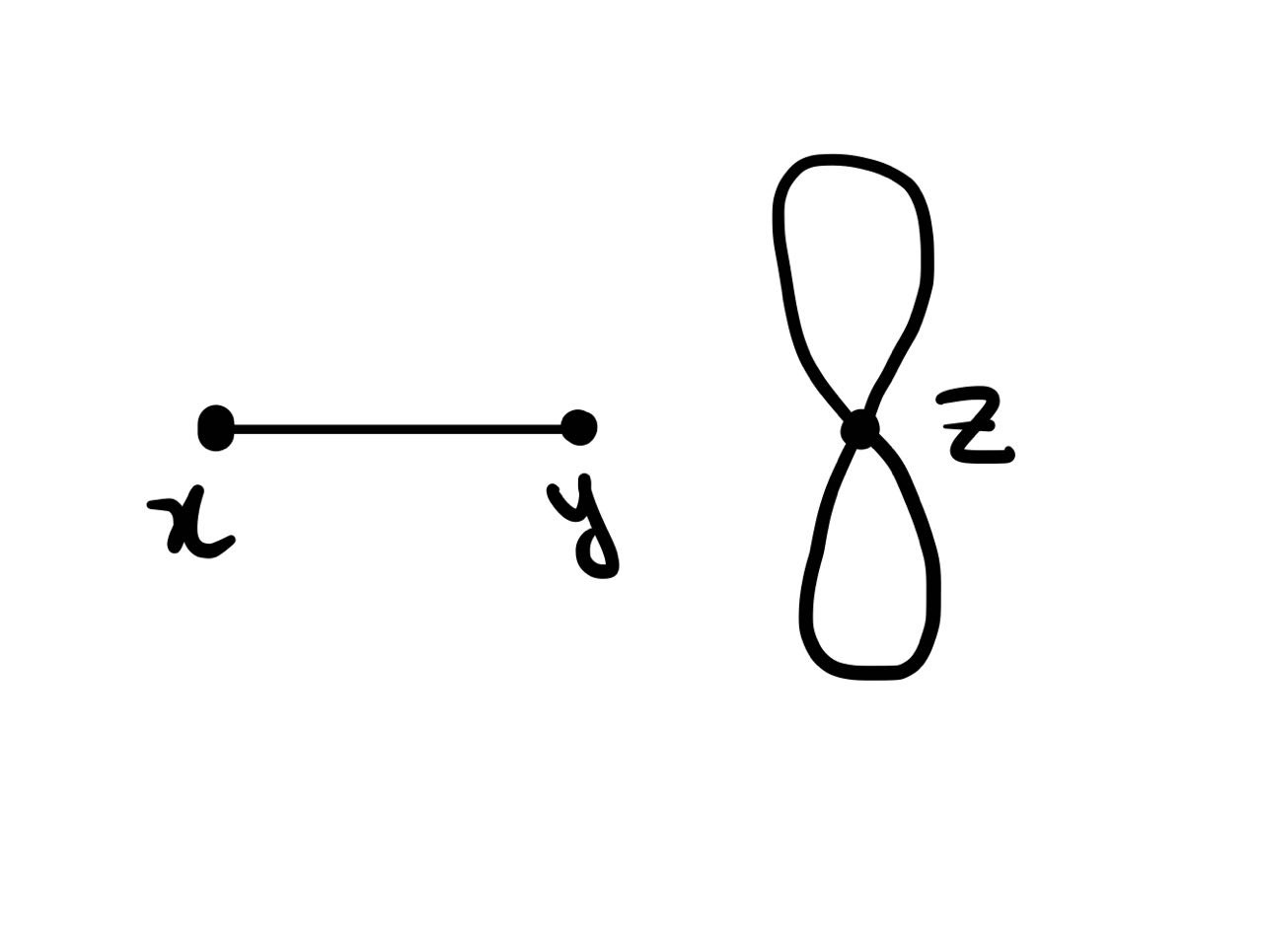}} \bigg) +12 \int_{}^{}\dd V_M(z) v\bigg( \raisebox{-.5\height}{\includegraphics[height=2cm]{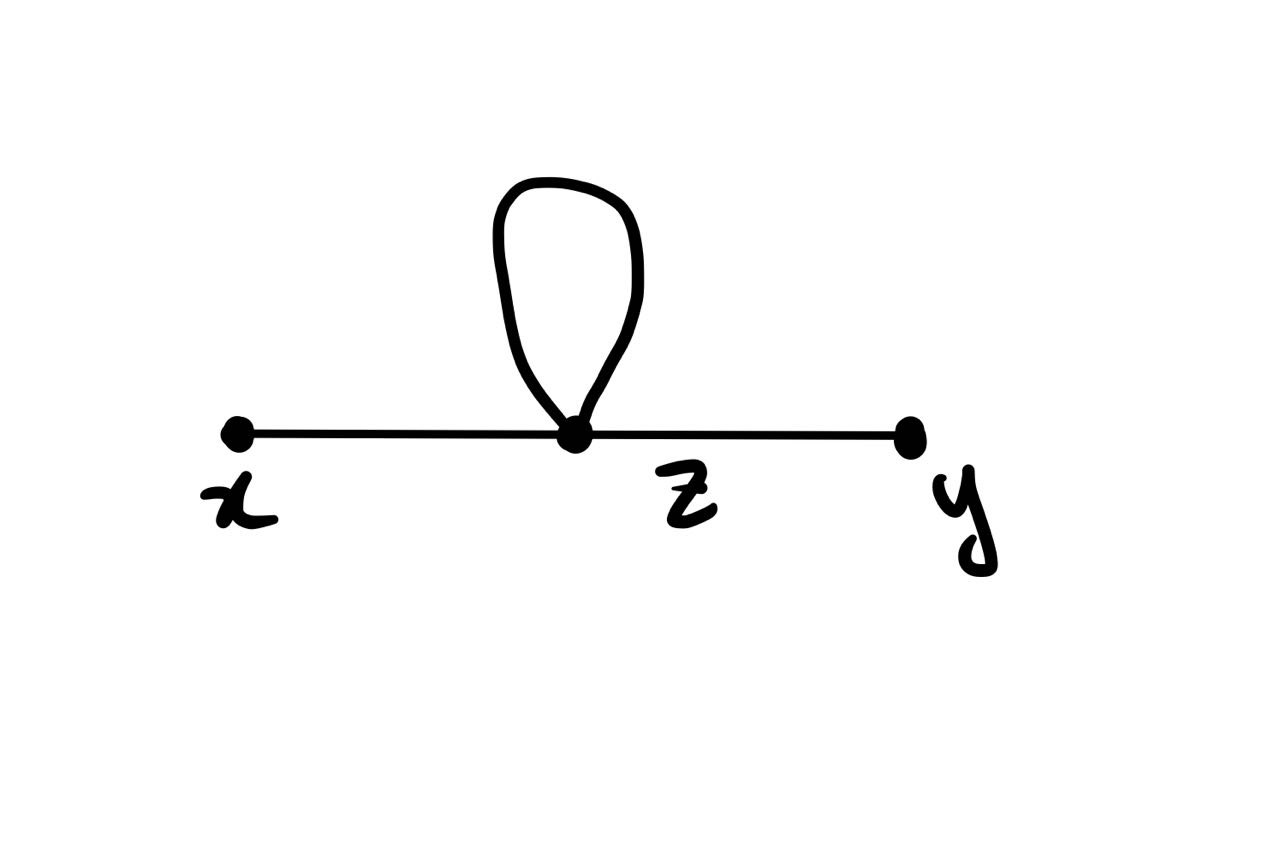}} \bigg) \\
    &=3\int_{}^{}G(x,y)G(z,z)^2 \dd V_M(z) +12 \int_{}^{} G(x,z)G(y,z)G(z,z)\dd V_M(z) ,
  \end{align*}
  where the factors correspond to the number of ways of getting the same contraction starting from 6 legs (4 on~$z$, 1 on~$x$,~$y$ each), and~$G$ denotes the Green function (of~$\Delta+m^2$). One can represent higher order terms using these diagrams in a similar manner. More than that, the diagrams also represent actual physical processes. See Peskin and Schroeder \cite{PS}.
\end{exxx}

 \subsection{Sobolev Spaces over Domains}\label{sec-app-sobo}

\noindent In this paper we make essential use of the usual~$L^2$ Sobolev spaces over Riemannian manifolds. First let~$(M,g)$ be a closed Riemannian manifold and~$s\in \mb{R}$. Then the Sobolev space~$W^s(M)$ of order~$s$ is defined generally as the closure of~$C^{\infty}(M)$ under a norm~$\nrm{\cdot}_{W^s(M)}$, where the norm~$\nrm{\cdot}_{W^s(M)}$ could be defined in various equivalent ways. We refer to Taylor \cite{Taylor1} chapter 4 for a general discussion. For us,~$s=\pm 1$,~$\pm\frac{1}{2}$. We rely heavily on the following fact.
\begin{lemm}\label{lemm-sobo-inner-prod}
  Let~$\Lambda_{2s}$ be an elliptic \textsf{strictly positive} formally self-adjoint pseudodifferential operator on~$M$ with order~$2s$. Then the inner product
  \begin{equation}
    \ank{-,-}_{W^s}\defeq \ank{-,\Lambda_{2s}-}_{L^2}
    \label{eqn-sobo-inner-prod}
  \end{equation}
  induces an equivalent norm for~$W^s(M)$.
\end{lemm}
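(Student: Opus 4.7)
The plan is to compare $\langle -, \Lambda_{2s}-\rangle_{L^2}$ with a reference inner product defining $W^s(M)$, using the functional calculus trick of sandwiching $\Lambda_{2s}$ between fractional powers of a reference operator to produce a bounded positive self-adjoint operator on $L^2(M)$. Concretely, I would fix $B := (\Delta+1)^s$ (strictly positive, elliptic, self-adjoint, order $2s$, with inverse and fractional powers defined via the spectral theorem since $M$ is closed), and take as the reference norm $\snrm{u}^2_{W^s(M)} := \langle u, Bu\rangle_{L^2}$, which is one of the standard equivalent definitions recalled in Taylor \cite{Taylor1}.

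The key object would then be
\begin{equation*}
T := B^{-1/2}\Lambda_{2s} B^{-1/2}.
\end{equation*}
First I would observe that $B^{\pm 1/2}$ is an elliptic self-adjoint $\Psi$DO of order $\pm s$ (again, by the spectral theorem on the closed manifold, equivalently by Seeley's complex power construction recalled in proposition \ref{prop-def-det-zeta}). So $T$ is an elliptic $\Psi$DO of order $0$, hence extends to a bounded operator on $L^2(M)$; it is moreover formally self-adjoint, and \textit{strictly positive} on $L^2$ because
$\langle v,Tv\rangle_{L^2} = \langle B^{-1/2}v,\Lambda_{2s}B^{-1/2}v\rangle_{L^2} \ge 0$, with equality forcing $B^{-1/2}v=0$, hence $v=0$ since $B^{-1/2}$ is injective.

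The decisive step is to upgrade strict positivity to a lower bound $T \ge c\, \mathrm{id}$ on $L^2(M)$ for some $c>0$; equivalently, to show $T$ is $L^2$-invertible. This is where one uses the closed manifold hypothesis: $\Lambda_{2s}:W^s(M)\to W^{-s}(M)$ is elliptic of order $2s$, self-adjoint, and has trivial kernel in $C^\infty(M)$ by strict positivity; standard elliptic theory (regularity + Fredholm) on the closed manifold then yields that $\Lambda_{2s}:W^s(M)\to W^{-s}(M)$ is a topological isomorphism. Composing with the isomorphisms $B^{-1/2}:L^2(M)\xrightarrow{\sim} W^s(M)$ and $B^{-1/2}:W^{-s}(M)\xrightarrow{\sim} L^2(M)$, we deduce $T:L^2(M)\to L^2(M)$ is bijective; being bounded, self-adjoint, positive and invertible, its spectrum lies in some interval $[c,C]\subset (0,\infty)$.

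Having $cI \le T \le CI$ on $L^2(M)$, setting $v := B^{1/2}u$ for $u \in C^\infty(M)$ gives
\begin{equation*}
c\, \langle u, Bu\rangle_{L^2} \;\le\; \langle u, \Lambda_{2s}u\rangle_{L^2} \;\le\; C\, \langle u, Bu\rangle_{L^2},
\end{equation*}
which is the desired two-sided equivalence with the reference Sobolev norm, and in particular the completion of $C^\infty(M)$ under $\langle -,\Lambda_{2s}-\rangle_{L^2}^{1/2}$ coincides with $W^s(M)$. The only nontrivial point in the whole argument is the elliptic-theoretic bijectivity of $\Lambda_{2s}:W^s(M)\to W^{-s}(M)$, which can be imported from Taylor \cite{Taylor1}; everything else is functional-calculus bookkeeping.
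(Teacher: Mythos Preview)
Your proof is correct. Note that the paper itself does not supply a proof for this lemma --- it is stated in Appendix~\ref{sec-app-sobo} as a standard recollection about Sobolev spaces, with an implicit reference to Taylor \cite{Taylor1} chapter 4 for background. Your argument via the sandwiched operator $T = B^{-1/2}\Lambda_{2s}B^{-1/2}$ is a clean and standard way to establish the result; the only substantive input is indeed the elliptic Fredholm package on closed manifolds giving bijectivity of $\Lambda_{2s}:W^s(M)\to W^{-s}(M)$, which is exactly what the paper's cited reference provides.
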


In particular, the real power~$(\Delta_M+m^2)^s$ of the Helmholtz operator (massive Laplacian)~$\Delta_M+m^2$ provides such a candidate for~$\Lambda_{2s}$. \textbf{Convention:} whenever we use the space~$W^s(M)$, the inner product (\ref{eqn-sobo-inner-prod}) with~$\Lambda_{2s}=(\Delta_M+m^2)^s$ is understood, unless otherwise specified.

\begin{def7}
  Various regimes of functional calculus can be used to define~$(\Delta_M+m^2)^s$. One of them is presented in section \ref{sec-det-def} which in fact defines \textit{complex} powers. We also mention a smooth functional calculus presented in Sogge \cite{Sogge} theorem 4.3.1.
\end{def7}

Next we discuss important subspaces of~$W^s(M)$. Let~$A\subset M$ be a closed set and~$U\subset M$ an open set. Define
\begin{align}
  W^s_A(M)&\defeq \{u\in W^s(M)~|~\supp u\subset A\textrm{ as a distribution}\},\label{eqn-def-sobo-closed-support}\\
  W^s_U(M) &\defeq \textrm{closure of }C_c^{\infty}(U)\textrm{ inside }W^s(M), \label{eqn-def-sobo-open-closure}\\
  W^s(U)&\defeq W^s_{M\setminus U}(M)^{\perp}\subset W^s(M).
  \label{eqn-sobo-open-set-ortho}
\end{align}
These are closed subspaces of~$W^s(M)$. 

\begin{def7}
  We point out right away that by definition, then,
  \begin{equation}
    W^s(U)\cong W^s(M)/W^s_{M\setminus U}(M),
    \label{}
  \end{equation}
  the latter equipped with the quotient norm, which is a more familiar characterization of~$W^s(U)$, see Taylor \cite{Taylor1} page 339. Our definition as in (\ref{eqn-sobo-open-set-ortho}) poses the obvious problem that in general~$C_c^{\infty}(U)\not\subset W^s(U)$, at least for~$s\not\in \mb{Z}_+$. We emphasize therefore that what is important in this definition is not the space~$W^s(U)$ \textit{per se} but the following choice for its inner product:
  \begin{equation}
    \bank{f,h}_{W^s(U)}\defeq \bank{P_{M\setminus U}^{\perp} f,P_{M\setminus U}^{\perp} h}_{W^s(M)},
    \label{eqn-def-inner-prod-sobo-open}
  \end{equation}
  for any~$f$,~$h\in W^s(M)$, in particular for~$f$,~$h\in C_c^{\infty}(U)$, which produces a norm equivalent to the quotient norm, where~$P_{M\setminus U}^{\perp}:W^s(M)\lto W^s(U)$ denotes the orthogonal projection.
\end{def7}

\begin{def7}
  Clearly~$W^s_U(M)\subset W^s_{\ol{U}}(M)$ by definition. In general the inclusion is strict (certainly if $U\ne (\ol{U})^{\circ}$!). See Taylor \cite{Taylor1} page 339 and section 4.7 for interesting discussions on conditions for~$s$ and~$U$ for which equality holds. In particular,~$W^k_{\Omega}(M)=W^k_{\Omega^{\circ}}(M)= W^k_{\ol{\Omega}}(M)$ if~$\Omega\subset M$ is a domain with smooth boundary~$\partial\Omega$ (a closed Riemannian manifold with one dimension less) and~$k\in \mb{Z}_+$. In this case, we use these notations interchangeably. 
\end{def7}

The rest of this appendix could be read along with section \ref{sec-sobo-decomp}. Let~$s=-1$. Although~$C_c^{\infty}(U)\not\subset W^{-1}(U)$, we have

\begin{lemm}\label{lemm-sobo-test-dense-in-open}
  Let~$U\subset M$ be an open set. Then~$P_{M\setminus U}^{\perp}(C_c^{\infty}(U))$ is dense in~$W^{-1}(U)$.
\end{lemm}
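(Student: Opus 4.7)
The plan is to prove density by the standard dual criterion: show that if $v \in W^{-1}(U)$ is orthogonal in $W^{-1}(U)$ to $P_{M\setminus U}^{\perp}(C_c^{\infty}(U))$, then $v = 0$. The key will be to transport the question across the isomorphism $\Delta+m^2 : W^1(M) \xrightarrow{\sim} W^{-1}(M)$ of lemma \ref{lemm-sobo-inner-prod}, which, by lemma \ref{lemm-sobo-decomp}, restricts to an isomorphism $W^1_U(M) \xrightarrow{\sim} W^{-1}_{M\setminus U}(M)^{\perp} = W^{-1}(U)$.

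First I would simplify the orthogonality condition. Take $v \in W^{-1}(U)$ with $\langle v, P_{M\setminus U}^{\perp}\varphi\rangle_{W^{-1}(U)} = 0$ for all $\varphi \in C_c^{\infty}(U) \subset W^{-1}(M)$. By the convention (\ref{eqn-def-inner-prod-sobo-open}) and using that $v = P_{M\setminus U}^{\perp} v$, this equals $\langle v, P_{M\setminus U}^{\perp}\varphi\rangle_{W^{-1}(M)}$, and since $P_{M\setminus U}^{\perp}$ is self-adjoint on $W^{-1}(M)$ it further equals $\langle v, \varphi\rangle_{W^{-1}(M)}$. Hence the hypothesis becomes the cleaner statement
\begin{equation*}
  \langle v, \varphi\rangle_{W^{-1}(M)} = 0 \quad \text{for all } \varphi \in C_c^{\infty}(U).
\end{equation*}

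Next I would write $v = (\Delta+m^2)w$ with $w \in W^1_U(M)$, which is possible by the restricted isomorphism above. The crucial relation (\ref{eqn-equal-sobo-distri-pair}) then turns the displayed orthogonality into
\begin{equation*}
  \langle w, \varphi\rangle_{L^2(M)} = \langle (\Delta+m^2)w, \varphi\rangle_{W^{-1}(M)} = 0, \quad \varphi \in C_c^{\infty}(U),
\end{equation*}
i.e., $w$ vanishes on $U$ as a distribution and hence $w = 0$ a.e.\ on $U$.

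To finish, I would observe that $w \in W^1_U(M)$ is, by definition (\ref{eqn-def-sobo-open-closure}), a $W^1$-limit (a fortiori $L^2$-limit) of functions in $C_c^{\infty}(U)$, all of which vanish identically on $M \setminus U$; therefore $w = 0$ a.e.\ on $M \setminus U$ as well. Combining, $w = 0$ in $L^2(M)$ and thus in $W^1(M)$, giving $v = (\Delta+m^2)w = 0$ as desired. I do not anticipate a real obstacle here: the entire proof is an exercise in unwinding the definitions via the duality/isomorphism package already set up in lemmas \ref{lemm-sobo-inner-prod}, \ref{lemm-sobo-decomp} and corollary \ref{cor-adj-sobo-proj}; the only point requiring a little care is keeping track of which inner product (on $W^{-1}(U)$, on $W^{-1}(M)$, or the $L^2$ pairing) is being used at each step, but all three are tied together cleanly by (\ref{eqn-equal-sobo-distri-pair}).
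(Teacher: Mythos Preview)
Your proof is correct. The paper takes a shorter, constructive route: it observes that $\Delta+m^2$ is a \emph{local} operator, so $(\Delta+m^2)(C_c^{\infty}(U))\subset C_c^{\infty}(U)$; since moreover $(\Delta+m^2)$ restricts to an isometry $W^1_U(M)\xrightarrow{\sim}W^{-1}(U)$ by lemma \ref{lemm-sobo-decomp} and $C_c^{\infty}(U)$ is dense in $W^1_U(M)$ by definition, the set $(\Delta+m^2)(C_c^{\infty}(U))$ is already a dense subset of $W^{-1}(U)$ sitting inside $C_c^{\infty}(U)$, on which $P_{M\setminus U}^{\perp}$ acts as the identity. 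Your argument instead goes by the dual criterion, pulling back through the same isomorphism to show that any $v\in W^{-1}(U)$ orthogonal to the set must come from $w\in W^1_U(M)$ that vanishes both on $U$ (from the pairing) and on $M\setminus U$ (from the approximation by $C_c^{\infty}(U)$). The paper's version is quicker and highlights why locality of the Laplacian matters (a theme used elsewhere in the paper); yours is a clean duality argument that does not invoke locality explicitly and would work verbatim with any isometry $W^1_U(M)\to W^{-1}(U)$ intertwining the $W^{-1}$ and $L^2$ pairings as in (\ref{eqn-equal-sobo-distri-pair}).
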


\begin{proof}
  We note~$\Delta_M+m^2$ is local and therefore~$(\Delta+m^2)(C_c^{\infty}(U))\subset C_c^{\infty}(U)$. It follows from lemma \ref{rem-sobo-proj=dist-res} and our definition of~$W_U^1(M)$ that~$(\Delta+m^2)(C_c^{\infty}(U))\subset W^{-1}_{M\setminus U}(M)^{\perp}$ and is dense there, proving the result.
\end{proof}

\begin{def7}\label{rem-natural-sobo-cam-mar}
  Clearly, the map~$P_{M\setminus U}^{\perp}$ is also injective on~$C_c^{\infty}(U)$; together with lemma \ref{lemm-sobo-test-dense-in-open} this shows~$P_{M\setminus U}^{\perp}$ is a good embedding of~$C_c^{\infty}(U)$ in~$W^{-1}(U)$. In fact, this is the same as the embedding of~$C_c^{\infty}(U)$ in~$\mathcal{D}'(U)$, by remark \ref{rem-sobo-proj=dist-res}. Nevertheless, the smaller class~$(\Delta+m^2)(C_c^{\infty}(U))$, as it is already dense in~$W^{-1}(U)$, suffices as a class of test functions to define the GFF with Dirichlet condition over a domain (see remark below lemma \ref{lemm-ghs-for-gff}). This reflects the fact that the Cameron-Martin pairing~$\ank{-,-}_{W^1}$ is more natural than~$\ank{-,-}_{L^2}$ in treating the GFF (see remark \ref{rem-use-cam-mar-pairing-not-l2}). We have stuck to~$\ank{-,-}_{L^2}$ only because this is more practical with functional analysis.
\end{def7}

\begin{def7}
For general~$s\in\mb{R}$, one could also define
  \begin{equation}
    W^s_0(U)\defeq \textrm{closure of }C_c^{\infty}(U)\textrm{ under (\ref{eqn-def-inner-prod-sobo-open})}.
    \label{}
  \end{equation}
  Then~$W^s_U(M)\subset W^s_0(U)$. But it cannot generally be compared with~$W^s_{\ol{U}}(M)$ (to the author's knowledge). See the exercises in Taylor \cite{Taylor1} pages 343-344 for more information.
\end{def7}

Next we state the duality results for the various spaces. Recall that~$\ank{-,-}_{L^2(M)}$ denotes both the inner product of~$L^2(M)$ and the distributional pairing between~$\mathcal{D}'(M)$ and~$C^{\infty}(M)$. Below, we extend it to denote also the pairing between dual Sobolev spaces (see (i) of the lemma below).
\begin{lemm}\label{lemm-sobo-dual}
  Let~$M$ be a closed Riemannian manifold,~$U\subset M$ an open set,~$A\subset M$ a closed set, and~$s\in\mb{R}$.
  \begin{enumerate}[(i)]
    \item $W^{-s}(M)$ is the dual Banach space, denoted~$W^s(M)^*$, of~$W^s(M)$ under~$\ank{-,-}_{L^2}$;
    \item the \textsf{annihilator} of~$W^s_U(M)$ under~$\ank{-,-}_{L^2}$ is~$W^{-s}_{M\setminus U}(M)$, that is,
      \begin{equation}
	W^{-s}_{M\setminus U}(M)=\{u\in W^{-s}(M)~|~\ank{u,f}_{L^2}=0\textrm{ for all }f\in W^s_U(M)\};
	\label{}
      \end{equation}
    the \textsf{annihilator} of~$W^s_A(M)$ is accordingly~$W^{-s}_{M\setminus A}(M)$;
  \item $W^s(U)^*\cong W^{-s}_U(M)$,~$W^s_U(M)^*\cong W^{-s}(U)$, these spaces being therefore reflexive.
  \end{enumerate}
\end{lemm}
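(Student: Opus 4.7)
The plan is to prove the three items in the order stated, deriving (iii) from (i) and (ii), and treating (ii) by a support-and-closure argument followed by a bipolar/Hahn-Banach reflection.

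First, for (i), I would use Lemma \ref{lemm-sobo-inner-prod} with $\Lambda_{2s}=(\Delta_M+m^2)^s$, which is a strictly positive self-adjoint elliptic $\Psi$DO and hence induces an isometric isomorphism $(\Delta_M+m^2)^s:W^s(M)\xrightarrow{\sim}W^{-s}(M)$ (the norm on $W^{-s}$ being given by $\Lambda_{-2s}=(\Delta_M+m^2)^{-s}$). For $u\in W^{-s}(M)$, the map $f\mapsto\langle u,f\rangle_{L^2}$ is a continuous linear functional on $W^s(M)$ of norm $\nrm{u}_{W^{-s}}$, as
\begin{equation*}
|\langle u,f\rangle_{L^2}|=|\langle u,\Lambda_{-2s}\Lambda_{2s}f\rangle_{L^2}|=|\langle u,\Lambda_{2s}f\rangle_{W^{-s}}|\le\nrm{u}_{W^{-s}}\nrm{f}_{W^s}.
\end{equation*}
Conversely, Riesz representation on the Hilbert space $W^s(M)$ yields $\ell(f)=\langle f,h\rangle_{W^s}=\langle f,\Lambda_{2s}h\rangle_{L^2}$ for some $h$, and $\Lambda_{2s}h\in W^{-s}(M)$. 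This identifies the two spaces isometrically under the $L^2$-pairing.

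Next, for (ii), the ``$\supseteq$'' direction is direct: if $u\in W^{-s}_{M\setminus U}(M)$ (support in $M\setminus U$) and $f\in C_c^\infty(U)$, then $\langle u,f\rangle_{L^2}=0$ by the definition of distributional support, and this extends to $f\in W^s_U(M)$ by continuity of the pairing from (i). For ``$\subseteq$'', if $u\in W^{-s}(M)$ annihilates $W^s_U(M)$ then in particular $\langle u,\varphi\rangle=0$ for all $\varphi\in C_c^\infty(U)$, so $\supp u\subset M\setminus U$, placing $u$ in $W^{-s}_{M\setminus U}(M)$. For the ``closed set'' version with $A$ closed, the ``$\supseteq$'' direction again uses disjoint supports and continuity (writing an element of $W^{-s}_{M\setminus A}(M)$ as $W^{-s}$-limit of $C_c^\infty(M\setminus A)$ functions, each of which pairs to zero with distributions in $W^s_A(M)$). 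For the converse, I would invoke the bipolar theorem: the first part of (ii) (with roles of $s$ and $-s$ swapped) says the annihilator in $W^s(M)$ of $W^{-s}_{M\setminus A}(M)$ equals $W^s_A(M)$; hence the bipolar of $W^{-s}_{M\setminus A}(M)$ (which is closed by construction) is itself, giving the desired inclusion.

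Finally, (iii) is a formal consequence of (i) and (ii) via standard Banach space duality. By definition (\ref{eqn-sobo-open-set-ortho}), $W^s(U)=W^s_{M\setminus U}(M)^\perp$ inside $W^s(M)$ (orthogonal w.r.t.\ $\langle-,-\rangle_{W^s}$). The orthogonal projection $P^\perp_{M\setminus U}:W^s(M)\to W^s(U)$ identifies $W^s(U)$ isometrically with the quotient $W^s(M)/W^s_{M\setminus U}(M)$ equipped with its quotient norm (this is the content of equation (\ref{eqn-def-inner-prod-sobo-open})). Dualizing via (i), we get
\begin{equation*}
W^s(U)^*\cong\bigl(W^s(M)/W^s_{M\setminus U}(M)\bigr)^*=\{u\in W^{-s}(M)\mid\langle u,f\rangle=0\ \forall f\in W^s_{M\setminus U}(M)\},
\end{equation*}
and by the closed-set case of (ii) this annihilator is exactly $W^{-s}_U(M)$. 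The identification $W^s_U(M)^*\cong W^{-s}(U)$ is entirely symmetric: $W^s_U(M)$ is a closed subspace of $W^s(M)$, its dual is $W^s(M)^*$ modulo its annihilator, which by the open-set case of (ii) is $W^{-s}(M)/W^{-s}_{M\setminus U}(M)$, and this quotient is isometric to $W^{-s}(U)$ by the same orthogonal-projection argument.

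I expect the only genuine subtlety to be the closed-set direction of (ii), where one must resist the temptation to argue directly about supports (since elements of $W^s_A(M)$ are merely distributions supported in $A$, not tested against $C_c^\infty(A)$) and instead reduce to the open-set case by the bipolar argument; everything else is bookkeeping between the $L^2$-pairing and the equivalent Sobolev inner product provided by Lemma \ref{lemm-sobo-inner-prod}.
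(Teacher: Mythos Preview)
The paper states this lemma without proof, treating it as a standard collection of facts about Sobolev duality on closed manifolds (the surrounding discussion refers to Taylor \cite{Taylor1} chapter 4 for background). So there is no ``paper's own proof'' to compare against.

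Your argument is correct and is essentially the canonical one. Part (i) is the standard Riesz/elliptic-isomorphism identification. In part (ii), your treatment of the open-set case is exactly right (annihilating $C_c^\infty(U)$ forces $\supp u\subset M\setminus U$, and the converse is disjoint-support plus density). Your handling of the closed-set case via the bipolar theorem is the clean way to do it and avoids the trap you yourself flag: one cannot argue directly about ``testing against $C_c^\infty(A)$'' since $A$ may have empty interior. Part (iii) is indeed formal once (i) and (ii) are in place, using the identifications (closed subspace)$^* \cong$ ambient dual modulo annihilator, and (quotient)$^* \cong$ annihilator, together with the Hilbert-space fact that the orthogonal complement realizes the quotient isometrically.

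One very minor point of bookkeeping: in your bipolar step you implicitly use that $W^{-s}_{M\setminus A}(M)$ is already closed in $W^{-s}(M)$, which is immediate from its definition (\ref{eqn-def-sobo-open-closure}) as a closure; you say this but it is worth emphasizing since it is exactly what makes the bipolar equal to the space itself. Otherwise the write-up is complete.
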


Finally, when~$\Omega\subset M$ is a domain with smooth boundary~$\partial\Omega$, we define, in view of lemma \ref{lemm-sobo-decomp}, the \textsf{Dirichlet Green operator}~$(\Delta_{\Omega,D}+m^2)^{-1}:=(\Delta+m^2)^{-1}P_{M\setminus \Omega^{\circ}}^{\perp}:W^{-1}(\Omega^{\circ})\lto W_{\Omega^{\circ}}^1(M)$. Clearly this agrees with the usual definition. In terms of quadratic forms,
\begin{lemm}[\cite{Sim2} theorem VII.1] \label{lemm-diri-green-op-quad-form}
  Let~$\Omega\subset M$ be a domain with smooth boundary~$\partial\Omega$. We have
  \begin{equation}
    \bank{f,(\Delta_{\Omega,D}+m^2)^{-1}h}_{L^2}=\bank{P_{M\setminus \Omega^{\circ}}^{\perp}f, P_{M\setminus \Omega^{\circ}}^{\perp}h}_{W^{-1}},
    \label{}
  \end{equation}
  for~$f$,~$h\in C_c^{\infty}(\Omega^{\circ})$.
\end{lemm}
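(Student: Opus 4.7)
\emph{Proof plan.} The proof will be a short unravelling of the definitions, combined with the $W^{-1}$-orthogonality built into the decomposition of lemma \ref{lemm-sobo-decomp}. The Dirichlet Green operator on $\Omega^{\circ}$ is defined by
\begin{equation*}
(\Delta_{\Omega,D}+m^2)^{-1}=(\Delta+m^2)^{-1}P_{M\setminus \Omega^{\circ}}^{\perp},
\end{equation*}
and the $W^{-1}(M)$-inner product is $\ank{-,-}_{W^{-1}}=\ank{-,(\Delta+m^2)^{-1}-}_{L^2}$. Therefore I plan to first rewrite the left-hand side of the claimed identity purely in terms of these two fundamental objects, obtaining
\begin{equation*}
\bank{f,(\Delta_{\Omega,D}+m^2)^{-1}h}_{L^2} = \bank{f,(\Delta+m^2)^{-1}P_{M\setminus \Omega^{\circ}}^{\perp}h}_{L^2},
\end{equation*}
while the right-hand side, by definition of the $W^{-1}$ inner product, is
\begin{equation*}
\bank{P_{M\setminus \Omega^{\circ}}^{\perp}f,P_{M\setminus \Omega^{\circ}}^{\perp}h}_{W^{-1}} = \bank{P_{M\setminus \Omega^{\circ}}^{\perp}f,(\Delta+m^2)^{-1}P_{M\setminus \Omega^{\circ}}^{\perp}h}_{L^2}.
\end{equation*}

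The next and only substantive step is to justify that the difference between these two $L^{2}$-pairings vanishes. Writing $f = P_{M\setminus \Omega^{\circ}}^{\perp}f + P_{M\setminus \Omega^{\circ}}f$, the difference becomes
\begin{equation*}
\bank{P_{M\setminus \Omega^{\circ}}f,(\Delta+m^2)^{-1}P_{M\setminus \Omega^{\circ}}^{\perp}h}_{L^2} = \bank{P_{M\setminus \Omega^{\circ}}f,P_{M\setminus \Omega^{\circ}}^{\perp}h}_{W^{-1}},
\end{equation*}
which vanishes because, by lemma \ref{lemm-sobo-decomp} (and the definition of the $W^{-1}$ projections appearing there), $P_{M\setminus \Omega^{\circ}}f \in W^{-1}_{M\setminus \Omega^{\circ}}(M)$ and $P_{M\setminus \Omega^{\circ}}^{\perp}h \in W^{-1}_{M\setminus \Omega^{\circ}}(M)^{\perp}$ sit in mutually $W^{-1}$-orthogonal subspaces. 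Chaining the three equalities yields the lemma.

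There is essentially no obstacle here: the only tiny point to double-check is that all the pairings make sense for $f,h\in C_c^{\infty}(\Omega^{\circ})\subset W^{-1}(M)$, but this is immediate since $(\Delta+m^2)^{-1}P_{M\setminus \Omega^{\circ}}^{\perp}h \in W^1_{\Omega^{\circ}}(M)\subset L^2(M)$ and both $L^2$- and $W^{-1}$-pairings are continuous on the respective pairs of spaces in play. In summary, the proof is three displayed lines: unpack the definition of the Dirichlet Green operator, unpack the definition of the $W^{-1}$ inner product, and invoke the orthogonality $W^{-1}(M) = W^{-1}_{M\setminus \Omega^{\circ}}(M)\oplus W^{-1}_{M\setminus \Omega^{\circ}}(M)^{\perp}$ to discard the part of $f$ living in $W^{-1}_{M\setminus \Omega^{\circ}}(M)$.
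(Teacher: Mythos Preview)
Your proposal is correct and is essentially the argument the paper has in mind: the paper does not write out a proof but simply sets up the definitions $(\Delta_{\Omega,D}+m^2)^{-1}:=(\Delta+m^2)^{-1}P_{M\setminus\Omega^{\circ}}^{\perp}$ and $\ank{-,-}_{W^{-1}}=\ank{-,(\Delta+m^2)^{-1}-}_{L^2}$ precisely so that the identity reduces to the $W^{-1}$-orthogonality of lemma~\ref{lemm-sobo-decomp}, exactly as you have written. Your three-line unravelling is the intended content of the citation.
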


 \subsection{Symbol Convergence Lemma and Heat Kernel}\label{app-symbol}

 \begin{proof}[Proof of lemma \ref{lemm-dyat-zwor-symbol-conv}.]
  By coordinate invariance of the definition of~$\Psi^m(M)$ it suffices to pick~$x\in M$ and prove the result for a chart around~$x$ and~$\chi(x)=1$. Denote the kernel of~$\chi E_{\varepsilon}\chi$ by~$E_{\chi,\varepsilon}$ then in this chart we could write
  \begin{equation}
    E_{\chi,\varepsilon}(x,y)=\tilde{E}_{\chi,\varepsilon}(x,h)=\frac{1}{F_{\varepsilon}(x)}\tilde{\psi}\left( \frac{h}{\varepsilon} \right)\tilde{\chi}(h),
    \label{}
  \end{equation}
  where~$h=x-y$. Indeed, by definition of our function~$\psi$ and freedom of choosing~$\chi$ we could further assume that for small enough~$\varepsilon$ one has~$\tilde{\chi}(h)\equiv 1$ on the support of~$\tilde{\psi}(\cdot/\varepsilon)$. Thus under this condition
  \begin{equation}
    \sigma_{\chi E_{\varepsilon}\chi}(x,\xi)=\int_{\mb{R}^d}^{} \me^{-\ii h\cdot \xi}\frac{1}{F_{\varepsilon}(x)} \tilde{\psi}\left( \frac{h}{\varepsilon} \right) \dd h =\frac{\varepsilon^d}{F_{\varepsilon}(x)}\underbrace{F_1(x) \sigma_{\chi E_1\chi}(x,\varepsilon\xi)}_{\textrm{indep. of }x}.
    \label{eqn-reg-symbol-expression}
  \end{equation}
  Note that~$\sigma_{\chi E_1\chi}(x,\eta)$ is Schwartz in~$\eta$ and~$\sigma_{\chi E_1\chi}(x,0)=1$. On the other hand clearly~$\sigma_{\chi\one\chi}(x,\xi)\equiv 1$. Thus for some~$U'\subset U$ depending only on the chart and~$\chi$, one has
  \begin{equation}
    \sup_{x\in K\Subset U'}\sup_{\xi}\frac{|\sigma_{\chi (E_{\varepsilon}-\one)\chi}(x,\xi)|}{\ank{\xi}^{\delta}} \le \left\{
      \def\arraystretch{1.3}
    \begin{array}{l}
      \ddp C\sup_{x\in K\Subset U'}\sup_{|\xi|\le R} \ank{\xi}^{-\delta} |\sigma_{\chi E_{1}\chi}(x,\varepsilon\xi)-1| \le C_{K,\chi} \sqrt{\varepsilon},\\
      \ddp C\sup_{x\in K\Subset U'}\sup_{|\xi|\ge R} (\cdots)\le C_{K,\chi}\varepsilon^{\delta/2}\sup_{\eta}|\sigma_{\chi E_1\chi}(x,\eta)|,
    \end{array}
    \right.
    \label{eqn-reg-symbol-bound}
  \end{equation}
  with~$R=\varepsilon^{-1/2}$. Next we deal with derivatives. Note that by (\ref{eqn-reg-symbol-expression}) all the~$x$-derivatives fall on~$1/F_{\varepsilon}(x)$ and all~$\xi$-derivatives fall on~$\sigma_{\chi E_1\chi}(x,\varepsilon\xi)$. Indeed, one has~$|\partial_x^{\alpha}(1/F_{\varepsilon}(x))|\le C_{\alpha} \varepsilon^{-d}$ (see Dyatlov and Zworski \cite{DZ} page 28), and so when there are only~$x$-derivatives we obtain the same bounds as (\ref{eqn-reg-symbol-bound}) only with new constants depending on~$\alpha$. When there is at least one~$\xi$-derivative,
  \begin{equation}
    |\partial_{\xi}^{\beta}\left( \sigma_{\chi E_1\chi}(x,\varepsilon\xi) \right)|=|\varepsilon^{|\beta|}(\partial_{\xi}^{\beta} \sigma_{\chi E_1\chi})(x,\varepsilon\xi)|\le C_{\beta,K,\chi}\varepsilon^{|\beta|}\ank{\varepsilon\xi}^{\delta-|\beta|},\quad |\beta|\ge 1.
    \label{}
  \end{equation}
  Hence, on account of (\ref{eqn-reg-symbol-expression}) again,
  \begin{equation}
    \sup_{x\in K\Subset U'}\sup_{\xi}\frac{|\partial_x^{\alpha}\partial_{\xi}^{\beta}\sigma_{\chi (E_{\varepsilon}-\one)\chi}(x,\xi)|}{\ank{\xi}^{\delta-|\beta|}}\le C\varepsilon^d C_{\alpha}\varepsilon^{-d}C_{\beta,K,\chi}\varepsilon^{|\beta|}=C_{\alpha,\beta,K,\chi}\varepsilon^{|\beta|}.
    \label{}
  \end{equation}
  Consequently, all the~$\mathcal{S}^{\delta}_{1,0}$ seminorms of~$\sigma_{\chi(E_{\varepsilon}-\one)\chi}$ goes to zero as~$\varepsilon\to 0$. We obtain the result.
\end{proof}

In what follows we sum up some properties of the \textsf{heat operator}~$\me^{-t(\Delta+m^2)}$ of the massive Laplacian (Helmholtz operator) and its Schwartz kernel~$p_t(x,y)$ called the \textsf{heat kernel}.

 \begin{lemm}[\cite{BGV} theorems 2.30, 2.38 and pages 92-94] \label{lemm-heat} We have
    \begin{enumerate}[(i)]
      \item $p_t(x,y)\in C^{\infty}((0,\infty)\times M\times M)$;
      \item we have
  \begin{equation}
    (\Delta+m^2)^{-1}=\int_{0}^{\infty}\me^{-t(\Delta+m^2)}\dd t.
    \label{eqnB8}
  \end{equation}
  In particular, the kernel~$G_t$ of~$\me^{-t(\Delta+m^2)}(\Delta+m^2)^{-1}=(\Delta+m^2)^{-1}\me^{-t(\Delta+m^2)}$ is
  \begin{equation}
    G_{t}(x,y)=\int_{t}^{\infty} p_s(x,y)\dd s,
    \label{eqnB9}
  \end{equation}
  for~$x$,~$y\in M$.
\item Let~$\dim M=n$. There are asymptotic expansions
  \begin{align}
  p_t(x,y)&\sim \frac{1}{(4\pi t)^{n/2}}\me^{-\frac{1}{4t}d(x,y)^2}\sum_{i=0}^{\infty} f_i(x,y) t^i,\\
    \ttr_{L^2(M)}\big(\me^{-t(\Delta+m^2)}\big)&\sim \frac{1}{(4\pi t)^{n/2}}\sum_{i=0}^{\infty} a_i t^i
    \label{}
  \end{align}
  as~$t\to 0+$, for some real numbers~$a_i$ and functions $f_i\in C^{\infty}(M\times M)$,~$i=0$,~$1$,~$2$, \dots.
\item For~$t$ large and each~$\ell\in\mb{N}$,
  \begin{equation}
    \nrm{p_t(x,y)}_{C^{\ell}}\le C_{\ell}\me^{-tm^2/2}
    \label{}
  \end{equation}
  for some constant~$C_{\ell}$.
    \end{enumerate}
\end{lemm}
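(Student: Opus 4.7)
The plan is to base the proof on two complementary presentations of the heat semigroup: the spectral representation, which handles (i), (ii) and (iv) in a uniform way, and the Minakshisundaram--Pleijel parametrix, which is unavoidable for the short-time expansion in (iii). As $M$ is closed and $\Delta+m^2$ is positive self-adjoint elliptic, its spectrum is discrete, $(\lambda_j+m^2)_{j\ge 0}$ with $\lambda_j\ge 0$, with $L^2$-orthonormal smooth eigenbasis $(\varphi_j)_{j\ge 0}$; by Weyl's law $\lambda_j\asymp j^{2/n}$, and by Sobolev embedding together with elliptic bootstrap (see Sogge) one has pointwise estimates $\sup_M|\partial^{\alpha}\varphi_j|\lesssim_{\alpha}(1+\lambda_j)^{N(\alpha)}$ for some integer $N(\alpha)$. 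The formal heat kernel candidate is
\begin{equation*}
  p_t(x,y)=\sum_{j=0}^{\infty}\me^{-t(\lambda_j+m^2)}\varphi_j(x)\varphi_j(y).
\end{equation*}

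For (i), I would show that this series and its derivatives in $(t,x,y)$ converge absolutely and uniformly on compact subsets of $(0,\infty)\times M\times M$. Indeed, differentiating in $t$ or in $x,y$ brings down factors bounded by $(1+\lambda_j)^{k}$ for some $k$, and these are absorbed by $\me^{-t(\lambda_j+m^2)}$ whose decay in $j$ is super-polynomial once $t\ge \varepsilon>0$, because of the Weyl bound on $\lambda_j$. For (ii), by the spectral theorem
\begin{equation*}
  \int_0^{\infty}\me^{-t(\lambda_j+m^2)}\,\dd t=(\lambda_j+m^2)^{-1},
\end{equation*}
and the uniform convergence just proved allows one to swap summation and integration, yielding (\ref{eqnB8}); then (\ref{eqnB9}) follows from the semigroup identity $\me^{-t(\Delta+m^2)}\me^{-s(\Delta+m^2)}=\me^{-(t+s)(\Delta+m^2)}$ (which in turn is immediate from the spectral picture) and writing $G_t=\int_0^{\infty}\me^{-(t+s)(\Delta+m^2)}\dd s$.

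For (iv), I would split $\me^{-t(\lambda_j+m^2)}=\me^{-tm^2/2}\me^{-t(\lambda_j+m^2)/2}\me^{-t\lambda_j/2}$; then for $t\ge 1$,
\begin{equation*}
  \big|\partial_x^{\alpha}\partial_y^{\beta}p_t(x,y)\big|\le \me^{-tm^2/2}\sum_{j}\me^{-(t-1)(\lambda_j+m^2)/2}\me^{-(\lambda_j+m^2)/2}|\partial^{\alpha}\varphi_j(x)||\partial^{\beta}\varphi_j(y)|,
\end{equation*}
and the polynomial eigenfunction bounds together with $\me^{-(\lambda_j+m^2)/2}$ (applied at $t=1$) give a convergent $j$-sum, yielding $\nrm{p_t}_{C^{\ell}}\le C_{\ell}\me^{-tm^2/2}$ for $t\ge 1$; for $t$ in a bounded range away from $0$ one just absorbs the constant.

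The main obstacle, and the most substantial step, is (iii). Here the spectral series gives no information as $t\to 0^+$, and one must construct an approximate heat kernel locally. Following BGV chapter 2, I would work in a normal-coordinate chart around a fixed $x_0$ and look for a formal parametrix
\begin{equation*}
  H_N(t,x,y)=\frac{1}{(4\pi t)^{n/2}}\me^{-d(x,y)^2/(4t)}\sum_{i=0}^{N}f_i(x,y)t^i,
\end{equation*}
solving $(\partial_t+\Delta+m^2)H_N=t^N R_N$ modulo a remainder of order $t^N$. Substituting into the heat equation and matching powers of $t$ produces the classical transport equations for the coefficients $f_i\in C^{\infty}(M\times M)$ (along geodesics emanating from $y$), uniquely solvable with $f_0(x,x)=1$ (this uses a Jacobian determinant on the exponential map). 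Gluing these local parametrices with a partition of unity and using Duhamel's formula compares $p_t$ with $H_N$ and gives $\|p_t-H_N\|_{C^{\ell}}=O(t^{N-\ell-n/2})$ as $t\to 0^+$; taking $N\to\infty$ yields the claimed asymptotic. Finally, integrating the expansion over the diagonal, where $d(x,x)=0$, produces the trace expansion with coefficients $a_i=\int_M f_i(x,x)\dd V_M(x)$. I would quote the BGV construction rather than redoing the algebra, since the result is standard.
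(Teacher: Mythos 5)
The paper does not actually prove this lemma; it is stated as a cited fact from Berline--Getzler--Vergne and used as a black box. What you have written is therefore not something to compare against the paper's own argument, but it is a faithful and essentially correct reconstruction of the standard proof given in BGV itself: spectral representation for the global facts (i), (ii), (iv), and the Minakshisundaram--Pleijel parametrix for the short-time expansion (iii). A few minor remarks. In (ii) the swap of sum and integral ``justified by the uniform convergence just proved'' is not quite right as stated, since uniform convergence of the eigenfunction series was established only for $t\ge\varepsilon>0$ and the integral runs from $0$; the clean way is to prove the operator identity weakly, testing against $f,g\in C^{\infty}(M)$ and using $e^{-t(\lambda_j+m^2)}\le e^{-tm^2}$ plus Cauchy--Schwarz on the Fourier coefficients for dominated convergence, and then pass to the kernel level for $G_t$ noting that the integrand $p_s$ is uniformly controlled for $s\ge t>0$. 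In (iv) the auxiliary factorization $e^{-t(\lambda_j+m^2)}=e^{-tm^2/2}e^{-t(\lambda_j+m^2)/2}e^{-t\lambda_j/2}$ has a redundant factor that is simply bounded by $1$; the essential point, which you capture correctly, is that the positive mass $m>0$ provides the spectral gap that makes the bound $e^{-tm^2/2}$ possible -- without it the $j=0$ mode would prevent any decay. Finally the remainder exponent $O(t^{N-\ell-n/2})$ in (iii) is not quite the BGV rate (which is of the form $t^{N-n/2-\ell/2}$), but you are right to defer the bookkeeping to BGV; the structure of the Duhamel comparison is correct.
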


\newpage

\end{document}